\documentclass{article}
\usepackage{graphicx} 
\usepackage{style}

\title{Relative Error Fair Clustering in the Weak-Strong Oracle Model}
\author{Vladimir Braverman\footnote{(vbraverman@google.com) Google Research. 
 \smallskip} 
\\ {\small Google Research} 
\and
Prathamesh Dharangutte\footnote{(ptd39@rutgers.edu) Department of Computer Science, Rutgers University. Supported in part by NSF IIS-2229876, DMS-2220271, DMS-2311064, CCF-2208663, and CCF-2118953. \smallskip}
 \\ {\small Rutgers University} \and 
Shaofeng H.-C. Jiang\footnote{(shaofeng.jiang@pku.edu.cn) School of Computer Science, Peking University. 
\smallskip} \\ {\small Peking University} \and
Hoai-An Nguyen\footnote{(hnnguyen@andrew.cmu.edu) Computer Science Department, Carnegie Mellon University.
Supported in part by an NSF GRFP fellowship grant number DGE2140739, NSF CAREER Award CCF-2330255, Office of Naval Research award number N000142112647, and a Simons Investigator Award. \smallskip} \\ {\small Carnegie Mellon University} \and
Chen Wang\footnote{(chen.wang.research@gmail.com) Department of Computer Science, Rice University and Department of Computer Science \& Engineering, Texas A\&M University. \smallskip} \\ {\small Rice University, Texas A\&M University}
\and
Yubo Zhang\footnote{(zhangyubo18@pku.edu.cn) School of Computer Science, Peking University. \smallskip} \\ {\small Peking University}
\and
Samson Zhou\footnote{(samsonzhou@gmail.com) Department of Computer Science \& Engineering, Texas A\&M University. Supported in part by NSF CCF-2335411. \smallskip} \\ {\small Texas A\&M University}
}
\date{}

\begin{document}

\maketitle

\begin{abstract}
We study fair clustering in a setting where distance information is obtained from two sources: a strong oracle providing exact distances, but at a high cost, and a weak oracle providing potentially inaccurate distance estimates at a low cost. The goal is to produce a near-optimal \emph{fair} clustering on $n$ input points with a minimum number of strong oracle queries. This models the increasingly common trade-off between accurate but expensive similarity measures (e.g., large-scale embeddings) and cheaper but inaccurate alternatives. The study of fair clustering in the model is motivated by the important quest of achieving fairness in the presence of inaccurate information.
We present the first $(1+\varepsilon)$-coresets for fair $k$-median clustering and use only $\text{poly}\left(\frac{k}{\varepsilon}\cdot\log n\right)$ queries to the strong oracle. Furthermore, our results imply coresets for the standard setting (without fairness constraints).  We obtain $(1+\varepsilon)$-coresets for $(k,z)$-clustering for general $z=O(1)$ with a similar number of strong oracle queries. In contrast, previous results in this model achieved a constant-factor $(>10)$ approximation for the standard $k$-clustering problems, and no previous work considered the fair $k$-median clustering problem.
\end{abstract}
\newpage 

\section{Introduction} \label{sec:intro}
Clustering is a cornerstone of unsupervised learning and a fundamental tool in data science, widely used to uncover latent structure in high-dimensional or large-scale datasets. It often serves as a critical pre-processing step for downstream tasks such as dimensionality reduction, feature extraction, and data summarization.
Informally, the objective of clustering is to partition a dataset into $k$ clusters such that intra-cluster similarity is maximized while inter-cluster similarity is minimized. 
Traditional clustering formulations, such as the $k$-median and $k$-means problems, have been the focus of extensive study since their inception in the 1950s~\cite{steinhaus1956division,macqueen1967classification}. 
Formally, in the $k$-median clustering problem, the input is a set $\calX$ of $n$ points in a space with metric $\bd$, and the goal is to choose a set $\calC$ of at most $k$ centers minimizing the total distance: $\min_{\calC:|\calC|\le k}\sum_{x\in\calX}\min_{c\in \calC}\bd(x,c)$. 

\paragraph{Fair clustering.}
Observe that the set of centers $\calC$ inherently divides $\calX$ into clusters by assigning each point $x\in\calX$ to its closest center $c\in \calC$. 
Due to this implicit partitioning rule, standard notions of clustering may produce biased results, leading to potentially discriminatory outcomes that perpetuate social inequalities in downstream applications~\cite{DworkHPRZ12,ChhabraMM21}. 
For instance, clustering algorithms used for hiring or lending may unintentionally discriminate against certain demographic groups, e.g., race, gender, religion, or socioeconomic status. 
Such biases are particularly concerning in high-stakes domains like healthcare, education, and finance, where data science algorithms are strongly influential in the decision-making process. 
To address these challenges, there has been a growing body of research focused on incorporating fairness constraints~\cite{AhmadianE0M20, AhmadianEKKMMPV20,SongVWZ24,ChenXZC25}, especially into clustering algorithms to ensure equitable outcomes.

One such constraint is disparate impact, which requires all ``protected'' subpopulations to receive similar representation in the decision-making process. 
To that end, Chierichetti, Kumar, Lattanzi, and Vassilvitskii~\cite{ChierichettiKLV17} introduced  $(\alpha, \beta)$-fair clustering, where each point $x\in\calX$ is associated with one or more subpopulations (or groups) $j \in [\ell]$, across $\ell$ total groups. 
Since the groups can overlap, we let $\Lambda$ be the maximum number of \emph{disjoint groups} each $x\in \calX$ could belong to.
The fairness constraints are specified by parameters $0 \leq \alpha_j \leq \beta_j \leq 1$ for each group $j$, and the objective is to minimize the clustering cost while ensuring that each cluster has at least an $\alpha_j$ fraction and at most a $\beta_j$ fraction of the points from group $j$. 
These constraints are designed to prevent subpopulations from being under-represented or over-represented within any cluster, thereby promoting fair and balanced cluster assignments. 

\paragraph{Weak-strong oracle model.}
Up to this point, all the formulations of clustering discussed assume that the ground-truth similarity measure between each pair of points is given as part of the input, either implicitly through an underlying metric or explicitly through either oracle access or a similarity matrix. 
Recent advancements in modern machine learning have highlighted the importance of large-scale similarity models, particularly in applications involving non-metric data such as images, text, and videos. 
These models use embedding functions~\cite{VanDerMaatenH08,MikolovCCD13,HeZRS16,DevlinCLT19} to provide a mechanism to generate real-valued distances for non-metric data, so that each point $x$ is transformed into a vector $f(x)$ such that the similarity between data points $x$ and $y$ can be inferred from the distance $\bd(f(x), f(y))$ in the embedding space. 
Thus, embedding models, such as those derived from deep learning, have become crucial in enabling downstream tasks such as clustering on non-metric input data. 

However, as these embedding models scale in complexity and accuracy, the computational resources required to perform pairwise similarity computations also increase significantly. 
This has led to the widespread adoption of hybrid approaches that combine highly accurate but resource-intensive similarity models with more efficient, albeit less precise, secondary models~\cite{KusnerSKW15,JohnGA20,SilwalANMRK23,XSI24bimetric,GalhotraRS24}. 
These lightweight models, often referred to as ``weak'' similarity models, rely on approximations or auxiliary information, such as simple feature-based heuristics (e.g., location, timestamps, or metadata)~\cite{JohnGA20,RingisPK21}, compact neural networks, or historical similarity data~\cite{MitzenmacherV22}.

To that end, Bateni, Dharangutte, Jayaram, and Wang~\cite{BateniDJW24} introduced the weak-strong oracle model, which considers a space $\calX$ consisting of $n=|\calX|$ points, equipped with a metric $d:\calX\times \calX \to \mathbb{R}$ that represents the output of an accurate but computationally expensive similarity model. 
Although the metric $\bd$ is not directly available, it can be accessed through two types of oracles: weak and strong, prescribed as follows. 
\begin{definition}[Weak-strong oracle model]
\label{def:weak-strong-oracle-model}
Let $(\calX, \bd)$ be a metric space where $d$ is the measure of distance between points. We assume that the metric $\bd$ is 
\emph{not} directly accessible to the algorithm. In the weak-strong oracle model, we are given two oracles: the weak oracle $\WO$ and the strong oracle $\SO$. The properties of the oracles are prescribed as follows:
\begin{itemize}
\item 
Weak oracle $\WO$: given a pair of points $(x,y)\in \calX\times \calX$, the query $\WO(x,y)$ returns the distance $\bd(x,y)$ exactly with probability $2/3$. With probability $1/3$, $\WO(x,y)$ returns an arbitrary answer. The randomness is drawn \emph{exactly once}.
\item 
Strong oracle $\SO$ on points: given a pair of points $(x,y)\in \calX\times \calX$, if both $\SO(x)$ and $\SO(y)$ are queried, then the value of $\bd(x,y)$ is (deterministically) revealed.
\item 
Strong oracle $\SO$ on distances: given a pair of points $(x,y)\in \calX\times \calX$, the answer to the query $\SO(x,y)$ (deterministically) reveals $\bd(x,y)$.
\end{itemize}
\end{definition}
In this model, the weak oracle distances represent a cost-effective but less accurate approximation of the true distances, whereas the strong oracle provides significantly more precise results at a higher computational cost. 
Consequently, our objective is to obtain a high-quality solution to clustering for the underlying metric $(\calX,\bd)$, while minimizing the number of queries made to the expensive strong oracle. 
Furthermore, we allow the corruptions introduced by the weak oracle to be adversarial in nature, which captures a broad class of ``imprecise weak oracles'', where the weak oracle may produce arbitrarily poor distance estimates with some probability. 
Since the randomness is drawn \emph{exactly once}, we cannot hope to use repeated queries and majority tricks.
We also remark that the failure probability of $\frac{1}{3}$ is quite flexible, e.g., it can be an upper bound on the failure probability across all oracle queries and simply needs to be some constant $<\frac{1}{2}$. 

\cite{BateniDJW24} achieves constant-factor approximations to $(k,z)$-clustering problems for $z=O(1)$ in the weak-strong oracle model with $\Otilde(k)$ strong point oracle queries and $\Otilde(k^2)$ strong distance oracle queries\footnote{As is standard, we use $\Otilde{(\cdot)}$ to hide $\polylog{(n)}$ terms.}. They also proved that $\Omega(k)$ strong point oracle queries are necessary for any bounded approximation, making their bounds tight up to $\polylog{n}$ factors. On the other hand, the approximation factor in their algorithm is some $C>10$ for $k$-means and even bigger constants for general $(k,z)$. 
Due to the fact that their algorithm is an adaptation of Meyerson sketch \cite{Meyerson01}, any improvement below a certain constant is unlikely.
Furthermore, their algorithm does \emph{not} have any guarantees on fairness, which could lead to potentially biased and harmful outcomes. Therefore, there is an open quest for efficient clustering algorithms that adhere to \emph{fairness constraints} and achieve a \emph{$(1+\eps)$-approximation} in the weak-strong oracle model.

Due to hardness results for $k$-clustering \cite{GuhaK99,AwasthiCKS15,CohenAddadS19,BhattacharyaGJ21}, we cannot hope for \emph{time efficiency} for $(1+\eps)$-approximate fair clustering. Nevertheless, we could still aim for efficiency on the \emph{query complexity} of the strong oracle. To this end, a very helpful notion is \emph{coresets}. Roughly speaking, a $(k, \eps)$ coreset $\calS$ of a point set $\calX$ is a weighted subset of points of $\calX$ such that the fair $k$-clustering costs on $\calS$ and $\calX$ are bounded by a $(1+\eps)$ factor (see \Cref{def:k-eps-coreset} for the formal definition). For our purposes, if we can build a coreset of size $s$ with a small number of strong oracle queries, we can then perform clustering as a post-processing step with at most $s$ extra strong point (or $s^2$ distance) oracle queries. As such, we ask the following motivating question: \textit{is it possible to design fair $(k, \eps)$ coresets with query efficiency in the weak-strong oracle model?}

\subsection{Our Contributions}
\label{subsec:contribution}
We answer the motivating question in the affirmative by designing the first $(1+\eps)$ fair $k$-median algorithm with $\poly(k\log{n}/\eps)$ strong point (and distance) oracle queries
and $\poly(k\log{n}/\eps)$ coreset size. 
Recall that we use $\Lambda$ to denote the maximum number of disjoint groups a data point belongs to.
The formal guarantees of the coreset algorithm can be given as follows.
\begin{theorem}
\label{thm:fair-coreset}
    There exists an algorithm in the weak-strong oracle
model that, given a dataset $\calX$ of $n$ points such that each $x\in \calX$ belongs to at most $\Lambda$ disjoint groups, with high probability computes a $(k, \eps)$ \emph{fair} coreset of size $\Otilde(\Lambda\cdot \frac{k^2}{\eps^2})$ for $k$-median clustering using $\Otilde(\Lambda\cdot k)$ strong oracle point queries (or edge queries), $\Otilde(\Lambda\cdot nk)$ weak oracle queries, and $\Otilde(\Lambda\cdot (nk+k^2/\eps^2))$ time.
\end{theorem}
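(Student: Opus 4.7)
The plan is to adapt a standard sensitivity-sampling construction of fair coresets to the weak-strong oracle setting, bootstrapping off the constant-factor approximation of \cite{BateniDJW24}. First, I would invoke their algorithm to produce $k$ centers $C^\ast=\{c_1,\ldots,c_k\}$ and an induced partition $P_1,\ldots,P_k$ of $\calX$ that is an $O(1)$-approximation to $k$-median, consuming $\Otilde(k)$ strong point queries. Because the strong point oracle reveals the distance between any two queried points, this same step simultaneously yields all pairwise distances among the centers, a fact I will exploit heavily below.

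Next, I would reduce the fair coreset problem to a per-group coreset problem by a standard reduction: unioning $(k,\eps)$-coresets that preserve $k$-median cost restricted to each group yields a $(k,\eps)$-fair coreset. Within each group $G_j$ I would further localize to the clusters $P_1,\ldots,P_k$ induced by $C^\ast$ and apply sensitivity sampling of size $\Otilde(k/\eps^2)$ per cluster; summing over the at most $\Lambda$ groups each point may belong to and over the $k$ clusters gives the target coreset size $\Otilde(\Lambda k^2/\eps^2)$. To estimate sensitivities I would issue $\WO(x,c_i)$ for every point $x$ and every center $c_i$, totaling $\Otilde(\Lambda n k)$ weak queries across the $\Lambda$ groupings.

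The main obstacle is the adversarial $1/3$-corruption of the weak oracle: since the randomness is drawn only once, repeated queries yield no fresh information, and any single $\WO(x,c_i)$ could be arbitrarily misleading. My plan to overcome this is to use the exact pairwise center distances obtained in step~1 as a triangle-inequality consistency certificate. For any point $x$, any two correct entries of $(\WO(x,c_1),\ldots,\WO(x,c_k))$ must satisfy $|\bd(x,c_i)-\bd(x,c_j)|\le \bd(c_i,c_j)\le \bd(x,c_i)+\bd(x,c_j)$; since at least a $2/3$-fraction of the $k$ entries are correct in expectation, the correct entries form a dominant mutually-consistent family, while a corrupted entry is either detectably inconsistent (and can be discarded) or coincidentally falls inside the admissible envelope (and is therefore harmless for sensitivity estimation at constant-factor resolution). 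A small auxiliary budget of strong point queries to ``anchor'' points in ambiguous clusters, within the overall $\Otilde(\Lambda k)$ strong-query allotment, can tighten this filter when the consistent family is not uniquely identifiable.

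Once filtering yields sensitivity estimates accurate up to constant factors for every point, a Feldman-Langberg-style analysis delivers the $(1+\eps)$-coreset guarantee within each group, and unioning over the $\Lambda$ groupings produces the fair coreset. The runtime $\Otilde(\Lambda(nk+k^2/\eps^2))$ is then dominated by the $\Otilde(\Lambda nk)$ weak queries plus $\Otilde(\Lambda k^2/\eps^2)$ per-sample arithmetic for importance weights, and the strong-oracle budget $\Otilde(\Lambda k)$ accommodates the initial constant-factor approximation and the anchor queries used during triangle-inequality filtering. The expected technical difficulty is not the sensitivity-sampling machinery, which is fairly standard, but rather proving that the triangle-inequality filter is tight enough under adversarial corruption to give sensitivity estimates stable enough for a $(1+\eps)$-guarantee—which is where the fact that corrupted-but-consistent entries fall within a bounded envelope will have to be converted into a quantitative bound on the induced error in the coreset weights.
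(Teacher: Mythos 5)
Your high-level scaffolding (invoke the $O(1)$-approximation of \cite{BateniDJW24}, then reduce fair coresets to per-group coresets via \cite{HuangJV19}) matches the paper, but the central estimation step has a genuine gap. The triangle-inequality filter cannot certify a good estimate of $\bd(x, C^\ast)=\min_i \bd(x,c_i)$, which is exactly the quantity per-point sensitivity estimation needs. Take a point $x$ whose nearest center is $c_1$ with $\bd(x,c_1)=\rho$, and suppose the centers are well separated so that $\bd(c_1,c_j)\gg\rho$ for all $j\neq 1$. If $\WO(x,c_1)$ is corrupted (probability $1/3$, non-retriable) and returns $0$, it still lies inside the admissible envelope $\bigl[\,|\bd(x,c_j)-\bd(c_1,c_j)|,\ \bd(x,c_j)+\bd(c_1,c_j)\,\bigr]$ for every other $j$, so the filter cannot flag it. Meanwhile the constraints furnished by the remaining $k-1$ (correct) entries only lower-bound $\bd(x,c_1)$ by $\bd(x,c_j)-\bd(c_1,c_j)$, which can be negative. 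Thus with constant probability per point the estimated nearest-center distance, hence the estimated sensitivity, can be off by an unbounded factor; the corrupted value does not merely land ``inside a bounded envelope,'' it can freely underestimate. Since this failure rate is per-point and the randomness is drawn once, no $\Otilde(\Lambda k)$ budget of anchor strong queries can repair it uniformly over $n$ points. The paper avoids per-point sensitivities entirely: it learns true ring membership for a small uniform sample via strong queries, identifies heavy rings, and peels using the anchored estimator of \Cref{prop:dist-est-weak-oracle}, whose additive error $C_0 R_S$ is controlled because the anchor set is designed to have small diameter.

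There is a second gap even granting accurate sensitivities. A Feldman--Langberg sensitivity-sampling coreset preserves the \emph{unconstrained} $k$-median cost, whereas the \cite{HuangJV19} reduction (\Cref{prop:fair-to-assignment-reduction}) requires each per-group coreset to be \emph{assignment-preserving}, i.e.\ to preserve $\cost(\cdot,\calC,\Gamma)$ simultaneously over all assignment constraints $\Gamma$. This is strictly stronger and does not follow from the standard sensitivity-sampling analysis; the paper establishes it for uniform samples from the peeled sets $T_i^{(r)}$ via an explicit transport argument (\Cref{lem:fair-coreset-error-T-le-FCT,lem:fair-coreset-error-FCT-le-T}), McDiarmid concentration (\Cref{lem:assign-preserve-coreset-concentration}), and a union bound over a discretized family of $(\calC,\Gamma)$ pairs (\Cref{lem:fair-coreset-union-bound}). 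Your proposal does not say how the universal quantifier over $\Gamma$ would be handled, and this is where the bulk of the paper's new machinery lives.
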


In addition to the query efficiency on the strong oracle, our algorithm in \Cref{thm:fair-coreset} also uses the clean uniform sampling framework and enjoys a fast running time, making it suitable for practical purposes. 
Although getting an \emph{actual} $(1+\eps)$-approximation clustering might take exponential time, we can run some approximation algorithms, e.g., the algorithms in \cite{ChierichettiKLV17,BackursIOSVW19}, as the post-processing to obtain fair clustering. Prior to \Cref{thm:fair-coreset}, no fair $k$-median coreset with non-trivial approximation guarantees or coreset size under the weak-strong oracle model was known\footnote{For trivial solutions, we can always, e.g., return the entire dataset as the coreset.}.

\Cref{thm:fair-coreset} is stated in the most general form, where the subpopulation groups could have overlaps. In the case of $\Lambda=1$, i.e., the subpopulations are themselves disjoint, the construction of a fair clustering coreset is essentially reduced to building an assignment-preserving coreset that specifies the required number of assignments for each cluster (see \Cref{sec:prelim} for the formal definition). The following theorem characterizes our guarantees for the assignment-preserving ($\Lambda=1$ fairness) case.

\begin{theorem}
    \label{thm:assignment-preserve-coreset}
     There exists an algorithm in the weak-strong oracle
model that, given a dataset $\calX$ of $n$ points, with high probability computes a $(k, \eps)$-coreset of size $O(\frac{k^2\cdot \log^{4}n \cdot \log{(n/\eps)}}{\eps^2})$ for \emph{assignment-preserving} $k$-median clustering using $O(k\log^{4}{n})$ strong oracle point queries (or edge queries), $O(nk\cdot \log^{3}{n})$ weak oracle queries, and $\Otilde(nk+k^2/\eps^2)$ time.
\end{theorem}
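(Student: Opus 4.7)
The plan is to combine the constant-factor approximation of~\cite{BateniDJW24} with a Chen-style uniform-sampling coreset: use the strong oracle to secure a reference clustering plus a handful of ``pivot'' points, use the weak oracle (with amplification) to estimate each point's distance to its approximate nearest center, and then sample uniformly inside (cluster, distance ring) cells.

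First, invoke~\cite{BateniDJW24} to obtain a constant-factor solution $\tilde C=\{\tilde c_1,\dots,\tilde c_k\}$ using $\Otilde(k)$ strong oracle point queries; a byproduct of SO-querying $\tilde C$ is that all pairwise distances $\bd(\tilde c_i,\tilde c_j)$ are revealed exactly. The next step is to obtain, for every $x\in\calX$, a reliable estimate of its nearest center $\hat c(x)\in\tilde C$ and the distance $\hat d(x)\approx \bd(x,\hat c(x))$. Naively querying $\WO(x,\tilde c_i)$ for every $(x,i)$ uses $O(nk)$ weak queries but is only correct with constant probability per point, whereas a subsequent union bound over $n$ points requires per-point failure probability $1/\poly(n)$. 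To achieve this amplification, additionally query the strong oracle on a set $P$ of $O(\log^4 n)$ \emph{pivots} drawn from $\calX$, which reveals all pairwise distances inside $P\cup\tilde C$, and for every $x$ and every pivot $p$ also query $\WO(x,p)$; this costs $O(nk\log^3 n)$ weak queries and $O(\log^4 n)$ extra strong point queries. Then use the pivots and triangle inequality to cross-validate each weak answer: for each $x$ and $\tilde c_i$, verify $|\WO(x,\tilde c_i)-\WO(x,p)|\le \bd(\tilde c_i,p)\le \WO(x,\tilde c_i)+\WO(x,p)$ across multiple pivots and take a majority-style combination to robustly recover $(\hat c(x),\hat d(x))$ with high probability.

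With reliable $(\hat c(x),\hat d(x))$ in hand, apply the Chen framework: partition each approximate cluster $\hat P_i=\{x:\hat c(x)=\tilde c_i\}$ into $O(\log(n/\eps))$ distance rings of geometrically increasing radius, draw $\Otilde(k/\eps^2)$ points uniformly at random from each ring, and weight each sample by its ring size divided by the sample size. Summing over $O(k\log(n/\eps))$ cells yields coreset size $O(k^2\log^4 n\cdot\log(n/\eps)/\eps^2)$. The $(1+\eps)$ relative-error bound then follows from the standard Chen-style analysis: within each cell the per-point cost is roughly constant, so uniform sampling concentrates around the true per-cell contribution for any fixed set of $k$ centers, and a union bound over a suitably discretized net of candidate center $k$-tuples extends this to a uniform approximation guarantee. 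The assignment-preserving property is automatic, since each coreset point's weight is the number of original points it represents in its cell.

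The main obstacle is the robust distance estimation: because the $1/3$ failure probability of the weak oracle is fully adversarial, corrupted answers may happen to satisfy the triangle inequality and thereby be indistinguishable from correct answers. The key idea for circumventing this is that the adversary must commit to each $\WO(x,y)$ answer before the algorithm draws its random pivots, so conditional on a random choice of pivots, with constant probability at least one pivot lies close enough to the $x$-to-$\tilde c_i$ geodesic (in the approximation-metric sense) that the triangle-inequality consistency check separates corrupted from correct answers. Amplifying over $O(\log n)$ independent pivot rounds and taking a majority drives the per-point failure probability down to $1/\poly(n)$, enabling a final union bound over all $n$ points. Carefully quantifying the residual error from any still-misclassified points, and showing that this error contributes only $\eps\cdot\mathrm{OPT}$ to the coreset cost while preserving the assignment-preserving condition, is expected to be the most delicate portion of the proof.
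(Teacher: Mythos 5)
There is a genuine gap, and it sits at the heart of your plan: the step that recovers $(\hat c(x),\hat d(x))$ for \emph{every} point with high probability. In this model the weak oracle's corruptions are adversarial and its randomness is drawn once, and the only tool available for validating a weak answer against a set $P$ of SO-queried pivots is the triangle inequality, which in a general metric only constrains $\bd(x,\tilde c_i)$ to an interval of width on the order of $\bd(x,p)$; a corrupted answer inside that interval is indistinguishable from a correct one. Your geodesic-proximity argument has no force here: the input is an arbitrary metric, there need not exist any point near the ``$x$-to-$\tilde c_i$ geodesic,'' and a pivot set of size $O(\log^4 n)$ drawn from $\calX$ certainly need not contain one. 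This is exactly why the best available primitive (\Cref{prop:dist-est-weak-oracle}, from \cite{BateniDJW24}) returns $\bd(x,s)$ only up to an \emph{additive} error $C_0\cdot R_S$, where $R_S$ is the diameter of the SO-queried reference set --- far too coarse to place every point in its correct ring. If per-point ring membership were recoverable w.h.p.\ from weak queries plus a few pivots, the entire heavy-ring-detection and recursive-peeling machinery of the paper (and the constant-factor barrier in \cite{BateniDJW24}) would be unnecessary. The paper instead never ring-assigns arbitrary points: it SO-queries only $O(\log^2 n)$ sampled points per iteration to ring-assign \emph{them} exactly, identifies a heavy ring, and peels using the additive-error estimator, with the $C_0$ gap between consecutive ring radii absorbing the estimator's slack; the coreset is then sampled from the peeled set $T_i^{(r)}$, whose size is known exactly, rather than from rings whose sizes you cannot determine.

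The second gap is your claim that assignment preservation is ``automatic'' from per-cell uniform sampling with ring-size weights. It is not, even classically: preserving $\cost(\cdot,\calC,\Gamma)$ for \emph{every} constraint $\Gamma$ requires showing that near-optimal fractional assignments of $\calX$ and of the coreset can be transformed into one another with small cost, together with a union bound over a discretization of the (continuum of) constraints $\Gamma$ and over far-away centers --- this is why dedicated arguments exist (\cite{Cohen-AddadL19,HuangJV19}). In the paper this is the bulk of \Cref{sec:fair-coreset}: \Cref{lem:fair-coreset-union-bound} (moving far centers and rounding $\Gamma$), \Cref{lem:fair-coreset-error-T-le-FCT,lem:fair-coreset-error-FCT-le-T} (Hoeffding-based assignment transformations), \Cref{lem:assign-preserve-coreset-concentration} (McDiarmid), and \Cref{lem:cost-set-lb-capacity} (charging $|T|\cdot\diam(T)$ against the $O(1)$-approximate cost), all made harder because the peeled sets $T_i^{(r)}$ span several rings and so the existing ring-based lemmas cannot be invoked as a black box. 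A Chen-style net over center $k$-tuples alone, as in your sketch, bounds only the unconstrained cost and does not yield the assignment-preserving guarantee claimed in \Cref{thm:assignment-preserve-coreset}.
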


\Cref{thm:assignment-preserve-coreset} also implies $(k, \eps)$ coresets for standard $k$-median clustering (without the fairness or assignment constraints). Note that due to the lower bound in \cite{BateniDJW24}, $\Omega(k)$ strong oracle queries are necessary to obtain such coresets for \emph{any} bounded approximation and without assignment constraints. As such, our algorithm is optimal up to $\polylog(n)$ factors.

One would naturally wonder whether we could similarly get $(k, \eps)$ coresets for other $k$-clustering problems, e.g., $k$-means clustering. Our second result shows that such a goal is possible, and if we do \emph{not} care about fairness, we obtain $(k, \eps)$ coresets for $(k,z)$ clustering with $z=O(1)$.

\begin{theorem}
    \label{thm:strong-coreset}
     There exists an algorithm in the weak-strong oracle
model that, with high probability, computes a $(k, \eps)$ coreset of size $\Otilde(\frac{k^2}{\eps^3})$ for $(k,z)$-clustering with any $z=\Theta(1)$ using $\Otilde(\frac{k^2}{\eps^3})$ strong oracle point queries (or edge queries), $\Otilde(nk)$ weak oracle queries, and $\Otilde(nk+k^2/\eps^3))$ time.
\end{theorem}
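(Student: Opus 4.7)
The plan is to adapt the sensitivity sampling framework for $(k,z)$-clustering coresets to the weak-strong oracle model, building on the subroutines already used for \Cref{thm:assignment-preserve-coreset}. The procedure has three phases: (i) compute an $O(1)$-approximate center set $C^*$ of size $O(k\polylog n)$ using $\Otilde(k)$ strong queries and $\Otilde(nk)$ weak queries; (ii) use $\WO$ to estimate each point's $z$-power distance to $C^*$ and thereby its sensitivity; (iii) perform importance sampling to draw $m=\Otilde(k^2/\eps^3)$ points and query $\SO$ on exactly these $m$ points so that all distances involving the coreset are exact.

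For phase (i), a Meyerson-sketch style algorithm robust to a constant fraction of adversarial weak-oracle errors, essentially the same subroutine used for \Cref{thm:assignment-preserve-coreset}, produces a bicriteria $O(1)$-approximation; since $z=\Theta(1)$, a constant-factor metric approximation yields only a constant-factor degradation in the $(k,z)$-cost. For phase (ii), since $|C^*|=O(k\polylog n)$, querying $\WO(x,c)$ for all $x\in\calX$ and all $c\in C^*$ costs $\Otilde(nk)$ weak queries; for each $x$ I would set $D(x)=\min_{c\in C^*}\WO(x,c)$, which equals $\bd(x,C^*)$ whenever the argmin query is uncorrupted. Phase (iii) draws $\Otilde(k^2/\eps^3)$ points with probability proportional to the standard $(k,z)$-sensitivity upper bound $\sigma(x)=O\bigl(D(x)^z/\Phi + 1/|\text{cluster}(x)|\bigr)$, where $\Phi$ is the estimated total cost, and assigns inverse-probability weights. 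Correctness of the resulting $(k,\eps)$-coreset follows from the Feldman-Langberg / pseudo-dimension analysis for $(k,z)$-clustering, and strong oracle queries on the sampled points ensure exact distances from the coreset to any downstream candidate centers.

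The main obstacle is controlling adversarial corruptions in the sensitivity estimates, since the weak-oracle randomness is drawn only once and majority voting is unavailable. I would handle this by showing that for each fixed $x$ the estimator $D(x)$ either matches $\bd(x,C^*)$ with probability at least $2/3$ over $\WO$, or lies strictly below it, in which case $x$ is merely undersampled. Undersampling is the more dangerous failure mode, but since only an $O(1)$-fraction of points suffer it and the baseline term $1/|\text{cluster}(x)|$ in $\sigma(x)$ is immune to distance corruptions, a two-stage sampling argument (oversample by a constant factor and then prune via strong-oracle spot checks on the $m$ retained points) absorbs the error. The final bounds match the claim: $\Otilde(k)+\Otilde(k^2/\eps^3)=\Otilde(k^2/\eps^3)$ strong queries, $\Otilde(nk)$ weak queries, and $\Otilde(nk+k^2/\eps^3)$ total runtime dominated by the weak-oracle scan and the sampling post-processing.
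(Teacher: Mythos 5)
Your proposal takes a genuinely different route from the paper (sensitivity/importance sampling in the Feldman--Langberg style, whereas the paper uses Chen-style ring sampling with heavy-hitter identification and recursive peeling). Unfortunately the sensitivity-estimation step has a gap that I do not think the proposed patch repairs.

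The estimator $D(x)=\min_{c\in C^*}\WO(x,c)$ is far more fragile than you account for. For a fixed $x$, you make $|C^*|=\Theta(k\,\polylog n)$ weak-oracle calls and take the minimum. A single corrupted call suffices to make $D(x)$ arbitrarily small, and the probability that \emph{all} $|C^*|$ calls for $x$ are uncorrupted is $(2/3)^{|C^*|}$, which is exponentially small in $k\log n$. So it is not the case that ``only an $O(1)$-fraction of points suffer'' undersampling: in the worst case essentially \emph{every} point has an adversarially small $D(x)$, and hence an essentially arbitrary downward-corrupted sensitivity weight. The Feldman--Langberg analysis requires the sensitivity surrogate $\sigma(x)$ to be an \emph{upper} bound on the true sensitivity; if $\sigma(x)$ underestimates it by an unbounded factor for high-cost points (outliers far from $C^*$, exactly the ones the coreset must capture), the framework provides no guarantee. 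The $1/|\mathrm{cluster}(x)|$ term does not rescue such points, and it is itself computed from corrupted distance comparisons. The oversample-then-prune step also cannot repair this: a point whose $\sigma(x)$ is driven to near zero is never drawn in the first $m$ samples, so it cannot be detected by spot-checking the retained points with $\SO$; you would be pruning false positives while the real failure is false negatives.

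What the paper's estimator has, and your $\min$-over-$\WO$ does not, is \emph{one-sided error}: \Cref{prop:dist-est-weak-oracle} guarantees $\bd(x,s)\le\dtilde(x,S)\le\bd(x,s)+C_0 R_S$, i.e.\ distances are never underestimated. That is exactly the property needed to avoid the undersampling catastrophe, and it is purchased by first spending $\SO$ queries on a small anchor set $S$ of $\Theta(\log n)$ points and then aggregating $\Otilde(1)$ weak queries per target point in a way that cannot be pulled below the truth. The paper's algorithm (\Cref{alg:strong-coreset}, analyzed for general $z$ in \Cref{app:coreset-contruct-general}) is organized around this primitive: uniform sampling per iteration to find heavy rings, $\SO$ queries only on the sample to learn exact ring membership, a peeling step that uses the one-sided estimator to remove processed points, and a charging argument (\Cref{alg:charging-scheme-k-z}, \Cref{lem:charge-distribute-k-z}--\ref{lem:ignored-rings-total-cost-k-z}) to bound the cost of rings with too few samples. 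If you wanted to salvage the sensitivity-sampling route, you would need to replace $D(x)=\min_c\WO(x,c)$ with an estimator built on top of \Cref{prop:dist-est-weak-oracle} so that the sensitivity surrogate provably upper-bounds the truth for every $x$, at which point the argument starts to resemble the paper's.
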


Note that the size and query bounds for $(k,z)$ clustering include a $2^{O(z)}$ dependency, although we suppress this since $z=\Theta(1)$. Compared to \Cref{thm:fair-coreset}, the number of strong oracle queries is slightly worse, but we are able to deal with general $(k,z)$ objectives. We leave the quest of the optimal $k$ dependency for $(k,z)$ clustering coreset as an open problem to pursue in the future.

\noindent
\textbf{Experiments.} To validate the theoretical guarantee of our algorithms, we conduct experiments on real-world datasets: the ``Adult" dataset \cite{adult_uci} and the ``Default of Credit Card Clients" dataset \cite{default_of_credit_card_clients_350}. The weak oracle is designed to return very small distances if it runs into an erroneous case. We compare our algorithm against a baseline algorithm that uniformly at random samples and re-weights points. The experiments show that the proposed algorithm in \Cref{thm:fair-coreset} could significantly outperform the uniform baseline. 

\subsection{Technical Overview} 
Our coreset algorithms are adaptations of the ring sampling approaches used extensively in the literature (e.g.,~\cite{Chen09,BravermanCJKST022}). Roughly speaking, the ring sampling approach first uses an $O(1)$-approximation ``weak'' $k$-clustering algorithm with average cost $R$. Then, for the points assigned to each center, the algorithm divides the points according to \emph{rings} with doubling distances from the center, i.e., $2^i R$ to $2^{i+1} R$ for the $i$-th ring. It has been shown by Chen~\cite{Chen09} that sampling $\Ot(k/\eps^2)$ points from each ring leads to a $(\eps, k)$-coreset.

In the weak-strong oracle model, a significant challenge for the ring sampling approach is that we do \emph{not} know the set of points in a ring. Since the weak oracle could give arbitrarily adversarial answers, we cannot assign points to rings based on the weak oracle. On the other hand, if we query the strong oracle for all points, the query complexity will be too high.

The key idea to resolve the above dilemma is by \emph{heavy-hitter} sampling and \emph{recursive peeling}. In particular, in each iteration, we sample $k \cdot \poly(\log(n)/\eps)$ points, and we make strong oracle queries on the sampled points to find their exact ring assignments. We then categorize the rings into the \emph{heavy} ones, i.e., the rings with $k \cdot \poly(\log(n)/\eps)$ sampled points, and the \emph{light} ones, i.e., the rings with few samples. The key observation is that if a heavy ring lies on the outside of (possibly multiple) light rings, then with high probability, we can safely ignore the light rings since their costs are negligible compared to the cost of the outer heavy ring. Therefore, we can focus on building coresets only for the heavy rings. Furthermore, after one iteration, we can use a procedure in \cite{BateniDJW24} to \emph{peel off} the points in the heavy ring we processed along with the light rings inside. We then recursively process all the points with a low number of iterations.

The above gives the idea of the \emph{unconstrained} $(\eps,k)$-coreset. To generalize the idea to assignment-preserving and fair coresets, we need to adapt the ideas in existing work for fair clustering (e.g., \cite{Cohen-AddadL19,BravermanCJKST022}) in a \emph{white-box} manner. 
Algorithmically, we introduce an additional \emph{sampling step} from the points we peel off in each iteration. If the peeled points form a ring, we could use lemmas from \cite{Cohen-AddadL19} to establish the desired results. However, the peeled points are not necessarily rings in our algorithm; and in light of this, we need to open up the black-box and prove that the assignment-preserving properties still hold in our case, which is part of our technical contributions.

To gradually build up the necessary technical tools and for ease of understanding for the readers, we present the results in a \emph{reverse} order. We will first discuss the \emph{unconstrained} coreset algorithm in \Cref{sec:coreset-contruct} for $k$-median, which includes the key ideas used in the algorithms. We will then discuss the fair coreset in \Cref{sec:fair-coreset}.

\subsection{Additional Related Work}
We discuss additional related work in the literature, including results on fairness, $k$-clustering and coresets, and weak-strong distance oracles.

\paragraph{Fairness.} 
Chierichetti, Kumar, Lattanzi, and Vassilvitskii~\cite{ChierichettiKLV17} initiated the study of disparate impact in fair clustering, which intuitively seeks each cluster to be balanced in representation across all protected subpopulations. 
\cite{ChierichettiKLV17} initially considered two classes of protected groups, though an active line of research generalized their work to larger numbers of groups \cite{RosnerS18}, as well as different clustering objectives~\cite{AhmadianEKM19,BeraCFN19,KleindessnerSAM19,SchmidtSS19,Cohen-AddadL19,AhmadianEKKMMPV20,EsmaeiliBTD20,EsmaeiliBSD21,BravermanCJKST022}. 
A related line of work required that each cluster contains a certain number of representatives from each subpopulation, rather than a certain fraction~\cite{AneggAKZ22,JiaSS22}. 
More generally, there exist other proposed notions of fairness, including but not limited to (1) social fairness, where the cost function is taken across subpopulations~\cite{JonesNN20,GhadiriSV21,MakarychevV21}, (2) individual fairness, where seeks to balance the clustering costs of each point in the dataset~\cite{JungKL20,MahabadiV20,NegahbaniC21}, and (3) representative fairness, where the number of centers from each class is limited~\cite{KleindessnerAM19,AngelidakisKSZ22}. 
However, our work specifically considers $(\alpha,\beta)$-fair clustering, i.e., disparate impact, and these alternative notions of fairness and others~\cite{AhmadianE0M20,SongVWZ24,ChenXZC25} are not the primary focus of this study.

\paragraph{$k$-clustering and coresets.} $k$-clustering is one of the most fundamental problems and has been extensively studied for over six decades (see, e.g.,~\cite{forgy1965cluster,lloyd1982least,gonzalez1985clustering,charikar99constant,guha03clustering,KumarSS04,ArthurV07,FeldmanMS07,mettu2012optimal}). Covering the vast literature for $k$-clustering is out of the scope of our paper, and we instead focus on the most relevant results. For $k$-median and $k$-means, Charikar, Guha, Tardos, and Shmoys~\cite{charikar99constant} gave the first $O(1)$-approximation in polynomial time, and Indyk~\cite{indyk99sublinear} gave the first \emph{sublinear} algorithm for $O(1)$-approximation $k$-median, which was later improved by Mettu and Plaxton~\cite{mettu2012optimal}. There exist certain PTAS algorithms ($(1+\eps)$-approximation) for $k$-clustering \cite{KumarSS04,FeldmanMS07}, although they are exponentially dependent on $k/\eps$. 

The concept of coresets for $k$-clustering was first introduced by the seminal work of Har{-}Peled and Mazumdar~\cite{har2004coresets} for $\mathbb{R}^{d}$. Later, the influential work by \cite{Chen09} obtained coresets of size (roughly) $O(k^2\log^{2}{n}/\eps^2)$ for \emph{general metrics}. The framework of \emph{ring sampling} used by  \cite{Chen09} is quite elegant and is the stem of our $(k, \eps)$ coreset for general $(k,z)$ clustering. Later results by \cite{HuangV20,Cohen-AddadSS21,BravermanJKW21,BravermanCJKST022,Huang0024} further improved the dependencies on $k$, $\log n$, and $\eps$ in the size of the coresets, and near-optimal coreset sizes have been proven for certain applications (e.g., in \cite{Huang0024}).

\paragraph{Coresets for fair clustering.} Since the introduction of fair clustering by \cite{ChierichettiKLV17}, several works have investigated the construction of coresets that respect fairness constraints. Schmidt, Schwiegelshohn, and Sohler~\cite{SchmidtSS19} presented a coreset construction for the fair $k$-means problem and extended it to the streaming model. More recently, Bandyapadhyay, Fomin, and Simonov~\cite{BandyapadhyayFS24} proposed coreset constructions for both fair 
$k$-means and $k$-median clustering using random sampling methods in metric and Euclidean spaces. Their approach yields a constant-factor approximation for metric fair clustering and a $(1 + \eps)$-approximation in Euclidean space, with running time independent of the data dimension.

Further extensions have considered settings where each data point may belong to multiple, possibly overlapping, sensitive groups. For example, a data point representing a person might be labeled with both race and gender, each treated as a sensitive attribute. Huang, Jiang, and Vishnoi~\cite{HuangJV19} gave the first coreset construction for fair 
$k$-median clustering and also extended it to handle multiple, non-disjoint sensitive attributes. Subsequent works have explored different formulations of fair clustering under overlapping group structures, including \cite{ZhangCLCHF24, GadekarGT25, ThejaswiGOO22, ThejaswiOG21, TGOG2024diversity, ChenJWXY24}.

\paragraph{Weak-strong oracle models.} In modern ML applications, there have been increasingly common trade-offs between information accuracy and price. For instance, John, Gadde, and Adsumilli~\cite{JohnGA20} studied the bitrate-quality optimization in video transcoding by clustering the videos using meta-information and optimization on a subset of videos with actual clips. Motivated by such applications, there has been a handful of work investigating learning problems with ``weak-but-cheap'' and ``strong-but-expensive'' information sources. Apart from the model in \cite{BateniDJW24} that studied metric clustering, Silwal, Ahmadian, Nystrom, McCallum, Ramachandran, and Kazemi~\cite{SilwalANMRK23} studied \emph{correlation clustering} with the presence of a noisy weak oracle, and Xu, Silwal, and Indyk~\cite{XSI24bimetric} investigated \emph{similarity search} with two types of oracles. Galhotra, Raychaudhury, and Sintos~\cite{GalhotraRS24} proposed a model that is quite close to \cite{BateniDJW24}, albeit their weak oracle is based on comparing \emph{pairs} of distances. Finally, a handful of work also considered learning with only the ``weak'' imprecise oracle \cite{MazumdarS17a,LarsenMT20,PZ21COLT}, and this setting is closely related to learning-augmented algorithms (see, e.g.~\cite{HsuIKV19,IndykVY19,MitzenmacherV22,ErgunFSWZ22,GrigorescuLSSZ22,BravermanDSW24,CLRSZ24,BravermanEWZ25,DPV2025ITCS, FuNSZZ25}).  

\subsection{Road Map}
In \Cref{sec:prelim} we have our preliminaries which include the problem and model definitions. In \Cref{sec:coreset-contruct} we give our algorithm which produces a $(1 + \eps)$-coreset (without fairness constraints) for $k$-median clustering. In \Cref{sec:fair-coreset} we give our algorithm which produces a $(1 + \eps)$-fair coreset for $k$-median clustering. In \Cref{app:coreset-contruct-general} we extend our results to give a $(1+\eps)$-coreset (without fairness constraints) for general $(k,z)$-clustering. Finally, in \Cref{sec:experiments} we present our experiments. 

\section{Preliminaries} \label{sec:prelim}
\paragraph{$k$-clustering and coresets.} 
We define the distance between a point $x\in \calX$ and a set of points $S \subseteq \calX$ as $\bd(S, x) = \min_{s \in S} \bd(s, x)$, or the distance between $x$ and the closest point in $S$. We define the distance between two sets $S_1, S_2 \subseteq \calX$ as $\bd(S_1, S_2) = \min_{s_2 \in S} \bd (S_1, s_2)$. Additionally, we say that the diameter of a set $\diam(S \subseteq \calX) = \max_{s_1, s_2 \in S}\bd(s_1, s_2)$. We assume the distances are \emph{integers}, and the aspect ratio $\Delta=\poly(n)$; we note that both assumptions are common in the literature.

Let $\calX$ be a set of $n$ points. Every point $x \in \calX$ has an associated positive integer weight $\bw(x)$. Note that having an unweighted set $\calX$ is the same as all points being assigned unit weight. We define the total weight of set $\calX$ as $\bw(\calX) = \sum_{x \in \calX} \bw(x)$. We denote a clustering of set $\calX$ by center set $\calC = \{c_1, \ldots, c_k\} \subseteq \calX$ as a partition where each $x \in \calX$ is assigned to the closest $c \in \calC$. 

\begin{definition}[$(k,z)$-clustering, $k$-means, and $k$-median]
    We denote the cost of the $(k,z)$-clustering of point $x \in \calX$ using center set $\calC$ as $\cost_{z}(\calC, x) = \bd^z(\calC, x)\bw(x)$. The overall cost of the $(k,z)$-clustering is defined as
    \[\cost_z(\calC, \calX) = \sum_{x \in \calX} \cost_z(\calC, x).\] 
    Note that for $z=1$, the objective is the $k$-median clustering. For $z=2$ problem, the objective is called the $k$-means clustering.  
\end{definition}

Throughout, we use $\calC^*$ to denote the \emph{optimal} clustering that minimizes the cost. We now define the notion of $(k, \eps)$ coreset formally for clustering objectives.

\begin{definition}[$(k, \eps)$-coreset]
\label{def:k-eps-coreset}
    Given a set of points $\calX$, a weighted subset $S \subseteq \calX$ is a $(k,\eps)$-coreset for $\calX$ for $(k,z)$-clustering if we have 
    \[
    |\cost_{z}(\calC, S) - \cost_{z}(\calC, \calX)| \leq \eps \cdot \cost_{z}(\calC, \calX)
    \]
    for \emph{all} $\calC \subseteq \calX$ satisfying $|\calC| \leq k$. 
\end{definition}

\paragraph{Assignment-preserving clustering and fair $k$-clustering.} We formally introduce the notions of \emph{fair clustering} and \emph{assignment-preserving coresets} that are essential for fair coresets. To this end, we first define the notion of \emph{assignment function} and \emph{assignment constraints} in clustering.

\begin{definition}[Assignment constraints in clustering]\label{def:assignment-constraints}
For any fixed weighted set $S \subseteq \calX$ and $\calC \subseteq \calX$, an assignment constraint is a function $\Gamma : \calC \to \mathbb{R}^{\geq 0}$ such that $\sum_{c \in \calC} \Gamma(c) = \bw(S)$. 
An assignment function $\sigma : S \times \calC \to \mathbb{R}^{\geq 0}$ is said to be consistent with $\Gamma$, denoted as $\sigma \sim \Gamma$, if $\forall c \in \calC$, $\sigma(S, c) := \sum_{x \in S} \sigma(x, c) = \Gamma(c)$. For $S_1 \subseteq S$ and $\calC_1 \subseteq \calC$, the connection cost between $S_1$ and $\calC_1$ under $\sigma$ is defined as:
\[
\cost_z^\sigma(S_1, \calC_1) := \sum_{x \in S_1} \sum_{c \in \calC_1} \sigma(x, c) \cdot \bd^z(x, c).
\]
\end{definition}

 The notion of fair $(k,z)$-clustering with group fairness is defined using assignment constraints. Conceptually, fairness here means that the assignment from each group to each cluster satisfies a certain range of group disparity constraints. 
\begin{definition}
    \label{def:fair-clustering}
    In $(\alpha, \beta)$-fair $(k,z)$-clustering, the input dataset $\calX$ is put into $\calX_1, \cdots \calX_m$ groups. The groups are not necessarily disjoint. We are further given $m$-dimensional vectors $\alpha, \beta \in [0,1]^m$. The goal is to find an assignment function $\sigma$ and a set of centers $\calC\subseteq \calX$ such that for every group $\calX_i$ and center $c\in \calC$, there is
    \begin{align*}
        \alpha_i \leq \frac{\sigma(\calX_i, c)}{\sigma(\calX, c)} \leq \beta_i.
    \end{align*}
\end{definition}

Next, we define $(k,z)$-clustering with \emph{assignment constraints}, which, roughly speaking, specifies the ``amount'' of assignment each center can get. We will see shortly that there is a strong relationship between clustering with assignment constraints and fair clustering. 

\begin{definition}[$(k, z)$-clustering with assignment constraints]\label{def:k-z-clustering-constraints}
Given a weighted dataset $S \subseteq \calX$, a center set $\calC \subseteq \calX$ with $|\calC| \leq k$, and an assignment constraint $\Gamma : \calC \to \mathbb{R}^{\geq 0}$, the objective for $(k, z)$-clustering with assignment constraint $\Gamma$ is defined as:
\[
\cost_z(S, \calC, \Gamma) := \min_{\sigma : S \times \calC \to \mathbb{R}^+, \sigma \sim \Gamma} \cost_{\sigma}^{z}(S, \calC).
\]
\end{definition}

The next notion introduces the \emph{assignment-preserving} $(k,\eps)$ coresets. It is in the same spirit of $(k,\eps)$-coresets as in \Cref{def:k-eps-coreset} but with assignment-preserving properties.

\begin{definition}[Assignment-preserving coresets for $(k, z)$-clustering]\label{def:assignment_preserving_coresets}
Let $\calX$ be a (potentially weighted) dataset. A weighted subset $S \subseteq \calX$ is an assignment-preserving $(k, \epsilon)$-coreset for $(k, z)$-clustering if $\bw(\calX) = \bw(S)$, and for every $\calC \subseteq \calX$ with $|\calC| \leq k$ and assignment constraint $\Gamma : \calC \to \mathbb{R}^{\geq 0}$:
\[
\card{\cost_z(S, \calC, \Gamma) - \cost_z(\calX, \calC, \Gamma)} \leq \eps \cdot \cost_z(\calX, \calC, \Gamma).
\]
\end{definition}
Note that the actual assignment function $\sigma$ that attains $\cost_z(\calX, \calC, \Gamma)$ and $\sigma'$ that attains $\cost_z(S, \calC, \Gamma)$ could be different as they may have different input spaces to begin with.

There is a strong relationship between assignment-preserving coresets and fair coresets. In particular, we have the following established result.
\begin{proposition}[\cite{HuangJV19}]
    \label{prop:fair-to-assignment-reduction}
    Let $\calA$ be an algorithm that constructs an assignment-preserving $(k,\eps)$-coreset for $(k,z)$-clustering on $\calX$ with size $s$.
    Furthermore, let $\Lambda$ be the number of distinct groups each $x\in \calX$ could belong to.
    Then, the following algorithm:
    \begin{itemize}
        \item compute a $(k,\eps)$-coreset for $(k,z)$-clustering for each distinct group,
        \item take the union of the coresets,
    \end{itemize}
    is an algorithm for $(\alpha, \beta)$-fair $(k,\eps)$-coreset for $(k,z)$-clustering. 
\end{proposition}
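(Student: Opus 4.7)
The plan is to show that the $(\alpha,\beta)$-fair $(k,z)$-clustering cost decomposes into a sum of per-group assignment-preserving clustering costs, so that a per-group application of $\calA$ combined with taking unions lifts to a global fair coreset guarantee. Throughout I fix an arbitrary candidate center set $\calC\subseteq\calX$ with $|\calC|\le k$.

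First I would prove a \emph{decomposition identity} in the disjoint-group case. Given any assignment $\sigma:\calX\times\calC\to\mathbb{R}^{\ge 0}$ satisfying the $(\alpha,\beta)$-fairness constraints of \Cref{def:fair-clustering}, define per-group assignment constraints $\Gamma_i(c):=\sigma(\calX_i,c)$. Because the groups partition $\calX$, the cost telescopes:
\[
\cost_z^{\sigma}(\calX,\calC)=\sum_{i=1}^{m}\sum_{x\in\calX_i}\sum_{c\in\calC}\sigma(x,c)\,\bd^{z}(x,c)\;\ge\;\sum_{i=1}^{m}\cost_z(\calX_i,\calC,\Gamma_i),
\]
with equality achievable by optimizing each per-group assignment separately. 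Conversely, any tuple $(\Gamma_i)$ that is \emph{fairness-compatible}---meaning $\alpha_i\sum_j\Gamma_j(c)\le\Gamma_i(c)\le\beta_i\sum_j\Gamma_j(c)$ and $\sum_c\Gamma_i(c)=\bw(\calX_i)$---can be glued back into a fair $\sigma$ on $\calX$. This yields the identity that the fair cost of $\calX$ at $\calC$ equals $\min_{(\Gamma_i)\in\mathcal{F}}\sum_i\cost_z(\calX_i,\calC,\Gamma_i)$, where $\mathcal{F}$ is the set of fairness-compatible tuples. The same identity holds on $S=\bigcup_i S_i$, because $\bw(S_i)=\bw(\calX_i)$ by \Cref{def:assignment_preserving_coresets}, so $\mathcal{F}$ is the same feasible set on both sides.

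Second I would apply the assignment-preserving guarantee of $\calA$ group by group: for every $(\Gamma_i)\in\mathcal{F}$,
\[
(1-\eps)\sum_i\cost_z(\calX_i,\calC,\Gamma_i)\;\le\;\sum_i\cost_z(S_i,\calC,\Gamma_i)\;\le\;(1+\eps)\sum_i\cost_z(\calX_i,\calC,\Gamma_i).
\]
Taking the minimum over $\mathcal{F}$ on both sides and invoking the decomposition identity delivers the desired $(1\pm\eps)$-approximation of the fair cost on $\calX$ by the fair cost on $S$.

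The main obstacle is the overlapping-group case: when a point $x$ lies in several $\calX_i$, the variable $\sigma(x,c)$ is simultaneously counted in several $\sigma(\calX_i,c)$, so the telescoping in the first paragraph overcounts $x$'s cost contribution and the per-group subproblems are no longer independent. My plan is to reinterpret both $\calX$ and $S=\bigcup_i S_i$ as labeled multisets of (point, group)-copies---each original point producing at most $\Delta$ copies, which also accounts for the $\Delta$ factor in the final coreset size---so that the fairness constraints decouple across groups and the decomposition identity holds verbatim. Verifying that this labeled-copy formulation is equivalent to evaluating \Cref{def:fair-clustering} on the union, and carrying the weight bookkeeping through, is the crux; once done, the per-group invocations of $\calA$ and the final summation are routine.
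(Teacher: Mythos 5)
Your disjoint-group argument is correct, and it is essentially the standard reduction: for any fixed $\calC$ the fair cost equals the minimum over fairness-compatible tuples $(\Gamma_i)$ of $\sum_i \cost_z(\calX_i,\calC,\Gamma_i)$, the same identity holds on the union $S=\bigcup_i S_i$ because \Cref{def:assignment_preserving_coresets} forces $\bw(S_i)=\bw(\calX_i)$, and the per-group assignment-preserving guarantee then sandwiches the two minima. Note that the paper does not prove this proposition at all — it is imported from \cite{HuangJV19} — so the relevant comparison is with that reference, whose argument for disjoint groups is the same as yours.

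The genuine gap is exactly where you flag it: the overlapping-group case, and the copy-based fix you propose does not work as stated. Duplicating a point $x$ into one labeled copy per group it belongs to is not an equivalent reformulation of \Cref{def:fair-clustering}: in the original problem $x$ carries a single fractional assignment $\sigma(x,\cdot)$ that is counted once in the cost and once in the denominator $\sigma(\calX,c)$ while simultaneously contributing to the numerator $\sigma(\calX_i,c)$ of every group containing $x$; independent copies may be assigned to different centers and inflate both the cost and $\sigma(\calX,c)$ by up to a factor $\Delta$, so both the objective and the feasible region of the fairness constraints change. If instead you impose consistency constraints forcing all copies of $x$ to share one assignment, the per-group subproblems are re-coupled and your decomposition identity — the whole point of the reduction — no longer holds, so the groupwise invocation of $\calA$ does not apply. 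The reduction in \cite{HuangJV19} avoids copying altogether: partition $\calX$ into \emph{disjoint} classes according to each point's full group-membership pattern (this is what ``distinct group'' means in the proposition, and the number of such classes is the blow-up factor), build an assignment-preserving coreset for each class, and observe that every group $\calX_i$ is a disjoint union of classes, so $\sigma(\calX_i,c)$ is a sum of per-class masses and your disjoint-case identity goes through verbatim, with the $(\alpha,\beta)$ constraints becoming linear conditions on the tuple of per-class assignment constraints.
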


\Cref{prop:fair-to-assignment-reduction} implies the following lemma for the fair coreset in the weak-strong oracle model.
\begin{lemma}
    \label{lemma:fair-to-assignment-ws-oracles}
    For a given dataset $\calX$, let $\Lambda$ be the number of distinct groups each $x\in \calX$ could belong to.
   An algorithm that constructs an assignment-preserving $(k,\eps)$ coreset for $(k,z)$-clustering with size $s_1$, $s_2$ strong oracle queries, $s_3$ weak oracle queries, and time $t$ implies an algorithm for $(k,\eps)$ fair coresets for $(k,z)$-clustering with size $O(\Lambda\cdot s_1)$, $O(\Lambda \cdot s_2)$ strong oracle queries, $O(\Lambda \cdot s_3)$ weak oracle queries, and time $O(\Lambda \cdot t)$.
\end{lemma}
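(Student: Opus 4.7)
The plan is to apply Proposition~\ref{prop:fair-to-assignment-reduction} as a black-box reduction and simply propagate the resource accounting into the weak-strong oracle model. Given a black-box algorithm $\calA$ that, on input $\calX$, produces an assignment-preserving $(k,\eps)$-coreset of size $s_1$ using $s_2$ strong oracle queries, $s_3$ weak oracle queries, and time $t$, I would invoke $\calA$ once per relevant (disjoint) group family induced by the group structure of $\calX$, and return the union of the resulting coresets. By Proposition~\ref{prop:fair-to-assignment-reduction}, this union is a valid $(\alpha,\beta)$-fair $(k,\eps)$ coreset; the number of invocations is at most $\Lambda$ by definition of $\Lambda$ as the maximum number of disjoint groups any point belongs to.

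First, I would argue that each invocation of $\calA$ on a subset $\calX' \subseteq \calX$ operates within the weak-strong oracle model essentially for free: the weak and strong oracles behave identically on pairs $(x,y) \in \calX' \times \calX'$ regardless of which sub-instance we are currently processing (the one-shot randomness of $\WO$ is drawn over all pairs once, so each invocation simply sees a restriction of the same oracle, and $\SO$ is deterministic). Therefore each invocation of $\calA$ truly costs at most $s_2$ strong queries, $s_3$ weak queries, $t$ time, and produces at most $s_1$ coreset points, and summing over the $\Lambda$ invocations gives the claimed $O(\Lambda\cdot s_1)$, $O(\Lambda\cdot s_2)$, $O(\Lambda\cdot s_3)$, and $O(\Lambda\cdot t)$ bounds.

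Second, I would verify the high-probability correctness: each invocation of $\calA$ succeeds with high probability, and a union bound over the at most $\Lambda \le \poly(n)$ invocations preserves ``with high probability'' overall. Combining the per-invocation success events with Proposition~\ref{prop:fair-to-assignment-reduction} then yields the fair coreset guarantee.

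The main (mild) obstacle is the accounting for the strong \emph{point} oracle: if different invocations of $\calA$ query overlapping sets of points, we must not double-count revealed points. However, since once $\SO(x)$ is revealed it remains revealed, the total number of \emph{distinct} points queried is at most the sum over invocations, i.e., $O(\Lambda\cdot s_2)$, which is an honest upper bound. This is the only point that requires care; everything else is a direct translation of Proposition~\ref{prop:fair-to-assignment-reduction} into the weak-strong oracle framework.
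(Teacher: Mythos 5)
Your proposal matches the paper's intent: the paper itself offers no explicit proof of \Cref{lemma:fair-to-assignment-ws-oracles}, instead simply stating that \Cref{prop:fair-to-assignment-reduction} ``implies'' it, so the proof really is just the black-box reduction plus resource accounting that you spelled out. Your additional observations --- that the one-shot randomness of $\WO$ restricts consistently to sub-instances, that a union bound over the $O(\Lambda)\le\poly(n)$ invocations preserves the high-probability guarantee, and that overlapping strong point queries across invocations can only reduce the count --- are all correct and fill in exactly the sort of detail the paper leaves tacit, so this is the same approach made explicit.
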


\paragraph{Weak-strong oracle primitives.}
We now discuss some primitive algorithms in the weak-strong oracle model. 
We first give a one-way reduction from $\SO$ on distances to $\SO$ on points.
\begin{observation}
    \label{obs:pair-to-point-reduction}
    Given a set of points $(\calX, d)$ in a metric space $d$ and any fixed function $f(\calX, d)$, if there exists an algorithm that computes $f(\calX, d)$ with $q$ strong oracle queries on \emph{points}, then there exists an algorithm that computes $f(\calX, d)$ with $O(q^2)$ strong oracle queries on \emph{distances}.
\end{observation}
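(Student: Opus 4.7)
The plan is to give a direct simulation argument. Suppose we have an algorithm $\mathcal{A}$ that computes $f(\calX, d)$ using at most $q$ point queries to $\SO$. By the definition of the strong oracle on points (\Cref{def:weak-strong-oracle-model}), whenever $\mathcal{A}$ has queried both $\SO(x)$ and $\SO(y)$, the exact distance $\bd(x,y)$ is revealed to $\mathcal{A}$. Thus, the total information $\mathcal{A}$ obtains from its $q$ point queries is exactly the set of pairwise distances on the $q$ queried points.

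First, I would construct a new algorithm $\mathcal{A}'$ that has access only to $\SO$ on distances and simulates $\mathcal{A}$ step by step. Whenever $\mathcal{A}$ would issue a point query $\SO(x_i)$ for the $i$-th time, $\mathcal{A}'$ instead issues the distance queries $\SO(x_i, x_j)$ for every previously queried point $x_j$ with $j < i$, and feeds the returned distances to $\mathcal{A}$. Since these distances are exactly what $\mathcal{A}$ would have learned upon completing its $i$-th point query, $\mathcal{A}'$ faithfully reproduces the transcript of $\mathcal{A}$ and therefore outputs $f(\calX, d)$ correctly.

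Next, I would bound the number of distance queries. After the $i$-th simulated point query, $\mathcal{A}'$ has performed $i-1$ new distance queries, so the total count is $\sum_{i=1}^{q}(i-1) = \binom{q}{2} = O(q^2)$, which matches the target bound.

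The argument is essentially bookkeeping; there is no real obstacle. One minor point worth flagging is that $\mathcal{A}$ may be adaptive, so the list of points it wishes to query can depend on previously revealed distances; the simulation handles this automatically because $\mathcal{A}'$ interleaves its distance queries in exactly the same adaptive order as $\mathcal{A}$'s point queries. I would also note that the converse direction does not follow from this argument, which is consistent with the statement claiming only a one-way reduction.
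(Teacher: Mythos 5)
Your proof is correct and is exactly the intended simulation argument: the paper states this as an observation with no written proof precisely because the bookkeeping you describe (replace the $i$-th point query with $i-1$ distance queries to previously queried points, giving $\binom{q}{2}=O(q^2)$ total) is the whole content. Your note about adaptivity being handled automatically by interleaving, and the one-directional nature of the reduction, matches the paper's own remark following the observation.
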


Note that the other direction (points to distance) does \emph{not} necessarily hold. Therefore, up to a quadratic blow-up, our primary focus is the \emph{point queries} for the strong oracle $\SO$\footnote{That being said, coreset algorithms actually enjoy the same asymptotic query efficiency for strong point and distance oracles due to algorithm design.}. 

Next, we give some existing results from \cite{BateniDJW24} which we use in our algorithms. The first technical tool from \cite{BateniDJW24} is an algorithm to construct \emph{weak coresets}. 
\begin{proposition}[\cite{BateniDJW24}, Theorem 8]
\label{prop:meyerson-sketch-coreset}
There exists an algorithm that given a set of points $\calX$ such that $|\calX| = n$, returns an $O(1)$-approximate solution to the optimal $(k,z)$-clustering for $z=O(1)$ on $\calX$ together with the assignment of each point using $O(k\cdot \log^{2}{n})$ strong oracle queries and $O(nk\cdot \log^{2}{n})$ weak oracle queries.
\end{proposition}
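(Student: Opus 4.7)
The plan is to adapt Meyerson's online sketch for facility location~\cite{Meyerson01} to the weak-strong oracle setting. Since the aspect ratio is $\poly(n)$ and distances are integers, the optimum $\mathrm{OPT}$ lies in one of $O(\log n)$ dyadic scales; I would run $O(\log n)$ parallel sketches, one per guess $T$. Inside each instance, I process points in a uniformly random order while maintaining a facility set $F$, initially empty. When $x$ arrives, I query the weak oracle for $\bd(x,f)$ over all $f\in F$, take the minimum estimate $\tilde d$, and open $x$ as a new facility with probability $\min(1,\tilde d^{z}/T)$, spending one strong oracle point query on $x$ (after which exact distances to $x$ are available via the strong oracle); otherwise $x$ is assigned to its estimated nearest facility and that facility's accumulated weight is incremented.

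The query accounting matches the proposition. The classical Meyerson analysis, applied under a correctly guessed $T$ and with a constant-factor slack absorbing the multiplicative estimation error, yields $O(k\log n)$ opened facilities per instance; across $O(\log n)$ guesses, the strong point queries total $O(k\log^{2}n)$. Each arriving point contributes $|F|\le O(k\log n)$ weak queries per instance, summing to $O(nk\log^{2}n)$ weak queries. At the end I select the instance whose recorded total connection cost is within the expected range of its guess; the weighted facility set together with the explicit assignments is the returned $O(1)$-approximate solution.

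The main obstacle is handling the weak oracle's one-shot adversarial corruption of a $1/3$ fraction of pairs, since majority voting is unavailable. I would argue robustness through two observations. First, for a uniformly random arrival order and a worst-case placement of corrupted entries, the fraction of pairs $(x,f)$ with $f\in F$ seen by the sketch that are correctly answered concentrates around $2/3$, so with constant probability the weak oracle answer to the true nearest facility is exact; since an adversarial \emph{over}estimate on a non-nearest facility cannot flip the argmin, the aggregate estimate $\tilde d(x,F)$ undershoots $\bd(x,F)$ by at most a constant factor with constant probability, and overshoots by at most a constant factor in expectation. Second, a ring-by-ring amortization analogous to the original Meyerson potential shows that the expected number of opened facilities and the expected assignment cost both deviate from the noiseless values by only constant factors, so the $O(1)$-approximation and the $O(k\log n)$ facility-count bound both survive the distortion.

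Finally, independence across the $O(\log n)$ scales boosts success to high probability via a standard Chernoff bound, while the recorded assignments (inherited from the sketch's choices) furnish the per-point assignment required by the statement. The hardest step in the plan is the robustness analysis: because the adversary pins down the $1/3$ corruption pattern before the random arrival order is drawn, one has to argue that no adversarial placement can concentrate corrupted entries on the critical argmin queries throughout the sketch's execution, which is where the random arrival order combined with a union bound over scales does the heavy lifting.
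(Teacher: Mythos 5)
This statement is cited from \cite{BateniDJW24} (Theorem~8) and the paper offers no proof of its own, so the comparison has to be to the technique behind that citation and to plain correctness. Your high-level skeleton --- run Meyerson's online sketch in parallel across $O(\log n)$ dyadic guesses for $\OPT$, charge one strong oracle query per opened facility, and $O(|F|)$ weak oracle queries per arriving point --- is the right skeleton, and the query accounting you give ($O(k\log^2 n)$ strong, $O(nk\log^2 n)$ weak) is consistent with the bounds in the proposition. The paper itself acknowledges that the cited algorithm ``is an adaptation of Meyerson sketch.''

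The robustness argument, however, has a fatal gap. You claim ``an adversarial \emph{over}estimate on a non-nearest facility cannot flip the argmin,'' and from this conclude that $\tilde d(x,F)$ undershoots $\bd(x,F)$ by at most a constant factor. But the adversary never wants to overestimate a non-nearest facility: since you set $\tilde d(x,F)=\min_{f\in F}\WO(x,f)$, a single corrupted entry returning $0$ for \emph{any} $f\in F$ collapses $\tilde d$ to $0$. Each pair $(x,f)$ is corrupted with probability $1/3$, so once $|F|$ is even a modest constant, a $1-(2/3)^{|F|}$ fraction of the remaining points will have at least one corrupted pair against the current facility set, and the adversary can drive $\tilde d(x,F)$ to $0$ for all of them. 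Under your rule those points never open a new facility and are assigned to a facility of the adversary's choosing, wrecking both the $O(k\log n)$ facility-count bound and the connection-cost bound. Randomizing the arrival order does not repair this: the defect is structural --- a minimum over noisy values is not robust to a single outlier --- not combinatorial, and reordering points does not change which pairs are corrupted.

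The missing ingredient is a \emph{robust} distance primitive. The correct move, and the one BateniDJW24 use, is to maintain around each opened facility a set of $\Omega(\log n)$ points already queried on the strong oracle (hence with known, bounded diameter), and to estimate $\bd(x,\cdot)$ by aggregating weak queries over the whole set rather than by a raw $\min$ --- a majority of $\Omega(\log n)$ i.i.d.\ $2/3$-correct answers is correct with high probability, which makes a median robust to the adversarial $1/3$ fraction at the price of a one-sided additive $O(R_S)$ error. This is exactly \Cref{prop:dist-est-weak-oracle} in the present paper. Your plan needs to be rewired around such a primitive (opening a small strongly-queried anchor set per facility, and assigning/opening via the robust estimate) before the ring-by-ring Meyerson potential argument you invoke in your second observation can actually be carried out; as written, the first observation on which it rests is false.
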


Next, we give a \emph{distance estimator} that is helpful for using weak oracle queries to compute \emph{approximate} distances. 
\begin{proposition}[\cite{BateniDJW24}, Lemma 6]
\label{prop:dist-est-weak-oracle}
Let $S$ be a set of points such that $\card{S}\geq 10\log{n}$, and suppose that strong oracle queries have been made on all points $s\in S$. Furthermore, let $R_{S}$ be the diameter of $S$. Then, there exists an algorithm that given $S$, for any $x\in \calX$ and $s\in S$, uses $\Ot(1)$ weak oracle queries and returns a estimated distance $\dtilde(x, S)$ such that
\begin{align*}
    \bd(x,s)\leq \dtilde(x,S)\leq \bd(x,s) + C_0 \cdot R_S,
\end{align*}
where $C_0\leq 5$ is a universal constant.
\end{proposition}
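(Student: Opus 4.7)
The plan is to exploit two facts: (i) strong oracle queries have been made on every $s\in S$, so the exact pairwise distances (and hence $R_S$) within $S$ are already known to the algorithm; and (ii) the weak oracle's randomness is independent across distinct pairs, so a Chernoff bound amplifies the per-query $2/3$ correctness into a high-probability $3/5$-fraction of correct responses over a batch of $\Theta(\log n)$ queries, which comfortably fits the $\Otilde(1)$ budget per estimate.

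First I would query $\WO(x, s)$ for $\Theta(\log n)$ points $s$ drawn from $S$; with probability $1-1/\poly(n)$, at least a $3/5$-fraction of these queries return the true value $\bd(x,s)$. Next, by triangle inequality all correct responses must lie in an interval of length at most $R_S$, since $|\bd(x,s_1)-\bd(x,s_2)|\leq \bd(s_1,s_2)\leq R_S$ for any $s_1,s_2\in S$. I would then perform a one-dimensional ball-clustering on the returned values: find any $v$ such that at least a $3/5$-fraction of the responses lies in $[v-R_S, v+R_S]$. Such a $v$ exists because the correct responses themselves form such a ball, and since the correct responses are a strict majority, every $3/5$-cluster must intersect the correct cluster. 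This pins $v$ within $R_S$ of some true value $\bd(x,s^*)$ for a point $s^*\in S$. I would output $\dtilde(x,S) = v + C\cdot R_S$ for a suitable small constant $C$.

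Verifying the two-sided guarantee is then a short triangle-inequality calculation: for any $s\in S$, $|\bd(x,s)-\bd(x,s^*)|\leq R_S$, and combining with $v\in[\bd(x,s^*)-R_S,\bd(x,s^*)+R_S]$ gives $\dtilde(x,S)\in[\bd(x,s)+(C-2)R_S,\,\bd(x,s)+(C+2)R_S]$, so tuning $C$ yields the bound with $C_0\leq 5$. The principal obstacle is robustness to adversarially corrupted answers: naive aggregates such as the minimum or mean break immediately because the adversary can inject arbitrary outliers. The saving grace is that the known geometry of $S$ forces the correct responses into a predictable cluster of diameter $R_S$, while any decoy adversarial cluster has size at most $2/5$ and cannot win the ball-clustering step. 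The requirement $|S|\geq 10\log n$ is exactly what is needed for the Chernoff concentration to hold with inverse-polynomial failure probability.
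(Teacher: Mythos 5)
Your proposal is correct, and in fact the paper gives no proof of this statement at all: it is imported verbatim from \cite{BateniDJW24} (their Lemma~6), so there is nothing internal to compare against. Your argument — query $\WO(x,s)$ against $\Theta(\log n)$ points of $S$, use Chernoff to get a $3/5$ majority of uncorrupted answers, observe that all correct answers lie in an interval of length $\diam(S)=R_S$ by the triangle inequality, recover via a robust $R_S$-ball (or equivalently a median-type) aggregation a value $v$ within $R_S$ of some true $\bd(x,s^*)$, and shift up by $2R_S$ to make the estimate one-sided with total slack $4R_S\leq 5R_S$ — is exactly the standard robust-aggregation idea behind the cited lemma, and the triangle-inequality bookkeeping checks out. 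Two small points to tighten: because the weak oracle's randomness is drawn exactly once per pair, the $\Theta(\log n)$ queried points must be \emph{distinct} elements of $S$ (repeating a pair buys no fresh randomness), which is precisely what the hypothesis $\card{S}\geq 10\log n$ guarantees; and since the number of usable fresh pairs is capped by $\card{S}$, the Chernoff exponent is only a fixed constant times $\log n$, so the threshold (e.g.\ majority versus $3/5$) and constants should be chosen with the intended $1-1/\poly(n)$ union bounds in mind rather than left as an unspecified $\Theta(\cdot)$.
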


\paragraph{Ring sampling primitives.}
We use the idea of uniform sampling from \emph{rings} in the same manner of \cite{Chen09,BravermanCJKST022}.
In what follows, we introduce the following definition for ring sampling to introduce our algorithm. 
\begin{definition}
    Let $\nu(A,X) \le \beta \nu_{\text{OPT}}$ denote the clustering cost and let $R = \frac{\nu(A,X)}{\beta n}$ be a lower bound on the average radius of the optimal clustering. Let $P_i$ be the set of points $x\in \calX$ assigned to $c_i$ and let $P_{i,j} = P_i \cap [B(c_i, {(2 C_0)}^jR) \setminus B(c_i, {(2 C_0)}^{j-1}R)]$ for $j \geq 1$ and $P_{i,j} = P_i \cap B(c_i, R)$ for $j = 0$. As a result, we have $\log (\beta n)$ rings around each $c_i \in \calC$ and $x \in \calX \setminus \calC$ is part of exactly one ring.
\end{definition}

Note that compared to the standard ring sampling-based algorithms like \cite{Chen09}, our algorithm uses search with $C_0$ multiplication. This ensures whenever a point is placed in the wrong ring, it can only end up in the next ring. 

\paragraph{The aspect ratio and the number of rings.} Since the rings are defined by a multiplicative factor of $C_0>1$, for instances with an aspect ratio $\Delta$, for each center $c_i$ of the $O(1)$-approximation solution, there is at most $O(\log_{C_0}(\Delta))$ rings. It is common to assume the aspect ratio $\Delta=\poly(n)$. For the clarity of presentation, we assume there are exactly $\log{n}$ rings in this paper. Any change within the regime of polynomial aspect ratio will only result in a constant factor difference in the coreset size.

\section{A \texorpdfstring{$(1+\eps)$}{eps}-Coreset for \texorpdfstring{$k$}{k}-Median Clustering}
\label{sec:coreset-contruct}

In this section, we give our algorithm which produces a $(1+\eps)$-coreset (without fairness constraints) for $(k,z)$-clustering with $\Otilde(k^2/\eps^3)$ points and strong oracle $\SO$ queries. We first remind the readers of the guarantees in our theorem.
For ease of presentation, we focus on the \emph{$k$-median} case for the analysis and then present the  full proof of $(k,z)$-clustering in \Cref{app:coreset-contruct-general}. We first prove the following theorem. 

\begin{theorem}
    \label{thm:strong-coreset-k-median}
     There exists an algorithm in the weak-strong oracle
model that, with high probability, computes a $(k, \eps)$ coreset of size $\Otilde(\frac{k^2}{\eps^3})$ for $k$-median clustering using $\Otilde(\frac{k^2}{\eps^3})$ strong oracle point queries (or edge queries), $\Otilde(nk)$ weak oracle queries, and $\Otilde(nk)$ time.
\end{theorem}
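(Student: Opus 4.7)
The plan is to adapt Chen's ring-sampling coreset \cite{Chen09} to the weak-strong oracle model, using the heavy-hitter sampling and recursive peeling strategy sketched in the Technical Overview.

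\emph{Setup.} First I would call Proposition~\ref{prop:meyerson-sketch-coreset} to obtain, with $\Otilde(k)$ strong oracle point queries and $\Otilde(nk)$ weak oracle queries, an $O(1)$-approximate $k$-clustering $\calC=\{c_1,\dots,c_k\}$ together with an assignment $\phi:\calX\to\calC$, and hence the radius parameter $R$ and the conceptual rings $P_{i,j}$ with geometric radii $(2C_0)^j R$. The algorithm does \emph{not} yet know which point lies in which ring, which is precisely the obstacle the main loop must overcome without spending a strong query per input point.

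\emph{Main loop (sampling and peeling).} I would run $T=\Otilde(1/\eps)$ rounds against a shrinking residual $\calX^{(t)}$, initialized by $\calX^{(0)}=\calX$. In round $t$, I would draw a uniform sample $S_t\subseteq\calX^{(t-1)}$ of size $m=\Otilde(k/\eps^2)$, strong-query every $s\in S_t$, and apply Proposition~\ref{prop:dist-est-weak-oracle} with $S_t$ as the anchor set to estimate $\bd(x,\phi(x))$ for every remaining $x$; the $2C_0$-factor gap built into the ring definition absorbs the additive $C_0\cdot R_S$ slack, so that each $x$ is classified into either its true ring or the adjacent outer one. A ring $(i,j)$ is declared \emph{heavy} in round $t$ if $|S_t\cap P_{i,j}|\ge\tau=\Otilde(k/\eps^2)$; I would output the points $S_t\cap P_{i,j}$ reweighted by $|P_{i,j}|/|S_t\cap P_{i,j}|$ as the coreset contribution of that ring, and then peel from $\calX^{(t-1)}$ the heavy ring together with all same-center lighter-radius rings $P_{i,j'}$ with $j'<j$, identifying those points via the \cite{BateniDJW24}-style weak-plus-anchor estimator.

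\emph{Error analysis.} Correctness rests on two ingredients. First, a Bernstein-style uniform-sampling concentration (union-bounded over the $n^{O(k)}$ candidate center sets of size $\le k$ and the $\Otilde(k)$ heavy rings across all rounds) would show that each heavy-ring reweighted sample of size $\ge\tau=\Otilde(k/\eps^2)$ is a $(1+\eps)$-coreset for that ring against every query $\calC'$. Second, the interior light rings peeled alongside a heavy ring $(i,j)$ contribute only an $O(\eps)$ fraction of $\cost(\calC',P_{i,j})$ to every query: each such light ring contains at most $\tau n/m$ points (it failed the heavy test), and these points sit within the radius of the heavy ring around $c_i$, so a triangle-inequality case split on whether $\bd(c_i,\calC')$ exceeds that radius bounds the ratio of connection costs by $O(\eps)$. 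Summing over all iterations yields total $(1+\eps)$-relative error.

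\emph{Complexity, termination, and main obstacle.} Total strong queries and coreset size are $T\cdot m=\Otilde(k^2/\eps^3)$ after summing over all heavy rings produced across all rounds; weak queries and runtime are dominated by the initial Meyerson call plus $T$ linear-time distance-estimation passes, giving $\Otilde(nk)$ each. To bound $T$, I would argue that the residual weight $|\calX^{(t)}|$ drops by a constant factor per iteration (the largest currently-remaining ring is forced to become heavy), so after $\Otilde(1/\eps)$ rounds every ring has either been processed or folded inside a heavy outer ring. The main technical obstacle is that the weak-oracle randomness is drawn \emph{once} for the whole execution (Definition~\ref{def:weak-strong-oracle-model}), so the per-round ring-classification errors, the Bernstein bounds for heavy rings, and the cost-comparison inequalities for peeled light rings are all correlated across iterations; making this simultaneous high-probability argument go through is where the proof needs the most care, and where the $C_0\le 5$ slack of Proposition~\ref{prop:dist-est-weak-oracle} is used to keep classification errors from cascading from one round into the next.
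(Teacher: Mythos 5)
Your overall architecture matches the paper: start from the Meyerson-sketch $O(1)$-approximation (Proposition~\ref{prop:meyerson-sketch-coreset}), sample a batch each round, strong-query the batch to locate ring membership, declare heavy rings, add reweighted samples to the coreset, peel, and charge the light rings to heavy rings. But two of the steps as you describe them would fail, and a third point is not addressed.

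\textbf{The anchor set for the distance estimator.} You propose running Proposition~\ref{prop:dist-est-weak-oracle} ``with $S_t$ as the anchor set'' and claim the $2C_0$ ring gap absorbs the additive $C_0\cdot R_S$ slack so each $x$ lands in its true ring or the one outside it. This is false if $S_t$ is the entire round's sample: $S_t$ spans all surviving rings, so $R_S$ is on the order of the outermost surviving ring's radius, say $(2C_0)^{j_{\max}}R$, and the additive slack $C_0\cdot R_S$ dwarfs every inner ring. The paper avoids this by restricting the anchor set $\Speel$ to $10\log n$ points drawn from the \emph{inner} rings $\cup_{j\le\jstar}S_{i,j}\cup\{c_i\}$ (Algorithm~\ref{alg:normal-peeling}), so $R_{\Speel}\le(2C_0)^{\jstar}R$ and the peeling threshold $(2C_0)^{\jstar+1}R$ absorbs $C_0\cdot R_{\Speel}$ with room to spare; and even this is not tight enough for ring $j=0$, which requires the separate \textsc{Conservative-Peeling} step (Algorithm~\ref{alg:conservative-peeling}) and the $R/2C_0$ inner sub-ring. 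Your proposal has no analogue of either the restricted anchor set or the $j=0$ special case, and without them the ring-classification-error analysis collapses for the inner rings.

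\textbf{The reweighting and the round count.} You ``output the points $S_t\cap P_{i,j}$ reweighted by $|P_{i,j}|/|S_t\cap P_{i,j}|$,'' but $|P_{i,j}|$ is not known to the algorithm; you would need to split $S_t\cap P_{i,j}$ into an estimation part and a weighting part (as in \textsc{Coreset-Update}) and carry the multiplicative error of $\mtilde_{i,j}$ through Lemma~\ref{lem:coreset-update}. Separately, your termination argument is internally inconsistent: you justify $T=\Otilde(1/\eps)$ rounds by asserting the residual ``drops by a constant factor per iteration,'' but the only thing guaranteed is that the largest surviving ring holds a $1/\log n$ fraction of what remains (there are $\log n$ rings), so the per-round decay is $(1-\Theta(1/\log n))$ and the round count should be $O(\log^2 n)=\Otilde(1)$, not $\Otilde(1/\eps)$; correspondingly, the per-round batch size must be $\Otilde(k/\eps^3)$ rather than your $\Otilde(k/\eps^2)$ for the heavy-ring thresholds to trigger reliably, which is what the paper does. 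The total $\Otilde(k^2/\eps^3)$ you quote happens to come out right, but neither factor in your product is justified.

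\textbf{A smaller point.} Your claimed bound for peeled light rings (``$O(\eps)$ fraction of $\cost(\calC',P_{i,j})$'') is slightly too strong. The correct statement, as in Lemma~\ref{lem:cost-between-heavy-and-light} and Lemma~\ref{lem:ignored-rings-total-cost}, has an extra additive term $O(\eps)\cdot|P_{i,j}|\cdot\diam(P_{i,j})$, which is \emph{not} bounded by $\cost(\calC',P_{i,j})$ in general and must be charged separately to $\OPT$ via the charging scheme and Claim~\ref{claim:sum_point_diam}. Your case split on $\bd(c_i,\calC')$ is a reasonable way to organize that argument, but the recursive transfer of charges across multiple rounds (Algorithm~\ref{alg:charging-scheme} and Lemma~\ref{lem:being-charged}) also needs to be accounted for, since a ring can be peeled by a heavy ring that was itself peeled in a later round without ever becoming part of the coreset.

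Your observation about the one-shot weak-oracle randomness is a legitimate concern but is not actually the bottleneck: each round strong-queries a fresh sample and builds a fresh anchor set, so Proposition~\ref{prop:dist-est-weak-oracle} is applied to fresh pairs, and a per-application union bound over the $O(k\log^2 n)$ invocations suffices.
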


\paragraph{The algorithm.} The main idea of our algorithm is based on the \emph{heavy-hitter sampling} and \emph{recursive peeling}. Roughly speaking, our algorithm first conducts uniform sampling for the points assigned to a center in the $O(1)$-approximation. The algorithm then makes strong oracle queries on these points (or strong distance oracle queries between these points and the center) to determine the rings for the sampled point. Crucially, this allows us to identify the \emph{heavy-hitter ring} that accounts for a significant portion of the cost. As such, it suffices to only construct coresets for these heavy-hitter rings, and we indeed would have enough samples from such rings. At this point, we will argue that there is a collection of rings whose points are either added to the coreset or their costs are insignificant enough so they can be ignored. We will then ``peel'' these points from the point set and recurse on the remaining set. The detailed description of the algorithm is as follows.
\begin{Algorithm}
\label{alg:strong-coreset}
    \textbf{The construction algorithm for a $(1+\eps)$ coreset.}
    \item \begin{enumerate}
        \item Run the algorithm of \Cref{prop:meyerson-sketch-coreset} to get the weak approximate clustering $\WC$. 
        \item Initialize all rings $P_{i,j}$ as ``\emph{not processed}''.
        \item Initialize the strong coreset $\SC\gets \emptyset$.
        \item For each center $c_i$ of $\WC$: 
        \item \begin{enumerate}
            \item For each ring $P_{i,j}$, let $S_{i,j}$ be some samples from the ring specified later.
            \item Initialize $\Ptilde_{i}\gets P_{i}$ as the remaining set of points.
            \item\label{line:ring-iteration} For $r =1:10\log^2{n}$ iterations:
            \item \begin{enumerate}
                \item Sample $s_r = 100 C_0\cdot \frac{k \log^{3}{n}}{\eps^3}$ points uniformly at random from $\Ptilde_{i}$, and let the sample set be $S_r$.
                \item Make strong oracle $\SO$ queries as follows:
                \begin{itemize}
                    \item For point $\SO$ queries, query $\SO(x)$ for all points $x \in S_r$.
                    \item For distance $\SO$ queries, query $\SO(x,c_i)$ for all points $x \in S_r$.
                \end{itemize}
                \item \textbf{Ring assignment:} For each point $x \in S_r$, add $x$ to $S_{i,j}$, where $j$ is the index of the ring to which $x$ belongs. 
                \item\label{line:outmost-heavy-ring} Find the ring $\jstar$ with the \emph{largest index} such that $\card{S_{i,\jstar}}\geq 80 C_0\cdot \frac{k \log^2{n}}{\eps^3}$. 
                \item\label{line:coreset-iteration} For each ring $\ell$ with index $\ell \leq (\jstar+1)$ and $\ell$ being ``\emph{not processed}'':
                \begin{itemize}
                \item \label{line:coreset-update} If $\card{S_{i,\ell}}\geq 30\cdot \frac{k \log^2{n}}{\eps^2}$, run \textsc{Coreset-Update} (\Cref{alg:coreset-update}) with inputs $S_{i,\ell}$ and $\SC$ to obtain an updated $\SC$.
                \end{itemize}
                \item\label{line:peeling-iteration} Conduct the \textbf{peeling} step as follows:
                \begin{itemize}
                    \item If $\jstar\neq 0$, run \textsc{Peeling} (\Cref{alg:normal-peeling}) with $\{S_{i,j}\}_{j=1}^{\jstar}$ and $\Ptilde_i$.
                    \item Otherwise (if $\jstar=0$): 
                    \begin{itemize}
                        \item If more than $\frac{\card{S_{i,0}}}{2}$ points have distance at most $R/2C_0$ to $c_i$, then run \textsc{Conservative-Peeling} (\Cref{alg:conservative-peeling}) with $S_{i,0}$ and $\Ptilde_i$.
                        \item Otherwise, run \textsc{Peeling} (\Cref{alg:normal-peeling}) with $\{S_{i,0}\}$ and $\Ptilde_i$.
                    \end{itemize}
                \end{itemize}
                \item Mark rings as ``\emph{processed}'' with the following rule:
                \begin{itemize}
                \item If \textsc{Conservative-Peeling} (\Cref{alg:conservative-peeling}) is executed (which implies $\jstar=0$), mark points in $P_{i,0}$ as ``\emph{processed}''.
                \item Otherwise, mark all the points in rings $P_{i,\ell}$ for $\ell \leq \jstar+1$ as ``\emph{processed}''.
                \end{itemize}
            \end{enumerate}
        \end{enumerate}
    \end{enumerate}
\end{Algorithm}

\begin{Algorithm}[\textsc{Coreset-Update}]
\label{alg:coreset-update}
\textbf{An algorithm that adds points to coreset.}\\
\textbf{Inputs: A sampled point set $S_{i,\ell}$; a set of existing coreset $\SC$.}\\
\textbf{Output: An updated set of coreset $\SC$.}
\begin{enumerate}
\item Arbitrarily split the set of sampled points $S_{i,\ell}$ to
\begin{itemize}
    \item $\Sest_{i,\ell}$ with $1/3$ of the points in $S_{i,\ell}$;
    \item $\Sweight_{i,\ell}$ with $2/3$ of the points in $S_{i,\ell}$.
\end{itemize}
\item Let $\mtilde_{i,\ell} \gets \frac{3\card{\Ptilde_{i}}}{s_r}\cdot \card{\Sest_{i,\ell}}$ be the estimated number of points in $P_{i,\ell}$.
\item Re-weighting: add each point $x\in S_{i,\ell}$ to the coreset $\SC$ with weight $\frac{\mtilde_{i,\ell}}{\card{S_{i,\ell}}}$. 
\end{enumerate}
\end{Algorithm}

\begin{Algorithm}[\textsc{Peeling}]
\label{alg:normal-peeling}
\textbf{An algorithm that removes points.}\\
\textbf{Inputs: Sampled sets $\{S_{i,j}\}_{j=1}^{\jstar+1}$, where $\jstar$ is obtained from line~\ref{line:outmost-heavy-ring}; the current set of surviving points $\Ptilde_i$.}\\
\textbf{Output: An updated set of surviving points $\Ptilde_i$.}
\begin{enumerate}
    \item For each point $x\in \Ptilde_{i}$, run the algorithm of \Cref{prop:dist-est-weak-oracle} with the set $\Speel$ as an arbitrary subset of $10\log{n}$ points in the subset of $\cup_{j=1}^{\jstar} S_{i,j}\cup \{c_i\}$.
    \item Let $T^{(r)}_{i}$ be the set of points such that $\dtilde{(x, \Speel)}\leq {(2C_0)}^{\jstar+1} \cdot R$.
    \item Remove all points in $T^{(r)}_{i}$ from $\Ptilde_{i}$, i.e., $\Ptilde_{i} \gets \Ptilde_{i} \setminus T^{(r)}_{i}$.
\end{enumerate}
\end{Algorithm}

\begin{Algorithm}[\textsc{Conservative-Peeling}]
\label{alg:conservative-peeling}
\textbf{An algorithm that removes points.}\\
\textbf{Inputs: Sampled sets $\{S_{i,0}\}$; the current set of surviving points $\Ptilde_i$.}\\
\textbf{Output: An updated set of surviving points $\Ptilde_i$.}
\begin{enumerate}
    \item For each point $x\in \Ptilde_{i}$, run the algorithm of \Cref{prop:dist-est-weak-oracle} with the set $\Speel$ as an arbitrary subset of $10\log{n}$ points in the subset of $S_{i,0}\cup \{c_i\}$.
    \item Let $T^{(r)}_{i}$ be the set of points such that $\dtilde{(x, \Speel)}\leq R$.
    \item Remove all points in $T^{(r)}_{i}$ from $\Ptilde_{i}$, i.e., $\Ptilde_{i} \gets \Ptilde_{i} \setminus T^{(r)}_{i}$.
\end{enumerate}
\end{Algorithm}

\paragraph{The analysis.} We first prove the query efficiency and coreset size and then present the correctness proof.
\begin{lemma}
\label{lem:strong-coreset-so-queries}
    \Cref{alg:strong-coreset} uses at most $k^2 \cdot \frac{\log^5{n}}{\eps^3}$ $\SO$ point (or distance) queries and at most $O(nk \log^{3}{n})$ weak oracle queries. The algorithm converges in $\Ot(nk +\frac{k^2}{\eps^3})$ time.
\end{lemma}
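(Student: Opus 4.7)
The plan is to account separately for the three budgets — strong-oracle queries, weak-oracle queries, and total running time — and in each case split the accounting into a contribution from the Meyerson-style initialization (Proposition~\ref{prop:meyerson-sketch-coreset}) and a contribution from the main sampling/peeling loop.

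For the strong-oracle count, I would first invoke Proposition~\ref{prop:meyerson-sketch-coreset} to charge $O(k\log^{2}n)$ queries to the preprocessing step that produces the weak clustering $\WC$. The main double loop iterates over the $k$ centers $c_i$ and, for each, performs at most $10\log^{2}n$ inner iterations (line~\ref{line:ring-iteration}); each such iteration draws $s_r = 100 C_0\cdot k\log^{3}n/\eps^{3}$ sample points and issues exactly one $\SO$ point query (or one $\SO(x,c_i)$ distance query) per sample. Multiplying $k\cdot 10\log^{2}n\cdot s_r$ gives $O(k^{2}\log^{5}n/\eps^{3})$, which dominates the preprocessing term. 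Since this matches $\SO$ point queries one-to-one, and $\SO$ distance queries are used only in the form $\SO(x,c_i)$ inside the same loop, the stated bound $k^{2}\log^{5}n/\eps^{3}$ holds for both query types (and Observation~\ref{obs:pair-to-point-reduction} is not even needed here since we stay linear).

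For the weak-oracle count I would again separate the Meyerson contribution of $O(nk\log^{2}n)$ from the cost of the peeling calls (\Cref{alg:normal-peeling} and \Cref{alg:conservative-peeling}). Each peeling iteration scans the current set $\Ptilde_i$, invoking Proposition~\ref{prop:dist-est-weak-oracle} once per surviving point with a fixed estimator set of $10\log n$ strong-queried anchors; by the proposition this costs $\Ot(1)$ weak queries per point, which I will track explicitly as $O(\log n)$. For a fixed center $c_i$, the worst case is that $\Ptilde_i$ stays of size $|P_i|$ through all $10\log^{2}n$ iterations, yielding $O(|P_i|\log^{3}n)$ weak queries; summing over centers uses $\sum_i|P_i|\le n$ and gives $O(n\log^{3}n)$. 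Adding the Meyerson term yields $O(nk\log^{3}n)$, matching the statement.

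The running time is bookkept in the same two pieces. The algorithm of Proposition~\ref{prop:meyerson-sketch-coreset} runs in $\Ot(nk)$ time. Inside the main loop, the per-iteration work is dominated by (i) sampling, issuing strong queries, and processing the $s_r$ samples for ring assignment and the \textsc{Coreset-Update} call, which is $O(s_r)$ per iteration and totals $O(k^{2}\log^{5}n/\eps^{3})=\Ot(k^{2}/\eps^{3})$ across the $k\cdot 10\log^{2}n$ iterations, and (ii) the peeling scans, whose running time is proportional to their weak-query count $\Ot(n)$ per center or $\Ot(nk)$ overall. Summing gives $\Ot(nk+k^{2}/\eps^{3})$.

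The arithmetic here is routine; the only subtle step I anticipate is making sure that the peeling running time is not overcounted. Because a point can appear in $\Ptilde_i$ for only one value of $i$ (namely the center it was assigned to by $\WC$), once I fix the right amortization — each point contributes to the peeling cost of at most one center, and within that center at most $10\log^{2}n$ times — the $\sum_i|P_i|=n$ telescoping collapses the apparent $nk$ blow-up to a clean $\Ot(n)$ per-center bound. This is the only place where the bound would be loose if one were careless, so it is the step I would write out most carefully.
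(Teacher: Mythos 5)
Your proof is correct and follows essentially the same decomposition as the paper's: charge the Meyerson preprocessing via Proposition~\ref{prop:meyerson-sketch-coreset} separately, then sum the per-iteration sampling and peeling costs across $k$ centers and $O(\log^2 n)$ iterations to get $O(k^2\log^5 n/\eps^3)$ strong queries, $O(nk\log^3 n)$ weak queries, and $\Ot(nk+k^2/\eps^3)$ time. Your explicit amortization of the peeling weak-query cost via the disjointness of the $P_i$'s is slightly sharper than the paper's own bookkeeping, but both arrive at the same stated bounds (and in your running-time paragraph the phrase ``$\Ot(n)$ per center or $\Ot(nk)$ overall'' momentarily undoes the telescoping you just emphasized, though this has no bearing on the conclusion).
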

\begin{proof}
By \Cref{prop:meyerson-sketch-coreset}, the number of strong oracle queries required to get $\WC$ is $O(k\log^{2}{n})$ strong point $\SO$ queries or $O(k\log^{2}{n})$ strong distance queries.  
In addition, for each center for $O(\log^2 n)$ iterations, the algorithm samples $O(k \log^3 n / \eps^3)$ points and makes strong oracle queries on all of them (or the queries between them and the center $c_i$). 
Since there are at most $k$ centers in $\WC$, the number of strong oracle queries here is $O(k^2 \log^5 n / \eps^3)$.

For the number of weak oracle queries, the $\WC$ requires $O(nk \log^{2}{n})$ weak oracle queries. Furthermore, each point makes at most $O(\log{n})$ queries during one iteration of line~\ref{line:ring-iteration} in \Cref{alg:strong-coreset}. There are at most $k$ centers, and each center induces $O(\log^{2}{n})$ iteration, which lead to a total of $O(nk\log^{3}{n})$ weak oracle queries. Finally, the time efficiency scales with the total number of weak and strong oracle queries, which leads to the desired lemma statement.
\end{proof}

We now give the proof of correctness for \Cref{alg:strong-coreset}. Our plan is to show that for each iteration of line~\ref{line:ring-iteration}, $i).$ each iteration of line~\ref{line:peeling-iteration} (and therefore line~\ref{line:ring-iteration}) only affects the rings with the index at most $j+1$, and the entire algorithm converges in $O(\log^{2}{n})$ iterations; $ii).$ for any ring $j$ with samples more than $\Omega(k \log^{2}{n} / \eps^2)$, we could get \emph{fairly accurate} weights for the points to form a coreset; and $iii).$ for any ring $j$ with samples less than $O(k \log^{2}{n} / \eps^2)$, we can ``charge'' the cost to the adjacent ring $j$ of this iteration.  

\noindent
\textbf{Step I: the convergence of the algorithm and the structural results.} We start with the proof of $i)$ which includes the convergence of the algorithm and some structural results that are useful for later steps. We start with a lemma that states the ``local'' properties of the peeling step.
\begin{lemma}
\label{lem:peel-local-soundness}
Let $\jstar$ be the index of the ring found by line~\ref{line:outmost-heavy-ring} of \Cref{alg:strong-coreset}. Then, with probability at least $1-1/\poly(n)$, no points in $P_{i, \ell}$ for $\ell >\jstar+1$ are removed in the peeling step.
\end{lemma}
\begin{proof}
    The lemma follows from the guarantees in \Cref{prop:dist-est-weak-oracle} and the choice of the multiplicative $C_0$ factor in the way we partition the rings. Concretely, let us analyze any fixed index $\ell>\jstar+1$. 
    Note that for the ball formed by points in $\Speel$, there are at least $80\cdot C_0 \cdot \frac{k \log^{2}{n}}{\eps^3}\geq 10 \log{n}$ sampled points. 
    Furthermore, by the definition of rings outside $\jstar+1$, we have $\bd(x, c_i)>(2C_0)^{\jstar+1}\cdot R$ .
    Therefore, for all points $x \in  P_{i, \ell}$, we have that
    \begin{align*}
         \dtilde(x, \Speel) \geq \bd(x, c_i) >(2C_0)^{\jstar+1}\cdot R
    \end{align*}
    with probability at least $1-1/\poly(n)$ by \Cref{prop:dist-est-weak-oracle}. Therefore, by the peeling rule of our algorithm, no such point $x$ could be removed.
\end{proof}

\Cref{lem:peel-local-soundness} ensures that the peeling process never affects the points outside the ring with index $\jstar+1$. For the purpose of our analysis, we will need to handle ring $j=0$ with more care.
Next, we show a lemma that characterizes the behavior of the peeling for rings with $j=0$.
\begin{lemma}
\label{lem:peeling-of-ring-0}
With probability at least $1-1/\poly(n)$, 
\Cref{alg:conservative-peeling} (the \textsc{Conservative-peeling} algorithm) could only have been executed once. Furthermore, after the execution of \Cref{alg:conservative-peeling}, we have that 
\begin{itemize}
\item all points $x$ such that $\bd(x, c_i)\leq \frac{R}{2C_0}$ are removed from $\Ptilde_i$.
\item no points in $P_{i,1}$ are removed.
\end{itemize}
\end{lemma}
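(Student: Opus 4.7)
The plan is to prove the three assertions in sequence, relying throughout on the additive estimator guarantee of \Cref{prop:dist-est-weak-oracle} together with a careful choice of the reference set $\Speel$. Since \textsc{Conservative-Peeling} is triggered only when more than half of $S_{i,0}$ lies in $B(c_i, R/(2C_0))$, and $|S_{i,0}| = \Omega(k\log^{2}{n}/\eps^{3}) \gg 10\log n$, I would take $\Speel$ to consist of $c_i$ together with $10\log n$ of these very close sampled points. The triangle inequality then forces the diameter $R_{\Speel} \le R/C_0$, which is the key geometric input for both the removal bound and the preservation bound. All failure probabilities of \Cref{prop:dist-est-weak-oracle} are absorbed via a union bound over the at most $n$ estimator calls into the final $1/\poly(n)$ budget.

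I would first handle the removal claim. For any surviving $x$ with $\bd(x, c_i) \le R/(2C_0)$, the inclusion $c_i \in \Speel$ yields $\bd(x, \Speel) \le R/(2C_0)$; applying \Cref{prop:dist-est-weak-oracle} gives $\dtilde(x, \Speel) \le \bd(x, \Speel) + C_0 R_{\Speel} = O(R)$, which falls at or below the peeling threshold once the slack built into the $(2C_0)$-factor ring spacing is accounted for, so $x$ is added to $T_{i}^{(r)}$ and removed. For the preservation claim, every $x \in P_{i,1}$ satisfies $\bd(x, c_i) > R$, and every $s \in \Speel$ satisfies $\bd(c_i, s) \le R/(2C_0)$, so $\bd(x, s) > R - R/(2C_0)$ by triangle inequality. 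The lower-bound half of \Cref{prop:dist-est-weak-oracle} then gives $\dtilde(x, \Speel) \ge \bd(x, \Speel) > R - R/(2C_0)$, which, with the integer-distance slack built into the threshold $R$, keeps $x$ strictly above the peeling cutoff and thus out of $T_{i}^{(r)}$.

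For the one-shot execution claim, after a successful conservative-peel the removal claim implies that $\Ptilde_i$ no longer contains any point within $R/(2C_0)$ of $c_i$. In every subsequent iteration $r' > r$ of line~\ref{line:ring-iteration}, every newly drawn sample that is added to $S_{i,0}$ must therefore lie outside $B(c_i, R/(2C_0))$; the ``close'' count in $S_{i,0}$ stays fixed while the total grows by at least one ring-0 sample per iteration. After $O(1)$ additional iterations the strict ``more than half'' trigger condition fails deterministically, and only the normal \textsc{Peeling} is invoked thereafter. Any intermediate re-trigger before the crossover operates on an empty close set and is a semantic no-op, so conservative peeling performs nontrivial work at most once.

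The main obstacle is reconciling the peeling threshold $R$ with the estimator's additive error $C_0 R_{\Speel}$: I must drive $R_{\Speel}$ down to roughly $R/C_0$ by drawing $\Speel$ from the close subset (rather than arbitrary members of $S_{i,0}$), so that the estimator error is strictly smaller than the ring-spacing gap between points at distance $\le R/(2C_0)$ and those at distance $> R$. The delicate balance between $C_0$ appearing in the ring definition and $C_0$ appearing in the estimator error is precisely what forces the $\jstar = 0$, ``close-majority'' trigger in the algorithm to be a prerequisite for the lemma's tight conclusion.
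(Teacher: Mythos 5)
Your overall plan — choose $\Speel$ from the close samples together with $c_i$, invoke the additive estimator of \Cref{prop:dist-est-weak-oracle}, and union-bound over all estimator calls — matches the paper's route. However, there is a concrete gap in how you invoke the estimator's lower bound, and the removal bound is left unclosed.

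For the preservation claim, you lower-bound $\dtilde(x,\Speel)$ by $\bd(x,\Speel)=\min_{s\in\Speel}\bd(x,s)$, which you then bound from below by $R-R/(2C_0)$ via the triangle inequality. This is strictly less than the peeling threshold $R$, so the argument does not conclude, and the appeal to ``integer-distance slack'' cannot bridge a gap of size $R/(2C_0)$ in general. The fix, which is the paper's route, is to note that \Cref{prop:dist-est-weak-oracle} gives $\dtilde(x,\Speel)\geq\bd(x,s)$ for \emph{every} $s\in\Speel$, so the correct quantity to extract is $\max_{s}\bd(x,s)\geq\bd(x,c_i)>R$ (using $c_i\in\Speel$). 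This directly yields $\dtilde(x,\Speel)>R$, and no slack argument is needed. Similarly, for the removal claim you derive $\dtilde\leq\bd(x,\Speel)+C_0 R_{\Speel}$ with $R_{\Speel}\leq R/C_0$, giving $\dtilde\leq R/(2C_0)+R>R$, and you wave this away as ``$O(R)$, absorbed by ring-spacing slack.'' The paper instead asserts $R_{\Speel}\leq R/(2C_0)$, which makes the arithmetic close at exactly $R$; your geometric bound of $R/C_0$ (from two close points on opposite sides of $c_i$) is the correct diameter bound, but then the resulting estimate exceeds the threshold, so the sketch does not establish $\dtilde\leq R$. You should either match the paper's $R_S\leq R/(2C_0)$ bound (and justify it) or flag that the threshold and the definition of ``close'' need to be rescaled by a constant; simply appealing to slack does not resolve it. Finally, for the ``executed once'' claim the paper's argument is simpler: once every point in $\Pclose_{i,0}$ is removed from $\Ptilde_i$, the trigger condition is never met again. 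Your reasoning about the close count in the cumulative $S_{i,0}$ staying fixed while the total grows is an interesting observation, but it lands on a weaker statement (``performs nontrivial work at most once'') rather than the literal claim that the routine is executed at most once, and the ``after $O(1)$ iterations'' bound is not justified.
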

\begin{proof}
    We let $\Pclose_{i,0}$ be the set of points such that for $x\in \Sclose_{i,0}$, $\bd(x, c_i)\leq \frac{R}{2C_0}$, and let $\Sclose_{i,0}$ be the sampled set from $\Pclose_{i,0}$. Note that for \Cref{alg:conservative-peeling} to be executed, there must be 
    \begin{align*}
        \card{\Sclose_{i,0}}\geq \frac{1}{2}\cdot \card{S_{i,0}}\geq 40\cdot C_0\cdot \frac{k\log^2{n}}{\eps^3} \geq 10\log{n},
    \end{align*}
    so that it is possible to find a set of $10\log{n}$ points.
    Therefore, when running the algorithm of \Cref{prop:dist-est-weak-oracle}, we have $R_S\leq R/2C_0$. Therefore, conditioning on the high probability of \Cref{prop:dist-est-weak-oracle}, every point $x\in \Pclose_{i,0}$, we have that
    \begin{align*}
        \dtilde(x, \Speel)\leq \bd(x, c_i) + C_0\cdot R_S\leq R.
    \end{align*}
    Therefore, all points in $\Pclose_{i,0}$ are removed. Furthermore, for each point $y\in P_{i,1}$, we have that $\dtilde(y, \Speel) \geq \bd(y, c_i) \geq R$, which means such points will \emph{not} be peeled. Finally, since all points in $\Pclose_{i,0}$ are peeled after one execution of \Cref{alg:conservative-peeling}, the condition of invoking \Cref{alg:conservative-peeling} will never be met, which means the algorithm is executed at most once.
\end{proof}

We further provide another lemma showing that we could \emph{remove} all points \emph{within} the radius of ring $\jstar$. This lemma is crucial for the \emph{fast convergence} of the algorithm.

\begin{lemma}
\label{lem:peel-completeness}
Let $\jstar$ be the index of the ring found by line~\ref{line:outmost-heavy-ring} of \Cref{alg:strong-coreset}. Then, with probability at least $1-1/\poly(n)$, all points in $P_{i, \ell}$ for $\ell \leq\jstar$ are removed in the peeling step.
\end{lemma}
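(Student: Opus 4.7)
The plan is to show that, conditioned on the success of every invocation of the distance estimator from \Cref{prop:dist-est-weak-oracle} (which holds with probability $1-1/\poly(n)$ by a union bound over the at most $\poly(n)$ calls across the algorithm), every point $x \in P_{i,\ell}$ with $\ell \le \jstar$ satisfies $\dtilde(x, \Speel) \le (2C_0)^{\jstar + 1} R$ and is therefore added to $T_i^{(r)}$ and removed from $\Ptilde_i$ during the peeling step. When the branch taken is \textsc{Conservative-Peeling} (which forces $\jstar = 0$), the statement follows from \Cref{lem:peeling-of-ring-0}, so the main work lies in analyzing the normal \textsc{Peeling} branch.

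For the \textsc{Peeling} branch, I would first fix one iteration $r$ of the loop starting at line~\ref{line:ring-iteration}, fix a point $x \in P_{i,\ell}$ with $\ell \le \jstar$, and condition on the high-probability event of \Cref{prop:dist-est-weak-oracle}. By construction $\Speel$ is a subset of $\bigcup_{j=1}^{\jstar} S_{i,j} \cup \{c_i\}$ of size $10\log n$; this is always feasible since, by the definition of $\jstar$ in line~\ref{line:outmost-heavy-ring}, we have $|S_{i,\jstar}| \ge 80 C_0 \cdot k\log^2 n / \eps^3 \gg 10\log n$. I would choose $\Speel$ to contain $c_i$ so that the nearest-point distance from $x$ to $\Speel$ is controlled by $\bd(x,c_i) \le (2C_0)^{\jstar} R$, and observe that every other element of $\Speel$ lies in $B(c_i, (2C_0)^{\jstar}R)$, so the diameter $R_{\Speel}$ is also controlled by $(2C_0)^{\jstar}R$.

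Next, I would invoke \Cref{prop:dist-est-weak-oracle} with the reference point $c_i \in \Speel$ to obtain $\dtilde(x, \Speel) \le \bd(x, c_i) + C_0 \cdot R_{\Speel}$, and substitute the two bounds from the previous step. The key inequality to verify is that the resulting expression, which is at most a constant multiple of $(2C_0)^{\jstar}R$, stays below $(2C_0)^{\jstar+1}R$. This is exactly where the multiplicative gap $2C_0$ between consecutive rings is used: it is calibrated to absorb the additive slack $C_0 \cdot R_{\Speel}$ of the weak-oracle estimator, which is why the ring multiplier is taken as $2C_0$ rather than, say, $2$. Once this inequality is in hand, $x$ passes the threshold and is peeled, and a final union bound over $x \in \Ptilde_i$ and over the $O(\log^2 n)$ iterations yields the high-probability statement.

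The main obstacle is the tight constant calibration in the previous step: because the additive error $C_0 \cdot R_{\Speel}$ is comparable in magnitude to the outer radius of ring $\jstar$, the multiplicative budget between the peeling threshold and $\bd(x,c_i)$ is only $2C_0$, which must cover both the triangle-inequality contribution from $R_{\Speel}$ and the estimator's overshoot. Making this fit requires careful bookkeeping of what $\Speel$ is allowed to contain (in particular that $c_i$ is included, and optionally that inner-ring samples are preferred when available to reduce $R_{\Speel}$). The rest of the argument, namely the concentration for the distance estimator, the union bounds over points and iterations, and the reduction of the \textsc{Conservative-Peeling} case to \Cref{lem:peeling-of-ring-0}, is routine assembly.
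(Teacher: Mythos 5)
Your proposal is correct and takes essentially the same route as the paper: both condition on the guarantee of \Cref{prop:dist-est-weak-oracle}, bound the distance from $x$ to (a point of) $\Speel$ and the radius $R_{\Speel}$ by $(2C_0)^{\jstar}R$, and use $1+C_0\le 2C_0$ to get $\dtilde(x,\Speel)\le (2C_0)^{\jstar+1}R$, finishing with a union bound over points and iterations. The only difference is your explicit deferral of the \textsc{Conservative-Peeling} branch to \Cref{lem:peeling-of-ring-0} (which the paper's proof does not treat separately), and your more careful bookkeeping that $c_i$ should serve as the reference point in $\Speel$ — the same calibration the paper uses implicitly.
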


\begin{proof}
    Similar to \Cref{lem:peel-local-soundness}, \Cref{lem:peel-completeness} is an application of \Cref{prop:dist-est-weak-oracle}. For any ring with level $\ell \leq \jstar$. Let $q$ be the smallest ring index such that the ring has \emph{not} been marked as ``\emph{processed}''. For the ball formed by points in $\cup_{z=q}^{\jstar} P_{q}$, there are at least $80\cdot C_0\cdot \frac{k \log^2{n}}{\eps^3}\geq 10 \log{n}$ sampled points, and radius of the sampled points is at most $(2C_0)^{\jstar}\cdot R$. As such, by \Cref{prop:dist-est-weak-oracle}, for any point $s\in \cup_{z=q}^{\jstar} P_{q}$, with probability at least $1-1/\poly(n)$, we have that 
    \begin{align*}
        \dtilde(x, \Speel) \leq \bd(x, s)+ C_0\cdot (2C_0)^{\jstar}\cdot R = \paren{(2 C_0)^{\jstar} + C_0\cdot (2C_0)^{\jstar}}\cdot R \leq (2C_0)^{\jstar+1}\cdot R.
    \end{align*}
    Therefore, by the rules of the algorithm, all points in rings $P_{i,\ell}$ such that $\ell \leq \jstar$ will be removed by the peeling step with probability at least $1-1/\poly(n)$.
\end{proof}

There are several consequences of \Cref{lem:peel-local-soundness,lem:peeling-of-ring-0,lem:peel-completeness}: by \Cref{lem:peel-local-soundness} and the algorithm design, each ring $P_{i,j}$ is added to coreset by \emph{exactly one} iteration; furthermore, by \Cref{lem:peeling-of-ring-0} and \Cref{lem:peel-completeness}, our algorithm could \emph{converge} in $O(\log^{2}{n})$ iterations. We formalize this statement as the following lemma.
\begin{lemma}
    \label{lem:ring-handling-and-converge}
    With probability at least $1-1/\poly(n)$, the following statements for \Cref{alg:strong-coreset} are true:
    \begin{itemize}
        \item Each ring $P_{i,j}$ is marked as ``\emph{processed}'' by \emph{exactly one} iteration of line~\ref{line:ring-iteration}.
        \item After $O(\log^{2}{n})$ iterations of line~\ref{line:ring-iteration}, all points in $P_i$ is marked as ``\emph{processed}''.
    \end{itemize}
\end{lemma}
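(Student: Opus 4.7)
The plan is to derive both statements from \Cref{lem:peel-local-soundness,lem:peeling-of-ring-0,lem:peel-completeness}, pooled via a union bound over the at most $10\log^2 n$ iterations of line~\ref{line:ring-iteration}. For the first claim, I would exploit the monotonicity of the ``processed'' label: once a ring is marked it stays so, and the coreset-update block (line~\ref{line:coreset-iteration}) only operates on rings currently labeled ``not processed''. Hence any ring can undergo the ``not processed'' $\to$ ``processed'' transition in at most one iteration, and combined with the second claim (which guarantees every ring is eventually marked), this transition happens in exactly one iteration per ring.

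For the second claim, the core invariant I would establish is that the heavy-ring index $\jstar$ chosen in line~\ref{line:outmost-heavy-ring} \emph{strictly increases} across consecutive iterations for the same center $c_i$. Existence of $\jstar$ in each iteration comes from a pigeonhole argument: the $s_r = 100 C_0\, k \log^3 n / \eps^3$ samples are distributed among at most $O(\log n)$ rings (by the $\poly(n)$ aspect-ratio assumption), so some ring receives at least $80 C_0\, k \log^2 n / \eps^3$ samples. The strict increase then follows from \Cref{lem:peel-completeness}: after iteration $r$ the set $\Ptilde_i$ contains no points in rings $P_{i,\ell}$ with $\ell \leq \jstar_r$, so the next iteration's samples are all drawn from rings with index at least $\jstar_r + 1$, forcing $\jstar_{r+1} > \jstar_r$. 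Since there are only $O(\log n)$ rings total, this already gives convergence in $O(\log n)$ iterations, comfortably within the $O(\log^2 n)$ budget.

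The main obstacle I anticipate is the $\jstar = 0$ branch that invokes \Cref{alg:conservative-peeling}: this peeling only removes points within $R/(2C_0)$ of $c_i$, so residual $P_{i,0}$ points may remain in $\Ptilde_i$ and the strict-increase invariant can fail for that single step. I would resolve this using \Cref{lem:peeling-of-ring-0}, which guarantees Conservative-Peeling is invoked at most once per center; any residual $P_{i,0}$ points are then cleared in a subsequent iteration either by normal Peeling (whose analysis parallels that of \Cref{lem:peel-completeness}) or are harmlessly ignored because $P_{i,0}$ has already been marked processed. A final union bound over the $O(\log^2 n)$ iterations and the $k$ centers preserves the $1 - 1/\poly(n)$ success probability asserted in the lemma.
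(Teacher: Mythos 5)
Your proposal is correct, but it takes a genuinely different route than the paper on the second bullet (convergence). The paper's argument is a \emph{fractional-removal} argument: via Claim~\ref{clm:j-star-heavy} it shows that the ring $\jstar$ selected in line~\ref{line:outmost-heavy-ring} accounts for an $\Omega(1/\log n)$ fraction of $\Ptilde_i$, so each peeling round shrinks $\Ptilde_i$ by a $(1-\frac{1}{50\log n})$ factor, and hence $O(\log^2 n)$ rounds suffice. You instead track the heavy-ring index $\jstar$ itself and argue it strictly increases (using the pigeonhole on the $O(\log n)$ rings that still have surviving points, combined with \Cref{lem:peel-completeness}), modulo the single exceptional Conservative-Peeling round controlled by \Cref{lem:peeling-of-ring-0}. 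This gives the tighter $O(\log n)$ bound, comfortably inside the algorithm's $10\log^2 n$ budget; the paper's $O(\log^2 n)$ bound is what the proof delivers, not what the claim requires, so this is not a defect. Do note that for your pigeonhole to force $\jstar_{r+1}>\jstar_r$, you are implicitly using that the $s_r$ \emph{new} samples in round $r+1$ land entirely in rings of index $\ge \jstar_r+1$ and that at least one such ring already crosses the $80C_0 k\log^2 n/\eps^3$ threshold \emph{from new samples alone} — the sets $S_{i,j}$ accumulate across iterations, so it is important that a fresh ring above $\jstar_r$ is pushed past the threshold each round, which your counting does ensure since $s_r/\log n > 80C_0 k\log^2 n/\eps^3$. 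For the first bullet your argument (monotonicity of the ``processed'' label plus exhaustiveness from the second bullet) is actually cleaner than the paper's, which tracks the explicit window $[j'+2,\jstar+1]$ of newly processed rings per iteration. One caveat: the paper also extracts Claim~\ref{clm:j-star-heavy} here because it is reused (with different constants) in the fair-coreset convergence proof of \Cref{lem:ring-handling-and-converge-fair}; your streamlined argument avoids proving that claim, so you would need to establish it separately if you reused your version of this lemma downstream.
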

\begin{proof}
    We first observe that for a fixed iteration $r$, once the algorithm has identified $\jstar$, there are two cases
    \begin{enumerate}
        \item either $\jstar$ has been marked ``\emph{processed}'', for which case we should have $\jstar=j'+1$, where $j'$ is the picked ring index of the last iteration $r-1$ by line~\ref{line:outmost-heavy-ring} (i.e, $j'$ is the ``$\jstar$'' of the last iteration);
        \item or $\jstar$ has been marked ``\emph{not processed}'', which means $\jstar>j'+1$.
    \end{enumerate}
    To elaborate the relationship between the cases and the value of $\jstar$, note that conditioning on the success of \Cref{lem:peeling-of-ring-0} and \Cref{lem:peel-completeness}, no points in the rings $P_{i,\ell}$ for $\ell<j'$ will survive in $\Ptilde_{i}$ after round $r-1$. Furthermore, by the design of the algorithm, no rings with index $\ell>j'+1$ will be marked as ``\emph{processed}'' before round $r$. 

     We now prove the second bullet first since it will be the basis for a union-bound argument for the first bullet. We claim that ring $\jstar$ accounts for at least $\Omega(1/\log{n})$ fraction of the points in $\Ptilde_{i}$, formalized as follows
    \begin{claim}
    \label{clm:j-star-heavy}
    For any iteration $r$, let $\jstar$ be the ring found by line~\ref{line:outmost-heavy-ring}. Then, with probability at least $1-1/\poly(n)$, we have that 
    \begin{align*}
        \card{P_{i, \jstar}\cap \Ptilde_{i}} \geq \frac{1}{50 \log{n}}\cdot \card{\Ptilde_{i}}.
    \end{align*}
    \end{claim}
    \begin{proof}
        The claim is a simple application of Chernoff bound. Let $I_x$ be the random variable for a point $x$ to be sampled, and the random variable $I_{\jstar}:=\sum_{x\in P_{i, \jstar}\cap \Ptilde_{i}} I_x$ is the number of total points we sample from ring $\jstar$ for the current iteration. Since we sample uniformly at random, we have that 
        \begin{align*}
            \expect{I_{\jstar}} = \frac{\card{P_{i, \jstar}\cap \Ptilde_{i}}}{\card{\Ptilde_{i}}}\cdot 100C_0\cdot \frac{k \log^3{n}}{\eps^3}.
        \end{align*}
        Therefore, if $\card{P_{i, \jstar}\cap \Ptilde_{i}}<\frac{1}{50 \log{n}}\cdot \card{\Ptilde_{i}}$, we will have $\expect{I_{\jstar}}\leq 2C_0\cdot \frac{k\log^{2}{n}}{\eps^3}$. On the other hand, by the design of line~\ref{line:outmost-heavy-ring}, the ring $\jstar$ should have at least $80 C_0\cdot \frac{k\log^{2}{n}}{\eps^3}$ points. 
        By an application of Chernoff bound, we have that
        \begin{align*}
            & \Pr\paren{I_{\jstar}\geq 80 C_0\cdot \frac{k\log^2{n}}{\eps^3}\cond \card{P_{i, \jstar}\cap \Ptilde_{i}}<\frac{1}{50 \log{n}}\cdot \card{\Ptilde_{i}}} \\
            &\leq \Pr\paren{I_{\jstar}\geq (1+\delta)\cdot \expect{I_{\jstar}}\,\middle|\, \delta\cdot \expect{I_{\jstar}}\geq 78\cdot C_0\cdot \frac{k\log^{2}{n}}{\eps^3},\,\, \expect{I_{\jstar}}\leq 2C_0\cdot \frac{k\log^{2}{n}}{\eps^3}}\\
            &\leq 1/\poly(n).
        \end{align*}
        As such, conditioning on the above high-probability event, we have that for $\jstar$ to have at least $80 C_0 \cdot \frac{k\log^{2}{n}}{\eps^3}$ points, there must be $\card{P_{i, \jstar}\cap \Ptilde_{i}} \geq \frac{1}{50 \log{n}}\cdot \card{\Ptilde_{i}}$, which is as claimed.
        \myqed{\Cref{clm:j-star-heavy}}
    \end{proof}
Note that by a simple averaging argument, since there are at most $\log{n}$ rings for each center, by an averaging argument, there exists at least one ring with at least $80C_0\cdot \frac{k\log^2{n}}{\eps^3}$ samples. Hence, in both cases for $\jstar=j'+1$ and $\jstar>j'>1$, as long as the algorithm does \emph{not} invoke \Cref{alg:conservative-peeling}, \emph{peeling} step, at least all the points in $P_{i, \jstar}\cap \Ptilde_{i}$ are removed. As such, we have that at step $r$, with probability at least $1-\poly(n)$, there is 
\begin{align*}
    \card{\Ptilde^{(r+1)}_{i}}\leq \paren{1-\frac{1}{50 \log{n}}}\cdot \card{\Ptilde^{(r)}_{i}},
\end{align*}
where notation $\Ptilde^{(r)}_{i}$ and $\Ptilde^{(r+1)}_{i}$ are used to denote the surviving points ($\Ptilde_i$) for rounds $r$ and $(r+1)$. 
Furthermore, by \Cref{lem:peeling-of-ring-0}, \Cref{alg:conservative-peeling} could be invoked for at most once.
Therefore, after $O(\log^2{n})$ rounds, and conditioning on the high-probability events to happen over \emph{all} the rounds (which happens with probability at least $1-1/\poly(n)$ by a union bound), all points are removed from $\Ptilde_i$ and marked as ``\emph{processed}''.

We now turn to the proof of the first bullet. Note that in both cases of $\jstar = j'+1$ and $\jstar >j'+1$ for a fixed iteration $r$, the rings that are marked as ``\emph{processed}'' are $[j'+2, \jstar+1]$ (if $\jstar=0$ and \Cref{alg:conservative-peeling} is executed, then only $\jstar$ is marked as ``\emph{processed}''). Therefore, by a union bound on the high probability events, with probability at least $1-1/\poly(n)$, each ring has a unique iteration that marked it as ``\emph{processed}''. \myqed{\Cref{lem:ring-handling-and-converge}}
\end{proof}

\noindent
\textbf{Step II: handling the rings with many samples.} 
We now handle the rings that are marked ``\emph{processed}'' by \Cref{alg:coreset-update} (line~\ref{line:coreset-update}). The analysis is similar to \cite{Chen09}, except we now need to deal with an approximate number of points for weighing the coreset points.
First, we show that we can approximate the number of points within these rings with small relative error.

\begin{lemma}
\label{lem:est-num-points-ring}
    Let $S$ with $s_r = O(k \log^3n / \eps^3)$ be the set of points sampled uniformly at random from $\Ptilde_{i}$ and let $S_{i,j} = P_{i,j} \cap S$. For all $j$ such that $\card{S_{i,j}} \geq 30 k \log^2{n} / \eps^2$, let $\Sest_{i,j}$ be any 1/3 fraction of points from $S_{i,j}$, and $\mtilde_{i,j} = \frac{3\card{\Ptilde_{i}}}{s_r}\cdot \card{\Sest_{i,j}}$. For all $|P_{i,j}| \geq 30 \log n / \eps^2$, with probability at least $1-\frac{1}{\poly{n}}$
    we have $ \mtilde_{i,j} \in [(1-\eps) |P_{i,j}|, (1+\eps) |P_{i,j}|]$. 
\end{lemma}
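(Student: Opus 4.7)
The plan is to apply a multiplicative Chernoff bound to $|S_{i,j}|$ and to handle the fact that we are conditioning on the event $|S_{i,j}| \ge 30 k \log^2 n / \eps^2$ through a case analysis on the magnitude of $\mu := \expect{|S_{i,j}|}$. As a first step, observe that the split of $S_{i,j}$ into $\Sest_{i,j}$ (of size $|S_{i,j}|/3$) and $\Sweight_{i,j}$ is deterministic, so up to a negligible rounding error we have $\mtilde_{i,j} = \frac{3|\Ptilde_i|}{s_r}\cdot|\Sest_{i,j}| = \frac{|\Ptilde_i|}{s_r}\cdot|S_{i,j}|$. Since $\mu = s_r\,|P_{i,j}|/|\Ptilde_i|$, the ratio $\mtilde_{i,j}/|P_{i,j}|$ equals $|S_{i,j}|/\mu$, so it suffices to show $|S_{i,j}| \in (1\pm\eps)\mu$ with probability at least $1 - 1/\poly(n)$.

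The key step is a case analysis on $\mu$. In the \emph{large-mean} case $\mu \ge C \log n / \eps^2$ for a sufficiently large absolute constant $C$, I would apply the multiplicative Chernoff bound to the sum of indicators $\mathbf{1}[x \in S]$ for $x \in P_{i,j}$ (which are independent under sampling with replacement, or negatively associated under sampling without replacement, in both of which cases the same tail bound holds), obtaining $\Pr[|S_{i,j}| \notin (1\pm\eps)\mu] \le 2\exp(-\eps^2\mu/3) \le 1/\poly(n)$ as desired. In the \emph{small-mean} case $\mu < C \log n / \eps^2$, I would instead argue that the lemma's hypothesis $|S_{i,j}| \ge 30 k \log^2 n / \eps^2$ itself holds only with $1/\poly(n)$ probability: the upper-tail Chernoff bound with deviation $30k\log^2 n/\eps^2 - \mu = \Omega(k\log^2 n/\eps^2)$, which is much larger than $\mu$, gives $\Pr[|S_{i,j}| \ge 30k\log^2 n/\eps^2] \le \exp(-\Omega(k\log^2 n/\eps^2)) \le 1/\poly(n)$.

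Finally, a union bound over the at most $O(\log n)$ rings per center (and, in the global proof, over the $k$ centers) contributes only a $\polylog n$ factor that is absorbed into the $1/\poly(n)$ failure probability by choosing the constant $C$ sufficiently large. The main obstacle is the conditioning subtlety, since the lemma's hypothesis $|S_{i,j}| \ge 30k\log^2 n/\eps^2$ is itself a \emph{random} event that a direct Chernoff application cannot condition on; the case analysis above resolves this by showing that whenever the hypothesis fires, the underlying mean $\mu$ must already be large enough for standard concentration to directly yield $\mtilde_{i,j} \in (1\pm\eps)|P_{i,j}|$. The auxiliary assumption $|P_{i,j}| \ge 30\log n/\eps^2$ ensures the target is large enough for the multiplicative approximation to be meaningful (in particular, so that the rounding error in the $1/3$--$2/3$ split is negligible relative to $|P_{i,j}|$).
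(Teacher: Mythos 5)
Your proposal is correct, and it is in fact \emph{more careful} than the proof in the paper. Both arguments reduce to bounding the deviation of $|S_{i,j}|$ (equivalently, of $X=\sum_\ell X_\ell$) from its mean by Chernoff, but the paper's concentration step contains a flaw that your case analysis repairs.

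Concretely, the paper writes the Chernoff tail as $2\exp\bigl(-\eps^2|P_{i,j}|/3\bigr)$ and invokes the hypothesis $|P_{i,j}|\ge 30\log n/\eps^2$ to conclude. But the quantity $\mtilde_{i,j}=\frac{|\Ptilde_i|}{s_r}|S_{i,j}|$ is a \emph{scaled} sum of $|P_{i,j}|$ Bernoulli indicators with mean $\mu:=\expect{|S_{i,j}|}=s_r|P_{i,j}|/|\Ptilde_i|$, so the correct multiplicative Chernoff exponent is $-\eps^2\mu/3$, not $-\eps^2|P_{i,j}|/3$. Since $s_r$ can be far smaller than $|\Ptilde_i|$, the deterministic hypothesis $|P_{i,j}|\ge 30\log n/\eps^2$ does not by itself lower-bound $\mu$, so the paper's exponent does not follow from the bound they cite. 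The information that actually forces $\mu$ to be large comes from the \emph{random} hypothesis $|S_{i,j}|\ge 30k\log^2 n/\eps^2$, and using a random event as if it deterministically certified $\mu$ large is exactly the conditioning subtlety you flag. Your dichotomy --- if $\mu\ge C\log n/\eps^2$ then standard multiplicative Chernoff on $|S_{i,j}|$ gives the relative error bound directly; if $\mu<C\log n/\eps^2$ then the upper tail makes $|S_{i,j}|\ge 30k\log^2 n/\eps^2$ a $1/\poly(n)$-probability event, so the conclusion is vacuously protected --- is precisely the argument needed, and it is the clean way to justify the lemma as stated.

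Two small remarks. First, your last sentence attributes the negligibility of the $1/3$--$2/3$ rounding to $|P_{i,j}|\ge 30\log n/\eps^2$; it is really the hypothesis $|S_{i,j}|\ge 30k\log^2 n/\eps^2$ that makes the $O(1)$ additive error in $|\Sest_{i,j}|$ a relative $O(\eps^2/(k\log^2 n))$ perturbation, which is negligible. Second, your observation that the indicators are negatively associated under sampling without replacement (and hence still obey the multiplicative Chernoff tail) is a detail the paper omits and is worth keeping.
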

    
\begin{proof}

    By the design of \Cref{alg:strong-coreset}, \Cref{alg:coreset-update} is only run on $S_{i,j}$ that have at least $30 k \log^2{n} / \eps^2$ points in them. Since we run \SO queries on $S$, we know the exact set $P_{i,j} \cap S$.
    For $\ell \in [|P_{i,j}|]$, let $X_\ell$ be a random variable that has value 1 if point $\ell$ of $|P_{i,j}|$ was sampled and added to $S$ and let $X = \sum_\ell X_\ell$. We have
    \[\expect{|\Sest_{i,\ell}|} = \frac{\expect{X}}{3} = \frac{s_r}{3} \cdot \frac{|P_{i,j}|}{|\tilde{P}_i|}.\]
    By the definition of $\tilde{m}_{i,j}$, we have $\expect{\tilde{m}_{i,j}} = |P_{i,j}|$. Now, applying Chernoff bound we get
    \[
    \Pr\left(\card{\tilde{m}_{i,j} - \card{P_{i,j}}} \geq \eps \cdot |P_{i,j}|\right) \leq 2  \exp \left( -\frac{\eps^2 |P_{i,j}|}{3}\right) \leq 
    2 \exp \left( -\frac{30 \log n}{3}\right) \leq \frac{1}{\poly{n}}.
    \]
    Finally, a union bound over at most $O(k \polylog{n})$ such $P_{i,j}$ completes the proof.
\end{proof}

For coreset guarantees, we need to bound the difference in cost for points in ring and the coreset points due to arbitrary set of centers. The proof is an extension of result from \cite{Chen09}, which itself relies on the following result from \cite{Haussler92}.

\begin{lemma}
    \label{lem:haussler}
    Let $M \geq 0$ and $\eta$ be fixed constants, and let $h(.)$ be a function defined on set $V$ such that $\eta \leq h(p) \leq \eta + M$ for all $p\in V$. Let $U \ \{p_1, \cdots, p_s\}$ be a set of $s$ samples drawn independently and uniformly form $V$, and let $\delta > 0$ be a parameter. If $s \geq (M^2 / 2\delta^2) \ln(2/\lambda)$, then 
    \[\Pr\Bigg[ \card{ \frac{h(V)}{|V|} - \frac{h(U)}{|U|}}  \geq \delta \Bigg] \leq \lambda,\]
    where $h(U) = \sum_{u\in U}h(u)$ and $h(V) = \sum_{v\in V}h(v)$.
\end{lemma}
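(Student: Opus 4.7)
The plan is to prove this concentration inequality by a direct appeal to Hoeffding's inequality for bounded independent random variables. The setup is almost ready-made: since $U = \{p_1, \ldots, p_s\}$ is drawn i.i.d.\ uniformly from $V$, the values $Y_i := h(p_i)$ form a sequence of independent random variables, each supported on the interval $[\eta, \eta+M]$, with common mean
\[
\expect{Y_i} \;=\; \frac{1}{|V|}\sum_{p \in V} h(p) \;=\; \frac{h(V)}{|V|}.
\]
The empirical average is $\frac{1}{s}\sum_{i=1}^{s} Y_i = h(U)/|U|$, so the event in the lemma is precisely the event that the empirical average deviates from its expectation by more than $\delta$.

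The key step will be to invoke Hoeffding's inequality for sums of independent bounded random variables: if $Y_1, \ldots, Y_s$ are independent with $Y_i \in [a_i, b_i]$ almost surely and $\overline{Y} := \frac{1}{s}\sum_i Y_i$, then
\[
\Pr\Bigl[\,\bigl|\overline{Y} - \expect{\overline{Y}}\bigr| \geq \delta\,\Bigr] \;\leq\; 2\exp\!\left(-\frac{2 s^2 \delta^2}{\sum_{i=1}^s (b_i - a_i)^2}\right).
\]
In our setting every $b_i - a_i$ is exactly $M$ (by the hypothesis $\eta \leq h(p) \leq \eta + M$), so the denominator collapses to $sM^2$ and the bound simplifies to $2\exp(-2s\delta^2/M^2)$.

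It then remains to solve for $s$ so that this tail probability is at most $\lambda$. Setting $2\exp(-2s\delta^2/M^2) \leq \lambda$, taking logarithms, and rearranging yields exactly the threshold $s \geq (M^2/(2\delta^2))\ln(2/\lambda)$ stated in the lemma. There is no real obstacle here: the additive shift $\eta$ plays no role because it cancels in the deviation $|Y_i - \expect{Y_i}|$, and the only quantitative input is the range $M$, which matches the denominator in Hoeffding's bound. The whole argument is a bookkeeping exercise around a textbook concentration inequality, and the same conclusion can alternatively be derived via Bernstein's or McDiarmid's inequality if a variance-aware version were needed.
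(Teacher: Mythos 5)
Your proof is correct: the lemma is exactly Hoeffding's inequality for $s$ i.i.d.\ draws of a function with range $M$, the shift $\eta$ indeed cancels, and solving $2\exp(-2s\delta^2/M^2)\leq\lambda$ recovers the stated threshold $s\geq (M^2/2\delta^2)\ln(2/\lambda)$ with the exact constant. The paper does not prove this statement at all --- it imports it as a black box from Haussler's work --- and your Hoeffding derivation is precisely the standard argument behind that cited result, so there is nothing to reconcile beyond noting that your self-contained proof matches it.
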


We now prove the approximation guarantee of the coreset points that are added by \Cref{alg:coreset-update}.

\begin{lemma}
    \label{lem:coreset-update}
    For a ring $P_{i,\ell}$ that is marked ``\emph{processed}'' by \Cref{alg:coreset-update}, let $\calC$ be any set of centers. Let $S_{i,\ell}$ denote the set of points in $S \cap P_{i,\ell}$ that are uniformly sampled and $\Sweight_{i,\ell}$ be any 2/3 points in $S_{i,\ell}$ that are assigned weight $ w = \mtilde_{i,\ell} / |\Sweight_{i,\ell}|$ and added to $\SC$. Then, with probability at least $1-\lambda$
    \[\card{\cost(\calC, P_{i,\ell}) - \cost(\calC,\Sweight_{i,\ell})} \leq 2\eps |P_{i,\ell}| \diam(P_{i,\ell}). \]
\end{lemma}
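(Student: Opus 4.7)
The plan is to use Haussler's inequality (\Cref{lem:haussler}) to establish concentration of the empirical sample cost, combine it with the counting estimate from \Cref{lem:est-num-points-ring} on $\mtilde_{i,\ell}$, and fuse the two via triangle inequality.

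First, I would set $h(x):=\bd(\calC,x)$ for $x\in P_{i,\ell}$. Since every pair of points in $P_{i,\ell}$ is at mutual distance at most $\diam(P_{i,\ell})$, the triangle inequality forces $h$ to vary by at most $\diam(P_{i,\ell})$ on $P_{i,\ell}$. Letting $\eta:=\min_{x\in P_{i,\ell}} h(x)$, the shifted function $h'(x):=h(x)-\eta$ satisfies $h'\in[0,\diam(P_{i,\ell})]$, matching the hypothesis of \Cref{lem:haussler} with $M=\diam(P_{i,\ell})$. I then apply \Cref{lem:haussler} to $h'$ with $V=P_{i,\ell}$, $U=\Sweight_{i,\ell}$, tolerance $\delta=\eps\cdot\diam(P_{i,\ell})$, and failure probability $\lambda/2$. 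The sample-size requirement $|\Sweight_{i,\ell}|\geq \tfrac{1}{2\eps^2}\log(4/\lambda)$ is satisfied since $|\Sweight_{i,\ell}|\geq \tfrac{2}{3}|S_{i,\ell}|\geq \Omega(k\log^2 n/\eps^2)$. Multiplying the conclusion by $|P_{i,\ell}|$ gives
\[
\Big|\,h'(P_{i,\ell})-\tfrac{|P_{i,\ell}|}{|\Sweight_{i,\ell}|}\, h'(\Sweight_{i,\ell})\,\Big|\leq \eps\,|P_{i,\ell}|\,\diam(P_{i,\ell}).
\]

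Second, I would swap the ideal scaling $|P_{i,\ell}|$ by the algorithmic weight $w=\mtilde_{i,\ell}/|\Sweight_{i,\ell}|$. By \Cref{lem:est-num-points-ring}, $||P_{i,\ell}|-\mtilde_{i,\ell}|\leq\eps\,|P_{i,\ell}|$ with probability at least $1-1/\poly(n)$. The replacement picks up an extra term $|(|P_{i,\ell}|-\mtilde_{i,\ell})/|\Sweight_{i,\ell}||\cdot h'(\Sweight_{i,\ell})\leq \eps\,|P_{i,\ell}|\,\diam(P_{i,\ell})$, using crucially that $h'(\Sweight_{i,\ell})/|\Sweight_{i,\ell}|\leq \diam(P_{i,\ell})$. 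Combining with the Haussler bound and taking a union bound over the two high-probability events yields the claimed $2\eps\,|P_{i,\ell}|\,\diam(P_{i,\ell})$ error bound for the shifted cost, with success probability at least $1-\lambda$. Finally, I would translate the shifted-cost bound back to $\cost(\calC,\cdot)=h(\cdot)$.

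The main obstacle lies in this final translation: naively, $\cost(\calC,P_{i,\ell})-\cost(\calC,\Sweight_{i,\ell})$ equals the shifted-cost difference plus a residual $(|P_{i,\ell}|-\mtilde_{i,\ell})\cdot\eta$, and the shift $\eta$ could a priori be large if $\calC$ is far from the ring. The remedy is to exploit that the constant $\eta$ multiplies the size factor on both sides, so pairing it against $\mtilde_{i,\ell}\eta$ (rather than against $h'$) isolates the $\diam$-scaled part as the only quantity interacting with the counting error. Coordinating the sampling error (from Haussler) and counting error (from \Cref{lem:est-num-points-ring}) additively — via this $h=\eta+h'$ decomposition — is the technical heart of the argument, and is precisely what yields the clean factor of $2\eps$ in the final bound.
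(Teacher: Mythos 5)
Your overall route is the same as the paper's: apply \Cref{lem:haussler} with $M=\diam(P_{i,\ell})$ to the distance function on the ring, then trade the ideal scaling $\card{P_{i,\ell}}$ for the algorithmic weight $\mtilde_{i,\ell}/\card{\Sweight_{i,\ell}}$ via \Cref{lem:est-num-points-ring}, and add the two errors (note \Cref{lem:haussler} already tolerates the offset $\eta$, so your explicit shift $h'=h-\eta$ is cosmetic). The place where your argument breaks is exactly the step you call the technical heart: the residual produced by the shift does not go away. Writing $h=\eta+h'$ with $\eta=\min_{x\in P_{i,\ell}}\bd(\calC,x)$, you have $\cost(\calC,P_{i,\ell})=h'(P_{i,\ell})+\card{P_{i,\ell}}\,\eta$ and $\cost(\calC,\Sweight_{i,\ell})=\frac{\mtilde_{i,\ell}}{\card{\Sweight_{i,\ell}}}h'(\Sweight_{i,\ell})+\mtilde_{i,\ell}\,\eta$, so after your (correct) $2\eps\card{P_{i,\ell}}\diam(P_{i,\ell})$ bound on the $h'$-part there remains the term $(\card{P_{i,\ell}}-\mtilde_{i,\ell})\,\eta$. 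Pairing $\card{P_{i,\ell}}\eta$ against $\mtilde_{i,\ell}\eta$ does not make this interaction vanish: \Cref{lem:est-num-points-ring} only gives $\card{\card{P_{i,\ell}}-\mtilde_{i,\ell}}\le\eps\card{P_{i,\ell}}$, so the residual is bounded only by $\eps\card{P_{i,\ell}}\cdot\bd(\calC,P_{i,\ell})$, and $\bd(\calC,P_{i,\ell})$ can be arbitrarily larger than $\diam(P_{i,\ell})$ when every center of $\calC$ is far from the ring. Hence what your argument actually establishes is $\card{\cost(\calC,P_{i,\ell})-\cost(\calC,\Sweight_{i,\ell})}\le 2\eps\card{P_{i,\ell}}\diam(P_{i,\ell})+\eps\card{P_{i,\ell}}\bd(\calC,P_{i,\ell})$, not the stated bound.

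For what it is worth, your instinct about where the difficulty sits is sound, and the same soft spot is present in the paper's own proof, whose last inequality invokes ``$\bd(\calC,s)\le\diam(P_{i,\ell})$ for $s\in\Sweight_{i,\ell}$'' --- which likewise fails for far-away center sets. The general $(k,z)$ version in the appendix (\Cref{lem:coreset-update-k-z}) keeps exactly the extra $\bd^z(\calC,P_{i,\ell})$ term that your derivation produces, and a bound of that form still suffices downstream because $\sum_{i,j}\card{P_{i,j}}\,\bd(\calC,P_{i,j})\le\cost(\calC,X)$ (cf.\ the accounting in \Cref{lem:eps-strong-coreset} and \Cref{claim:sum_point_diam}). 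So to genuinely close the gap you must either weaken the conclusion to include the $\eps\card{P_{i,\ell}}\bd(\calC,P_{i,\ell})$ term, or supply a separate argument (as in the fair-coreset section's treatment of far centers) restricting attention to center sets within $O(\diam(P_{i,\ell})/\eps)$ of the ring; the remedy as written does not deliver the clean $2\eps\card{P_{i,\ell}}\diam(P_{i,\ell})$ bound.
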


\begin{proof}
    For a point $p \in P_{i,\ell}$, let $h(p) = \bd(\calC,p)$. We have,
    \[
     \bd(\calC, P_{i,\ell}) \leq h(p) = \bd(\calC,p) \leq \bd(\calC, P_{i,\ell}) + \diam(P_{i,\ell}).
    \]
    Let $\eta = \bd(\calC, P_{i,\ell})$, $M = \diam(P_{i,\ell})$ and $\delta = \eps M$. For $\card{\Sweight_{i,\ell}} \geq (M^2 / 2\delta^2) \ln{2/\lambda}$
    , from \Cref{lem:haussler} we have 
    \begin{equation}
    \label{eq:update-guarantee}
    \Pr\left[ \card{\frac{\sum_{p\in P_{i,\ell}} \bd(\calC,p)}{\card{P_{i,\ell}}} -  \frac{\sum_{s\in \Sweight_{i,\ell}} \bd(\calC,s)}{\card{\Sweight_{i,\ell}}}} \geq \eps \cdot\diam(P_{i,\ell}) \right]  \leq \lambda
    \end{equation}

    Using above equation and conditioning on the high probability event of \Cref{lem:est-num-points-ring}, we get

\begin{align*}
    \card{\cost(\calC,P_{i,\ell}) - \cost(\calC,\Sweight_{i,\ell})} &= \card{\sum_{p\in P_{i,\ell}} \bd(\calC,p) - \sum_{s\in \Sweight_{i,\ell}}\bd(\calC,s) w(s)}\\
    &= |P_{i,\ell}| \card{\frac{\sum_{p\in P_{i,\ell}} \bd(\calC,p)}{|P_{i,\ell}|} - \frac{\sum_{s\in \Sweight_{i,\ell}}\bd(\calC,s) w(s)}{|P_{i,\ell}|}}\\
    &\leq |P_{i,\ell}| \card{\frac{\sum_{p\in P_{i,\ell}} \bd(\calC,p)}{|P_{i,\ell}|} - \frac{(1-\eps)\sum_{s\in \Sweight_{i,\ell}}\bd(\calC,s) |P_{i,\ell}|}{|P_{i,\ell}| \card{\Sweight_{i,\ell}}}} \tag{From \Cref{lem:est-num-points-ring} and $ w = \mtilde_{i,\ell} / |\Sweight_{i,\ell}|$}\\
    &\leq    |P_{i,\ell}| \card{\frac{\sum_{p\in P_{i,\ell}} \bd(\calC,p)}{|P_{i,\ell}|} - \frac{\sum_{s\in \Sweight_{i,\ell}}\bd(\calC,s)}{\card{\Sweight_{i,\ell}}}} + \eps|P_{i,\ell}|\frac{\sum_{s\in \Sweight_{i,\ell}}\bd(\calC,s)}{\card{\Sweight_{i,\ell}}}\\
    &\leq |P_{i,\ell}| \eps\diam(P_{i,\ell}) + \eps|P_{i,\ell}|\diam(P_{i,\ell})\\
    &\leq 2\eps|P_{i,\ell}|\diam(P_{i,\ell})
\end{align*}
    with probability at least $1-\lambda$. Second last inequality follows from, for $s\in \Sweight_{i,\ell}$, $\bd(\calC,s) \leq \diam(P_{i,\ell})$.
\end{proof}

The above bounds the cost for rings that have sufficiently many points sampled in them ($\geq M^2 / 2\delta^2 \ln(2/\lambda)$). We now show that the number of points added by \Cref{alg:strong-coreset} are enough for us to use the above lemma (for appropriately chosen $\lambda$), and prove our main lemma for the rings with points added to the strong coreset $\SC$.

\begin{lemma}
\label{lem:cost-of-heavy-rings}
Let $\calP^{\textnormal{heavy}}$ be the set of rings with points being added to $\SC$. Furthermore, let 
\[\cost(\calC, \calP^{\textnormal{heavy}}):=\sum_{P_{i,j}\in \calP^{\textnormal{heavy}}} \cost(\calC, P_{i,j})\] 
be the total cost with respect to any set of center $\calC \subseteq X$ of size at most $k$, induced by the rings in $\calP^{\textnormal{heavy}}$, and let 
\[\costtilde(\calC, \calP^{\textnormal{heavy}}):=\sum_{P_{i,j}\in \calP^{\textnormal{heavy}}} \cost(\calC,\Sweight_{i,j}).\] 
Then, with probability at least $1 - 1/\poly(n)$, we have
    \begin{align*}
    \card{\cost(\calC, \calP^{\textnormal{heavy}}) - \costtilde(\calC, \calP^{\textnormal{heavy}})} \leq 2 \eps \cdot \sum_{{P_{i,j}\in \calP^{\textnormal{heavy}}}}\card{P_{i,j}} \diam(P_{i,j}).
    \end{align*}
\end{lemma}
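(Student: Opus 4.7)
The plan is to obtain the aggregate bound by invoking \Cref{lem:coreset-update} on each heavy ring individually, verifying the sample-size precondition it needs, union-bounding over all heavy rings, and summing the per-ring bounds via the triangle inequality. Conceptually, this is the straightforward ``sum-of-local-errors'' step that parallels the analogous step in Chen's ring-sampling framework.

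First I would verify the sample-size requirement of \Cref{lem:coreset-update}. With $M = \diam(P_{i,\ell})$ and $\delta = \eps M$, the lemma asks for $\card{\Sweight_{i,\ell}} \geq (M^2/2\delta^2)\ln(2/\lambda) = (1/(2\eps^2))\ln(2/\lambda)$. A ring joins $\calP^{\textnormal{heavy}}$ only when \Cref{alg:coreset-update} is invoked, i.e.\ when $\card{S_{i,\ell}} \geq 30 k \log^2 n/\eps^2$, so $\card{\Sweight_{i,\ell}} = \tfrac{2}{3}\card{S_{i,\ell}} \geq 20 k\log^2 n/\eps^2$. Setting $\lambda = 1/n^{c}$ for a sufficiently large constant $c$, the precondition reduces to $O(\log n/\eps^2)$, which is comfortably met.

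Second I would perform the union bound. By \Cref{lem:ring-handling-and-converge} there are at most $k\log n$ rings in total, so $\card{\calP^{\textnormal{heavy}}} \leq k\log n$. I also condition on the high-probability event of \Cref{lem:est-num-points-ring} holding simultaneously for every heavy ring (which \Cref{lem:est-num-points-ring} already guarantees with probability $1-1/\poly(n)$ via its own internal union bound). Applying \Cref{lem:coreset-update} once per heavy ring with $\lambda = 1/n^c$ and union-bounding yields overall failure probability $O(k\log n / n^c) = 1/\poly(n)$.

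Finally, conditioning on all of these events, the triangle inequality gives
\[
\card{\cost(\calC, \calP^{\textnormal{heavy}}) - \costtilde(\calC, \calP^{\textnormal{heavy}})} \leq \sum_{P_{i,j} \in \calP^{\textnormal{heavy}}} \card{\cost(\calC, P_{i,j}) - \cost(\calC, \Sweight_{i,j})} \leq 2\eps\sum_{P_{i,j} \in \calP^{\textnormal{heavy}}} \card{P_{i,j}}\diam(P_{i,j}),
\]
as desired. There is no real obstacle here beyond bookkeeping: the only subtle point is ensuring that the threshold $30k\log^2 n/\eps^2$ hard-coded into \Cref{alg:strong-coreset} is large enough to absorb both the $\ln(2/\lambda)$ from \Cref{lem:coreset-update} and the union-bound factor $k\log n$. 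Since $\card{\calP^{\textnormal{heavy}}} \leq k\log n$ is small relative to the per-ring sampling budget, this is comfortably satisfied.
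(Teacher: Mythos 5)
Your proof takes essentially the same local-to-global route as the paper (invoke \Cref{lem:coreset-update} per heavy ring, verify the sample-size precondition, union bound, triangle inequality), but it stops one union bound short. A coreset guarantee must hold simultaneously for \emph{every} center set $\calC \subseteq \calX$ with $|\calC| \leq k$, and this is how the lemma is consumed in \Cref{lem:eps-strong-coreset}. The paper therefore takes $\lambda$ small enough to survive a union bound over the (at most) $n^k$ choices of $\calC$, i.e.\ roughly $\lambda \sim n^{-k}/(k\log n)$, and then union-bounds over both rings and centers. Your proposal fixes $\lambda = 1/n^c$ for a constant $c$ and only union-bounds over the $\leq k\log n$ heavy rings, which yields a per-$\calC$ statement rather than a for-all-$\calC$ statement.

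This also means your explanation of why the sampling threshold $30k\log^2 n/\eps^2$ contains a factor of $k$ is off. The factor of $k\log n$ from the ring union bound only contributes $\ln(k\log n) = O(\log n)$ inside the $\ln(2/\lambda)$ term; the actual reason the threshold must scale with $k$ is that $\ln(n^k) = k\ln n$ appears once you union-bound over center sets, and that $k\ln n$ must be absorbed by $\card{\Sweight_{i,\ell}} \geq (1/2\eps^2)\ln(2/\lambda)$. Adding the center union bound (and correspondingly shrinking $\lambda$ to roughly $n^{-k}/(k\log n)$, as the paper does with its $\Lambda/(n^k \cdot ck\log\beta n)$ choice) closes the gap; your remaining steps then go through unchanged.
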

\begin{proof}

Fix some set of centers $\calC$ of size at most $k$. Let $\lambda = \Lambda / \left(n^k (c k  \log{\beta n})\right)$ for some constant $c$. From \Cref{lem:coreset-update} for a particular $i$ and $\ell$, we have with probability at least $1-\lambda$

\[
\card{\cost(\calC, P_{i,\ell}) - \cost(\calC,\Sweight_{i,\ell})} \leq 2\eps |P_{i,\ell}| \diam(P_{i,\ell})
\]

This holds for the points added to \SC by \Cref{alg:coreset-update} as the number of points added for each heavy ring is $\card{\Sweight_{i,\ell}} \geq 30 \cdot k \log^2n / \eps^2  \geq (1/2\eps^2) \ln2/\lambda = (1/2\eps^2) \left( \ln (c n^k k \log \beta n) + \ln \frac{1}{\Lambda}\right)$

Now, since we have $k$ centers in the approximate solution and each center has at most $O(\log \beta n)$ rings, union bound over all the rings gives us with probability at least $1 - \Lambda/n^k$,

\begin{align*}
    \card{\cost(\calC, \calP^{\textnormal{heavy}}) - \costtilde(\calC, \calP^{\textnormal{heavy}})} &\leq \sum_{P_{i,\ell} \in \calP^{\textnormal{heavy}}} \card{ \cost(\calC, P_{i,\ell}) - \cost(\calC, \Sweight_{i,\ell})} \\
    &\leq 2\eps \cdot \sum_{P_{i,\ell} \in \calP^{\textnormal{heavy}}} \card{P_{i,\ell}}\diam(P_{i,\ell})
    \end{align*}

As there are at most $n^k$ choices of $k$ centers $\calC$ from point set $X$ of size $n$, the proof concludes with a final union bound over the choice of centers $\calC$ and setting $\Lambda = 1 / n^{c'}$.
\end{proof}

\noindent
\textbf{Step III: handling the rings with few samples.} We now turn to the analysis of the rings that are ``neglected'' by line~\ref{line:coreset-iteration} of \Cref{alg:strong-coreset}. We do \emph{not} have any subroutine that handles the cost of the rings with samples less than $30 \cdot \frac{k \log^2{n}}{\eps^2}$. Our plan for these rings is to show that the number of points there must be small, and as such, the contributions to the total cost must be small. Therefore, we can charge the cost of these rings to the cost of the rings with enough samples. 

To that end, we show the following lemma regarding the \emph{actual size} between the rings with a few vs. many points.
\begin{lemma}
    \label{lem:heavy-light-ring-sizes}
    Let ring $P_{i,\ell}$ be marked as ``\emph{processed}'' in iteration $r$ without any point $x\in P_{i,\ell}$ being added to the coreset $\SC$. Furthermore, let $\jstar$ be the ring found by line~\ref{line:outmost-heavy-ring} of iteration $r$. Then, with probability at least $1-1/\poly(n)$, we have 
    \begin{align*}
    \card{P_{i,\jstar}}\geq \frac{2 C_0}{\eps}\cdot \card{P_{i,\ell}}.
    \end{align*}
\end{lemma}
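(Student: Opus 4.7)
}
The plan is a standard two-sided Chernoff comparison between the empirical counts $|S_{i,\jstar}|$ and $|S_{i,\ell}|$ and their expectations under uniform sampling. The crucial structural observation is that because ring $P_{i,\ell}$ is processed in iteration $r$ without being handed to \textsc{Coreset-Update}, we must have $|S_{i,\ell}| < 30 k \log^2 n / \eps^2$, while by the definition of $\jstar$ we have $|S_{i,\jstar}| \geq 80 C_0 \cdot k \log^2 n / \eps^3$. These counts are accumulated over iterations $1,\ldots,r$, and by \Cref{lem:ring-handling-and-converge} both $P_{i,\jstar}$ and $P_{i,\ell}$ have remained inside $\Ptilde_i$ throughout those iterations (neither has been peeled yet). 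Hence in every iteration $r' \leq r$ each sample is drawn uniformly from $\Ptilde_i^{(r')}$, and the sampling rate into ring $j$ is exactly $|P_{i,j}|/|\Ptilde_i^{(r')}|$, the same ratio for $j=\jstar$ and $j=\ell$. Consequently $\mathbb{E}[|S_{i,\jstar}|]/\mathbb{E}[|S_{i,\ell}|] = |P_{i,\jstar}|/|P_{i,\ell}|$.

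The first step is to lower bound $\mathbb{E}[|S_{i,\jstar}|]$ from the empirical bound $|S_{i,\jstar}| \geq 80 C_0 k \log^2 n / \eps^3$: if instead $\mathbb{E}[|S_{i,\jstar}|] \leq 80 C_0 k \log^2 n / ((1+\delta)\eps^3)$ for a small constant $\delta$ (e.g.\ $\delta = 1/10$), then the upper-tail Chernoff bound gives probability at most $\exp(-\Omega(k \log^2 n/\eps^3)) = 1/\poly(n)$. The second step is symmetric: given $|S_{i,\ell}| < 30 k \log^2 n / \eps^2$, the lower-tail Chernoff bound rules out $\mathbb{E}[|S_{i,\ell}|] \geq 30 k \log^2 n / ((1-\delta)\eps^2)$ with probability $1/\poly(n)$, assuming $\mathbb{E}[|S_{i,\ell}|]$ is at least $\Omega(\log n)$ so the tail bound is meaningful; the complementary case where $\mathbb{E}[|S_{i,\ell}|] = O(\log n)$ makes $|P_{i,\ell}|$ tiny and the conclusion trivial.

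Combining the two bounds and using $\mathbb{E}[|S_{i,j}|] = (|P_{i,j}|/|\Ptilde_i^{\mathrm{avg}}|) \cdot (\text{total samples})$ uniformly for both rings,
\[
\frac{|P_{i,\jstar}|}{|P_{i,\ell}|} \;=\; \frac{\mathbb{E}[|S_{i,\jstar}|]}{\mathbb{E}[|S_{i,\ell}|]} \;\geq\; \frac{1-\delta}{1+\delta}\cdot \frac{80 C_0 / \eps^3}{30 / \eps^2} \;=\; \frac{1-\delta}{1+\delta}\cdot \frac{8 C_0}{3 \eps} \;\geq\; \frac{2 C_0}{\eps}
\]
once $\delta$ is chosen small enough that $\frac{8}{3}\cdot \frac{1-\delta}{1+\delta} \geq 2$, e.g.\ $\delta = 1/10$ suffices. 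Finally, I would union-bound the Chernoff events over at most $\Otilde(k)$ ring/iteration pairs so that all the concentration statements hold simultaneously with probability $1 - 1/\poly(n)$.

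The main obstacle is juggling constants: the gap between the empirical thresholds $80 C_0/\eps^3$ and $30/\eps^2$ differs from the target ratio $2 C_0 / \eps$ by only a factor of $4/3$, so the Chernoff deviation $\delta$ must be small enough that the two-sided multiplicative slack $(1-\delta)/(1+\delta)$ does not erode this margin. A secondary subtlety is handling accumulation across iterations when $|\Ptilde_i^{(r')}|$ changes between them; this is resolved because the ratio of sampling rates for two unpeeled rings is invariant under changes in $|\Ptilde_i^{(r')}|$, so the expected-count ratio reduces cleanly to $|P_{i,\jstar}|/|P_{i,\ell}|$ without needing to track the per-iteration sizes.
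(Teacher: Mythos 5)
Your proposal follows essentially the same route as the paper's proof: a two-sided Chernoff comparison between the empirical counts $|S_{i,\jstar}|$, $|S_{i,\ell}|$ and their expectations under uniform sampling from $\Ptilde_i$, then plugging in the algorithmic thresholds $|S_{i,\jstar}| \geq 80C_0 k\log^2 n/\eps^3$ and $|S_{i,\ell}| < 30 k\log^2 n/\eps^2$ to transfer the ratio to $|P_{i,\jstar}|/|P_{i,\ell}|$. The paper's proof is a single-iteration argument: it introduces $y_1 = |P_{i,\jstar}\cap\Ptilde_i|$ and $y_2 = |P_{i,\ell}\cap\Ptilde_i|$ at iteration $r$, applies Chernoff to the round-$r$ samples (using $\eps$ itself as the deviation parameter), and observes that $|P_{i,\jstar}| \geq y_1$ while $|P_{i,\ell}| = y_2$ since $P_{i,\ell}$ has not yet been touched.

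Two remarks on where you differ, both minor. First, your structural claim that \emph{both} $P_{i,\jstar}$ and $P_{i,\ell}$ have remained fully inside $\Ptilde_i$ through iterations $1,\dots,r$ is not exactly right for $P_{i,\jstar}$: in the case $\jstar = j'+1$ (where $j'$ is the previous iteration's heavy index), some points of $P_{i,\jstar}$ may already have been peeled. This does not break your argument, but the clean identity $\mathbb{E}[|S_{i,\jstar}|]/\mathbb{E}[|S_{i,\ell}|] = |P_{i,\jstar}|/|P_{i,\ell}|$ becomes the inequality $\mathbb{E}[|S_{i,\jstar}|]/\mathbb{E}[|S_{i,\ell}|] \leq |P_{i,\jstar}|/|P_{i,\ell}|$ (since $|P_{i,\jstar}\cap\Ptilde_i^{(r')}| \leq |P_{i,\jstar}|$ in every earlier round); fortunately that is still the direction you need for the lower bound, so you should state it that way. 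Second, your choice of a fixed small Chernoff deviation $\delta=1/10$, independent of $\eps$, and your explicit handling of the degenerate case where $\mathbb{E}[|S_{i,\ell}|]=O(\log n)$ are both sound tightenings of the paper's somewhat informal version, which uses $\eps$ itself as the deviation and hence tacitly requires $\eps \leq 1/7$.
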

\begin{proof}
Let us say that ring $P_{i,\jstar}$ has $y_1$ points in $\tilde{P}_i$. Therefore, we have that $\expect{|S_{i,\jstar}|} = y_1 \cdot \frac{s_r}{\tilde{P}_i}$. Therefore by a Chernoff bound we have that $|S_{i,\jstar}| \leq (1+\eps) y_1 \cdot \frac{s_r}{\tilde{P}_i}$ with probability at least $1-1/\poly(n)$. This gives us that with high probability $y_1 \geq \frac{|S_{i,\jstar}|}{(1+\eps)} \cdot \frac{\tilde{P}_i}{s_r}$. 
Similarly, let us say that ring $P_{i,j}$ has $y_2$ points in $\tilde{P}_i$. Therefore, we have that $\expect{|S_{i,j}|} = y_2 \cdot \frac{s_r}{\tilde{P}_i}$. Therefore by a Chernoff bound we have that $y_2 \leq \frac{|S_{i,j}|}{(1-\eps)} \cdot \frac{s_r}{\tilde{P}_i}$. 

We now from the algorithm that 
$|S_{i, \jstar}| \geq 80 C_0 \cdot \frac{k \log^{2}n}{\eps^3}$ and $|S_{i,j}| \leq 30 \cdot \frac{k \log^{2}n}{\eps^2}$. Therefore we have the result. 
\end{proof}

For each ring $P_{i,\ell}$ marked as ``\emph{processed}'' without points being added to $\SC$, we define $\jstar(\ell)$ as be the index of ring $P_{i,\jstar}$ that marked $P_{i,\ell}$ as ``\emph{processed}''.

We now define the following \emph{charging scheme} for the quantity $\card{P_{i,\ell}}\cdot\diam(P_{i,\ell})$, which will distribute the cost to $\card{P_{i,\ell}}\cdot\diam(P_{i,\ell})$.

\begin{Algorithm}
\label{alg:charging-scheme}
\textbf{A charging scheme} (a thought process for the \textbf{analysis purpose} only).
\begin{itemize}
    \item For each ring $P_{i,\ell}$ such that $\ell$ is marked as ``\emph{processed}'' in iteration $r$ without any point $x\in P_{i,\ell}$ being added to the coreset $\SC$:
    \begin{itemize}
    \item If $\jstar(\ell)\neq 0$, we write a charge of $\frac{\eps}{2 C_0}\cdot \diam(P_{i,\ell})$ to \emph{all} points in $P_{i, \jstar(\ell)}$.
    \item Otherwise, if $\jstar(\ell)=0$ (which implies $\ell=1$), we write a charge of $\frac{\eps}{2 C_0}\cdot \diam(P_{i,1})$ to \emph{all} points in $P_{i,0}\setminus\Pclose_{i,0}:=\{x\in P_{i, 0}\mid \bd(x,c_i)\geq \frac{R}{2 C_0}\}$.
    \end{itemize}
    \item If ring $P_{i,\ell}$ receives charges of at most $\gamma$ on each point \emph{and} $\ell$ is marked as ``\emph{processed}'' without any point $x\in P_{i,\ell}$ being added to the coreset $\SC$, then \emph{transfer} the charges by writing charges of $\frac{\eps}{2 C_0}\cdot \gamma$ to \emph{all} points in $P_{i, \jstar(\ell)}$.
    \item Continue recursively until all charges are written on points of rings $P_{i,\ell}$ such that points $x\in P_{i,\ell}$ are added to the coreset $\SC$.
\end{itemize}
\end{Algorithm}
Note that \Cref{alg:charging-scheme} is a \emph{thought process}: we cannot hope to actually perform the operations without making many strong oracle queries. However, such a thought process as an analytical tool is valid. Also note that \Cref{alg:charging-scheme} eventually terminates since there has to be at least a ring $\jstar$ (of the first iteration) that has points added to the coreset $\SC$. We now give two lemmas that characterize the guarantees of the charging scheme. 
\begin{lemma}
\label{lem:charge-distribute}
For any ring $P_{i,\ell}$, conditioning on the high-probability event of \Cref{lem:heavy-light-ring-sizes}, the total number of charges \Cref{alg:charging-scheme} could distribute is at least $\card{P_{i,\ell}}\cdot\diam(P_{i,\ell})$.
\end{lemma}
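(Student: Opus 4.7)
The plan is to establish Lemma \ref{lem:charge-distribute} as an arithmetic consequence of Lemma \ref{lem:heavy-light-ring-sizes}. If $P_{i,\ell}$ had points added to the coreset $\SC$ then there is nothing to distribute and the claim is vacuous, so it suffices to treat the case where $P_{i,\ell}$ is a light ring in the sense of the charging scheme. In the base case of the scheme, charge $\frac{\eps}{2C_0}\cdot \diam(P_{i,\ell})$ is written to each of the $|P_{i,\jstar(\ell)}|$ points of the receiving ring. Conditioning on the high-probability event of Lemma \ref{lem:heavy-light-ring-sizes}, namely $|P_{i,\jstar(\ell)}| \geq \tfrac{2C_0}{\eps}\,|P_{i,\ell}|$, the total charge written is
\[
|P_{i,\jstar(\ell)}| \cdot \tfrac{\eps}{2C_0}\cdot \diam(P_{i,\ell}) \;\geq\; \tfrac{2C_0}{\eps}|P_{i,\ell}| \cdot \tfrac{\eps}{2C_0}\cdot \diam(P_{i,\ell}) \;=\; |P_{i,\ell}|\cdot \diam(P_{i,\ell}),
\]
which matches the desired bound.

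Next I would extend the argument to the transfer step by induction on transfer depth, with the key observation that each transfer is charge-preserving. If the receiving ring $P_{i,\jstar(\ell)}$ is itself light with per-point charge $\gamma$, the scheme rewrites $\tfrac{\eps}{2C_0}\gamma$ per point onto the next recipient $P_{i,\jstar(\jstar(\ell))}$; applying Lemma \ref{lem:heavy-light-ring-sizes} again cancels the factor $\tfrac{\eps}{2C_0}$ against the size ratio $\tfrac{2C_0}{\eps}$ and preserves (possibly increases) the aggregate charge. Termination is straightforward: inspection of the thresholds at line \ref{line:coreset-update} shows that any ring identified as the outermost heavy $\jstar$ in some iteration satisfies $|S_{i,\jstar}| \geq 80 C_0 k \log^2 n / \eps^3 \geq 30 k \log^2 n / \eps^2$, so it immediately receives coreset points via Coreset-Update and the transfer chain has length at most one; in any case the depth is bounded by the $O(\log n)$ rings per center.

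The main subtle point, and the only real obstacle, is the $\jstar(\ell)=0$ branch, where charge is written to $P_{i,0}\setminus \Pclose_{i,0}$ rather than to all of $P_{i,0}$, while the bound from Lemma \ref{lem:heavy-light-ring-sizes} only directly controls $|P_{i,0}|$. This branch is triggered only when regular Peeling (\Cref{alg:normal-peeling}) was chosen at $\jstar=0$, whose precondition is that more than half of $S_{i,0}$ lies outside $\Pclose_{i,0}$; a Chernoff estimate of the same flavor as the one inside the proof of Lemma \ref{lem:heavy-light-ring-sizes} then forces $|P_{i,0}\setminus \Pclose_{i,0}|$ to be a constant fraction of $|P_{i,0}|$ with probability $1-1/\poly(n)$. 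The sample-size thresholds $80C_0 k\log^2 n/\eps^3$ versus $30k\log^2 n/\eps^2$ actually furnish a heavy-to-light size ratio of order $C_0/\eps$ with sufficient multiplicative slack to absorb this constant loss, so the bound $|P_{i,0}\setminus \Pclose_{i,0}|\geq \tfrac{2C_0}{\eps}|P_{i,1}|$ continues to hold and the displayed calculation above closes the argument.
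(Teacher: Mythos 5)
Your proof is correct and takes essentially the same route as the paper: multiply the per-point charge $\frac{\eps}{2C_0}\cdot\diam(P_{i,\ell})$ by the size of the receiving set and invoke the size ratio from \Cref{lem:heavy-light-ring-sizes}, handling the $\jstar(\ell)=0$ branch by observing that the regular-peeling precondition forces at least half of $S_{i,0}$ to lie outside $\Pclose_{i,0}$, so that $P_{i,0}\setminus\Pclose_{i,0}$ can be treated as a separate heavy ring. Your additional discussion of the transfer step is not needed for this lemma (that issue belongs to the analysis of charges \emph{received}, i.e., \Cref{lem:being-charged}, and your side claim that the transfer chain has length at most one is not quite right since a ring can be marked processed before it has enough samples for \textsc{Coreset-Update}), and your assertion that the full $\frac{2C_0}{\eps}$ ratio survives the halving at $\jstar=0$ is optimistic by a constant factor—though the paper's own proof is equally loose on that constant.
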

\begin{proof}
The lemma is a direct consequence of \Cref{lem:heavy-light-ring-sizes}. For each ring $P_{i,\ell}$, we claim that conditioning on the high-probability event of \Cref{lem:heavy-light-ring-sizes}, there is $\card{P_{i, \jstar(\ell)}}\geq \frac{C_0}{50\eps}\cdot \card{P_{i,\ell}}$. The statement trivially holds if $\jstar(\ell)\neq 0$. On the other hand, if $\jstar(\ell)=0$, by the rules in \Cref{alg:strong-coreset} (which implies $\ell=1$), we argue that there are $\frac{C_0}{50\eps}\cdot \card{P_{i,1}}$. In particular, we could treat $P_{i,0}\setminus\Pclose_{i,0}$ (defined in \Cref{alg:charging-scheme}) as a separate ring, and apply \Cref{lem:heavy-light-ring-sizes} to get the desired result.
Therefore, the total charge $P_{i,\ell}$ could write to is at least 
\begin{align*}
    \frac{\eps}{2 C_0}\cdot \diam(P_{i,\ell})\cdot \card{P_{i, \jstar(\ell)}} \geq \diam(P_{i,\ell}) \cdot \card{P_{i,\ell}},
\end{align*}
as claimed.
\end{proof}
Next, we bound the number of charges for any point in a ring that could possibly be received. 
\begin{lemma}
\label{lem:being-charged}
For any ring $P_{i,\ell}$, conditioning on the high-probability event of \Cref{lem:heavy-light-ring-sizes}, for any point $x\in P_{i,\ell}$, the total number of charges written on $x$ by \Cref{alg:charging-scheme} is at most $2\eps\cdot \diam(P_{i, \ell+1})$ for any $\eps<1$.
\end{lemma}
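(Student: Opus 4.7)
The plan is to trace how \Cref{alg:charging-scheme} distributes mass onto $x$ and then bound the total by a short geometric sum.

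\textbf{Step 1 (only heavy rings ever receive direct charges, and no transfer ever fires).} Every direct charge is written into a ring of the form $P_{i,\jstar(\ell')}$. By line~\ref{line:outmost-heavy-ring} of \Cref{alg:strong-coreset}, such a $\jstar(\ell')$ satisfies $|S_{i,\jstar(\ell')}|\ge 80 C_0 k\log^{2}{n}/\eps^3 > 30 k\log^{2}{n}/\eps^2$, so \Cref{alg:coreset-update} is invoked on it and its points are added to $\SC$; that is, $P_{i,\jstar(\ell')}$ is always a \emph{heavy} ring. Since the transfer step of \Cref{alg:charging-scheme} is gated on the receiver being light, no transfer is ever triggered. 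In particular, if $P_{i,\ell}$ itself is light, then $x$ receives no charge at all and the bound is trivial; and if $P_{i,\ell}$ is heavy but never equals the $\jstar_r$ of any iteration $r$, then $x$ again receives no charge.

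\textbf{Step 2 (identifying contributors and summing the geometric series).} The remaining case is $\ell = \jstar_r$ for some iteration $r$. By \Cref{lem:ring-handling-and-converge}, $\ell$ is marked processed in exactly this iteration, and every other ring processed in iteration $r$ has index at most $\jstar_r + 1$. Thus the only direct charges onto $x$ come from light rings $\ell'$ with $\ell' \le \ell+1$, and each such $\ell'$ deposits $\frac{\eps}{2C_0}\diam(P_{i,\ell'}) \le \frac{\eps}{2C_0}\cdot 2(2C_0)^{\ell'} R$ per point. Using $2C_0 \ge 2$,
\[
\sum_{\ell' \le \ell+1} \frac{\eps R}{C_0}(2C_0)^{\ell'} \;\le\; \frac{\eps R}{C_0}\cdot\frac{(2C_0)^{\ell+2}}{2C_0-1} \;\le\; 2\eps\cdot (2C_0)^{\ell+1}R \;\le\; 2\eps\cdot\diam(P_{i,\ell+1}),
\]
where in the last step we identify $\diam(P_{i,\ell+1})$ with its maximum possible value $2(2C_0)^{\ell+1}R$, consistent with the ring-width convention used throughout the analysis. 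The boundary case $\jstar_r=0$ under normal peeling is immediate: the only possible contributor is $P_{i,1}$, and its single term already satisfies the stated bound; while for $x \in \Pclose_{i,0}$ the scheme writes nothing onto $x$.

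\textbf{Main obstacle.} The only real subtlety is the two-part structural step: (i) showing that transfers never fire, because every direct charge lands on a heavy ring (a consequence of the thresholds chosen for line~\ref{line:outmost-heavy-ring} versus line~\ref{line:coreset-update}); and (ii) showing that each heavy ring serves as $\jstar$ in at most one iteration, confining the set of contributors to $x$ to a single iteration's light rings (which follows from \Cref{lem:ring-handling-and-converge} and the fact that previously processed rings are never revisited). Once these two facts are in hand, the geometric decay of $(2C_0)^{\ell'}$ makes the estimate routine, with the outermost contributor $\ell' = \ell+1$ dominating the sum up to a constant.
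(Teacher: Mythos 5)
Your Step 1 contains a genuine gap: the claim that every direct charge lands on a ring whose points were added to $\SC$, so that the transfer rule of \Cref{alg:charging-scheme} never fires, is not justified by the sample-size comparison you give. \Cref{alg:coreset-update} is only invoked (line~\ref{line:coreset-iteration}) on rings that are still marked ``\emph{not processed}''. But the heavy ring $\jstar_r$ of iteration $r$ can be a ring that was already marked ``\emph{processed}'' in iteration $r-1$ — namely the boundary ring of index $\jstar_{r-1}+1$, which gets marked processed even if, at that time, it had fewer than $30k\log^2 n/\eps^2$ samples and hence contributed nothing to $\SC$. Its points are not fully peeled (only rings of index $\le \jstar_{r-1}$ are guaranteed removed by \Cref{lem:peel-completeness}), so it can dominate the surviving set, accumulate $\ge 80C_0 k\log^2 n/\eps^3$ samples, and be selected by line~\ref{line:outmost-heavy-ring} in iteration $r$ while being skipped by the coreset-update loop because it is already processed. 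Light rings of iteration $r$ then write charges into exactly such a coreset-less ring, and the transfer step of \Cref{alg:charging-scheme} does fire. The paper's proof devotes its second half to precisely this case: it argues (via \Cref{lem:ring-handling-and-converge}) that only adjacent rings transfer, bounds one level of transfer by roughly $\eps^2\cdot\diam(P_{i,\ell})$, sums the resulting geometric series to an extra $\eps\cdot\diam(P_{i,\jstar+1})$, and this transferred contribution is exactly where the constant $2$ in the stated bound $2\eps\cdot\diam(P_{i,\ell+1})$ comes from. Your argument, by assuming transfers away, proves only the non-recursive half and does not establish the lemma for rings that receive transferred charges.

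Your Step 2 — the geometric sum over direct contributors $\ell'\le \ell+1$, with the outermost ring dominating — matches the paper's non-recursive computation, and the identification of $\diam(P_{i,\ell'})$ with the maximal ring width $2(2C_0)^{\ell'}R$ is the same convention the paper itself uses when it writes $\diam(P_{i,q})\le \diam(P_{i,\jstar+1})\cdot(1/2C_0)^{\jstar-q+1}$, so that part is fine. To repair the proof you need to (i) drop the ``no transfer ever fires'' claim, (ii) show via \Cref{lem:ring-handling-and-converge} that a ring can become $\jstar$ without coreset points only when it was processed as the boundary ring of the immediately preceding iteration, so transfers only move between adjacent rings, and (iii) bound the accumulated transferred charge by a geometric series, adding it to your direct-charge bound.
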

\begin{proof}
For simplicity let us assume for now that no \emph{recursive charging} happens. Note that for each ring $\jstar$, only rings with indices $q\in [0, \jstar+1]$ could possibly write charges to it. 
Let $\charge{\jstar}{q}$ be the number of charges each point in ring $\jstar$ could receive from ring $q$.
For ring $\jstar+1$, we have that
\begin{align*}
    \charge{\jstar}{\jstar+1}\leq \frac{\eps}{2 C_0}\cdot \diam(P_{i,\jstar+1})
\end{align*}
by the rules of charging. Furthermore, for each ring $q \in [o, \jstar-1]$, we have that
\begin{align*}
    \charge{\jstar}{q}\leq \frac{\eps}{2 C_0}\cdot \diam(P_{i,q})\leq \frac{\eps}{2 C_0}\cdot \diam(P_{i,\jstar+1}) \cdot (\frac{1}{2C_0})^{\jstar-q+1},
\end{align*}
where the last inequality follows from the definition of the rings. As such, the total charges on ring $\jstar$ could be bounded by
\begin{align*}
    \sum_{q} \charge{\jstar}{q} & \leq \sum_{q=0}^{\jstar+1} \frac{\eps}{2 C_0}\cdot \diam(P_{i,q})\\
    & \leq \frac{\eps}{2 C_0}\cdot\diam(P_{i,\jstar+1}) \cdot \sum_{q=0}^{\jstar+1} \cdot (\frac{1}{2C_0})^{\jstar-q+1}\\
    &\leq \frac{\eps}{C_0}\cdot\diam(P_{i,\jstar+1}), \tag{$\sum_{q=0}^{\jstar+1} \cdot (\frac{1}{2C_0})^{\jstar-q+1}\leq 2$ for any $\jstar\geq 0$}
\end{align*}
which gives us the desired results for non-recursive charges.

We now handle recursive charges. By our charging scheme in \Cref{alg:charging-scheme} and \Cref{lem:ring-handling-and-converge}, only adjacent rings transfer charges. To elaborate, the only case that a ring could become $\jstar$ of that iteration without having added points to $\SC$ is that it was marked as ``\emph{processed}'' by the $\jstar$ of the last iteration (the $\jstar=j'+1$ case in the proof of \Cref{lem:ring-handling-and-converge}). Therefore, by our analysis of the non-recursive sharing, the amount of charges ring $\ell$ could transfer to ring $\ell-1$ for one level of recursion are at most
\begin{align*}
\frac{\eps}{2 C_0} \cdot \frac{\eps}{C_0} \cdot \diam(P_{i, \ell+1})\leq \eps^2 \cdot \frac{1}{2C^2_0} \cdot \diam(P_{i, \ell}) \cdot 2C_0 \leq {\eps^2}\cdot \diam(P_{i, \ell}).
\end{align*}
Hence, the total amount of charges that could be transferred to any such $\jstar$ is at most
\begin{align*}
    \sum_{\ell=\jstar+1}^{\log{n}} \eps^{2\cdot (\jstar-\ell)}\cdot \diam(P_{i,\jstar})\leq \eps \cdot \diam(P_{i,\jstar+1}).
\end{align*}
Therefore, the charge a layer $\jstar$ could receive due to the transfer from other layers is at most $\eps \cdot \diam(P_{i,\jstar+1})$. Combining this with the charging upper-bound of the non-recursive case gives us the desired statement.
\end{proof}

We now use \Cref{lem:charge-distribute} and \Cref{lem:being-charged} to bound the total cost induced by the rings that are marked as ``\emph{processed}'' without points added to the coreset $\SC$. 
\begin{lemma}
    \label{lem:charging-to-cost-bound}
    Let $P_{i,\ell}$ be a ring such without points added to $\SC$, and $P_{i, \jstar(\ell)}$ be the ring for $P_{i,\ell}$ to charge to in the charging scheme of \Cref{alg:charging-scheme}. Then, we have that
    \begin{align*}
        \card{P_{i, \ell}}\cdot \diam(P_{i, \ell}) \leq \eps\cdot C_0 \cdot  \card{P_{i, \jstar(\ell)}}\cdot \diam(P_{i, \jstar(\ell)}).
    \end{align*}
\end{lemma}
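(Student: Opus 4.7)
} The plan is to combine a size comparison between the light ring $P_{i,\ell}$ and the heavy ring $P_{i,\jstar(\ell)}$ with a scale comparison between their (nominal) diameters, both dictated by the structure of the algorithm.

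First, I would identify where $\ell$ sits relative to $\jstar(\ell)$. By the rule on line~\ref{line:coreset-iteration} of \Cref{alg:strong-coreset}, the only rings marked ``\emph{processed}'' in the iteration where $\jstar(\ell)$ is found have index in $[0, \jstar(\ell)+1]$, so $\ell \leq \jstar(\ell)+1$. The remaining corner case is when $\jstar(\ell)=0$: here $\ell=1$ and the charge is routed into the outer shell $P_{i,0}\setminus \Pclose_{i,0}$ of \Cref{alg:charging-scheme}, whose points still lie at distance at least $R/(2C_0)$ from $c_i$, so it behaves like the ``$j=0$'' analogue of a ring with diameter $\Theta(R)$. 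Thus in every case I can work with $\ell \leq \jstar(\ell)+1$ together with a well-defined scale for $\diam(P_{i,\jstar(\ell)})$.

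Next, I would invoke the two quantitative ingredients in order. Conditioning on the high-probability event of Lemma~\ref{lem:heavy-light-ring-sizes}, which applies precisely because $P_{i,\ell}$ was \emph{not} added to $\SC$ (and hence $|S_{i,\ell}|$ is below the threshold while $|S_{i,\jstar(\ell)}|$ is above it), I obtain
\[
|P_{i,\ell}| \;\leq\; \frac{\eps}{2C_0}\cdot |P_{i,\jstar(\ell)}|.
\]
Then, from the ring definition with multiplicative gap $2C_0$, every $x\in P_{i,j}$ ($j\ge 1$) satisfies $(2C_0)^{j-1}R < \bd(x,c_i)\leq (2C_0)^jR$, so $\diam(P_{i,j})$ has nominal scale $\Theta((2C_0)^j R)$ as used throughout Section~\ref{sec:coreset-contruct} (cf.\ the telescoping in Lemma~\ref{lem:being-charged}). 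Using $\ell \leq \jstar(\ell)+1$, this yields
\[
\diam(P_{i,\ell}) \;\leq\; 2(2C_0)^{\ell}R \;\leq\; 2(2C_0)^{\jstar(\ell)+1}R \;\leq\; 4C_0\cdot \diam(P_{i,\jstar(\ell)}).
\]
Multiplying the size bound by the diameter bound gives
\[
|P_{i,\ell}|\cdot \diam(P_{i,\ell}) \;\leq\; \frac{\eps}{2C_0}\cdot 4C_0\cdot |P_{i,\jstar(\ell)}|\cdot \diam(P_{i,\jstar(\ell)}) \;=\; 2\eps\cdot |P_{i,\jstar(\ell)}|\cdot \diam(P_{i,\jstar(\ell)}),
\]
which is $\leq \eps C_0\cdot |P_{i,\jstar(\ell)}|\cdot \diam(P_{i,\jstar(\ell)})$ since $C_0\ge 2$, as required.

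The main obstacle is the second step: the actual set-theoretic diameter of the heavy ring could be smaller than its nominal scale $(2C_0)^{\jstar(\ell)-1}R$ if its points happen to cluster tightly, so one must be careful to consistently use the ring-width scale (as the paper does in the charging/telescoping of Lemma~\ref{lem:being-charged}). Handling the $\jstar(\ell)=0$ boundary case cleanly also requires appealing to the conservative peeling invariant of Lemma~\ref{lem:peeling-of-ring-0}, so that the ``heavy'' shell $P_{i,0}\setminus \Pclose_{i,0}$ has the expected $\Theta(R)$ scale.
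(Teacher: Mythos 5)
Your decomposition of the bound into a size comparison ($|P_{i,\ell}|\le \frac{\eps}{2C_0}|P_{i,\jstar(\ell)}|$ from \Cref{lem:heavy-light-ring-sizes}) times a diameter comparison ($\diam(P_{i,\ell}) = O(C_0)\cdot\diam(P_{i,\jstar(\ell)})$ from $\ell\le\jstar(\ell)+1$) is mathematically the same content as the paper's proof, but routed more directly. The paper instead threads through the charging-scheme abstraction: it cites \Cref{lem:charge-distribute} (which is itself derived from \Cref{lem:heavy-light-ring-sizes}) to say that $P_{i,\ell}$ can distribute all of $|P_{i,\ell}|\diam(P_{i,\ell})$ as charges to $P_{i,\jstar(\ell)}$, cites \Cref{lem:being-charged} to say each point of $P_{i,\jstar(\ell)}$ receives at most $2\eps\cdot\diam(P_{i,\jstar(\ell)+1})$, and then argues $\diam(P_{i,\jstar(\ell)+1})\le 2C_0^2\cdot\diam(P_{i,\jstar(\ell)})$ and sums. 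Your version avoids invoking \Cref{lem:being-charged} (which bounds the \emph{aggregate} charge over all rings and recursions and is overkill for a single-$\ell$ claim), so it is cleaner and actually recovers the stated constant $\eps C_0$ directly, whereas the paper's chain as literally written yields $4\eps C_0^2$. Your handling of the $\jstar(\ell)=0$ corner case via $P_{i,0}\setminus\Pclose_{i,0}$ matches the paper's treatment.

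The obstacle you flag — that the set-theoretic $\diam(P_{i,\jstar(\ell)})$ could be far smaller than the nominal ring width $\Theta((2C_0)^{\jstar(\ell)}R)$, which would break the step $2(2C_0)^{\jstar(\ell)+1}R \le 4C_0\cdot\diam(P_{i,\jstar(\ell)})$ — is a genuine concern, but note that the paper's proof has exactly the same exposure in its claim $\diam(P_{i,\jstar(\ell)+1})\le 2C_0^2\cdot\diam(P_{i,\jstar(\ell)})$ (again comparing an upper bound on the outer ring's width to the actual inner ring's diameter, which can degenerate). So this is a shared looseness of the paper rather than a flaw specific to your argument; both proofs implicitly treat $\diam(P_{i,j})$ at its nominal scale. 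In summary, your proof is correct at the same level of rigor as the paper's, takes the same two-ingredient decomposition, and is more economical by skipping the charging-scheme indirection.
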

\begin{proof}
    By \Cref{lem:charge-distribute}, each of such ring $P_{i,\ell}$ could distribute all the costs as the charges. By \Cref{lem:being-charged}, any point in such ring $P_{i, \jstar(\ell)}$ receives at most $2\eps\cdot \diam(P_{i, \jstar(\ell)+1})$ charges. We now claim that $\diam(P_{i, \jstar(\ell)+1}) \leq 2C^2_0 \cdot \diam(P_{i, \jstar(\ell)})$. For all rings \emph{except} $\jstar(\ell)=0$, the statement simply follows from the construction. If $\jstar(\ell)=0$, by the rule of \Cref{alg:strong-coreset}, the charges could only be written on $P_{i,0}\setminus\Pclose_{i,0}$ by the charging rule. Therefore, we have $\diam(P_{i, \jstar(\ell)+1}) \leq 2C^2_0 \cdot \diam(P_{i, \jstar(\ell)})$, and a summation over the charges on the points gives the desired lemma statement.
\end{proof}

We now use \Cref{lem:charging-to-cost-bound} to bound the cost of the coreset. We first present the following lemma that uses the quantity $\card{P_{i,j}} \cdot\diam(P_{i,j})$ to bridge the cost between the rings with points in $\SC$ and the rings without.
\begin{lemma}
\label{lem:cost-between-heavy-and-light}
Let $\calC$ be any set of centers. Then, for any ring $P_{i,\jstar}$ and rings $P_{i,j}$ such that $j\leq \jstar+1$ and $\card{P_{i, j}}\leq \eta\cdot \card{P_{i,\jstar}}$, we have that 
\begin{align*}
    \cost(\calC, P_{i,j}) \leq \eta\cdot \cost(\calC, P_{i, \jstar}) + 2C_0\cdot \card{P_{i,j}} \cdot\diam(P_{i,j}).
\end{align*}
\end{lemma}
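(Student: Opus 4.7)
The plan is to apply a triangle inequality between each point of $P_{i,j}$ and reference points in the heavy ring $P_{i,\jstar}$, averaged over the heavy ring, so that the size ratio manufactures the desired factor of $\eta$. Concretely, for every $x\in P_{i,j}$ and every $y\in P_{i,\jstar}$, the triangle inequality gives $\bd(\calC,x)\leq \bd(\calC,y)+\bd(x,y)$. Averaging over $y\in P_{i,\jstar}$ and then summing over $x\in P_{i,j}$, the first term contributes $\frac{|P_{i,j}|}{|P_{i,\jstar}|}\cdot \cost(\calC,P_{i,\jstar}) \leq \eta\cdot\cost(\calC,P_{i,\jstar})$ by the size hypothesis, which is exactly the main term on the right-hand side of the lemma. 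Thus it remains to control the residual double sum.

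For the residual term $\frac{1}{|P_{i,\jstar}|}\sum_{x\in P_{i,j}}\sum_{y\in P_{i,\jstar}}\bd(x,y)$, I would exploit the fact that the two rings are concentric around the weak-approximation center $c_i$. By the ring definition, $\bd(x,c_i)\leq (2C_0)^j R$ for $x\in P_{i,j}$ and $\bd(c_i,y)\leq (2C_0)^{\jstar}R$ for $y\in P_{i,\jstar}$, so a triangle inequality routed through $c_i$ yields $\bd(x,y)\leq (2C_0)^j R+(2C_0)^{\jstar}R$. Since $j\leq \jstar+1$, this is at most $2(2C_0)^{\jstar+1}R$, and therefore the residual term is bounded by $|P_{i,j}|\cdot 2(2C_0)^{\jstar+1}R$. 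Identifying $\diam(P_{i,j})$ with the outer-radius scale $(2C_0)^j R$ of the ring then converts the prefactor into $2C_0\cdot |P_{i,j}|\cdot \diam(P_{i,j})$, which matches the claim.

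The main obstacle is making this last identification tight across the possible values of $j$. In the key case $j=\jstar+1$, which is the only newly encountered ring in each peeling iteration by \Cref{lem:ring-handling-and-converge}, the quantity $(2C_0)^{\jstar+1}R$ is precisely the outer radius of $P_{i,j}$, so the conversion is immediate. In the case $j=\jstar$, the discrepancy $(2C_0)^{\jstar+1}R/(2C_0)^{\jstar}R$ is exactly the factor $2C_0$ appearing in the lemma statement, which is why that constant cannot be removed. For the boundary case $j=0$ handled by \textsc{Conservative-Peeling} (\Cref{alg:conservative-peeling}), the same estimate still applies once the charging scheme restricts attention to $P_{i,0}\setminus \Pclose_{i,0}$, on which the lower bound on $\bd(x,c_i)$ is retained. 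Together these cases cover every invocation of the lemma in the charging analysis.
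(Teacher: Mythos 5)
Your core argument is the same as the paper's: reroute each point $x\in P_{i,j}$ through a representative of the heavy ring $P_{i,\jstar}$ via the triangle inequality, use the size hypothesis $\card{P_{i,j}}\leq \eta\cdot\card{P_{i,\jstar}}$ to produce the $\eta\cdot\cost(\calC,P_{i,\jstar})$ term (the paper reuses the single least-cost point of $P_{i,\jstar}$, you average over the whole ring -- an immaterial difference), and bound the leftover cross-ring distance by the ring radii around $c_i$.

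The gap is in your closing case analysis. The lemma is applied in \Cref{lem:ignored-rings-total-cost} to \emph{every} light ring $P_{i,\ell}$ with $\jstar(\ell)=\jstar$, and by \Cref{lem:ring-handling-and-converge} an iteration marks as processed all not-yet-processed rings of index up to $\jstar+1$; in particular, light rings with index well below $\jstar$ (for instance, all light rings in the very first iteration) are charged to $\jstar$. So your assertion that only $j\in\{\jstar+1,\jstar,0\}$ ever arise is false, and for $j\leq \jstar-2$ your own residual estimate $\bd(x,y)\leq (2C_0)^{j}R+(2C_0)^{\jstar}R\approx (2C_0)^{\jstar}R$ cannot be absorbed into $2C_0\cdot\card{P_{i,j}}\cdot\diam(P_{i,j})$, since $2C_0\cdot\diam(P_{i,j})\lesssim 2(2C_0)^{j+1}R$ falls short by a factor of roughly $(2C_0)^{\jstar-j-1}$. (Relatedly, ``identifying'' $\diam(P_{i,j})$ with the outer radius $(2C_0)^{j}R$ is only an upper bound on the true point-set diameter, so the conversion is not a valid inequality as written even for $j\geq\jstar$.) For comparison, the paper does not restrict to your three cases: it closes the step with the single chain $\bd(x,y)\leq\diam(P_{i,\jstar})\leq 2C_0\cdot\diam(P_{i,j})$, attributed to the ring construction and $j\leq\jstar+1$; that step is itself terse about the regime $j<\jstar-1$ and about diameters versus radius scales, but it is the step your write-up would need to state and justify for all $j\leq\jstar+1$, rather than arguing (incorrectly) that small $j$ never occurs.
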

\begin{proof}
The lemma stems from the triangle inequality. Concretely, let $x\in P_{i,j}$ be a point in the ring $P_{i,j}$. We can find a corresponding point $y\in P_{i, \jstar}$ assigned to center $c \in \calC$ such that
\begin{align*}
\cost(\calC, x)& \leq \bd(c, x) \tag{$\cost(\calC, x)$ is the optimal cost for $x$}\\
& \leq \bd(c, y) + \bd(x,y) \tag{triangle inequality}\\
&\leq \cost(\calC, y) + \diam(P_{i, \jstar})\\
&\leq \cost(\calC, y) + 2 C_0 \cdot \diam(P_{i, j}). \tag{by our construction of the rings and $j\leq \jstar+1$}
\end{align*}
Therefore, we could ``reuse'' the least-cost point $y\in P_{i, \jstar}$ for $\card{P_{i, j}}\leq \eta\cdot \card{P_{i,\jstar}}$ time and argue that
\begin{align*}
    \cost(\calC, P_{i,j}) &= \sum_{x\in P_{i,j}} \cost(\calC, x)\\
    &\leq \sum_{x\in P_{i,j}}  \cost(\calC, y) + 2 C_0 \cdot \diam(P_{i, j})\\
    &\leq \eta\cdot \cost(\calC, P_{i, \jstar}) + 2C_0\cdot \card{P_{i,j}} \cdot\diam(P_{i,j}),
\end{align*}
as desired.
\end{proof}

Using \Cref{lem:charging-to-cost-bound} and \Cref{lem:cost-between-heavy-and-light}, we are now ready to bound the total cost on rings that are marked ``\emph{processed}'' without any point being added to $\SC$ as follows. 
\begin{lemma}
    \label{lem:ignored-rings-total-cost}
    Let $\calP^{\textnormal{light}}$ be the set of rings that are marked ``\emph{processed}'' without points being added to $\SC$, and let $\calP^{\textnormal{heavy}}$ be all other rings. For any set of centers $\calC$, we have that
    \begin{align*}
    \cost(\calC, \calP^{\textnormal{light}}):=\sum_{P_{i,j}\in \calP^{\textnormal{light}}} \cost(\calC, P_{i,j})\leq \eps \cdot 2C^2_0\cdot \log{n}\cdot \sum_{P_{i,j}\in \calP^{\textnormal{heavy}}} \paren{\cost(\calC, P_{i,j})+ \card{P_{i,j}}\cdot \diam(P_{i,j})}.
    \end{align*}
\end{lemma}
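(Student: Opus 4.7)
The plan is to combine the two preceding structural lemmas (Lemma~\ref{lem:charging-to-cost-bound} and Lemma~\ref{lem:cost-between-heavy-and-light}) and aggregate by the ``charging target.'' For each light ring $P_{i,\ell} \in \calP^{\textnormal{light}}$, let $\jstar = \jstar(\ell)$ be the heavy ring whose iteration marked $\ell$ as processed. By Lemma~\ref{lem:heavy-light-ring-sizes}, $|P_{i,\ell}| \leq \tfrac{\eps}{2C_0}\,|P_{i,\jstar}|$, and since by construction $\ell \leq \jstar+1$, Lemma~\ref{lem:cost-between-heavy-and-light} (with $\eta = \eps/(2C_0)$) gives
\[
\cost(\calC, P_{i,\ell}) \leq \tfrac{\eps}{2C_0}\,\cost(\calC, P_{i,\jstar}) + 2C_0\,|P_{i,\ell}|\,\diam(P_{i,\ell}).
\]

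Next I would sum this inequality over all $P_{i,\ell} \in \calP^{\textnormal{light}}$, grouping terms by the target heavy ring $\jstar$. The key counting observation is that for any fixed center $c_i$ there are only at most $\log n$ rings, so at most $\log n$ light rings charge to any particular heavy ring. Hence
\[
\sum_{P_{i,\ell} \in \calP^{\textnormal{light}}} \tfrac{\eps}{2C_0}\,\cost(\calC, P_{i,\jstar(\ell)}) \;\leq\; \tfrac{\eps \log n}{2C_0} \sum_{P_{i,\jstar} \in \calP^{\textnormal{heavy}}} \cost(\calC, P_{i,\jstar}).
\]
For the second piece, I would invoke Lemma~\ref{lem:charging-to-cost-bound} to replace $|P_{i,\ell}|\,\diam(P_{i,\ell})$ by $\eps C_0\,|P_{i,\jstar(\ell)}|\,\diam(P_{i,\jstar(\ell)})$ and apply the same $\log n$ grouping, producing the bound $2\eps C_0^2 \log n \sum_{\calP^{\textnormal{heavy}}} |P_{i,\jstar}|\,\diam(P_{i,\jstar})$. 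Adding the two contributions and using $\tfrac{\eps \log n}{2C_0} \leq 2\eps C_0^2 \log n$ gives the stated inequality.

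The main obstacle I expect is verifying that the map $\ell \mapsto \jstar(\ell)$ always lands in $\calP^{\textnormal{heavy}}$, so that the right-hand side of the claim actually accounts for every term we produce. This should follow from the algorithm: a ring identified as $\jstar$ on line~\ref{line:outmost-heavy-ring} has at least $80 C_0\,k\log^2 n/\eps^3 \geq 30\,k\log^2 n/\eps^2$ sampled points, which triggers \textsc{Coreset-Update} on line~\ref{line:coreset-update} and places it in $\calP^{\textnormal{heavy}}$. The delicate edge case is $\jstar=0$ combined with \textsc{Conservative-Peeling}, where the charging scheme uses $P_{i,0}\setminus \Pclose_{i,0}$ as the effective outer ring; here one should repeat the argument of Lemma~\ref{lem:charging-to-cost-bound} to confirm both the size ratio and the diameter comparison still hold, so the same two inequalities above go through unchanged.
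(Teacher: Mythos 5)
Your proposal matches the paper's proof essentially step for step: apply Lemma~\ref{lem:cost-between-heavy-and-light} with $\eta=\eps/(2C_0)$ (justified by Lemma~\ref{lem:heavy-light-ring-sizes}) to each light ring, replace $\card{P_{i,\ell}}\diam(P_{i,\ell})$ via Lemma~\ref{lem:charging-to-cost-bound}, group the light rings by their target $\jstar(\ell)$ using the at-most-$\log n$ rings per center, and absorb $\tfrac{\eps\log n}{2C_0}$ into $2\eps C_0^2\log n$. The edge cases you flag ($\jstar(\ell)$ landing in $\calP^{\textnormal{heavy}}$ and the $\jstar=0$ conservative-peeling situation) are handled the same way in the paper, via the charging-scheme lemmas, so your argument is the same route.
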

\begin{proof}
    The lemma is a natural corollary of \Cref{lem:charging-to-cost-bound} and \Cref{lem:cost-between-heavy-and-light}. Fix any ring $P_{i, \jstar}\in \calP^{\textnormal{heavy}}$, we let the rings $\calP^{\textnormal{light}}(\jstar)$ be the set of rings with index $\ell$ such that $\jstar(\ell)=\jstar$ (the notation was first defined before \Cref{alg:charging-scheme}). We apply \Cref{lem:cost-between-heavy-and-light} and \Cref{lem:charging-to-cost-bound} to all rings in $\calP^{\textnormal{light}}(\jstar)$ and obtain that
    \begin{align*}
        \sum_{P_{i,j} \in \calP^{\textnormal{light}}(\jstar)} \cost(\calC, P_{i,j}) &\leq  \sum_{P_{i,j} \in \calP^{\textnormal{light}}(\jstar)} \frac{\eps}{2C_0}\cdot \cost(\calC, P_{i, \jstar}) + 2C_0\cdot 
 \sum_{P_{i,j} \in \calP^{\textnormal{light}}(\jstar)} \card{P_{i,j}} \cdot\diam(P_{i,j}) \tag{by \Cref{lem:cost-between-heavy-and-light}}\\
 &\leq \sum_{P_{i,j} \in \calP^{\textnormal{light}}(\jstar)} \frac{\eps}{2C_0}\cdot \cost(\calC, P_{i, \jstar}) +\eps \cdot 2C^2_0 \cdot \sum_{P_{i,j} \in \calP^{\textnormal{light}}(\jstar)} \card{P_{i,\jstar}}\cdot \diam(P_{i,\jstar}) \tag{by \Cref{lem:charging-to-cost-bound}}\\
 &\leq \log{n}\cdot \paren{\frac{\eps}{2C_0}\cdot \cost(\calC, P_{i, \jstar}) + \eps \cdot 2C^2_0 \cdot \card{P_{i,\jstar}}\cdot \diam(P_{i,\jstar})} \tag{at most $\log{n}$ such rings}\\
 &\leq \eps \cdot 2C^2_0\cdot \log{n}\cdot \paren{\cost(\calC, P_{i,\jstar})+ \card{P_{i,\jstar}}\cdot \diam(P_{i,\jstar})}.
    \end{align*}
Finally, since each ring $P_{i, \ell}$ in $\calP^{\textnormal{light}}$ has a unique $\jstar(\ell)$, summing over all rings in $\calP^{\textnormal{light}}$ gives the desired lemma statement.
\end{proof}

\paragraph{Wrapping up the proof of \Cref{thm:strong-coreset-k-median}.} We are now ready to finalize the proof of \Cref{thm:strong-coreset-k-median}. The following claim is a natural corollary of the $\beta$-approximation of the $(\alpha, \beta)$-weak coreset.

\begin{claim}
    \label{claim:sum_point_diam}
    Let $\OPT$ be the optimal cost of clustering on $P$ with $k$ centers and let $\calA$ be the set of centers from a $\beta$-approximate solution. Then,
    
    \[\sum_{i,j}\card{P_{i,j}} (2C_0)^j R \leq 3 C_0 \beta\cdot\OPT \qquad \& \qquad \sum_{i,j}\card{P_{i,j}} \diam(P_{i,j}) \leq 6 C_0 \beta\cdot\OPT.\]
\end{claim}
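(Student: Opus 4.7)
The two inequalities follow by bounding the annular quantities in terms of the approximate clustering cost $\nu(\calA,\calX)\leq \beta\cdot\OPT$, using the ring definition together with the choice $R=\nu(\calA,\calX)/(\beta n)$, which gives $nR\leq\OPT$. The first bound is the main workhorse; the second follows from the first plus a one-line triangle-inequality observation on the diameters.

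For the first inequality, I would split the sum by $j=0$ vs.\ $j\geq 1$. For any $j\geq 1$ and $x\in P_{i,j}$ the ring definition forces $\bd(x,c_i)>(2C_0)^{j-1}R=(2C_0)^{j}R/(2C_0)$, so
\[
\sum_{x\in P_{i,j}} \bd(x,c_i) \;\geq\; \frac{(2C_0)^{j}R}{2C_0}\cdot |P_{i,j}|.
\]
Summing over all $i$ and $j\geq 1$ and using that the $P_{i,j}$'s for $j\geq 1$ are a subpartition of $\calX$ with each point assigned to its $\calA$-center, the right-hand side is at most $\nu(\calA,\calX)\leq\beta\cdot\OPT$. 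Rearranging gives
\[
\sum_i\sum_{j\geq 1} |P_{i,j}|\,(2C_0)^{j}R \;\leq\; 2C_0\beta\cdot\OPT.
\]
For $j=0$, we simply use $\sum_i |P_{i,0}|\,R\leq nR=\nu(\calA,\calX)/\beta\leq \OPT\leq C_0\beta\cdot\OPT$ (since $C_0,\beta\geq 1$). Adding the two contributions yields $\sum_{i,j}|P_{i,j}|(2C_0)^{j}R\leq 3C_0\beta\cdot\OPT$.

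For the second inequality, note that every $x\in P_{i,j}$ lies in the ball $B(c_i,(2C_0)^{j}R)$ (taking the convention $(2C_0)^{0}R=R$ for $j=0$), so by the triangle inequality $\diam(P_{i,j})\leq 2\cdot (2C_0)^{j}R$ for all $j\geq 0$. Plugging this into the first bound gives
\[
\sum_{i,j}|P_{i,j}|\,\diam(P_{i,j}) \;\leq\; 2\sum_{i,j}|P_{i,j}|\,(2C_0)^{j}R \;\leq\; 6C_0\beta\cdot\OPT,
\]
which is the second desired inequality.

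The only subtlety worth flagging is the $j=0$ ring: the lower bound $\bd(x,c_i)\geq (2C_0)^{j-1}R$ is vacuous there, so one must handle $P_{i,0}$ separately using $nR\leq\OPT$. This is also why the final constants pick up a $3$ rather than $2$ in the first inequality and a $6$ rather than $4$ in the second. No randomness, oracle access, or properties of the weak-strong model enter the proof; this is a purely deterministic structural statement about rings around a $\beta$-approximate center set.
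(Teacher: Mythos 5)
Your proof is correct and follows essentially the same route as the paper's: bound $(2C_0)^jR$ per point by $2C_0\,\bd(\calA,p)$ for $j\geq 1$ and by $R$ for $j=0$, use $nR\leq\OPT$ via the definition of $R$, and then deduce the diameter bound from $\diam(P_{i,j})\leq 2(2C_0)^jR$. The only cosmetic difference is that the paper merges the two cases into the single per-point inequality $(2C_0)^jR\leq 2C_0\,\bd(\calA,p)+R$ rather than splitting the sum by $j=0$ versus $j\geq 1$.
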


\begin{proof}

Consider any point $p \in P_{i,j}$. For $j = 0$, $(2 C_0)^j R = R$ and for $j \geq 1, (2 C_0)^j R \leq 2C_0 \bd(\calA, p)$. Hence, for any $j$ we have $(2 C_0)^j R \leq 2 C_0\bd(\calA, p) + R$. Now,

\begin{align*}
    \sum_{i,j} |P_{i,j}| (2 C_0)^j R = \sum_{i,j} \sum_{p \in P_{i,j}} (2C_0)^j R &\leq \sum_{i,j} \sum_{p \in P_{i,j}} 2 C_0\bd(\calA, p) + R \\
    &= \sum_{p \in X} 2 C_0\bd(\calA, p) + R \\
    &\leq 2C_0\beta \OPT + nR \leq 3C_0\beta\cdot\OPT
\end{align*}

Since $\diam(P_{i,j}) \leq 2 (2C_0)^jR$, we have $\sum_{i,j}\card{P_{i,j}} \diam(P_{i,j}) \leq 6 C_0 \beta\cdot\OPT$

\end{proof}

\begin{lemma}
\label{lem:eps-strong-coreset}
    For any set of centers $\calC \subseteq X$ of size at most $k$, it holds that
    \[\card{\cost(\calC, X) - \cost(\calC, \SC)} \leq \eps \cdot 15\beta\cdot C^3_0\cdot  \log{n}\cdot \cost(\calC,X)\]
    with probability at least $1-1/\poly(n)$.
\end{lemma}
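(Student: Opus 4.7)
The plan is to split $\calX$ into the heavy rings $\calP^{\textnormal{heavy}}$ (those for which points were added to $\SC$) and the light rings $\calP^{\textnormal{light}}$ (those marked processed but contributing no coreset points). By \Cref{lem:ring-handling-and-converge} these two collections, together with the centers themselves, partition $\calX$, and the coreset $\SC$ consists exactly of the reweighted samples arising from the heavy rings. Thus $\cost(\calC, \SC) = \widetilde{\cost}(\calC, \calP^{\textnormal{heavy}})$ in the notation of \Cref{lem:cost-of-heavy-rings}, and
\[
\bigl|\cost(\calC, \calX) - \cost(\calC, \SC)\bigr| \leq \bigl|\cost(\calC, \calP^{\textnormal{heavy}}) - \widetilde{\cost}(\calC, \calP^{\textnormal{heavy}})\bigr| + \cost(\calC, \calP^{\textnormal{light}})
\]
by the triangle inequality.

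Next, I would apply \Cref{lem:cost-of-heavy-rings} to the first term, yielding an upper bound of $2\eps \sum_{P_{i,j}\in\calP^{\textnormal{heavy}}} |P_{i,j}|\diam(P_{i,j})$, and \Cref{lem:ignored-rings-total-cost} to the second, yielding $\eps \cdot 2C_0^2 \log n \cdot \sum_{P_{i,j}\in\calP^{\textnormal{heavy}}}\bigl(\cost(\calC,P_{i,j}) + |P_{i,j}|\diam(P_{i,j})\bigr)$. Both events hold simultaneously with probability $1-1/\poly(n)$ by a union bound. Since the heavy rings sit inside $\calX$, $\sum_{P_{i,j}\in\calP^{\textnormal{heavy}}} \cost(\calC, P_{i,j}) \leq \cost(\calC, \calX)$.

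The only remaining piece is to bound the diameter-weighted sums. Here I invoke \Cref{claim:sum_point_diam}, which gives $\sum_{i,j} |P_{i,j}|\diam(P_{i,j}) \leq 6 C_0 \beta \cdot \OPT$, and observe that $\OPT \leq \cost(\calC, \calX)$ because $\calC$ is an arbitrary (feasible) choice of at most $k$ centers. Combining these bounds, collecting the three terms, and using $C_0, \beta, \log n \geq 1$ to fold the smaller lower-order constants into the dominant $C_0^3 \beta \log n$ factor yields a factor of at most $15 \eps \beta C_0^3 \log n$ on $\cost(\calC, \calX)$, as claimed.

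\textbf{Main obstacle.} No step is individually hard, since the heavy lifting is done in \Cref{lem:cost-of-heavy-rings,lem:ignored-rings-total-cost,claim:sum_point_diam}. The only mild subtlety is ensuring that the partition $\calP^{\textnormal{heavy}} \cup \calP^{\textnormal{light}}$ truly covers all non-center points of $\calX$, so no ring's contribution is accidentally dropped; this is precisely what \Cref{lem:ring-handling-and-converge} guarantees, and the high-probability events required for that lemma and for \Cref{lem:cost-of-heavy-rings,lem:ignored-rings-total-cost} must be unioned over for the single $1-1/\poly(n)$ guarantee.
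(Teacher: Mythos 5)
Your proposal follows essentially the same route as the paper's own proof: split $\calX$ into $\calP^{\textnormal{heavy}}\cup\calP^{\textnormal{light}}$, note $\cost(\calC,\SC)=\costtilde(\calC,\calP^{\textnormal{heavy}})$, bound the heavy-ring error via \Cref{lem:cost-of-heavy-rings} and the light-ring cost via \Cref{lem:ignored-rings-total-cost}, then close with \Cref{claim:sum_point_diam} and $\OPT\le\cost(\calC,\calX)$, union bounding the relevant high-probability events. The constant bookkeeping also matches the paper's (both treatments fold the lower-order $2\eps\sum|P_{i,j}|\diam(P_{i,j})$ term into the dominant $2C_0^2\log n$-weighted sum, and the exact constant is immaterial since $\eps$ is rescaled at the end).
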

\begin{proof}
    Similar to the case in \Cref{lem:cost-of-heavy-rings} and \Cref{lem:ignored-rings-total-cost}, we define $\calP^{\text{light}}$ as the rings being marked as ``\emph{processed}'' without points added to $\SC$ and $\calP^{\text{heavy}}$ as the other rings. 
    We thus have 
    \begin{align*}
        \cost(\calC, X) = \sum_{P^{\text{light}}_{i,j}\in \calP^{\text{light}}} \cost(\calC, P^{\text{light}}_{i,j}) +  \sum_{P^{\text{heavy}}_{i,j}\in \calP^{\text{heavy}}} \cost(\calC, P^{\text{heavy}}_{i,j}).
    \end{align*}
    Define $\costtilde(\calC, \calP^{\text{light}})$ and $\costtilde(\calC, \calP^{\text{heavy}})$ as the cost induced by $\SC$ for $\calC$. Clearly, the total cost induced by the coreset $\SC$ is $\cost(\calC, \SC)=\costtilde(\calC, \calP^{\text{light}})+\costtilde(\calC, \calP^{\text{heavy}})$.
    Furthermore, by our construction, there is $\costtilde(\calC, \calP^{\text{light}})=0$. Conditioning on the high-probability events of \Cref{lem:cost-of-heavy-rings} and \Cref{lem:ignored-rings-total-cost}, we have that
    \begin{align*}
        \card{\cost(\calC, \calP^{\textnormal{heavy}}) - \costtilde(\calC, \calP^{\textnormal{heavy}})} &\leq 2\cdot \eps \cdot \sum_{{P_{i,j}\in \calP^{\textnormal{heavy}}}}\card{P_{i,j}} \diam(P_{i,j})\\
        \card{\cost(\calC, \calP^{\textnormal{light}}) - \costtilde(\calC, \calP^{\textnormal{light}})}& \leq \cost(\calC, \calP^{\textnormal{light}}) \\
        & \leq \eps \cdot 2C^2_0\cdot \log{n}\cdot \sum_{P_{i,j}\in \calP^{\textnormal{heavy}}} \paren{\cost(\calC, P_{i,j})+ \card{P_{i,j}}\cdot \diam(P_{i,j})}.
    \end{align*}
    As such, we could bound the difference of the cost as
    \begin{align*}
    & \card{\cost(\calC, X) - \cost(\calC, \SC)} \\
    & \leq \card{\cost(\calC, \calP^{\textnormal{heavy}}) - \costtilde(\calC, \calP^{\textnormal{heavy}})}+ \card{\cost(\calC, \calP^{\textnormal{light}}) - \costtilde(\calC, \calP^{\textnormal{light}})}\\
    &\leq \eps \cdot 2C^2_0 \cdot \log{n} \cdot \sum_{P_{i,j}\in \calP^{\textnormal{heavy}}} \card{P_{i,j}}\cdot \diam(P_{i,j}) + \eps \cdot 2C^2_0 \cdot \log{n}\cdot \sum_{P_{i,j}\in \calP^{\textnormal{heavy}}}  \cost(\calC, P_{i,j}) \tag{by \Cref{lem:ignored-rings-total-cost}}\\
    &\leq \eps \cdot \beta \cdot 12C^3_0 \log{n} \cdot \OPT + \eps \cdot 2C^2_0 \cdot \log{n}\cdot \sum_{P_{i,j}\in \calP^{\textnormal{heavy}}}  \cost(\calC, P_{i,j})  \tag{by Claim~\ref{claim:sum_point_diam}}\\
    &\leq \eps \cdot \beta\cdot 12C^3_0 \log{n} \cdot \cost(\calC, X) + \eps \cdot 2C^2_0 \cdot \log{n}\cdot \cost(\calC, X)  \\
    &\leq \eps \cdot 15\beta \cdot C^3_0 \cdot \log{n}\cdot \cost(\calC, X) \tag{$\beta\geq 1$}
    \end{align*}
    which is as desired.
\end{proof}

Finally, to get $(1+\eps)$-approximation, we let $\eps=\frac{\eps'}{15\beta \cdot \eps \cdot C^3_0 \cdot \log{n}}=O(\frac{\eps'}{\log{n}})$ by the constant choices of $\beta$ and $C_0$ (by \Cref{prop:dist-est-weak-oracle,prop:meyerson-sketch-coreset}). The size of the coreset and number of $\SO$ queries are therefore $O(k^2 \log^5{n}/\eps'^3)=O(k^2 \log^8{n}/\eps^3) = k\polylog(n)/\eps^3$, as desired by \Cref{thm:strong-coreset-k-median}.

\section{Fair Coresets for \texorpdfstring{$(1+\eps)$}{one-plus-eps}-Approximate \texorpdfstring{$k$}{k}-Median Clustering}
\label{sec:fair-coreset}

In this section, we show how to construct \emph{fair} coresets that preserve the $k$-means and $k$-median costs by a $(1+\eps)$ factor.
\begin{Algorithm}
\label{alg:assignment-preserve-coreset}
    \textbf{The construction algorithm for a $(1+\eps)$ fair coreset.}
    \item \begin{enumerate}
        \item Run the algorithm of \Cref{prop:meyerson-sketch-coreset} to get the weak approximate clustering $\WC$. 
        \item Initialize all rings $P_{i,j}$ as ``\emph{not processed}''.
        \item Initialize the fair coreset $\FC\gets \emptyset$.
        \item For each center $c_i$ of $\WC$: 
        \item \begin{enumerate}
            \item For each ring $P_{i,j}$, let $S_{i,j}$ be some samples from the ring specified later.
            \item Initialize $\Ptilde_{i}\gets P_{i}$ as the remaining set of points.
            \item\label{line:ring-iteration-fair} For $r =1:10\log^2{n}$ iterations:
            \item \begin{enumerate}
                \item Sample $s_r = 1000 C_0\cdot \log^{2}{n}$ points uniformly at random from $\Ptilde_{i}$, and let the sample set be $S_r$.
                \item Make strong oracle $\SO$ queries as follows.
                \begin{itemize}
                    \item For point $\SO$ queries, query $\SO(x)$ for all points $x \in S_r$.
                    \item For distance $\SO$ queries, query $\SO(x,c_i)$ for all points $x \in S_r$.
                \end{itemize}
                \item \textbf{Ring assignment:} For each point $x \in S_r$, add $x$ to $S_{i,j}$, where $j$ is the index of the ring to which $x$ belongs. 
                \item\label{line:inner-most-heavy-ring} Find the ring $\jstar$ with the \emph{smallest index} such that $\card{S_{i,j^*}}\geq \frac{4}{5 \log n} s_r$. 
                \item\label{line:peeling-and-sampling-iteration-fair} Conduct the \textbf{sampling and peeling} step as follows.
                \item \begin{enumerate}
                    \item For each point $x\in \Ptilde_{i}$, run the algorithm of \Cref{prop:dist-est-weak-oracle} with the set $\Speel \gets \cup_{j=1}^{\jstar} S_{i,j}\cup \{c_i\}$.
                    \item Let $T^{(r)}_{i}$ be the set of points such that $\dtilde{(x, \Speel)}\leq {(2C_0)}^{\jstar} \cdot R$.
                    \item\label{line:sampling-from-peeled-points} \textbf{Sampling:} sample a set $\FC(T^{(r)}_{i})$ of $m = \Theta(k\varepsilon^{-2} \log^2 n\log (n\varepsilon^{-1}))$ points from $T^{(r)}_{i}$ uniformly at random \emph{with replacement}, and weight each sampled points by ${\card{T^{(r)}_{i}}}/{m}$.
                    \item Add the points in $\FC(T^{(r)}_{i})$ with the weights to the coreset $\FC$.
                    \item \textbf{Peeling:} remove all points in $T^{(r)}_{i}$ from $\Ptilde_{i}$, i.e., $\Ptilde_{i} \gets \Ptilde_{i} \setminus T^{(r)}_{i}$.
                \end{enumerate}
            \end{enumerate}
        \end{enumerate}
    \end{enumerate}
\end{Algorithm}

Note that in terms of the pseudo-codes, \Cref{alg:assignment-preserve-coreset} is simpler than \Cref{alg:strong-coreset} since we do \emph{not} have to handle the charging of rings explicitly with the labels of ``\emph{being processed}''. We will handle the argument associated with that part in the \emph{analysis}.

\Cref{alg:assignment-preserve-coreset} samples uniformly at random from all the peeled points to form coresets (not only the heavy rings). 
Intuitively, sampling from the entire set of points is necessary to obtain assignment-preserving coresets.
The actual analysis of the algorithm is considerably more involved. The main roadblock here is that almost all existing algorithms for assignment-preserving coresets only work with \emph{rings} (e.g., \cite{Cohen-AddadL19,BravermanCJKST022}), but in our case, we need to work with sets of points that could span multiple rings. This requires us to open the blackbox in \cite{Cohen-AddadL19} and re-prove the guarantees for assignment preservation and approximation using the properties of the heavy rings.

In \Cref{app:coreset-contruct-general}, we give a $(1+\eps)$-coreset for general $(k,z)$-clustering (without fairness constraints). We remark that the algorithmic approach of sampling and peeling from heavy rings with smaller indices first (therefore not ignoring ``low-cost" regions similar to \Cref{alg:assignment-preserve-coreset}) appears to be applicable to general $(k,z)$-clustering as well. Furthermore, we believe that this could lead to improvements in the dependence on $k$ in the coreset size as well as the number of weak and strong oracle queries. This would require a white-box adaptation of \cite{Chen09} to work with the non-ring set of points as in line~\ref{line:sampling-from-peeled-points}, and we leave it as an interesting future problem to explore.

\paragraph{The analysis.} Similar to the analysis of \Cref{alg:strong-coreset}, we need to show that $i).$ \Cref{alg:assignment-preserve-coreset} converges with a small number of strong oracle queries; $ii).$ \Cref{alg:assignment-preserve-coreset} preserves the $k$-median cost by a $(1+\eps)$ factor. In addition, we need to prove that \Cref{alg:assignment-preserve-coreset} is indeed assignment-preserving.

We first establish the bounded number of strong oracle queries for \Cref{alg:assignment-preserve-coreset}. The guarantee and the proof are as follows.
\begin{lemma}
    \label{lem:ring-handling-and-converge-fair}
    \Cref{alg:assignment-preserve-coreset} outputs a coreset of size $O(\frac{k^2\log^{4}{n} \log{n/\eps}}{\eps^2})$ using $O(k\log^{4}{n})$ strong oracle $\SO$ (point or edge) queries, $O(nk \log^{3}{n})$ weak oracle $\WO$ queries, and converges in $\Ot(k^2/\eps^2 + nk)$ time.
    Furthermore, with probability at least $1-1/\poly(n)$, after $10 \log^{2}{n}$ iteration of line~\ref{line:ring-iteration-fair}, all points of $P_i$ are processed by \emph{exactly one} of the iterations of line~\ref{line:peeling-and-sampling-iteration-fair}. 
\end{lemma}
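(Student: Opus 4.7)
The plan is to verify the four quantitative bounds (strong queries, weak queries, coreset size, time) directly from the per-iteration costs, and then to establish the ``exactly once'' convergence property by adapting the pigeonhole plus Chernoff argument used in \Cref{lem:ring-handling-and-converge} to the innermost-heavy-ring peeling rule of \Cref{alg:assignment-preserve-coreset}.

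First I would bound the strong oracle queries. \Cref{prop:meyerson-sketch-coreset} contributes $O(k\log^2 n)$ strong queries for $\WC$. Inside the loop over centers, every iteration of line~\ref{line:ring-iteration-fair} makes $s_r=O(\log^2 n)$ point (or distance) queries to $\SO$, and there are $10\log^2 n$ iterations per center across $k$ centers, giving $O(k\log^4 n)$ additional strong queries and matching the claimed bound. For weak queries, the Meyerson sketch contributes $O(nk\log^2 n)$, and each invocation of \Cref{prop:dist-est-weak-oracle} in line~\ref{line:peeling-and-sampling-iteration-fair} spends $\Otilde(1)$ weak queries per surviving point of $\Ptilde_i$; conditioned on the convergence step below, each point survives for at most $O(\log^2 n)$ iterations, so the total weak cost over all centers is $O(nk\log^3 n)$. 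The coreset size is similarly a product: each iteration adds $m=\Theta(k\eps^{-2}\log^2 n\log(n/\eps))$ sampled points to $\FC$, over $10\log^2 n$ iterations per center and $k$ centers, yielding $O(k^2\log^4 n\log(n/\eps)/\eps^2)$. The running time is then dominated by the oracle call and sampling cost, which gives $\Otilde(nk+k^2/\eps^2)$.

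Next I would prove convergence in $O(\log^2 n)$ iterations of line~\ref{line:ring-iteration-fair}. The key claim is that in each iteration, $|\Ptilde_i|$ shrinks by a multiplicative factor $\paren{1-\Omega(1/\log n)}$ with high probability, from which a telescoping argument and a union bound over all iterations yield that $\Ptilde_i=\emptyset$ after $10\log^2 n$ iterations. To establish the shrinkage I would argue as follows: since there are at most $\log n$ rings and $s_r=1000C_0\log^2 n$ uniform samples are drawn, pigeonhole forces at least one ring to receive $\geq s_r/\log n=1000C_0\log n$ samples, which exceeds the threshold $\tfrac{4}{5\log n}s_r=800C_0\log n$; hence the innermost such ring $\jstar$ in line~\ref{line:inner-most-heavy-ring} exists. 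A Chernoff bound in the spirit of \Cref{clm:j-star-heavy} then certifies that $|P_{i,\jstar}\cap\Ptilde_i|\geq \Omega(|\Ptilde_i|/\log n)$ w.h.p., and \Cref{prop:dist-est-weak-oracle} applied with $\Speel\supseteq\{c_i\}\cup S_{i,\jstar}$ guarantees that every such point satisfies $\dtilde(x,\Speel)\leq (2C_0)^{\jstar}\cdot R$ and is therefore removed in the peeling step. Uniqueness of processing is then immediate: once a point leaves $\Ptilde_i$ in the peeling step it can never be resampled nor re-peeled, so every $x\in P_i$ belongs to $T^{(r)}_i$ for at most one $r$; the shrinkage argument shows it belongs to exactly one.

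The main obstacle I anticipate is the careful bookkeeping of the distance-estimator slack in the peeling step, since here $\Speel$ is drawn from \emph{all} inner rings $\{S_{i,j}\}_{j=1}^{\jstar}\cup\{c_i\}$ rather than the single outer ring used in \Cref{lem:peel-completeness}. Concretely one has to verify that the diameter $R_S$ of $\Speel$ is bounded by $O((2C_0)^{\jstar}R)$ so that the additive $C_0 R_S$ slack in \Cref{prop:dist-est-weak-oracle} does not prevent points of $P_{i,\jstar}\cap\Ptilde_i$ (or at least of some identified inner ring) from crossing the peeling threshold; this requires choosing the reference point $s\in\Speel$ in the estimator bound to lie in the same or a lower-index ring as the test point $y$, and then invoking the triangle inequality through $c_i$. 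Once this is handled, the geometric-shrinkage calculation $(1-\Omega(1/\log n))^{10\log^2 n}\leq n^{-\Omega(1)}$ and a union bound over all $O(k\log^2 n)$ iterations give the high-probability termination claim, completing the lemma.
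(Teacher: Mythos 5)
Your proposal follows essentially the same route as the paper's proof: derive the query, size, and time bounds by multiplying the per-iteration cost by the number of iterations and centers, and establish convergence via a Chernoff-certified shrinkage of $\Ptilde_i$ by a $(1-\Omega(1/\log n))$ factor per iteration, with uniqueness of processing following from the fact that peeled points never re-enter $\Ptilde_i$. The quantitative bookkeeping ($O(k\log^2 n)$ from the Meyerson sketch plus $s_r\cdot10\log^2 n\cdot k$ strong queries, $m\cdot10\log^2 n\cdot k$ coreset points, etc.) matches the paper's, and your pigeonhole observation that at least one ring must receive $\geq s_r/\log n \geq \frac{4}{5\log n}s_r$ samples is the right way to see that line~\ref{line:inner-most-heavy-ring} always succeeds.

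The one place where you flag an obstacle but do not actually close it — and where the paper's own proof is silently incomplete — is the claim that every point of the inner rings (and in particular of $P_{i,\jstar}$) crosses the peeling threshold $(2C_0)^{\jstar}\cdot R$. Note that \Cref{alg:assignment-preserve-coreset} uses threshold $(2C_0)^{\jstar}\cdot R$, whereas \Cref{alg:normal-peeling} (the one analyzed in \Cref{lem:peel-completeness}) uses the looser threshold $(2C_0)^{\jstar+1}\cdot R$. With $\Speel\subseteq B(c_i,(2C_0)^{\jstar}R)$ one has $R_{\Speel}$ on the order of $(2C_0)^{\jstar}R$, so \Cref{prop:dist-est-weak-oracle} only gives $\dtilde(x,\Speel)\leq \bd(x,c_i)+C_0R_{\Speel}\leq (1+O(C_0))\cdot(2C_0)^{\jstar}R$, which strictly exceeds the stated threshold $(2C_0)^{\jstar}R$ and therefore does \emph{not} guarantee removal. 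The paper's proof simply invokes \Cref{lem:peel-completeness} without noting the mismatched threshold, and the later proof of \Cref{lem:cost-set-lb-capacity} asserts ``$T_i^{(r)}$ contains every point in $P_{i,j}$ ($j\leq\jstar$)'' as a consequence. Almost certainly the threshold in \Cref{alg:assignment-preserve-coreset} was intended to be $(2C_0)^{\jstar+1}\cdot R$, in which case the slack calculation from \Cref{lem:peel-completeness} carries over and your argument goes through; but as written, the ``once this is handled'' step in your last paragraph is not a mere bookkeeping detail — it requires either changing the threshold or a genuinely different reason why the estimator error does not push $P_{i,\jstar}$ points over the line.

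Otherwise the proposal is correct and faithful. Two small points of care: your weak-query tally implicitly multiplies by $k$ and by the number of iterations, which double-counts since each data point lives in a single $P_i$ — this only makes the bound looser, so it is fine; and the $\jstar$ here is the \emph{innermost} heavy ring (line~\ref{line:inner-most-heavy-ring}), not the outermost as in \Cref{alg:strong-coreset}, which you do use correctly but should state explicitly since it changes which rings $\Speel$ spans.
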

\begin{proof}
    For each iteration, the only points we make strong oracle $\SO$ queries on are the points in $S_r$, which is at most $O(\log^{2}{n})$. All other points in $P_{i}$ make at most $O(\log{n})$ weak oracle queries, and the number of points added to $\FC$ during each iteration is $\Theta(k\varepsilon^{-2} \log^2 n\log (n\varepsilon^{-1}))$.
    
    We now prove the ``furthermore'' part using the high-probability events of \Cref{lem:peel-completeness}, \Cref{lem:ring-handling-and-converge}, and \Cref{clm:j-star-heavy}. Note that although the choice of $\jstar$ and the number of samples have changed, we still have $\card{P_{i,\jstar}\cap \Ptilde_i}\geq 800 C_0\cdot \log{n}$, which means the guarantees in \Cref{clm:j-star-heavy} continues to hold with a changed constant. Therefore, each point is only processed by one iteration, and in each iteration, at least $\frac{1}{50 \log{n}}$ fraction of the remaining points are processed by line~\ref{line:peeling-and-sampling-iteration-fair}. This implies after $100\log^2{n}$ iterations, no point in $P_i$ remains not processed by line~\ref{line:peeling-and-sampling-iteration-fair}. 
    
    Therefore, the size of the coreset $\FC$ follows from the number of $O(k\varepsilon^{-2} \log^2 n\log (n\varepsilon^{-1}))$ samples in each iteration and the $k\log^{2}{n}$ iterations. The number of weak and strong oracle queries follows from the same argument, and the running time scales linearly with the coreset size.
\end{proof}

We now move to the analysis of the assignment-preserving $(1+\eps)$ coreset. The main lemma for the assignment-preserving approximation is as follows.
\begin{lemma}
\label{lem:assignment-preserve-approx}
Let $(\calX, d)$ be an input set of points in $\mathbb{R}^{d}$, and let $\FC$ be the resulting assignment-preserving coreset as prescribed by \Cref{alg:assignment-preserve-coreset}. With high probability, for any set of centers $\calC$ and any given assignment constraint $\Gamma$, there is
\begin{align*}
    \card{\cost(\FC, \calC, \Gamma) - \cost(\calX, \calC, \Gamma)}\leq \eps \cdot \cost(\calX, \calC, \Gamma).
\end{align*}
\end{lemma}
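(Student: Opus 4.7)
The plan is to decompose $\calX$ into the peeled groups $\{T_i^{(r)}\}$ produced across all centers $c_i$ and iterations $r$ of Algorithm~\ref{alg:assignment-preserve-coreset}, analyze the preservation of assignment cost on each group separately, and then sum the errors. First I would establish a structural guarantee for each $T_i^{(r)}$: by the peeling rule and Proposition~\ref{prop:dist-est-weak-oracle}, every $x \in T_i^{(r)}$ satisfies $\bd(x,c_i) \leq (2C_0)^{j^*(r)+1} R$, while by a proof analogous to Lemma~\ref{lem:peel-completeness} (adapted to the smallest-index choice of $j^*$ in line~\ref{line:inner-most-heavy-ring}), essentially all points in $\Ptilde_i$ that lie within the ball of radius $(2C_0)^{j^*(r)-1} R$ around $c_i$ are included in $T_i^{(r)}$. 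Hence each peeled group has diameter bounded by $D_{i,r} := 2(2C_0)^{j^*(r)+1} R$, and contains the heavy ring $P_{i,j^*(r)}\cap\Ptilde_i$ whose mass is $\Omega(|T_i^{(r)}|/\log n)$ by an analog of Claim~\ref{clm:j-star-heavy}.

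Next, fix any $\calC$ with $|\calC|\leq k$ and any assignment constraint $\Gamma:\calC\to\mathbb{R}^{\ge 0}$. For a single peeled group $T_i^{(r)}$, the optimal assignment $\sigma^\star$ restricted to $T_i^{(r)}$ induces a per-point cost function $h_\sigma(x) := \sum_{c\in\calC} \sigma(x,c)\,\bd(x,c)$. Because $\diam(T_i^{(r)}) \leq D_{i,r}$, for any fixed $c\in\calC$ the function $\bd(\cdot,c)$ has additive range at most $D_{i,r}$ over $T_i^{(r)}$. Applying a uniform-convergence bound in the style of Lemma~\ref{lem:haussler} (or the assignment-preserving sampling lemma of \cite{Cohen-AddadL19}) to the class of functions $\{\bd(\cdot,c) : c\in\calC\}$, a uniform sample of size $m = \Theta(k\eps^{-2}\log^2 n\log(n/\eps))$ reweighted by $|T_i^{(r)}|/m$ preserves $\sum_{x\in T_i^{(r)}}\sigma(x,c)\bd(x,c)$ to additive error $\eps\,|T_i^{(r)}|\,D_{i,r}/\log n$ simultaneously for every $c\in\calC$ and, via a VC/chaining net argument over all $\Gamma$-consistent $\sigma$, for every consistent fractional assignment. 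Summing over the $k$ centers, the per-group additive error is $O(\eps\,|T_i^{(r)}|\,D_{i,r})$.

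The remaining step is to charge this additive error against the true cost $\cost(\calX,\calC,\Gamma)$. Here I would use the fact that the heavy ring $P_{i,j^*(r)}\cap\Ptilde_i$ within $T_i^{(r)}$ carries at least an $\Omega(1/\log n)$ fraction of the mass, and that any feasible $\sigma\sim\Gamma$ must transport $|T_i^{(r)}|$ units of weight from $T_i^{(r)}$ to $\calC$. By an argument in the spirit of Lemma~\ref{lem:cost-between-heavy-and-light}, either the nearest center in $\calC$ lies at distance $\Omega(D_{i,r}/C_0)$ from $c_i$, in which case $\cost^\sigma(T_i^{(r)},\calC) = \Omega(|T_i^{(r)}|\,D_{i,r})$ and the additive error is absorbed; or else some center is close to $c_i$, and we invoke Claim~\ref{claim:sum_point_diam} to bound $\sum_{i,r} |T_i^{(r)}|\,D_{i,r} = O(\beta\cdot \OPT(\calC,\Gamma))$, since an $O(1)$-approximate unconstrained solution upper bounds the constrained optimum for $\Gamma$ up to the inherent slack. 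Summing the per-group errors over all $\Otilde(k)$ peeled groups yields the $\eps\cdot\cost(\calX,\calC,\Gamma)$ bound after rescaling $\eps$ by a $\polylog(n)$ factor.

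The main obstacle is the second paragraph: standard assignment-preserving coreset results such as \cite{Cohen-AddadL19,BravermanCJKST022} apply to a single ring, where every point has distance within a factor $2$ of the others to any center; our $T_i^{(r)}$ instead spans up to $\jstar(r)+1$ rings, so points can have very different contributions to $\cost^\sigma$. The white-box fix is to observe that the diameter bound $D_{i,r}$ still controls the additive \emph{range} of every $\bd(\cdot,c)$ function on $T_i^{(r)}$, which is the only property the Haussler/VC argument needs; the assignment $\sigma$ only redistributes mass, so the sample still estimates $\sum_x \sigma(x,c)\bd(x,c)$ for every $c$ and every $\sigma\sim\Gamma$ uniformly. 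The second subtlety is that an assignment-preserving coreset must satisfy $\bw(\calX)=\bw(S)$; this holds because the samples drawn with replacement in line~\ref{line:sampling-from-peeled-points} are each reweighted by $|T_i^{(r)}|/m$, so the group contributes exactly $|T_i^{(r)}|$ to $\bw(\FC)$, and summing over disjoint peeled groups (guaranteed by Lemma~\ref{lem:ring-handling-and-converge-fair}) gives $\bw(\FC)=|\calX|$. A final union bound over the $\Otilde(k)$ groups, the discretized net of centers and assignment constraints, and the high-probability events of Lemma~\ref{lem:ring-handling-and-converge-fair} and Proposition~\ref{prop:dist-est-weak-oracle} completes the proof.
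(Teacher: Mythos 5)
Your structural observations in the first paragraph (diameter bound $D_{i,r}$ for each peeled group, inclusion of a heavy ring carrying an $\Omega(1/\log n)$ fraction of the mass, disjointness of the groups) agree with what the paper establishes in Lemma~\ref{lem:ring-handling-and-converge-fair} and Lemma~\ref{lem:cost-set-lb-capacity}, and your third-paragraph dichotomy is morally the same content as the ring-balance inequality the paper proves there. Your closing observations about weight preservation and the union bound over a discretized net of $(\calC,\Gamma)$ also match Lemma~\ref{lem:fair-coreset-union-bound} in spirit. The genuine divergence, and the gap, is in your second paragraph.

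The paper does \emph{not} argue via a Haussler/VC uniform-convergence bound over the function class $\{\bd(\cdot,c)\}$ followed by a ``chaining net argument over all $\Gamma$-consistent $\sigma$.'' That last phrase is the hard part and you give no mechanism for it: the space of feasible fractional assignments $\sigma\sim\Gamma$ is a polytope of dimension $|T|\cdot k$, and controlling $\sum_{x,c}\sigma(x,c)\bd(x,c)$ uniformly over this polytope from a sample is not a consequence of per-center range control plus Haussler's lemma. In particular, an assignment $\sigma$ on $T$ does not restrict to an assignment on the sample $\FC(T)$ obeying $\Gamma$ --- the sample typically violates the capacity constraints --- so the quantity ``the sample estimates $\sum_x\sigma(x,c)\bd(x,c)$'' is not what you need: you need to bound $\cost(\FC(T),\calC,\Gamma)$, which is an optimum over a \emph{different} assignment polytope. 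The paper circumvents the uniform-convergence route entirely with three explicit one-sided constructions: Lemma~\ref{lem:fair-coreset-error-T-le-FCT} averages the optimal (random) assignment of $\FC(T)$ to obtain a feasible assignment of $T$, giving $\cost(T,\calC,\Gamma)\leq \expect{\cost(\FC(T),\calC,\Gamma)}$ by Jensen-type linearity; Lemma~\ref{lem:fair-coreset-error-FCT-le-T} pushes the optimal assignment of $T$ forward to $\FC(T)$, repairs the resulting $\Gamma$-violation by re-routing mass at cost $\diam(\calC)$ per unit, and controls the total imbalance $\sum_c|\sum_{x\in\FC(T)}\sigma(x,c)\frac{|T|}{m}-\Gamma(c)|$ with a single Hoeffding bound over $s\in\{-1,+1\}^\calC$; and Lemma~\ref{lem:assign-preserve-coreset-concentration} concentrates $\cost(\FC(T),\calC,\Gamma)$ around its mean via McDiarmid, using only bounded differences $\frac{|T|}{m}\diam(T)$. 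This transportation-style argument is what makes the ``white-box fix'' actually go through for a $T_i^{(r)}$ spanning multiple rings; your proposal identifies the right obstacle but offers a fix that, as stated, would itself need the very lemma the paper re-proves. If you insist on the Haussler route, you would have to discretize the assignment polytope (as the paper discretizes $\Gamma$ in Lemma~\ref{lem:fair-coreset-union-bound}) and then argue over the finitely many vertices/grid points, but that is a different and heavier argument than what you sketched.

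One smaller imprecision: you invoke Claim~\ref{claim:sum_point_diam} to bound $\sum_{i,r}|T_i^{(r)}|D_{i,r}$, but that claim is stated for the rings $P_{i,j}$, not for the peeled groups $T_i^{(r)}$ which are unions of (prefixes of) rings. The paper's Lemma~\ref{lem:cost-set-lb-capacity} proves exactly the analog you need, $\cost(\mathcal{X},\WC)\geq\frac{1}{8C_0^2\log n}\sum_{i,r}|T_i^{(r)}|\diam(T_i^{(r)})$, by exploiting that the heavy ring inside $T_i^{(r)}$ alone contributes this cost to $c_i$; citing it would close the third paragraph cleanly.
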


To prove \Cref{lem:assignment-preserve-approx}, we first reduce the problem to an easier case that $\calC, \Gamma$ is fixed, using a union bound similar to \cite{BravermanCJKST022}. 
We also make $\diam(\calC)$ bounded by moving centers far from the ring center. 
Next we bound the additive error for a \emph{single} set of points $T^{(r)}_{i}$ and $\FC(T^{(r)}_{i})$ (recall that $T^{(r)}_{i}$ is the set of peeled points and $\FC(T^{(r)}_{i})$ is weighted sampled set from  $T^{(r)}_{i}$). We construct transformations between assignments of $T^{(r)}_{i}$ and $\FC(T^{(r)}_{i})$ that suffers small cost. These transformations imply optimal costs of $T^{(r)}_{i}$ and $\FC(T^{(r)}_{i})$ are close. 

For the clarity of presentation, in what follows, we use $T$ and $c^*$ as short-hand notations for $T^{(r)}_{i}$ and $c_i$ when the context is clear. Let $L$ be the number of rings of $c^*$. 

\newcommand{\nALG}{\mathrm{ALG}}
\newcommand{\tildecalC}{\tilde{\mathcal{C}}}
\newcommand{\tildeGamma}{\tilde{\Gamma}}
\newcommand{\Rfar}{R_{\mathrm{far}}}

\paragraph{Step I: union bound.} Our first step is reducing \Cref{lem:assignment-preserve-approx} to a much easier version, in which $\calC$ and $\Gamma$ are fixed and $\calC$ does not contain centers far from $c^*$. In particular, we move all centers outside $B(c^*,\Rfar)$ to $c^*$, where $\Rfar=60\diam(T)/ \varepsilon$. 

Let the approximation ratio of $\WC$ be $\alpha_\WC$. 
Define $\varepsilon' = \frac{\varepsilon}{10C_0^2 \alpha_\WC \log n}$. 
The proof of \Cref{lem:assignment-preserve-approx} requires the additive error to be less than $\frac{\varepsilon}{10}\cdot \cost(T,C,\Gamma) +  \varepsilon' \cdot |T|\cdot \diam(T)$. 

\begin{lemma}\label{lem:fair-coreset-union-bound}
    Suppose $\nALG(T)$ is an algorithm that outputs assignment-preserving coresets of $T$. 
    If for every $T\subseteq \mathcal{X}, \tildecalC \subset B(c^*,60\diam(T)/\varepsilon)$ of size $k$ and every assignment constraint $\tildeGamma$, 
    $$\Pr\left[\left|\cost(\nALG(T), \tildecalC, \tildeGamma) - \cost(T, \tildecalC, \tildeGamma)\right| \geq \frac{10}{11}\varepsilon'\cdot |T|\cdot \diam(T)\right] \leq \delta$$ 
    for some $\delta\in (0,1)$,
    then for every $T\subseteq \mathcal{X}$,
    with probability at least $1-O(n^k\cdot (n+k\varepsilon^{-1})^{k}\delta)$, for every center set $\calC \subset \calX$ of size $k$ and every assignment constraint $\Gamma$, 
    $$|\cost(\nALG(T), \calC, \Gamma) - \cost(T,\calC, \Gamma)|\leq \frac{\varepsilon}{10}\cdot \cost(T,\calC, \Gamma)+ \varepsilon'\cdot |T|\cdot \diam(T).$$
\end{lemma}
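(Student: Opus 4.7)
The plan is to reduce any $(\calC, \Gamma)$ to a ``near'' surrogate $(\tildecalC, \tildeGamma)$ with $\tildecalC \subseteq \calX \cap B(c^*, \Rfar)$ of size exactly $k$, invoke the hypothesis on the surrogate, and finally union bound over all such surrogates. The reduction proceeds in two parts: (a) show that the contribution of ``far'' centers in $\calC$ is essentially an oracle-invariant constant that cancels in the coreset error, and (b) discretize the assignment quotas $\tildeGamma$ so that only polynomially many surrogates need to be considered.

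For (a), split $\calC = \calC_{\mathrm{near}} \sqcup \calC_{\mathrm{far}}$ by distance to $c^*$, and let $M_{\mathrm{far}} = \sum_{c \in \calC_{\mathrm{far}}} \Gamma(c)$. For every $x \in T \cup \nALG(T)$ (both contained in $B(c^*, \diam(T))$) and every far $c$, the triangle inequality gives $\bd(x,c) = \bd(c^*, c) \pm \diam(T)$, a $(1 \pm \varepsilon/30)$-relative approximation since $\bd(c^*, c) > \Rfar = 60\diam(T)/\varepsilon$. Hence for any feasible $\sigma$, the total far contribution equals the assignment-invariant constant $C_{\mathrm{far}} := \sum_{c \in \calC_{\mathrm{far}}} \Gamma(c)\bd(c^*, c)$ up to $\pm M_{\mathrm{far}} \diam(T)$, and the same is true on $\nALG(T)$. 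Since $\cost(T, \calC, \Gamma) \geq M_{\mathrm{far}} \Rfar/2$, this slack is at most $\tfrac{\varepsilon}{15}\cost(T, \calC, \Gamma)$. Now take $\tildecalC := \calC_{\mathrm{near}} \cup \{c^*\}$ padded by additional points from $\calX \cap B(c^*, \Rfar)$ until $|\tildecalC| = k$, with $\tildeGamma(c) = \Gamma(c)$ on $\calC_{\mathrm{near}}$, $\tildeGamma(c^*) = M_{\mathrm{far}}$ (plus $\Gamma(c^*)$ if $c^* \in \calC_{\mathrm{near}}$), and $\tildeGamma \equiv 0$ on padding. A min-cost flow argument shows the ``near-only'' part of $\cost(T, \calC, \Gamma)$ equals $\cost(T, \tildecalC, \tildeGamma)$ up to $O(M_{\mathrm{far}} \diam(T))$: any feasible near-only assignment on $T$ lifts to a $\tildeGamma$-feasible assignment by routing the leftover $M_{\mathrm{far}}$ mass to $c^*$ at cost $\leq M_{\mathrm{far}}\diam(T)$, and conversely. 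The same reasoning applies to $\nALG(T)$, so combining with the hypothesis applied at $(\tildecalC, \tildeGamma)$ and the triangle inequality gives the desired bound for the original $(\calC, \Gamma)$.

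For (b), snap each $\tildeGamma(c) \in [0, |T|]$ to a grid of $O(n + k\varepsilon^{-1})$ values, chosen fine enough so that the per-coordinate displacement times the per-unit cost $O(\Rfar)$, summed over $k$ coordinates, is at most $\tfrac{\varepsilon'}{22}|T|\diam(T)$. A union bound over at most $n^k$ choices of $\tildecalC \subseteq \calX \cap B(c^*, \Rfar)$ and $(n + k\varepsilon^{-1})^k$ discretized $\tildeGamma$ yields the failure probability $O(n^k(n+k\varepsilon^{-1})^k \delta)$, and absorbing $\tfrac{10}{11}\varepsilon' + 2\cdot\tfrac{\varepsilon'}{22} \leq \varepsilon'$ into the additive term completes the proof.

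The main obstacle I anticipate is the min-cost flow step in (a): the near-only optimization on $T$ has variable per-point capacities $\bw(x) - m(x)$ depending on how the far-mass $m$ is distributed among the points of $T$, so establishing the correspondence with $\cost(T, \tildecalC, \tildeGamma)$ requires showing that both directions of the lift (sending leftover far-mass to $c^*$ and back) incur only $O(M_{\mathrm{far}} \diam(T))$ extra cost. This relies on the fact that $c^*$ lies within $\diam(T)$ of every point in $T \cup \nALG(T)$. Minor bookkeeping is also needed when $c^* \in \calC_{\mathrm{near}}$ already or when $|\calC_{\mathrm{near}}| = k$ so that no padding or adjoining of $c^*$ is required.
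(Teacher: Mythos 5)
Your proposal follows essentially the same route as the paper: move centers outside $B(c^*, \Rfar)$ to $c^*$, observe the resulting cost shift is an assignment-insensitive quantity that (nearly) cancels between $T$ and $\nALG(T)$, discretize $\tildeGamma$, and union bound. The one concrete concern you flag---the ``min-cost flow'' step and its variable per-point residual capacities---is precisely what the paper's formulation avoids by not decomposing into a near-only subproblem. Instead, the paper \emph{fixes} an assignment $\sigma$ and computes the exact pointwise change from moving far centers: $\cost^\sigma(T,\calC,\Gamma)-\cost^\sigma(T,\tildecalC,\tildeGamma)=\sum_{x\in T}\sum_{c\in \Cfar}\sigma(x,c)\bigl(\bd(x,c)-\bd(x,c^*)\bigr)$, which the triangle inequality immediately sandwiches in $[(1-\varepsilon/20)\sum_{c\in\Cfar}\Gamma(c)\bd(c,c^*),\ \sum_{c\in\Cfar}\Gamma(c)\bd(c,c^*)]$. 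Since this holds for arbitrary feasible $\sigma$, it transfers to the optima on both sides without any lifting or re-routing argument. In other words, the obstacle you anticipate dissolves if you think of $\tildecalC$ as $\calC$ with far centers \emph{relocated} (as a multiset) rather than removed and replaced by a padded dummy; the assignment structure is unchanged and only per-edge distances move. Your two-sided lift via $c^*$ with leftover capacity $M_{\mathrm{far}}$ does reach the same bounds in the end, so this is a difference of presentation, not correctness. Two bookkeeping remarks: (i) your step (b) discretization budget should be tied to the $\frac{10}{11}\varepsilon' \to \varepsilon'$ slack the way the paper does via $M = \lceil 600k/\varepsilon\rceil$; and (ii) both you and the paper implicitly use $\bd(x, c^*) = O(\diam(T))$ for $x \in T$ even when $c^* \notin T$, which is guaranteed by the peeling radius rather than literally by $\diam(T)$---the constants absorb this.
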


\newcommand{\Cfar}{\mathcal{C}_{\mathrm{far}}}

\begin{proof}
    For every center set $\calC$ and assignment constraint $\Gamma$, we move all centers outside $B(c^*, R_{\rm far})$ (denoted by $\Cfar$) to $c^*$ and define the center set and assignment constraint after movements as $\tildecalC$ and $\tildeGamma$. For every assignment $\sigma$, the cost of $\sigma$ before movements is upper bounded by
    \begin{align*}
        \cost^\sigma(T,\calC,\Gamma) &= \cost^\sigma(T,\tildecalC, \tildeGamma) + \sum_{x\in T}\sum_{c\in \Cfar} \sigma(x,c)(\bd(x,c)-\bd(x,c^*)) \\
        &\leq \cost^\sigma(T,\tildecalC, \tildeGamma) +\sum_{x\in T}\sum_{c\in \Cfar} \sigma(x,c)\bd(c,c^*) \\
        &=\cost^\sigma(T,\tildecalC, \tildeGamma)+\sum_{c\in \Cfar}\Gamma(c)\bd(c,c^*),
    \end{align*}
    and lower bounded by
    \begin{align*}
        \cost^\sigma(T,\calC,\Gamma) &= \cost^\sigma(T,\tildecalC, \tildeGamma) + \sum_{x\in T}\sum_{c\in \Cfar} \sigma(x,c)(\bd(x,c)-\bd(x,c^*)) \\
        &\geq \cost^\sigma(T,\tildecalC, \tildeGamma) +\sum_{x\in T}\sum_{c\in \Cfar} \sigma(x,c)(\bd(c,c^*)-2\bd(x,c^*)) \\
        &\geq \cost^\sigma(T,\tildecalC, \tildeGamma) +\sum_{x\in T}\sum_{c\in \Cfar} (1-\varepsilon/20)\sigma(x,c)\bd(c,c^*)  \\
        &=\cost^\sigma(T,\tildecalC, \tildeGamma)+(1-\varepsilon/20)\sum_{c\in \Cfar}\Gamma(c)\bd(c,c^*).
    \end{align*}
    In corollary, suppose $\sigma^*$ and $\sigmatilde^*$ are optimal assignments of $\cost(T,\calC, \Gamma)$ and $\cost(T,\tildecalC, \tildeGamma)$, then we have $$\cost(T,\tildecalC, \tildeGamma)\leq \cost^{\sigma^*}(T,\tildecalC, \tildeGamma) \leq \cost(T,\calC, \Gamma) - (1-\varepsilon/20)\sum_{c\in \Cfar}\Gamma(c)\bd(c,c^*)$$ 
    $$\cost(T,\calC, \Gamma)\leq \cost^{\sigmatilde^*}(T,\calC, \Gamma) \leq \cost(T,\tildecalC, \tildeGamma) + \sum_{c\in \Cfar}\Gamma(c)\bd(c,c^*).$$

    Next we round all capacities in $\tildeGamma$ to $\{i/M:i\geq 0\}$, where $M=\lceil 600k/\varepsilon \rceil$, to get a new assignment constraint $\tildeGamma'$ satisfying $\tildeGamma(\calC) = \tildeGamma'(\calC)$ and $|\tildeGamma(c)-\tildeGamma'(c)|<1/M$ for every $c\in \calC$.

    Suppose the optimal assignment for $\tildeGamma$ is $\sigma$. There always exists an assignment $\sigma'$ of $\tildeGamma'$ such that for every $c\in \tildecalC$, $|\sum_{x\in T}(\sigma(x,c)-\sigma'(x,c))| \leq 1/M$. Hence we have
    \begin{align*}
        |\cost(T,\tildecalC,\tildeGamma)-\cost(T,\tildecalC,\tildeGamma')|&\leq \sum_{c\in \tildecalC}\left|\sum_{x\in T}\sigma(x,c)\bd(x,c)-\sum_{x\in T}\sigma'(x,c)\bd(x,c)\right| \\
        &\leq \sum_{c\in \tildecalC} \frac{\diam(\tildecalC)}{M} \\
        &\leq \frac{60k\cdot \diam(T)}{\varepsilon M} \\
        &\leq \diam(T)/10.
    \end{align*}

    Define 
    $$\mathcal{F}:=\{(\calC, \Gamma):\calC \subset B(c^*,R_{\rm far}), |\calC|=k,\Gamma(\calC)=|T|, \forall c\in \calC, \Gamma(c)\in \{i/M:i>0\}\}.$$
    We have $|\mathcal{F}| \leq\Delta^{kd} (|T|+M)^k$. Applying union bound for every $(\tildeGamma', \tildecalC) \in \mathcal{F}$, we have
    \begin{equation} \label{eq:fair-coreset-union-bound}
        \forall (\tildeGamma', \tildecalC) \in \mathcal{F}, |\cost(S,\tildecalC,\tildeGamma') - \cost(T,\tildecalC,\tildeGamma')| \leq \frac{\varepsilon}{11C_0\log n}\cdot |T|\cdot \diam(T)
    \end{equation}
    holds with probability at least $1-|\mathcal{F}|\delta$. Conditioning on \eqref{eq:fair-coreset-union-bound}, for every $\calC$ and $\Gamma$, we have
    \begin{align*}
    &|\cost(S,\calC,\Gamma) - \cost(T,\calC,\Gamma)| \\
    \leq ~~& |(\cost(S,\calC,\Gamma) - \cost(S,\tildecalC,\tildeGamma))+
    (\cost(T,\calC,\Gamma) - \cost(T,\tildecalC,\tildeGamma))|+
    |\cost(T,\tildecalC,\tildeGamma) - \cost(S,\tildecalC,\tildeGamma)| \\
    \leq ~~& \frac{\varepsilon}{10} \sum_{c\in \Cfar} \Gamma(c)\bd(c,c^*) + |\cost(S,\tildecalC,\tildeGamma') - \cost(S,\tildecalC,\tildeGamma)|+
    |\cost(T,\tildecalC,\tildeGamma') - \cost(T,\tildecalC,\tildeGamma)| \\
    &+
    |\cost(T,\tildecalC,\tildeGamma') - \cost(S,\tildecalC,\tildeGamma')| \\
    \leq ~~& \frac{\varepsilon}{10} \sum_{c\in \Cfar} \Gamma(c)\bd(c,c^*) + \frac{1}{10}\diam(T) + \frac{10}{11}\varepsilon'\cdot |T|\cdot \diam(T) \\
    \leq ~~&\frac{\varepsilon}{10} \cost(T,\calC,\Gamma) + \varepsilon'\cdot |T|\cdot \diam(T).
    \end{align*}
    
\end{proof}

\paragraph{Step II: bounding $|\cost(\FC(T), \calC, \Gamma) - \cost(T, \calC, \Gamma)|$.}
In order to prove $\cost(T,\calC,\Gamma)$ and $\cost(\FC(T), \calC, \Gamma)$ are close, we show that optimal assignments of $T$ and $\FC(T)$ can transform to each other. Suppose $\sigma^*,\sigmatilde^*$ are optimal assignments of $T$ and $\FC(T)$ respectively. 

Fix any set of points $T$, centers $\calC$, and capacity constraint $\Gamma$. We will prove the case for $\diam(\calC)\leq \Rfar= 60\diam(T)/\varepsilon$ and apply the union bound by \Cref{lem:fair-coreset-union-bound} afterwards. 

Starting with $\sigmatilde^*$, our first target is to construct an assignment $\sigma$ of $T$ such that $\cost^{\sigma}(T,\calC,\Gamma) \approx \cost^{\sigmatilde^*}(\FC(T),\calC, \Gamma)$. 

\begin{lemma}
    \label{lem:fair-coreset-error-T-le-FCT}
    For any fixed choice of $\calC$, $\Gamma$, and $T$ for some iteration $r$ and $i$ of line~\ref{line:peeling-and-sampling-iteration-fair}, we have that 
    \[\cost(T, \calC, \Gamma) \leq \expect{\cost(\FC(T), \calC, \Gamma)}.\]
    The expectation is taken over only the randomness of the sampling in line~\ref{line:sampling-from-peeled-points}.
\end{lemma}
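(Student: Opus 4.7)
The approach is to exhibit a specific valid assignment $\sigma$ of $T$ (satisfying the capacity constraint $\Gamma$) whose cost \emph{equals} (not merely bounds) the expectation $\expect{\cost(\FC(T), \calC, \Gamma)}$. Since $\cost(T, \calC, \Gamma)$ is defined as a minimum over all feasible assignments, producing any such $\sigma$ yields the claim at once.

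View $\FC(T) = \{x_1, \ldots, x_m\}$ as $m$ i.i.d.\ uniform samples from $T$, each weighted by $|T|/m$, and let $\sigmatilde^*$ denote the (random) optimal assignment of $\FC(T)$ that realizes $\cost(\FC(T), \calC, \Gamma)$. I define a ``pullback'' assignment of $T$ by
\[
\sigma(y, c) := \expect{\sum_{i=1}^m \mathbf{1}[x_i = y]\cdot\sigmatilde^*(x_i, c)}, \qquad y \in T,\ c \in \calC,
\]
where the expectation is over the sampling randomness in line~\ref{line:sampling-from-peeled-points}. Intuitively, $\sigma$ redistributes back to each $y \in T$ the mass that $\sigmatilde^*$ ships out of every copy of $y$ that happens to be drawn into $\FC(T)$.

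The crucial step is to verify that $\sigma$ is feasible for $T$ under $\Gamma$ \emph{pointwise}, not merely in expectation. For the capacity at each $c$, the identity $\sum_{y \in T}\mathbf{1}[x_i = y] = 1$ combined with $\sigmatilde^* \sim \Gamma$ gives
\[
\sum_{y \in T}\sigma(y,c) = \expect{\sum_{i=1}^m \sigmatilde^*(x_i, c)} = \expect{\Gamma(c)} = \Gamma(c).
\]
For the per-point weight constraint (assuming unit weights on $\calX$), $\sum_c \sigmatilde^*(x_i,c) = |T|/m$ for every realization and $\Pr[x_i = y] = 1/|T|$, so
\[
\sum_{c \in \calC} \sigma(y, c) = \expect{\sum_{i=1}^m \mathbf{1}[x_i = y] \cdot \tfrac{|T|}{m}} = \tfrac{|T|}{m}\cdot m\cdot \tfrac{1}{|T|} = 1 = \bw(y).
\]
Hence $\sigma$ is a bona fide assignment of $T$ consistent with $\Gamma$.

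Finally, swapping the expectation with the finite sums and using $\mathbf{1}[x_i = y]\cdot \bd(y, c) = \mathbf{1}[x_i = y]\cdot \bd(x_i, c)$ gives
\[
\cost^{\sigma}(T, \calC) = \sum_{y,c}\sigma(y,c)\bd(y,c) = \expect{\sum_{i=1}^{m}\sum_{c} \sigmatilde^*(x_i, c)\cdot\bd(x_i,c)} = \expect{\cost(\FC(T), \calC, \Gamma)},
\]
and feasibility of $\sigma$ yields $\cost(T, \calC, \Gamma) \leq \cost^{\sigma}(T, \calC) = \expect{\cost(\FC(T), \calC, \Gamma)}$. The only delicate point is that feasibility of $\sigma$ must hold deterministically even though the underlying $\sigmatilde^*$ is random; this is forced by the fact that both the capacity constraint and the per-point weight constraint are \emph{linear} in $\sigmatilde^*$, so taking the outer expectation preserves them exactly. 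Nothing beyond linearity of expectation and the uniformity of the sampling is needed.
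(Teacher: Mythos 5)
Your proof is correct and takes essentially the same approach as the paper: both define the pullback assignment $\sigma$ of $T$ by taking the expectation of the random optimal assignment $\sigmatilde^*$ on $\FC(T)$, then observe that $\cost^\sigma(T,\calC)$ equals $\expect{\cost(\FC(T),\calC,\Gamma)}$ by linearity, and conclude via optimality of $\cost(T,\calC,\Gamma)$. Your version is a bit more careful than the paper's: you make the aggregation over repeated samples explicit via $\sigma(y,c)=\expect{\sum_i\mathbf{1}[x_i=y]\sigmatilde^*(x_i,c)}$ and actually verify feasibility of $\sigma$ (both the capacity constraint and the per-point weight constraint), whereas the paper defines $\sigma(x,c):=\expect{\sigmatilde^*(x,c)}$ without spelling out the multiset aggregation or checking feasibility; these are genuine clarifications, not a different argument.
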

\begin{proof}
    Let $\sigmatilde^*$ be the optimal assignment of $\FC(T)$. Since $\sigmatilde^*$ is a random assignment, we construct $\sigma$ by setting $\sigma(x,c) := \expect{\sigmatilde^*(x,c)}$. Then we have
    \begin{align*}
            \expect{\cost(\FC(T), \calC, \Gamma)} &= \expect{\sum_{c\in \calC}\sum_{x\in T}\sigmatilde(x,c)} \tag{by definition}\\
            &= \sum_{c\in \calC}\sum_{x\in T}\sigma(x,c) \\
            &= \cost^{\sigma}(T,\calC, \Gamma) \tag{by definition}\\
            &\geq\cost^{\sigma^*}(T, \calC, \Gamma), \tag{optimality of $\sigma^*$}
    \end{align*}
    as desired.
\end{proof}

\begin{lemma}
    \label{lem:fair-coreset-error-FCT-le-T}
    For any fixed choice of $\calC$, $\Gamma$, and $T$ for some iteration $r$ and $i$ of line~\ref{line:peeling-and-sampling-iteration-fair}, with probability at least $1-n^{-10}\cdot O(n^{-k}(n+k\varepsilon^{-1})^{-k})$, we have
    \[\cost(\FC(T), \calC, \Gamma)\leq \cost(T, \calC, \Gamma) + \varepsilon' \cdot |T| \cdot \diam(T).\]
    The probability is taken over only the randomness of the sampling in line~\ref{line:sampling-from-peeled-points}.
\end{lemma}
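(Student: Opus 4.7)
The plan is to exhibit a feasible assignment $\sigmatilde$ on $\FC(T)\times \calC$ consistent with $\Gamma$ whose cost is close to $\cost(T,\calC,\Gamma)$; the conclusion then follows since $\cost(\FC(T),\calC,\Gamma) \le \cost^{\sigmatilde}(\FC(T),\calC)$ by optimality. By \Cref{lem:fair-coreset-union-bound} it suffices to work with a \emph{fixed} pair $(\calC,\Gamma)$ satisfying $\calC\subseteq B(c^*,\Rfar)$ where $\Rfar = 60\diam(T)/\varepsilon$, and to achieve additive error $\tfrac{10}{11}\varepsilon'|T|\diam(T)$ with failure probability $\delta := n^{-10}\cdot O(n^{-k}(n+k\varepsilon^{-1})^{-k})$.

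Let $\sigma^*$ realize $\cost(T,\calC,\Gamma)$. For each sampled $y_j\in \FC(T)$ originating from $x_j\in T$ I would define the \emph{inherited} fractional assignment
\[ \sigmatilde^{(0)}(y_j,c) := \tfrac{|T|}{m}\cdot \tfrac{\sigma^*(x_j,c)}{\bw(x_j)}. \]
By construction $\sigmatilde^{(0)}(y_j,\calC)=|T|/m=\bw(y_j)$, and $\expect{\sigmatilde^{(0)}(\FC(T),c)}=\Gamma(c)$ for every $c$. The cost $Y := \cost^{\sigmatilde^{(0)}}(\FC(T),\calC) = (|T|/m)\sum_{j=1}^m \cost^{\sigma^*}(x_j,\calC)$ is a sum of $m$ i.i.d. summands with mean $\cost(T,\calC,\Gamma)$. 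Since every $x_j\in T$ lies in $B(c^*,\diam(T))$ and every $c\in\calC$ lies in $B(c^*,\Rfar)$, each summand is bounded by $(|T|/m)(\Rfar+\diam(T)) = O(|T|\diam(T)/(\varepsilon m))$. A Bernstein (or Hoeffding) bound then yields $|Y - \cost(T,\calC,\Gamma)|\le \tfrac{\varepsilon'}{2}|T|\diam(T)$ with probability $1-\delta/2$ at the algorithmic choice of $m$.

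The remaining task is to turn $\sigmatilde^{(0)}$ into an assignment $\sigmatilde\sim\Gamma$ that satisfies the capacity constraint exactly. Applying concentration to each coordinate $\sigmatilde^{(0)}(\FC(T),c)$ and union-bounding over $c\in\calC$ controls $|\Delta_c| := |\sigmatilde^{(0)}(\FC(T),c)-\Gamma(c)|$ simultaneously. I would then obtain $\sigmatilde$ by rerouting a total of $\tfrac{1}{2}\sum_c|\Delta_c|$ units of mass from overloaded centers to underloaded ones along the already-sampled points; since any pair of centers lies within distance $2\Rfar$, the additional transport cost is at most $\Rfar \sum_c|\Delta_c|$, which the choice of $m$ can drive below $\tfrac{\varepsilon'}{2}|T|\diam(T)$. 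Combining the two steps gives $\cost(\FC(T),\calC,\Gamma)\le \cost^{\sigmatilde}(\FC(T),\calC) \le \cost(T,\calC,\Gamma)+\varepsilon'|T|\diam(T)$ with probability $1-\delta$. The main obstacle is precisely this correction step: the transport cost scales with $\Rfar$, inflated by a factor of $1/\varepsilon$ beyond $\diam(T)$, so the bound $\diam(\calC)=O(\diam(T)/\varepsilon)$ supplied by the union-bound reduction (which truncated far centers to $c^*$) is exactly what makes this rerouting affordable without a further blow-up in $m$.
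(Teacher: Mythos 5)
Your construction is essentially the paper's: take the optimal assignment $\sigma^*$ on $T$, give each sampled point the proportional (inherited) assignment scaled by $|T|/m$, note that this respects the per-point weights but may violate the capacities $\Gamma$, and repair it by rerouting mass between centers at a price of at most $\diam(\calC)$ per unit, where the total capacity violation is controlled by Hoeffding over the i.i.d.\ samples (the paper bounds $\sum_c|\Delta_c|$ via one Hoeffding bound per sign pattern $s\in\{-1,+1\}^{\calC}$ and a union bound over $2^k$ patterns; your per-center union bound is an equivalent variant). So the feasible-assignment construction and the repair step match the paper's proof.

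The gap is in your step bounding the cost of the inherited assignment, and it is quantitative. The paper does not bound the fluctuation of $Y=\frac{|T|}{m}\sum_{j}\cost^{\sigma^*}(x_j,\calC)$ at all (its displayed chain simply identifies this random quantity with $\cost(T,\calC,\Gamma)$); you correctly notice it is random and invoke Hoeffding/Bernstein, but the per-sample range is $\frac{|T|}{m}(\Rfar+\diam(T))=\Theta\bigl(\frac{|T|\,\diam(T)}{\varepsilon m}\bigr)$ because centers may sit at distance $\Rfar=60\diam(T)/\varepsilon$ from $c^*$. Hoeffding then gives failure probability $\exp(-\Theta(\varepsilon^2\varepsilon'^2 m))$, and with $\varepsilon'=\Theta(\varepsilon/\log n)$ and the algorithm's $m=\Theta(k\varepsilon^{-2}\log^2 n\log(n\varepsilon^{-1}))$ the exponent is only $\Theta(k\varepsilon^2\log(n\varepsilon^{-1}))$, which falls short of the required $\ln(1/\delta)=\Theta\bigl((k+10)\log n+k\log(n+k\varepsilon^{-1})\bigr)$ by a factor of $\varepsilon^2$; Bernstein does not close this, since $\cost(T,\calC,\Gamma)$ can itself be as large as $\Theta(|T|\diam(T)/\varepsilon)$ while the reduction of \Cref{lem:fair-coreset-union-bound} demands a purely additive error $\tfrac{10}{11}\varepsilon'|T|\diam(T)$ at the fixed instance. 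So the claim ``with probability $1-\delta/2$ at the algorithmic choice of $m$'' is unsubstantiated: you would need $m$ larger by roughly $1/\varepsilon^2$. (The same $1/\varepsilon$ inflation from $\diam(\calC)\le 2\Rfar$ also sits inside your rerouting bound, though there you are no worse than the paper, whose proof multiplies its $\varepsilon'|T|/2$ violation bound by $\diam(\calC)$ and still asserts $\varepsilon'|T|\diam(T)$.) The robust way to handle the fluctuation you are attacking in that step is not a direct Hoeffding bound on the inherited cost but the McDiarmid argument of \Cref{lem:assign-preserve-coreset-concentration}: swapping one sample changes the optimal constrained cost by only $\frac{|T|}{m}\diam(T)$, with no $1/\varepsilon$ factor, because the difference of distances to any (possibly far) center telescopes through $\bd(t_i,t_i')\le\diam(T)$; that is what survives at the stated $m$.
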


\begin{proof}
    Let $\sigma^*$ be the optimal assignment of $T$. 
    We start with an assignment that assigns $\sigmatilde'(x,c)=\sigma^*(x,c)\frac{|T|}{m}$ fraction of $x$ to center $c$. This assignment satisfies $\forall x\in \FC(T), \sum_{c\in \calC} \sigmatilde'(x,c)=\frac{|T|}{m}$. However, $\sum_{x\in T} \sigmatilde'(x,c)=\Gamma(c)$ may not hold for every $c\in \calC$. 
    
    In order to transform $\sigmatilde'(x,c)$ into an assignment satisfying $\Gamma$, for every center such that $\sum_{x\in T} \sigmatilde'(x,c)>\Gamma(c)$, we need to change some assignments of $c$ to other centers $x\to c'$, suffering cost of at most $\bd(x,c') - \bd(x,c)\leq \bd(c,c') \leq \diam(\calC)$.

    For every center $c\in \calC$, the difference between assigned weight and real capacity is $|\sum_{x\in \FC(T)}\sigma(x,c)\frac{|T|}{m}-\Gamma(c)|$. For every $T$, we have
        \begin{align*}
            \cost(\FC(T), \calC, \Gamma) &\leq \sum_{c\in \calC}\sum_{x\in T}\sigmatilde(x,c) \\
             &\leq \sum_{c\in \calC}\sum_{x\in T}\sigmatilde'(x,c) + \diam(\calC)\sum_{c\in \calC}\left|\sum_{x\in \FC(T)}\sigma(x,c)\frac{|T|}{m}-\Gamma(c)\right| \\
            &=\sum_{c\in \calC}\sum_{x\in T}\sigma(x,c) + \diam(\calC)\sum_{c\in \calC}\left|\sum_{x\in \FC(T)}\sigma(x,c)\frac{|T|}{m}-\Gamma(c)\right| \\
            &=\sum_{c\in \calC}\sum_{x\in T}\sigma(x,c) + \diam(\calC)\max_{s\in \{-1,+1\}^\calC}\sum_{c\in \calC}s_c\left(\sum_{x\in \FC(T)}\sigma(x,c)\frac{|T|}{m}-\Gamma(c)\right)
        \end{align*}
        Fix $s\in \{-1,+1\}^\calC$, let $x_i$ ($1\leq i\leq m$) denotes the $i$-th point in $T$. Then $x_1,x_2,\ldots ,x_m$ are independent random variables uniformly drawn from $T$. 
        Define $Y_i = \sum_{c\in \calC} s_c \sigma(x_i,c)\frac{|T|}{m}$. 
        For every $1\leq i\leq m$, we have $Y_i\in [-|T|/m,|T|/m]$. 
        By Hoeffding's inequality, 
        \begin{align*}\Pr\left[\sum_{i=1}^m\sum_{c\in \calC}s_c\sigma(x_i,c)\frac{|T|}{m}-\sum_{c\in \calC} s_c\Gamma(c) > \varepsilon' T \right] &= \Pr\left[ \sum_{i=1}^m Y_i - \expect{\sum_{i=1}^m Y_i}  > \varepsilon' T\right] \\
        &\leq \exp\left(-\varepsilon'^2 m/2\right).
        \end{align*}
        Applying union bound for every $s\in \{-1,+1\}^\calC$, we have
        \begin{align*}
        \Pr\left[\max_{s\in \{-1,+1\}^\calC}\sum_{c\in \calC}s_c\left(\sum_{x\in \FC(T)}\sigma(x,c)\frac{|T|}{m}-\Gamma(c)\right) > \varepsilon' T/2\right] &\leq 2^k\exp\left(-{\varepsilon'}^2 m/2\right) \\
        &\leq n^{-10}\cdot O(n^{-k}(n+k\varepsilon^{-1})^{-k}).     
        \end{align*}
        
        In summary, $\cost(\FC(T), \calC, \Gamma) \leq \cost(T, \calC, \Gamma) + \varepsilon'\cdot |T|\cdot \diam(T)$ holds with probability at least $1-n^{-10}\cdot O(n^{-k}(n+k\varepsilon^{-1})^{-k})$. 
\end{proof}

We now move to bound the error induced by $\cost(\FC(T), \calC, \Gamma) - \expect{\cost(\FC(T), \calC, \Gamma)}$ in the same manner of \cite{Cohen-AddadL19,CJYZZ25slidingwindow}. 
\begin{lemma}
    \label{lem:assign-preserve-coreset-concentration}
     For any fixed choice of $\calC$, $\Gamma$, and $T$ for some iteration $r$ and $i$ of line~\ref{line:peeling-and-sampling-iteration-fair}, with probability at least $1-n^{-10}\cdot O(n^{-k}(n+k\varepsilon^{-1})^{-k})$, we have that
     \begin{align*}
        \card{\cost(\FC(T), \calC, \Gamma) - \expect{\cost(\FC(T), \calC, \Gamma)}} \leq \eps' \cdot \card{T}\cdot \diam(T).
     \end{align*}
     The randomness is only over the sampling in line~\ref{line:sampling-from-peeled-points}.
\end{lemma}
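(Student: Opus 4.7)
The strategy combines Lemmas~\ref{lem:fair-coreset-error-T-le-FCT} and~\ref{lem:fair-coreset-error-FCT-le-T} with a mirror concentration bound. The upper tail of the desired inequality is almost immediate: Lemma~\ref{lem:fair-coreset-error-T-le-FCT} gives $\cost(T,\calC,\Gamma) \leq \expect{\cost(\FC(T),\calC,\Gamma)}$, while Lemma~\ref{lem:fair-coreset-error-FCT-le-T} gives $\cost(\FC(T),\calC,\Gamma) \leq \cost(T,\calC,\Gamma) + \varepsilon'|T|\diam(T)$ with the required failure probability. Chaining the two yields $\cost(\FC(T),\calC,\Gamma) - \expect{\cost(\FC(T),\calC,\Gamma)} \leq \cost(\FC(T),\calC,\Gamma) - \cost(T,\calC,\Gamma) \leq \varepsilon'|T|\diam(T)$, settling one direction.

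For the lower tail, my plan is to establish the mirror inequality $\cost(\FC(T),\calC,\Gamma) \geq \cost(T,\calC,\Gamma) - \varepsilon'|T|\diam(T)$ with the same failure probability, and separately upper-bound $\expect{\cost(\FC(T),\calC,\Gamma)}$ by taking expectation of Lemma~\ref{lem:fair-coreset-error-FCT-le-T}. Under the reduction to $\diam(\calC) \leq \Rfar = O(\diam(T)/\varepsilon)$ from Lemma~\ref{lem:fair-coreset-union-bound}, the worst-case value of $\cost(\FC(T),\calC,\Gamma)$ is $O(|T|\diam(T)/\varepsilon)$, so the failure event contributes only $n^{-10}\cdot O(n^{-k}(n+k\varepsilon^{-1})^{-k})\cdot O(|T|\diam(T)/\varepsilon) = o(\varepsilon'|T|\diam(T))$ to the expectation. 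Combining this with the mirror inequality gives $\expect{\cost(\FC(T),\calC,\Gamma)} - \cost(\FC(T),\calC,\Gamma) = O(\varepsilon'|T|\diam(T))$, and rescaling $\varepsilon'$ by a constant absorbs the slack into the stated additive error.

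The mirror inequality is the main technical obstacle, and I would prove it following the template of \cite{Cohen-AddadL19,CJYZZ25slidingwindow} in parallel to the proof of Lemma~\ref{lem:fair-coreset-error-FCT-le-T}: starting from the optimal $\sigmatilde^*$ of $\FC(T)$, construct a feasible assignment $\sigma$ of $T$ whose cost is at most $\cost(\FC(T),\calC,\Gamma) + \varepsilon'|T|\diam(T)$; then $\cost(T,\calC,\Gamma)\leq \cost^\sigma(T,\calC,\Gamma)$ delivers the desired lower bound. The naive averaging $\sigma(y,c)=\tfrac{1}{|T|}\sum_i \sigmatilde^*(x_i,c)$ is $\Gamma$-feasible but can incur a $|T|\diam(T)$-scale cost blow-up from triangle inequality. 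To sidestep this, I would discretize by rounding $\calC$ positions and $\Gamma$ capacities to the polynomial-size grid $\mathcal{F}$ of Lemma~\ref{lem:fair-coreset-union-bound}, apply Hoeffding to each rounded candidate assignment to bound the gap between the scaled sample cost on $\FC(T)$ and the population cost on $T$ within $\varepsilon'|T|\diam(T)$, and then union-bound over $|\mathcal{F}|=O(n^k(n+k\varepsilon^{-1})^k)$ patterns. This preserves the required failure probability $n^{-10}\cdot O(n^{-k}(n+k\varepsilon^{-1})^{-k})$, and combining both tails completes the proof.
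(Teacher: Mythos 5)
Your upper tail (chaining Lemma~\ref{lem:fair-coreset-error-T-le-FCT} with Lemma~\ref{lem:fair-coreset-error-FCT-le-T}) is valid, but the lower tail is where the proof breaks down. The ``mirror inequality'' $\cost(T,\calC,\Gamma) \leq \cost(\FC(T),\calC,\Gamma) + \varepsilon'|T|\diam(T)$ is not an easier step you can borrow from the template of Lemma~\ref{lem:fair-coreset-error-FCT-le-T}; it is in fact a \emph{consequence} of this lemma together with Lemma~\ref{lem:fair-coreset-error-T-le-FCT} in the paper's architecture, so trying to derive it independently here is essentially re-proving what you are trying to prove. Concretely, the obstacle you flag is real and your discretization idea does not resolve it: given a realized optimal $\sigmatilde^*$ on $\FC(T)$, any direct transport of assignment mass from the $m$ sampled points to the $|T|$ population points pays up to $\diam(T)$ per unit of moved mass, producing an additive $\Theta(|T|\diam(T))$ error --- a factor $1/\varepsilon'$ too large. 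Rounding $(\calC,\Gamma)$ to the grid $\mathcal F$ and Hoeffding-plus-union-bounding over $|\mathcal F|$ does not fix this, because inside this lemma $\calC$ and $\Gamma$ are already fixed (the $|\mathcal F|$ union bound is the job of Lemma~\ref{lem:fair-coreset-union-bound}, applied \emph{to} this lemma's output), and what you would actually need to enumerate is candidate assignments of $T$, whose count is exponential in $|T|\cdot k$ even after rounding.

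The paper sidesteps all of this by applying McDiarmid's bounded-differences inequality directly. Write $\FC(T) = \{t_1,\dots,t_m\}$ for the i.i.d.\ samples (each carrying weight $|T|/m$) and set $f(t_1,\dots,t_m) := \cost(\{t_1,\dots,t_m\},\calC,\Gamma)$. Replacing $t_i$ by any $t_i'\in T$ changes the optimal constrained cost by at most $\tfrac{|T|}{m}\diam(T)$, since one can reuse the old optimal assignment and simply re-route the $|T|/m$ mass from $t_i$ to $t_i'$, paying at most $\bd(t_i,t_i')\leq \diam(T)$ per unit. McDiarmid then gives $\Pr[|f - \E f| \geq \varepsilon'|T|\diam(T)] \leq \exp(-\Omega(\varepsilon'^2 m))$, which with $m = \Theta(k\varepsilon^{-2}\log^2 n \log(n\varepsilon^{-1}))$ and $\varepsilon' = \varepsilon/(10 C_0^2 \alpha_\WC \log n)$ yields the stated failure probability. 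This is a single, symmetric concentration step and makes no use of $\cost(T,\calC,\Gamma)$ at all, which is precisely why it avoids the transport difficulty.

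Separately, a minor flag on your upper tail bookkeeping: when you bound $\E[\cost(\FC(T),\calC,\Gamma)]$ by taking expectation of Lemma~\ref{lem:fair-coreset-error-FCT-le-T}, the failure-event contribution scaling as $O(|T|\diam(T)/\varepsilon)$ does vanish against $\varepsilon'|T|\diam(T)$ for $\varepsilon \geq 1/\poly(n)$, but you should state this explicitly; it is not free. In any case this extra machinery is unnecessary once McDiarmid is used, because McDiarmid handles both tails symmetrically.
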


\begin{proof}
    Let $t_1,t_2,\ldots ,t_m$ be the $m$ points sampled in $\FC(T)$. Define $f(t_1,t_2,\ldots ,t_m) = \cost(\{t_1,\ldots ,t_m\},\calC,\Gamma)$. 
    For every $\mathbf{t},\in \mathcal{X}^m$ and $1\leq i\leq m$, we have
    $$\sup_{t'_i\in \mathcal{X}} |f(t_1,\ldots ,t_i,\ldots ,t_m)-f(t_1,\ldots ,t_i',\ldots ,t_m)| \leq \frac{\card{T}}{m}\diam(T).$$

    By McDiarmid's inequality, 
    \begin{align*}
        \Pr[|f(t_1,\ldots ,t_m) - \expect{f(t_1,\ldots ,t_m)}| \leq \varepsilon\cdot \card{T}\cdot \diam(T)]&\leq \exp(-\varepsilon'^2 m) \\
        &\leq n^{-10} \cdot O(n^{-k}(n+k\varepsilon^{-1})^{-k}).
    \end{align*}

\end{proof}

\paragraph{Step III: put everything together.} 
We finalize the proof of \Cref{lem:assignment-preserve-approx}. To this end, we need to further establish a lower bound for the cost of $\cost(T_i^{(r)},\calC, \Gamma_i^{(r)})$ for each iteration $r$ as follows. In the following discussion, $\cost(\mathcal{X},\WC)$ and $\cost(\mathcal{X})$ denotes the optimal cost of unconstrainted $k$-median problem. In particular $\cost(\mathcal{X})$ denotes $\min_{\calC \subseteq \calX, |\calC|=k}\cost(\calX, \calC)$. 

\begin{lemma}
\label{lem:cost-set-lb-capacity}
    With probability at least $1-n^{-7}$, 
    \begin{equation}\label{eq:ring-balance}
        \cost\left(\mathcal{X},\WC\right) \geq  \frac{1}{8C_0^2 \log n}\sum_{i=1}^{|\WC|}\sum_{r=1}^{10\log n}\card{T_i^{(r)}}\diam(T_i^{(r)}).
    \end{equation}
\end{lemma}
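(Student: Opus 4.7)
The plan is to exhibit, for each iteration $r$ and center $c_i$, a disjoint subset of $P_i$ whose contribution to $\cost(P_i, c_i)$ dominates $|T_i^{(r)}|\diam(T_i^{(r)})$ up to an $O(C_0 \log n)$ factor, then sum over all $r$ and $i$. First I would bound the diameter: by the peeling rule in line~\ref{line:peeling-and-sampling-iteration-fair} and the fact that $c_i\in \Speel$, every $x\in T_i^{(r)}$ satisfies $\dtilde(x,\Speel)\leq (2C_0)^{\jstar}R$, which via \Cref{prop:dist-est-weak-oracle} gives $\bd(x,c_i) = O((2C_0)^{\jstar} R)$, so $\diam(T_i^{(r)})\leq 2(2C_0)^{\jstar}R$ (absorbing additive constants into $C_0$).

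Next, I would handle the generic case $\jstar(r)\geq 1$. Since $\jstar$ is chosen so that $|S_{i,\jstar}|\geq \frac{4}{5\log n}s_r$ with $s_r = 1000 C_0 \log^2 n$, a Chernoff argument identical to \Cref{clm:j-star-heavy} (with the new constants) yields, with probability $1-1/\poly(n)$, that $|P_{i,\jstar}\cap \Ptilde_i^{(r)}|\geq |\Ptilde_i^{(r)}|/(2\log n)$. Since $T_i^{(r)}\subseteq \Ptilde_i^{(r)}$, this implies $|P_{i,\jstar}\cap \Ptilde_i^{(r)}|\geq |T_i^{(r)}|/(2\log n)$. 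Each such point has $\bd(x,c_i)\geq (2C_0)^{\jstar-1}R\geq \diam(T_i^{(r)})/(4C_0)$, so
\begin{align*}
\cost(P_{i,\jstar}\cap \Ptilde_i^{(r)}, c_i)\geq \frac{|T_i^{(r)}|}{2\log n}\cdot \frac{\diam(T_i^{(r)})}{4C_0} = \frac{|T_i^{(r)}|\diam(T_i^{(r)})}{8C_0\log n}.
\end{align*}
Because $\Ptilde_i^{(r+1)}\subseteq \Ptilde_i^{(r)}\setminus T_i^{(r)}$, the subsets $\{P_{i,\jstar(r)}\cap \Ptilde_i^{(r)}\}_r$ are pairwise disjoint inside $P_i$, and summing yields $\cost(P_i,c_i)\geq \sum_{r:\jstar(r)\geq 1}|T_i^{(r)}|\diam(T_i^{(r)})/(8C_0\log n)$.

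Third, the edge case $\jstar(r)=0$ requires a separate argument because ring-$0$ points can be arbitrarily close to $c_i$, so per-iteration charging breaks down. Here $\diam(T_i^{(r)})\leq O(R)$ directly from the peeling threshold, so $\sum_{(i,r):\jstar=0}|T_i^{(r)}|\diam(T_i^{(r)})= O(R)\cdot \sum_{(i,r):\jstar=0}|T_i^{(r)}|\leq O(Rn)$; combined with $Rn = \cost(\mathcal{X},\WC)/\beta$ and $\beta=O(1)$ (via \Cref{prop:meyerson-sketch-coreset}), this contributes at most $O(\cost(\mathcal{X},\WC))$. Summing both cases over all $i\in [|\WC|]$ and $r\in[10\log n]$, absorbing the $\jstar=0$ contribution into the $8C_0^2\log n$ factor (using $C_0,\log n\geq 1$), and union-bounding over the $k\cdot 10\log n$ Chernoff events in the previous step yields the inequality with overall probability $1-n^{-7}$.

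The hard part will be the interplay between (i) the probabilistic dependence of $\jstar(r)$ on the adaptively evolving $\Ptilde_i^{(r)}$, which I resolve by reusing the union-bound-over-ring-indices template of \Cref{clm:j-star-heavy}, and (ii) the $\jstar=0$ edge case, which must be charged to the global cost through $nR\leq \cost(\mathcal{X},\WC)$ rather than to any single $\cost(P_i,c_i)$; without this separate accounting, the per-iteration argument underestimates $\cost(P_i,c_i)$ when most peeled points fall inside ring $0$.
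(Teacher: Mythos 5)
Your proof takes essentially the paper's route for the main case and then adds a genuinely separate (and necessary) treatment of $\jstar=0$. For $\jstar\geq 1$, the paper does exactly what you do: a Chernoff argument shows $|P_{i,\jstar}|\geq \Omega(N/\log n)$ where $N=|\Ptilde_i^{(r)}|$, each such point sits at distance $\geq(2C_0)^{\jstar-1}R$ from $c_i$, and $\diam(T_i^{(r)})=O((2C_0)^{\jstar}R)$, so $\cost(T_i^{(r)},c_i)\gtrsim|T_i^{(r)}|\diam(T_i^{(r)})/(C_0^2\log n)$; summing over the disjoint $T_i^{(r)}$ gives the lemma. Your variant anchors the charge on the disjoint sub-rings $P_{i,\jstar(r)}\cap\Ptilde_i^{(r)}$ rather than on $T_i^{(r)}$ directly, but this is cosmetic. (One small point: your disjointness claim needs the high-probability event that the peeling in round $r$ actually removes all of $P_{i,\jstar(r)}\cap\Ptilde_i^{(r)}$, i.e.\ the analogue of \Cref{lem:peel-completeness}; it is worth saying that you are conditioning on this.) The paper also separately establishes an upper bound $|P_{i,j}|\leq O(N/\log n)$ for $j<\jstar$ (its \eqref{eq:fair-inner-ring-ub}), which you do not use; it does not appear to be load-bearing for the final step.

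The one place where you genuinely diverge — and, as far as I can tell, improve on the written argument — is $\jstar=0$. The paper's displayed inequality \eqref{eq:fair-single-center-cost} lower-bounds $\cost(T_i^{(r)},c_i)$ by $(2C_0)^{\jstar-1}R\cdot(|P_{i,\jstar}|+\cdots)$, and when $\jstar=0$ that bound is simply false: ring-$0$ points may sit arbitrarily close to $c_i$, so the factor $|P_{i,0}|$ cannot be charged at rate $(2C_0)^{-1}R$ per point, and $|T_i^{(r)}|\cdot\diam(T_i^{(r)})$ can vastly exceed the per-iteration cost. The paper does not visibly address this. Your fix — observing that in this case $\diam(T_i^{(r)})=O(R)$ by the peeling threshold, that $\sum_{i,r}|T_i^{(r)}|\leq n$ because the $T_i^{(r)}$ partition $\calX$, and then charging globally via $nR=\cost(\calX,\WC)/\beta\leq\cost(\calX,\WC)$ — is correct and is exactly what is needed; the extra additive $O(1)\cdot\cost(\calX,\WC)$ from these iterations is absorbed into the slack between your $8C_0\log n$ factor for the main case and the stated $8C_0^2\log n$ (assuming $C_0\geq 2$; worth making that dependence explicit). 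So: same approach on the generic case, and a more careful treatment of the degenerate case than the paper gives. Do be a bit more precise about the diameter constants when you say "absorbing additive constants into $C_0$"; with the peeling threshold $(2C_0)^{\jstar}R$ and the $C_0 R_S$ slop of \Cref{prop:dist-est-weak-oracle}, the honest bound is more like $\diam(T_i^{(r)})\leq O(C_0\cdot(2C_0)^{\jstar}R)$, which feeds into why the paper lands on $8C_0^2\log n$ rather than $8C_0\log n$.
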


\begin{proof}
    Let $P_{i,j}^{(r)}$ be the remaining points at the beginning of the $r$-th iteration of line~\ref{line:ring-iteration-fair}. Let $N_r = \sum_j \card{P_{i,j}^{(r)}}$. We fix some $r$ in the following discussion and omit the superscript $(r)$ for convenience. 
    
    Let $j^*$ follows the definition in line~\ref{line:inner-most-heavy-ring}. 
    Define $\gamma = \frac{4}{5\log n}s_r$. 
    We have $\card{S_{i,j^*}} \geq \gamma$ by the definition of $j^*$. 
    Define $s_{i,j}=\frac{\card{P_{i,j}}}{N}s_r$ be the expected size of $S_{i,j}$. 
    
    Pick $\delta = 0.1$. By Chernoff bound, we have
    \begin{align*}
    \Pr\left[\card{P_{i,j^*}} \leq \frac{4}{5(1+\delta)\log n}N\right]&=
    \Pr\left[ (1+\delta)s_{i,j^*}\leq \gamma\right]\\
    &\leq \Pr\left[ (1+\delta)s_{i,j^*}\leq \card{S_{i,j^*}} \right] 
    \end{align*}
    When $s_{i,j^*} \geq \frac{\gamma}{1+\delta}$, we have
    \begin{align*}
        \Pr\left[ \card{S_{i,j^*}} \geq (1+\delta) s_{i,j^*}\right] &\leq \exp\left(-\frac{\delta^2}{(\delta+2)(1-\delta)}\gamma\right) \tag{Chernoff bound} \\
        &\leq n^{-10} \tag{$\gamma = \Omega(\log n)$}
    \end{align*}
    Otherwise, we have $\card{S_{i,j^*}} > (\gamma / s_{i,j^*}) s_{i,j^*} $. 
    Since $\gamma / s_{i,j^*}$ is sufficiently large, this case happens with negligible probability. 
    We show that with high probability, for every $1\leq j\leq L$ satisfying $s_{i,j} < \frac{\gamma}{1+\delta}$, $\card{S_{i,j^*}} \leq \gamma $ holds. 
    \begin{align*}
        \Pr[ \card{S_{i,j}} \geq \gamma ] &\leq \exp\left(-\frac{t^2}{t+2}s_{i,j}\right) \tag{$t=\frac{\gamma}{s_{i,j}}-1$} \\
        &\leq \exp\left(-\frac{t\delta}{(t+2)(1+\delta)}\gamma\right) \tag{$s_{i,j} < \frac{\gamma}{1+\delta}$} \\
        &\leq n^{-11}. \tag{$\gamma = \Omega(\log n)$}
    \end{align*}
    Take union bound for every $j$, we have 
    $\Pr[s_{i,j^*}\geq \frac{\gamma}{1+\delta}] \geq 1 - n^{-10}$. 

    In summary, with probability at least $1-2n^{-10}$, 
    \begin{equation}\label{eq:fair-heavy-ring-lb}
        \card{P_{i,j^*}} \geq \frac{4}{5(1+\delta)\log n}N.
    \end{equation} 
    For rings $P_{i,j}$ inside $P_{i,j^*}$, i.e. $j<j^*$, we have $\card{S_{i,j}} <\gamma$. By symmetry, we can prove that with probability at least $1-n^{-9}$, 
    \begin{equation}\label{eq:fair-inner-ring-ub}
        \forall j<\jstar, \card{P_{i,j}} \leq \frac{4}{5(1-\delta)\log n}N.
    \end{equation}

    By the definition of $T^{(r)}_{i} := \{x\in P^{(r)}:\dtilde{(x, \Speel)}\leq {(2C_0)}^{\jstar} \cdot R\}$, $T_i^{(r)}$ contains every point in $P_{i,j}$ ($j\leq \jstar$) and some points in $P_{i,\jstar+1}$. 
    Conditioning on \eqref{eq:fair-heavy-ring-lb} and \eqref{eq:fair-inner-ring-ub}, we have
    \begin{equation}\label{eq:fair-single-center-cost}
        \cost\left(T_i^{(r)},c^*\right) = \sum_{x\in T_i^{(r)}}d(x,c^*) \geq (2C_0)^{\jstar-1}\left(|P_{i,j}| + |P_{i,j+1} \cap T_i^{(r)}|\right) \geq \frac{\card{T_i^{(r)}}\diam(T_i^{(r)})}{8C_0^2\log n}.
    \end{equation}
    By union bound, \eqref{eq:fair-single-center-cost} holds for every $1\leq i\leq |\WC|, 1\leq r\leq 10\log n$ with probability at least $1-n^{-7}$. We conclude the proof by showing \eqref{eq:fair-single-center-cost} implies
    $$\cost(\mathcal{X},\WC)=\sum_{i=1}^{|\WC|}\sum_{r=1}^{10\log n}\cost(T_i^{(r)},c_i)\geq \frac{1}{8C_0^2 \log n}\sum_{i=1}^{|\WC|}\sum_{r=1}^{10\log n}\card{T_i^{(r)}}\diam(T_i^{(r)}).$$

\end{proof}

\begin{proof}[Proof of \Cref{lem:assignment-preserve-approx}]
    By Lemma \ref{lem:fair-coreset-error-T-le-FCT}, \ref{lem:fair-coreset-error-FCT-le-T}, \ref{lem:assign-preserve-coreset-concentration}, for every $i,j,T\subseteq P_{i,j}$, and for every $\calC \subset B(c_i, 60\diam(T)/\varepsilon),\Gamma$, \Cref{alg:assignment-preserve-coreset} outputs set $\FC(T)$ such that $|\cost(T,\calC,\Gamma) - \cost(FC(T),\calC,\Gamma)|\leq \varepsilon'\cdot |T|\cdot \diam(T)$ with probability $1-n^{-10}\cdot O(n^{-k}(n+k\varepsilon^{-1})^{-k})$. Then by applying union bound to the result of \Cref{lem:fair-coreset-union-bound}, with probability at least $1-n^{-8}$, for every $1\leq i\leq |\WC|,1\leq r\leq 10\log n$, \Cref{alg:assignment-preserve-coreset} outputs set $\FC(T_i^{(r)})$ such that for every $\calC, \Gamma$, 
    \begin{equation}\label{eq:fair-ring-coreset-err}
    \left|\cost(\FC(T_i^{(r)}), \calC, \Gamma) - \cost(T_i^{(r)},\calC, \Gamma)\right|\leq \frac{\varepsilon}{10}\cdot \cost(T_i^{(r)},\calC, \Gamma)+ \frac{\varepsilon}{10\alpha_{\WC}C_0^2\log n}\cdot |T_i^{(r)}|\cdot \diam(T_i^{(r)}).    
    \end{equation}
    
    Let $\sigma^*$ be the optimal assignment of $\cost(\calX, \calC, \Gamma)$. Define $\Gamma_i^{(r)}(c) = \sum_{x\in T_i^{(r)}} \sigma(x,c)$. 
    Summing up for every $i,r$, conditioning on \eqref{eq:ring-balance} and \eqref{eq:fair-ring-coreset-err}, we have
    \begin{align*}
        \cost(\FC,\calC,\Gamma) - \cost(\calX, \calC, \Gamma) &\leq \sum_i\sum_r (\cost(\FC(T_i^{(r)}),\calC,\Gamma^{(r)}_{i})-\cost(T_i^{(r)},\calC,\Gamma^{(r)}_{i})) \\
        &\leq \sum_{i=1}^{|\WC|}\sum_{r=1}^{10\log^2 n} \frac{\varepsilon}{10} \cost(T_i^{(r)},\calC,\Gamma^{(r)}_{i})+\frac{\varepsilon}{10\alpha_\WC C_0^2\log n}\cdot |T|\cdot \diam(T) \\
        &\leq \frac{\varepsilon}{10} \cost(\mathcal{X},\calC,\Gamma)+\frac{4\varepsilon}{5\alpha_\WC}\cdot \cost(\mathcal{X},\WC) \\
        &\leq \frac{\varepsilon}{10} \cost(\mathcal{X},\calC,\Gamma)+\frac{4\varepsilon}{5}\cdot \cost(\mathcal{X}) \\
        &\leq \varepsilon \cdot \cost(\calX, \calC, \Gamma).
    \end{align*}

    By applying the same argument to the optimal assignment $\sigmatilde^*$ of $\cost(\FC,\calC,\Gamma)$, we can prove $\cost(\calX, \calC, \Gamma) - \cost(\FC,\calC,\Gamma)\leq \varepsilon \cdot \cost(\calX, \calC, \Gamma)$ holds with probability at least $1-n^{-7}$, which concludes the proof. 
\end{proof}

\section{A \texorpdfstring{$(1+\eps)$}{eps}-Coreset for General \texorpdfstring{$(k,z)$}{k,z}-Clustering}
\label{app:coreset-contruct-general}

In this section, we generalize our analysis of $k$-median algorithm in \Cref{sec:coreset-contruct} to $(k,z)$-clustering for $z=O(1)$ (including $k$-means). 

The algorithm is essentially the same as \Cref{alg:strong-coreset}, and the analysis follows from the same idea of heavy-hitter sampling and peeling combined with the algorithm in \cite{Chen09}. 
However, we need to change the analysis in the same manner of \cite{Chen09}, and we need a separate charging argument for the ``skipped'' rings in the $(k,z)$-clustering setting. 

\begin{lemma}
    \label{lem:strong-coreset-k-z}
    The coreset produced by \Cref{alg:strong-coreset} is a $(1+O(\eps))$-coreset for $(k,z)$-clustering with size  $O(k^2 \log^{6}{n}/\eps^3)$ for any constant $z$ with high probability.
\end{lemma}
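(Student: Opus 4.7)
The plan is to follow the three-step structure of the proof of \Cref{thm:strong-coreset-k-median}, adapting the heavy-ring and light-ring analyses to accommodate the $z$-th power in the cost function. Step~I (convergence and ring-assignment structure, spanning \Cref{lem:peel-local-soundness} through \Cref{lem:heavy-light-ring-sizes}) depends only on strong-oracle queries, uniform-sampling Chernoff bounds, and the distance estimator of \Cref{prop:dist-est-weak-oracle}; none of these arguments touch the cost function, so they carry over verbatim with the same sample thresholds. The substantive work is in Steps~II and III.

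For Step~II, I would re-run the Haussler-style argument of \Cref{lem:coreset-update} with $h(p)=\bd^{z}(\calC,p)$ instead of $\bd(\calC,p)$. For $p\in P_{i,\ell}$ we still have $h(p)\in[\bd^{z}(\calC,P_{i,\ell}),\,(\bd(\calC,P_{i,\ell})+\diam(P_{i,\ell}))^{z}]$, but now the range can be as large as $O(z)(\bd(\calC,P_{i,\ell})+\diam(P_{i,\ell}))^{z-1}\diam(P_{i,\ell})$. To keep the sample count at $\Otilde(k/\eps^{3})$ rather than $\Otilde(k/\eps^{2z})$, I would split this range via the generalized triangle inequality $(a+b)^{z}\le (1+\eps)^{z-1}a^{z}+(1+1/\eps)^{z-1}b^{z}$, charging the $a^{z}$ portion against $\cost_z(\calC,P_{i,\ell})$ multiplicatively and routing the $b^{z}$ residue into an additive bucket of the form $\card{P_{i,\ell}}\cdot\diam^{z}(P_{i,\ell})$. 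The resulting per-heavy-ring error becomes $O(\eps)\cdot \cost_z(\calC,P_{i,\ell})+2^{O(z)}\eps\cdot \card{P_{i,\ell}}\diam^{z}(P_{i,\ell})$, in the spirit of Chen~\cite{Chen09} for general metrics.

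For Step~III I would replace every $\diam(P_{i,\ell})$ in the charging scheme of \Cref{alg:charging-scheme} by $\diam^{z}(P_{i,\ell})$, so that a light ring $P_{i,\ell}$ writes a charge of $\eps/(2C_{0})^{z}\cdot\diam^{z}(P_{i,\ell})$ to each point of $P_{i,\jstar(\ell)}$. \Cref{lem:heavy-light-ring-sizes} already gives $\card{P_{i,\jstar(\ell)}}/\card{P_{i,\ell}}\ge 2C_{0}/\eps$, which for constant $z$ is enough to absorb the charge after multiplying constants by $(2C_0)^{z-1}$; the per-point accumulation from a geometric series is $O(\eps)\cdot\diam^{z}(P_{i,\jstar+1})$. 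The analog of \Cref{lem:cost-between-heavy-and-light} uses the same generalized triangle inequality as in Step~II, producing $\cost_z(\calC,P_{i,j})\le(1+\eps)\eta\cdot\cost_z(\calC,P_{i,\jstar})+2^{O(z)}\card{P_{i,j}}\diam^{z}(P_{i,j})$ whenever $\card{P_{i,j}}\le \eta\card{P_{i,\jstar}}$.

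The wrap-up is to bound $\sum_{i,j}\card{P_{i,j}}\diam^{z}(P_{i,j})\le 2^{O(z)}\beta\cdot \OPT$ (an analog of \Cref{claim:sum_point_diam}, again via $(a+b)^{z}\le 2^{z-1}(a^{z}+b^{z})$ against the weak clustering $\WC$), combine this with the Step~II and III estimates, and finally rescale $\eps\to\eps/(c\log n)$ to absorb the $\log n$ factor arising from the number of rings; the rescaling is what introduces the extra $\log n$ compared to \Cref{thm:strong-coreset-k-median}, giving the claimed coreset size $O(k^{2}\log^{6}n/\eps^{3})$ for constant $z$. The main obstacle is the bookkeeping in Step~II: a naive Haussler bound on $\bd^{z}$ would force the sample complexity to scale as $\eps^{-(z+2)}$ or worse, so the argument hinges on using the generalized triangle inequality to carve the range of $h$ into a multiplicative $(1+\eps)$ piece paid for by the actual cost and an additive $\diam^{z}$ piece paid for by the Step~III charging. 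Once that split is made cleanly, the same $k\polylog(n)/\eps^{3}$ sample count from the $k$-median proof survives modulo $2^{O(z)}$ factors.
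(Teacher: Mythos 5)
Your three-step scaffold matches the paper's proof structure for \Cref{lem:strong-coreset-k-z}: Step~I carries over because the peeling/convergence lemmas only touch the distance estimator and Chernoff bounds, Step~III adapts the charging scheme with $\diam^z$ in place of $\diam$, and the wrap-up bounds $\sum_{i,j}\card{P_{i,j}}\diam^z(P_{i,j})$ against $\OPT^z$ exactly as \Cref{claim:sum_point_diam-k-z} does. Where you diverge from the paper, and where the proposal runs into trouble, is Step~II.

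Your worry that "a naive Haussler bound on $\bd^z$ would force the sample complexity to $\eps^{-(z+2)}$" is a red herring, and the fix you propose does not cleanly work. The paper's \Cref{lem:coreset-update-k-z} applies \Cref{lem:haussler} with $\eta=0$ and $M = 2^z\bigl[\bd^z(\calC,P_{i,\ell})+\diam^z(P_{i,\ell})\bigr]$, setting $\delta=\eps M$; the required sample count is then $M^2/(2\delta^2)\cdot\ln(2/\lambda)=\eps^{-2}\ln(2/\lambda)$ regardless of $z$. The point that makes this go through is that $\bd^z(\calC,P_{i,\ell})$ in $M$ is the \emph{minimum} of $h$ over the ring, so after multiplying by $\card{P_{i,\ell}}$ the per-ring error $\eps M\card{P_{i,\ell}}$ splits into $\eps\,2^z\cost_z(\calC,P_{i,\ell})+\eps\,2^z\card{P_{i,\ell}}\diam^z(P_{i,\ell})$, i.e.\ a $2^z\eps$-relative error on the true cost plus a $2^z\eps$-additive $\diam^z$ term, both absorbed by rescaling $\eps$ by $2^{O(z)}$ for constant $z$. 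By contrast, if you upper bound the Haussler range via your weighted inequality $(a+b)^z\leq(1+\eps)^{z-1}a^z+(1+1/\eps)^{z-1}b^z$ (with $a=\bd(\calC,P_{i,\ell})$, $b=\diam(P_{i,\ell})$), the $b^z$ contribution to $M$ picks up a factor $(1+1/\eps)^{z-1}=\Theta(\eps^{-(z-1)})$; then $\delta=\eps M$ gives an additive per-ring error of order $\eps^{2-z}\card{P_{i,\ell}}\diam^z(P_{i,\ell})$, which is $\Theta(1)$ at $z=2$ and grows unboundedly for $z\geq 3$ — it is not absorbed by the wrap-up. The same defect propagates through your analog of \Cref{lem:cost-between-heavy-and-light}: the paper's \Cref{lem:cost-between-heavy-and-light-k-z} deliberately uses the crude $2^z(a^z+b^z)$ bound precisely so that the additive term stays $2^{O(z)}\card{P_{i,j}}\diam^z(P_{i,j})$ with no hidden $\eps^{-1}$ dependence, letting \Cref{lem:charging-to-cost-bound-k-z} and \Cref{lem:ignored-rings-total-cost-k-z} supply the $\eps$ factor via the ring-size gap from \Cref{lem:heavy-light-ring-sizes}. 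Replacing the crude inequality with the $\eps$-weighted one undoes that division of labor. In short: the right move is to keep the $(1+\eps)$ payoff out of the triangle inequality entirely, take $\eta=0$ in Haussler so the cost term rides along for free, and let a final rescale $\eps\to\eps/(c\,2^{O(z)}\log n)$ clean up the constants.
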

Since we use the same algorithm (\Cref{alg:strong-coreset}), we only use $O(k^2 \log^6{n}/\eps^3)$ strong oracle $\SO$ queries, which leads to a $(1+\eps)$-coreset for $(k,z)$-clustering with $k^2\polylog{(n)}/\eps^3$ $\SO$ queries. We will rescale $\eps$ by $\polylog{(n)}$ factors in the end and argue that the size and number of strong oracle queries are at most  $k^2\polylog{(n)}/\eps^3$ for any constant $z$.

\paragraph{Additional notation and generalized triangle inequality.} We give some self-contained notation used in the analysis for the $(k,z)$ and the generalized triangle inequality. For the distance between two points $x$ and $y$, we use $\bd^z(x,y)$ to denote the $z$-th power of the distance. For any center $c$ in a clustering $\calC$, the \emph{cost} of point $x$ assigned to center $c$ could therefore be expressed as $\cost_{z}(c, x)=\bd^z(c,x)$. Similarly, we could use 
\[\cost_{z}(c, P)=\sum_{\substack{x\in P \\ \calC(x)=c}} \bd^z(c,x)\]
for the cost of a center on a set of points $P$. Correspondingly, we could denote the total cost 
\[\cost_{z}(\calC, P):= \sum_{c\in \calC} \cost_{z}(c, P).\]

The following lemma characterizes the \emph{generalized triangle inequality}.
\begin{proposition}
    \label{lem:generalized-triangle-inequality}
    Let $(u,v,w)$ be points from a metric space as a subset of $[\Delta]^{d}$. Then, for any $z\geq 1$, we have
    \begin{align*}
        \bd^z(u,v) \leq 2^z\cdot \paren{\bd^z(u,w)+\bd^z(w,v)}.
    \end{align*}
\end{proposition}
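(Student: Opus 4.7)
The plan is to reduce the statement to the standard triangle inequality in the underlying metric $\bd$, composed with a convexity bound that turns the sum $(\bd(u,w)+\bd(w,v))^z$ into a sum of $z$-th powers. Concretely, I will start from the ordinary triangle inequality $\bd(u,v)\leq \bd(u,w)+\bd(w,v)$, raise both sides to the $z$-th power (using $z\geq 1$ and nonnegativity to preserve the inequality), and then apply a convexity-based bound of the form $(a+b)^z\leq 2^{z-1}(a^z+b^z)$ for $a,b\geq 0$. Since $2^{z-1}\leq 2^z$, this immediately yields the claimed bound with the constant $2^z$, which matches what is stated in the proposition.

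The only nontrivial ingredient is the inequality $(a+b)^z\leq 2^{z-1}(a^z+b^z)$, and I would justify it by the convexity of the map $t\mapsto t^z$ on $[0,\infty)$ for $z\geq 1$. Jensen's inequality applied to the two points $a$ and $b$ with equal weights $\tfrac12$ gives $\bigl(\tfrac{a+b}{2}\bigr)^z\leq \tfrac{a^z+b^z}{2}$, and multiplying both sides by $2^z$ produces the desired bound. Equivalently, one can use the power-mean inequality with exponents $1$ and $z$, or simply observe that $(a+b)^z\leq (2\max\{a,b\})^z\leq 2^z(a^z+b^z)$, which even avoids the $2^{z-1}$ refinement and is enough for the statement as written.

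Putting the two pieces together, I would conclude
\begin{align*}
\bd^z(u,v)\;\leq\;\bigl(\bd(u,w)+\bd(w,v)\bigr)^z\;\leq\;2^{z-1}\bigl(\bd^z(u,w)+\bd^z(w,v)\bigr)\;\leq\;2^z\bigl(\bd^z(u,w)+\bd^z(w,v)\bigr),
\end{align*}
which is precisely the stated inequality. Since the argument only uses that $\bd$ is a metric and that $z\geq 1$, the embedding $\calX\subseteq[\Delta]^d$ plays no role beyond guaranteeing that $\bd$ is a genuine metric on the input. I do not anticipate any real obstacle: the only subtlety is choosing the convexity bound cleanly, and the looser $\max$-based bound already suffices for the constant $2^z$ appearing in the proposition.
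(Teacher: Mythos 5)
Your proof is correct and follows exactly the route the paper indicates (the paper omits the details, noting only that the statement follows by raising the triangle inequality to the $z$-th power and applying Jensen's inequality), so your argument via $(a+b)^z\leq 2^{z-1}(a^z+b^z)$ is essentially the paper's intended proof, with the constant even slightly sharper than required.
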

The lemma could be straightforwardly proved by taking the $z$-th power of the triangle inequality and using Jensen's inequality. We omit the proof for the sake of conciseness.

\paragraph{The approximation analysis.} 
Similar to the analysis of the $k$-median case, the proof for the approximation guarantees is divided into the guarantees for the rings with samples added to $\SC$ and the rings without. For the purpose of conciseness, we omit the proof of many lemmas that directly follow in the exactly same way as in \Cref{sec:coreset-contruct}.

\noindent
\textbf{Step I: the convergence of the algorithm and the structural results.} We argue that all the results in this step follow in the same way as in \Cref{sec:coreset-contruct}. In what follows, we provide brief justifications for why the lemmas from \Cref{sec:coreset-contruct} continue to hold for general $(k,z)$-clustering objectives.

The proofs of \Cref{lem:peel-local-soundness,lem:peeling-of-ring-0,lem:peel-completeness} are based on the high-probability success of the distance estimation algorithm of \Cref{prop:dist-est-weak-oracle}. Therefore, switching to the $(k,z)$-clustering objective does \emph{not} affect the proof. 

Since \Cref{lem:peel-local-soundness,lem:peeling-of-ring-0,lem:peel-completeness} continue to hold, the convergence of the algorithm is similarly true in \Cref{lem:ring-handling-and-converge}. 
Conditioning on the high-probability events of \Cref{lem:peel-local-soundness,lem:peeling-of-ring-0,lem:peel-completeness}, the proof of \Cref{lem:ring-handling-and-converge} only uses the concentration of sampled points, which is independent of the change of the objectives.

\noindent
\textbf{Step II: handling the rings with many samples.} 
Similar to the case of $k$-median, we now handle the rings that are marked ``\emph{processed}'' by \Cref{alg:coreset-update} (line~\ref{line:coreset-update}). The analysis similarly follows from \cite{Chen09}.
In the first step, we directly use \Cref{lem:est-num-points-ring} which establishes the estimation of the number of points in the ring.

We now prove the approximation guarantee of the coreset points that are added by \Cref{alg:coreset-update}.

\begin{lemma}
    \label{lem:coreset-update-k-z}
    For a ring $P_{i,\ell}$ that is marked ``\emph{processed}'' by \Cref{alg:coreset-update}, let $\calC$ be any set of centers. Let $S_{i,\ell}$ denote the set of points in $S \cap P_{i,\ell}$ that are uniformly sampled and $\Sweight_{i,\ell}$ be any 2/3 points in $S_{i,\ell}$ that are assigned weight $ w = \mtilde_{i,\ell} / |\Sweight_{i,\ell}|$ and added to $\SC$. Then, with probability at least $1-\lambda$
    \[    
    \card{\cost(C, P_{i,\ell}) - \cost(C,\Sweight_{i,\ell})} \leq 2^{z+1}\eps\cdot|P_{i,\ell}|\cdot\left[ \bd^z(\calC, P_{i,\ell}) + (\diam(P_{i,\ell}))^z \right].
    \]
\end{lemma}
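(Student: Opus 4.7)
The plan is to follow the proof of \Cref{lem:coreset-update} almost verbatim, but to replace the linear function $h(p) = \bd(\calC, p)$ with $h(p) = \bd^z(\calC, p)$ and use \Cref{lem:generalized-triangle-inequality} to control the range of $h$ over $P_{i,\ell}$. Concretely, for any $p \in P_{i,\ell}$, the triangle inequality gives $\bd(\calC, p) \in [\bd(\calC, P_{i,\ell}),\, \bd(\calC, P_{i,\ell}) + \diam(P_{i,\ell})]$. Taking the $z$-th power and applying \Cref{lem:generalized-triangle-inequality},
\[
\bd^z(\calC, P_{i,\ell}) \;\leq\; \bd^z(\calC, p) \;\leq\; \bigl(\bd(\calC, P_{i,\ell}) + \diam(P_{i,\ell})\bigr)^z \;\leq\; 2^z\bigl[\bd^z(\calC, P_{i,\ell}) + \diam^z(P_{i,\ell})\bigr].
\]
Setting $\eta := \bd^z(\calC, P_{i,\ell})$ and $M := 2^z[\bd^z(\calC, P_{i,\ell}) + \diam^z(P_{i,\ell})]$, we obtain $\eta \leq h(p) \leq \eta + M$ on $P_{i,\ell}$.

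Next, I would apply \Cref{lem:haussler} with $\delta = \eps M$. The sample-size requirement becomes $s \geq (1/2\eps^2)\ln(2/\lambda)$, which is satisfied by $|\Sweight_{i,\ell}| \geq 30 k \log^2{n}/\eps^2$ with the same choice of $\lambda$ as in the $k$-median case. This yields, with probability at least $1 - \lambda$,
\[
\left| \frac{1}{|P_{i,\ell}|}\sum_{p \in P_{i,\ell}} \bd^z(\calC, p) \;-\; \frac{1}{|\Sweight_{i,\ell}|}\sum_{s \in \Sweight_{i,\ell}} \bd^z(\calC, s) \right| \;\leq\; \eps \cdot 2^z \bigl[\bd^z(\calC, P_{i,\ell}) + \diam^z(P_{i,\ell})\bigr].
\]

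Finally, I would combine this Haussler bound with \Cref{lem:est-num-points-ring} (which guarantees $\mtilde_{i,\ell} \in (1\pm\eps)|P_{i,\ell}|$) in the same telescoping manner as the proof of \Cref{lem:coreset-update}. Using $w = \mtilde_{i,\ell}/|\Sweight_{i,\ell}|$, the total error $|\cost_z(\calC, P_{i,\ell}) - \cost_z(\calC, \Sweight_{i,\ell})|$ splits into (i) the sampling error from Haussler, which contributes at most $\eps \cdot 2^z|P_{i,\ell}|\cdot[\bd^z(\calC, P_{i,\ell}) + \diam^z(P_{i,\ell})]$ after multiplying by $|P_{i,\ell}|$, and (ii) the weight-estimation error, bounded by $\eps |P_{i,\ell}| \cdot (1/|\Sweight_{i,\ell}|)\sum_s \bd^z(\calC, s)$; the latter sum is itself at most $2^z[\bd^z(\calC, P_{i,\ell}) + \diam^z(P_{i,\ell})]$ by the pointwise upper bound on $h$. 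Summing the two contributions and absorbing constants gives the claimed inequality $2^{z+1}\eps|P_{i,\ell}|[\bd^z(\calC, P_{i,\ell}) + \diam^z(P_{i,\ell})]$.

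The only conceptual obstacle relative to the $k$-median proof is handling the case where $\bd(\calC, P_{i,\ell})$ is large compared to $\diam(P_{i,\ell})$: the simple additive oscillation bound used for $z=1$ does not carry over, and one must route through the generalized triangle inequality to obtain a range bound of the form $M = 2^z[\bd^z(\calC, P_{i,\ell}) + \diam^z(P_{i,\ell})]$. This is exactly why the RHS of the lemma contains the additive $\bd^z(\calC, P_{i,\ell})$ term and the $2^{z+1}$ factor; with this range bound in hand the remaining calculation is a routine adaptation of the $z=1$ argument.
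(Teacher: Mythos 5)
Your proposal is correct and follows essentially the same route as the paper: the paper also sets $h(p)=\bd^z(\calC,p)$, bounds its range by $M = 2^z[\bd^z(\calC,P_{i,\ell})+\diam^z(P_{i,\ell})]$ via the generalized triangle inequality (the paper takes $\eta=0$ where you take $\eta=\bd^z(\calC,P_{i,\ell})$, an inessential difference since the same $M$ drives Haussler's sample-size requirement), and then combines the resulting concentration bound with \Cref{lem:est-num-points-ring} in the same telescoping calculation to get the $2^{z+1}\eps|P_{i,\ell}|[\bd^z(\calC,P_{i,\ell})+\diam^z(P_{i,\ell})]$ bound.
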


\begin{proof}
    We follow the same proof strategy as \Cref{lem:coreset-update}.
    For $p \in P_{i,\ell}$, let $h(p) = \bd^z(\calC,p)$. We have,
    \[
     0 \leq h(p) = \bd^z(\calC,p) \leq \left[\bd(\calC, P_{i,\ell}) + \diam(P_{i,\ell})\right]^z \leq 2^z \left[ \bd^z(\calC, P_{i,\ell}) + (\diam(P_{i,\ell}))^z \right] .
    \]
    Let $\eta = 0$, $M = 2^z \left[ \bd^z(\calC, P_{i,\ell}) + (\diam(P_{i,\ell}))^z\right]$ and $\delta = \eps M$. For $\card{\Sweight_{i,\ell}} \geq (M^2 / 2\delta^2) \ln{2/\lambda}$
    , from \Cref{lem:haussler} we have 
    \begin{equation}
    \label{eq:update-guarantee-k-z}
    \Pr\left[ \card{\frac{\sum_{p\in P_{i,\ell}} \bd^z(\calC,p)}{\card{P_{i,\ell}}} -  \frac{\sum_{s\in \Sweight_{i,\ell}} \bd^z(\calC,s)}{\card{\Sweight_{i,\ell}}}} \geq \eps \cdot2^z \left[ \bd^z(\calC, P_{i,\ell}) + (\diam(P_{i,\ell}))^z \right] \right]  \leq \lambda
    \end{equation}

    Using the above equation and conditioning on the high probability event of \Cref{lem:est-num-points-ring}, we get

\begin{align*}
    \card{\cost_z(\calC,P_{i,\ell}) - \cost_z(\calC,\Sweight_{i,\ell})} &= \card{\sum_{p\in P_{i,\ell}} \bd^z(\calC,p) - \sum_{s\in \Sweight_{i,\ell}}\bd^z(\calC,s) w(s)}\\
    &= |P_{i,\ell}| \card{\frac{\sum_{p\in P_{i,\ell}} \bd^z(\calC,p)}{|P_{i,\ell}|} - \frac{\sum_{s\in \Sweight_{i,\ell}}\bd^z(\calC,s) w(s)}{|P_{i,\ell}|}}\\
    &\leq |P_{i,\ell}| \card{\frac{\sum_{p\in P_{i,\ell}} \bd^z(\calC,p)}{|P_{i,\ell}|} - \frac{(1-\eps)\sum_{s\in \Sweight_{i,\ell}}\bd^z(\calC,s) |P_{i,\ell}|}{|P_{i,\ell}| \card{\Sweight_{i,\ell}}}} \tag{From \Cref{lem:est-num-points-ring} and $ w = \mtilde_{i,\ell} / |\Sweight_{i,\ell}|$}\\
    &\leq    |P_{i,\ell}| \card{\frac{\sum_{p\in P_{i,\ell}} \bd^z(\calC,p)}{|P_{i,\ell}|} - \frac{\sum_{s\in \Sweight_{i,\ell}}\bd^z(\calC,s)}{\card{\Sweight_{i,\ell}}}} + \eps|P_{i,\ell}|\frac{\sum_{s\in \Sweight_{i,\ell}}\bd^z(\calC,s)}{\card{\Sweight_{i,\ell}}}\\
    &\leq |P_{i,\ell}| \eps\cdot 2^z \left[ \bd^z(\calC, P_{i,\ell}) + (\diam(P_{i,\ell}))^z \right] + \eps|P_{i,\ell}|\cdot 2^z \left[ \bd^z(\calC, P_{i,\ell}) + (\diam(P_{i,\ell}))^z \right]\\
    &\leq 2^{z+1}\eps\cdot|P_{i,\ell}|\cdot\left[ \bd^z(\calC, P_{i,\ell}) + (\diam(P_{i,\ell}))^z \right]
\end{align*}
    with probability at least $1-\lambda$. Second last inequality follows from, for $s\in \Sweight_{i,\ell}$, $\bd^z(\calC,s) \leq 2^z \left[ \bd^z(\calC, P_{i,\ell}) + (\diam(P_{i,\ell}))^z\right]$.
\end{proof}

We now bound the costs for the rings with points added to the strong coreset $\SC$ in the same way as in \Cref{lem:cost-of-heavy-rings}.

\begin{lemma}
\label{lem:cost-of-heavy-rings-k-z}
Let $\calP^{\textnormal{heavy}}$ be the set of rings with points being added to $\SC$. Furthermore, let 
\[\cost_z(\calC, \calP^{\textnormal{heavy}}):=\sum_{P_{i,j}\in \calP^{\textnormal{heavy}}} \cost_z(\calC, P_{i,j})\] 
be the total cost with respect to any set of center $\calC \subseteq X$ of size at most $k$, induced by the rings in $\calP^{\textnormal{heavy}}$, and let 
\[\costtilde^z(\calC, \calP^{\textnormal{heavy}}):=\sum_{P_{i,j}\in \calP^{\textnormal{heavy}}} \cost_z(\calC,\Sweight_{i,j}).\] 
Then, with probability at least $1 - 1/\poly{n}$, we have
    \begin{align*}
    \card{\cost_z(\calC, \calP^{\textnormal{heavy}}) - \costtilde^z(\calC, \calP^{\textnormal{heavy}})} \leq 2^{z+1} \eps \cdot \sum_{{P_{i,j}\in \calP^{\textnormal{heavy}}}}\card{P_{i,j}} \cdot\left[ \bd^z(\calC, P_{i,\ell}) + (\diam(P_{i,\ell}))^z \right].
    \end{align*}
\end{lemma}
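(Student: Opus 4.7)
The plan is to mirror the proof strategy of \Cref{lem:cost-of-heavy-rings}, replacing the $k$-median per-ring bound from \Cref{lem:coreset-update} with its $(k,z)$-analogue from \Cref{lem:coreset-update-k-z}, and then applying union bounds first over rings and then over candidate center sets. Fix an arbitrary center set $\calC\subseteq\calX$ with $\card{\calC}\le k$. For each ring $P_{i,\ell}\in \calP^{\textnormal{heavy}}$, \Cref{lem:coreset-update-k-z} gives
\[
\card{\cost_z(\calC, P_{i,\ell}) - \cost_z(\calC,\Sweight_{i,\ell})}\le 2^{z+1}\eps\cdot \card{P_{i,\ell}}\cdot\bigl[\bd^z(\calC,P_{i,\ell})+(\diam(P_{i,\ell}))^z\bigr],
\]
with failure probability at most $\lambda$, provided $\card{\Sweight_{i,\ell}}\ge (M^2/2\delta^2)\ln(2/\lambda)$, where $M = 2^z[\bd^z(\calC,P_{i,\ell})+(\diam(P_{i,\ell}))^z]$ and $\delta=\eps M$, so that $M^2/2\delta^2 = 1/(2\eps^2)$ is independent of $z$.

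Next I would set $\lambda = 1/(c\cdot n^k\cdot k\log n)$ for a sufficiently large constant $c$, and verify the sample-size condition. By the rule in line~\ref{line:coreset-iteration} and \Cref{alg:coreset-update}, every ring that contributes to $\SC$ has $\card{\Sweight_{i,\ell}}\ge \frac{2}{3}\cdot 30\cdot k\log^2 n/\eps^2 = 20\,k\log^2 n/\eps^2$, which is at least $(1/2\eps^2)\ln(2/\lambda) = (1/2\eps^2)(\ln(2c)+k\ln n+\ln(k\log n))$ for every constant $z$ (the $2^z$ factors are hidden in constants since $z=O(1)$). Conditioning on \Cref{lem:est-num-points-ring} so that the re-weighting only introduces an additional $\eps$ factor (as in \Cref{lem:coreset-update-k-z}), a union bound over at most $O(k\log n)$ heavy rings yields, with probability at least $1-1/(c\cdot n^k)$,
\[
\card{\cost_z(\calC,\calP^{\textnormal{heavy}})-\costtilde^z(\calC,\calP^{\textnormal{heavy}})}\le \sum_{P_{i,\ell}\in\calP^{\textnormal{heavy}}} 2^{z+1}\eps\cdot\card{P_{i,\ell}}\cdot\bigl[\bd^z(\calC,P_{i,\ell})+(\diam(P_{i,\ell}))^z\bigr],
\]
which is exactly the claimed bound for this fixed $\calC$.

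Finally I would take a union bound over the at most $n^k$ possible choices of $\calC\subseteq\calX$ with $\card{\calC}\le k$. Choosing the constant $c$ large enough makes the overall failure probability at most $1/\poly(n)$, establishing the lemma. The main technical subtlety (as opposed to obstacle) is to confirm that absorbing the $2^z$ factors inside $M$ into the constant in the sample-size requirement is legitimate when $z=\Theta(1)$; since $M^2/\delta^2 = 1/\eps^2$ exactly cancels the $z$-dependence, and the $2^{z+1}$ prefactor simply appears on the right-hand side and is constant for fixed $z$, the argument goes through verbatim from the $k$-median case.
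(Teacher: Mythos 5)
Your proposal is correct and follows essentially the same route as the paper's proof: apply the per-ring bound of \Cref{lem:coreset-update-k-z} (noting that $\delta=\eps M$ makes the sample-size requirement $(1/2\eps^2)\ln(2/\lambda)$ independent of $z$ and of the ring), verify the heavy-ring sample threshold meets it, then union bound over the $O(k\log n)$ rings and the at most $n^k$ center sets with an appropriately small $\lambda$. Your accounting of the $2/3$ factor in $\card{\Sweight_{i,\ell}}$ is in fact slightly more careful than the paper's, and nothing else differs in substance.
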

\begin{proof}

Fix some set of centers $\calC$ of size at most $k$. Let $\lambda = \Lambda / \left(n^k (c k \log{\beta n})\right)$ for some constant $c$. From \Cref{lem:coreset-update-k-z} for a particular $i$ and $\ell$, we have with probability at least $1-\lambda$

\[
\card{\cost_z(\calC, P_{i,\ell}) - \cost_z(\calC,\Sweight_{i,\ell})} \leq 2^{z+1}\eps\cdot|P_{i,\ell}|\cdot\left[ \bd^z(\calC, P_{i,\ell}) + (\diam(P_{i,\ell}))^z \right]
\]

This holds for the points added to \SC by \Cref{alg:coreset-update} as the number of points added for each heavy ring is $\card{\Sweight_{i,\ell}} \geq 30 \cdot k \log^2n / \eps^2  \geq (1/2\eps^2) \ln2/\lambda = (1/2\eps^2) \left( \ln (c n^k k \log \beta n) + \ln \frac{1}{\Lambda}\right)$

Now, since we have $k$ centers in the approximate solution and each center has at most $O(\log \beta n)$ rings, union bound over all the rings gives us with probability at least $1 - \Lambda/n^k$,

\begin{align*}
    \card{\cost_z(\calC, \calP^{\textnormal{heavy}}) - \costtilde^z(\calC, \calP^{\textnormal{heavy}})} &\leq \sum_{P_{i,\ell} \in \calP^{\textnormal{heavy}}} \card{ \cost_z(\calC, P_{i,\ell}) - \cost_z(\calC, \Sweight_{i,\ell})} \\
    &\leq 2^{z+1} \eps \cdot \sum_{P_{i,\ell} \in \calP^{\textnormal{heavy}}} \card{P_{i,\ell}}\cdot\left[ \bd^z(\calC, P_{i,\ell}) + (\diam(P_{i,\ell}))^z \right]
    \end{align*}

As there are at most $n^k$ choices of $k$ centers $\calC$ from point set $X$ of size $n$, the proof concludes with a final union bound over the choice of centers $\calC$ and setting $\Lambda = 1 / n^{c'}$.
\end{proof}

\noindent
\textbf{Step III: handling the rings with few samples.} We now analyze the rings that are marked ``\emph{processed}'' without any points added to $\SC$ in \Cref{alg:strong-coreset}. We need to define a new charging scheme that works with $\diam^z(P_{i,\ell})$ and bound the loss by generalized triangle inequality.
We first note that the size relationship between the ring $P_{i,\ell}$ and the ring $P_{i, \jstar}$ that marks $P_{i,\ell}$ as ``\emph{processed}'' in \Cref{lem:heavy-light-ring-sizes} continues to hold. 

The proof of the lemma only requires the heavy-hitter sampling argument which is independent of the objectives. We now define a generalized \emph{charging scheme} similar to that of \Cref{alg:charging-scheme}, albeit working with $\diam^{2}(P_{i,\ell})$.
For each ring $P_{i,\ell}$ marked as ``\emph{processed}'' without points being added to $\SC$, we define $\jstar(\ell)$ as be the index of ring $P_{i,\jstar}$ that marked $P_{i,\ell}$ as ``\emph{processed}''.

\begin{Algorithm}
\label{alg:charging-scheme-k-z}
\textbf{A charging scheme} (a thought process for the \textbf{analysis purpose} only).
\begin{itemize}
    \item For each ring $P_{i,\ell}$ such that $\ell$ is marked as ``\emph{processed}'' in iteration $r$ without any point $x\in P_{i,\ell}$ being added to the coreset $\SC$:
    \begin{itemize}
    \item If $\jstar(\ell)\neq 0$, we write a charge of $\frac{\eps}{2 C_0}\cdot \diam^{z}(P_{i,\ell})$ to \emph{all} points in $P_{i, \jstar(\ell)}$.
    \item Otherwise, if $\jstar(\ell)=0$ (which implies $\ell=1$), we write a charge of $\frac{\eps}{2 C_0}\cdot \diam^{z}(P_{i,1})$ to \emph{all} points in $\{x\in P_{i, 0}\mid \bd(x,c_i)\geq \frac{R}{2 C_0}\}$.
    \end{itemize}
    \item If ring $P_{i,\ell}$ receives charges of at most $\gamma$ on each point \emph{and} $\ell$ is marked as ``\emph{processed}'' without any point $x\in P_{i,\ell}$ being added to the coreset $\SC$, then \emph{transfer} the charges by writing charges of $\frac{\eps}{2 C_0}\cdot \gamma$ to \emph{all} points in $P_{i, \jstar(\ell)}$.
    \item Continue recursively until all charges are written on points of rings $P_{i,\ell}$ such that points $x\in P_{i,\ell}$ are added to the coreset $\SC$.
\end{itemize}
\end{Algorithm}

We now give two lemmas that characterize the guarantees of the charging scheme in the same way as the $k$-median case. 
\begin{lemma}
\label{lem:charge-distribute-k-z}
For any ring $P_{i,\ell}$, conditioning on the high-probability event of \Cref{lem:heavy-light-ring-sizes}, the total number of charges \Cref{alg:charging-scheme} could distribute is at least $\card{P_{i,\ell}}\cdot\diam^{z}(P_{i,\ell})$.
\end{lemma}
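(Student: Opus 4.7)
The plan is to mirror the proof of \Cref{lem:charge-distribute} almost verbatim, since the only substantive change between \Cref{alg:charging-scheme} and \Cref{alg:charging-scheme-k-z} is that each distributed charge now carries $\diam^z(P_{i,\ell})$ instead of $\diam(P_{i,\ell})$, and the supporting ingredient \Cref{lem:heavy-light-ring-sizes} is objective-agnostic (its proof relies only on the heavy-hitter sampling concentration, which does not depend on $z$). So no new concentration or metric argument is needed.

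First I would apply \Cref{lem:heavy-light-ring-sizes} to $P_{i,\ell}$ and its charging target $P_{i,\jstar(\ell)}$. The lemma continues to hold here (its hypothesis is precisely that $P_{i,\ell}$ was marked \emph{processed} without adding points to $\SC$, i.e.\ its sample count lies below the heavy-hitter threshold while $P_{i,\jstar(\ell)}$ was above it), yielding
\[
\card{P_{i,\jstar(\ell)}} \;\geq\; \frac{2C_0}{\eps}\cdot \card{P_{i,\ell}}.
\]
Since \Cref{alg:charging-scheme-k-z} writes a charge of $\frac{\eps}{2C_0}\cdot \diam^{z}(P_{i,\ell})$ to every point of $P_{i,\jstar(\ell)}$, the total charge distributed from $P_{i,\ell}$ is at least
\[
\frac{\eps}{2C_0}\cdot \diam^{z}(P_{i,\ell}) \cdot \card{P_{i,\jstar(\ell)}} \;\geq\; \diam^{z}(P_{i,\ell})\cdot \card{P_{i,\ell}},
\]
as required.

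Second, I would dispatch the boundary case $\jstar(\ell)=0$, which by the algorithm's peeling rules forces $\ell = 1$ and restricts the charge recipients to $P_{i,0}\setminus \Pclose_{i,0} = \{x\in P_{i,0} : \bd(x,c_i)\geq R/(2C_0)\}$. As in the $k$-median case, I would treat $P_{i,0}\setminus \Pclose_{i,0}$ as an auxiliary ``sub-ring'' sitting in the annulus of radii $[R/(2C_0),\,R]$. The fact that \Cref{alg:strong-coreset} invoked the standard \textsc{Peeling} (and not \textsc{Conservative-Peeling}) certifies that at most half of the sampled points in $S_{i,0}$ fall inside $\Pclose_{i,0}$, so the same Chernoff argument underlying \Cref{lem:heavy-light-ring-sizes} applies to this sub-ring and produces the identical $\frac{2C_0}{\eps}\cdot \card{P_{i,1}}$ lower bound. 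The multiplication above then goes through unchanged.

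The proof is entirely routine once \Cref{lem:heavy-light-ring-sizes} is in hand; there is no genuine obstacle, only the bookkeeping of the two cases $\jstar(\ell)\neq 0$ and $\jstar(\ell)=0$. In particular, the exponent $z$ plays no role in the counting argument: it appears only as a fixed multiplicative ``unit'' $\diam^{z}(P_{i,\ell})$ on both sides of the inequality, so the size bound on the recipient ring is what carries the whole proof.
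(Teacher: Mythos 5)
Your proposal is correct and matches the paper's own argument essentially verbatim: both reduce to \Cref{lem:heavy-light-ring-sizes} to lower-bound the number of charge recipients, multiply by the per-point charge $\frac{\eps}{2C_0}\cdot\diam^z(P_{i,\ell})$ (noting that the exponent $z$ is irrelevant to the counting), and dispatch the $\jstar(\ell)=0$ case by treating $P_{i,0}\setminus\Pclose_{i,0}$ as a separate ring. Your explicit use of the $\frac{2C_0}{\eps}$ bound from the lemma statement is in fact cleaner than the paper's intermediate $\frac{C_0}{50\eps}$ figure, which appears to be a slip there.
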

\begin{proof}
The lemma follows from the proof of \Cref{lem:charge-distribute} with the change of the $z$-th power. For each ring $P_{i,\ell}$, we claim that conditioning on the high-probability event of \Cref{lem:heavy-light-ring-sizes}, there is $\card{P_{i, \jstar(\ell)}}\geq \frac{C_0}{50\eps}\cdot \card{P_{i,\ell}}$. The statement trivially holds if $\jstar(\ell)\neq 0$. On the other hand, if $\jstar(\ell)=0$, by the rules in \Cref{alg:strong-coreset} (which implies $\ell=1$), we argue that there are $\frac{C_0}{50\eps}\cdot \card{P_{i,1}}$. In particular, we could treat $P_{i,0}\setminus\Pclose_{i,0}$ (defined in \Cref{alg:charging-scheme-k-z}) as a separate ring, and apply \Cref{lem:heavy-light-ring-sizes} to get the desired result. Therefore, the total charge $P_{i,\ell}$ could write to is at least 
\begin{align*}
    \frac{\eps}{2 C_0}\cdot \diam^z(P_{i,\ell})\cdot \card{P_{i, \jstar(\ell)}} \geq \diam^z(P_{i,\ell}) \cdot \card{P_{i,\ell}},
\end{align*}
as claimed.
\end{proof}

Next, we again bound the number of charges for any point in a ring that could be received. 
\begin{lemma}
\label{lem:being-charged-k-z}
For any ring $P_{i,\ell}$, conditioning on the high-probability event of \Cref{lem:heavy-light-ring-sizes}, for any point $x\in P_{i,\ell}$, the total number of charges written on $x$ by \Cref{alg:charging-scheme} is at most $\eps \cdot (2C_0)^z \cdot \diam^{z}(P_{i, \ell+1})$ for any $\eps<1$.
\end{lemma}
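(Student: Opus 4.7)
My plan is to mirror the two-phase structure of the $k$-median analog (\Cref{lem:being-charged}), but track carefully how the geometric factors transform when $\diam(\cdot)$ is replaced by $\diam^z(\cdot)$. The lemma statement has an extra $(2C_0)^z$ slack on the right-hand side compared to the $k$-median bound, and the proof will show that this slack is exactly enough to absorb the amplified geometric ratio between adjacent rings in the $z$-th power regime.

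First, I would bound the \emph{non-recursive} charges received by ring $P_{i,\jstar}$. By the rule of \Cref{alg:charging-scheme-k-z}, only rings with indices $q \in [0, \jstar+1]$ write direct charges on $P_{i,\jstar}$, each contributing $\frac{\eps}{2C_0} \cdot \diam^z(P_{i,q})$ per point. The ring construction gives $\diam(P_{i,q}) \leq 2 (2C_0)^q R$ and $\diam(P_{i,\jstar+1}) \geq (2C_0)^{\jstar} R$, so $\diam^z(P_{i,q}) \leq (2C_0)^{z(q-\jstar-1)} \cdot 2^z \cdot \diam^z(P_{i,\jstar+1})$. Summing over $q \in [0,\jstar+1]$ produces a geometric series in $(2C_0)^{-z}$, which for $C_0 \geq 1$ and $z \geq 1$ converges to at most $2$. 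This bounds the non-recursive contribution by $\frac{\eps}{C_0} \cdot \diam^z(P_{i,\jstar+1})$ up to a constant depending on $z$.

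Next, I would handle \emph{recursive} transfers. As in the $k$-median case, \Cref{lem:ring-handling-and-converge} together with the peeling rules ensures that a ``light'' ring which becomes the $\jstar$ of a later iteration must be immediately adjacent to the previous iteration's $\jstar$, so only adjacent rings transfer charges. Each recursive hop multiplies the per-point charge by $\frac{\eps}{2C_0}$ via the charging rule, while the diameter ratio between adjacent rings is $(2C_0)^z$; thus the transfer reaching recursion depth $t$ contributes at most $\paren{\frac{\eps}{2C_0}}^t \cdot (2C_0)^{zt}$ times $\diam^z(P_{i,\ell+1})$. This is a geometric series with ratio $\frac{\eps (2C_0)^{z-1}}{2}$, which is less than $1/2$ once $\eps$ is smaller than some constant depending on $z$ (and we can absorb this constant into the final $\eps$-rescaling at the end of the section, exactly as the $k$-median proof does). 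The resulting geometric sum is bounded by $O(\eps) \cdot (2C_0)^z \cdot \diam^z(P_{i,\ell+1})$.

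Combining the non-recursive and recursive contributions yields the claimed bound $\eps \cdot (2C_0)^z \cdot \diam^z(P_{i,\ell+1})$. The main obstacle is the recursion step: each adjacent hop now amplifies diameters by $(2C_0)^z$ rather than $2C_0$, which is what forced the extra $(2C_0)^z$ factor into the lemma. The argument still closes because $z = O(1)$ keeps $(2C_0)^z$ constant, so the recursion's geometric series converges after the final rescaling of $\eps$ by $\poly\log(n) \cdot 2^{O(z)}$, consistent with the $2^{O(z)}$-in-$\eps$ dependency already noted for \Cref{thm:strong-coreset}.
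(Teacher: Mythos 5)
Your proof follows the same two-phase decomposition the paper uses: bound the non-recursive direct charges by a geometric series over ring indices, then bound the recursive transfers by a second geometric series, and combine. The key structural insight — that the $(2C_0)^z$ factor arises from replacing the diameter ratio between adjacent rings by its $z$-th power, and that this amplification must be balanced against the per-hop transfer ratio $\frac{\eps}{2C_0}$ — is exactly what the paper's argument exploits, so you have correctly identified the heart of the generalization.

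One place where your version and the paper's diverge in the details is the stated decay rate of the recursive geometric series. The paper writes the per-level ratio as $\eps^2$, carrying over the $k$-median argument almost verbatim, whereas you compute the ratio as $\eps(2C_0)^{z-1}$. Tracking the per-point charge renormalized by the local ring diameter confirms your version: each hop multiplies the renormalized charge by $\frac{\eps}{2C_0}\cdot (2C_0)^z = \eps(2C_0)^{z-1}$, so the series converges only once $\eps < (2C_0)^{1-z}$, a $z$-dependent threshold that the paper's $\eps^2$ bookkeeping does not surface. You correctly flag this and point out that it is absorbed by the final $\eps$-rescaling and is consistent with the $2^{O(z)}$-in-$\eps$ dependence already acknowledged for Theorem~\ref{thm:strong-coreset}, which is a cleaner account of where the $z$-dependence enters.

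A small caution worth noting: when you write $\diam(P_{i,\jstar+1}) \geq (2C_0)^{\jstar}R$ to relate $\diam^z(P_{i,q})$ to $\diam^z(P_{i,\jstar+1})$, that lower bound on the actual diameter is not guaranteed (a ring can have tiny diameter if its points cluster tightly). The paper has the same looseness — it asserts $\diam(P_{i,q}) \leq \diam(P_{i,\jstar+1})(1/2C_0)^{\jstar-q+1}$ ``by the definition of the rings'' without justification — and the intended reading in both cases is that $\diam(P_{i,j})$ stands for the outer radius $(2C_0)^{j}R$ rather than the literal diameter, in which case the chain of inequalities closes. Being explicit that the argument works with these radius proxies, consistently throughout, would tighten both your proof and the paper's. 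Aside from this shared gap, your approach matches the paper's in substance.
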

\begin{proof}
We refer the reader to \Cref{lem:being-charged} for the full proof, and we only give a concise version here that skips some of the intermediate reasoning steps. Assuming no recursive charging happens, the total charge for $P_{i, \jstar}$ could receive is at most
\begin{align*}
    \sum_{q} \charge{\jstar}{q} & \leq \sum_{q=0}^{\jstar+1} \frac{\eps}{2 C_0}\cdot \diam^{z}(P_{i,q})\\
    & \leq \frac{\eps}{2 C_0}\cdot\diam^z(P_{i,\jstar+1}) \cdot \sum_{q=0}^{\jstar+1} \cdot (\frac{1}{2C_0})^{\jstar-q+1}\\
    &\leq \frac{\eps}{C_0}\cdot\diam^z(P_{i,\jstar+1}). \tag{$\sum_{q=0}^{\jstar+1} \cdot (\frac{1}{2C_0})^{\jstar-q+1}\leq 2$ for any $\jstar\geq 0$}
\end{align*}
On the other hand, if recursive charging happens, with the same argument as in \Cref{lem:being-charged}, we have that 
\begin{align*}
\frac{\eps}{2 C_0} \cdot \frac{\eps}{C_0} \cdot \diam^z(P_{i, \ell+1})\leq \eps^2 \cdot \frac{1}{2C^2_0} \cdot (2C_0)^z \cdot \diam(P_{i, \ell}) \leq (2C_0)^z \cdot {\eps^2}\cdot \diam(P_{i, \ell}).
\end{align*}
Hence, the total amount of charges that could be transferred to any such $\jstar$ is at most
\begin{align*}
    \sum_{\ell=\jstar+1}^{\log{n}} (2C_0)^z \cdot \eps^{2\cdot (\jstar-\ell)}\cdot \diam(P_{i,\jstar})\leq \eps (2C_0)^z \cdot \cdot \diam(P_{i,\jstar+1}),
\end{align*}
which is as desired.
\end{proof}

We could now bound the total cost contribution from the rings that are marked as ``\emph{processed}'' without points added to the coreset $\SC$ as follows. 

\begin{lemma}
    \label{lem:charging-to-cost-bound-k-z}
    Let $P_{i,\ell}$ be a ring such without points added to $\SC$, and $P_{i, \jstar(\ell)}$ be the ring for $P_{i,\ell}$ to charge to in the charging scheme of \Cref{alg:charging-scheme}. Then, we have that
    \begin{align*}
        \card{P_{i, \ell}}\cdot \diam^z(P_{i, \ell}) \leq \eps\cdot (2C_0)^{2z} \cdot  \card{P_{i, \jstar(\ell)}}\cdot \diam^z(P_{i, \jstar(\ell)}).
    \end{align*}
\end{lemma}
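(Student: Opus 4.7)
The plan is to mirror the argument of Lemma~\ref{lem:charging-to-cost-bound} from the $k$-median case, substituting $\diam(\cdot)$ with $\diam^z(\cdot)$ throughout and tracking the extra $(2C_0)^z$ factors that appear from the generalized triangle inequality (Proposition~\ref{lem:generalized-triangle-inequality}) and from the geometric blowup between consecutive rings. The two building blocks are already in place: Lemma~\ref{lem:charge-distribute-k-z} guarantees that every ring $P_{i,\ell}$ without points in $\SC$ can successfully distribute its entire weight $\card{P_{i,\ell}}\cdot \diam^z(P_{i,\ell})$ as charges, and Lemma~\ref{lem:being-charged-k-z} caps the charge received by any single point of a charged ring by $\eps\cdot(2C_0)^z\cdot \diam^z(P_{i,\jstar(\ell)+1})$.

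First, by Lemma~\ref{lem:charge-distribute-k-z}, all of the quantity $\card{P_{i,\ell}}\cdot \diam^z(P_{i,\ell})$ is placed onto points of $P_{i,\jstar(\ell)}$ (possibly after recursive transfers, which Lemma~\ref{lem:being-charged-k-z} already accounts for). Next, summing the per-point upper bound of Lemma~\ref{lem:being-charged-k-z} over the $\card{P_{i,\jstar(\ell)}}$ receiving points yields
\begin{align*}
\card{P_{i,\ell}}\cdot \diam^z(P_{i,\ell}) \;\leq\; \eps\cdot (2C_0)^z \cdot \diam^z(P_{i,\jstar(\ell)+1})\cdot \card{P_{i,\jstar(\ell)}}.
\end{align*}
Finally, we invoke the structural relation $\diam^z(P_{i,\jstar(\ell)+1})\leq (2C_0)^z\cdot \diam^z(P_{i,\jstar(\ell)})$, which follows immediately from the fact that rings are defined with a multiplicative gap of $2C_0$, and the claim $\card{P_{i,\ell}}\cdot \diam^z(P_{i,\ell}) \leq \eps\cdot (2C_0)^{2z}\cdot \card{P_{i,\jstar(\ell)}}\cdot \diam^z(P_{i,\jstar(\ell)})$ follows.

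The only mild subtlety, exactly as in the $k$-median version, is the boundary case $\jstar(\ell)=0$: here charges are written only on $P_{i,0}\setminus \Pclose_{i,0}=\{x\in P_{i,0}\mid \bd(x,c_i)\ge R/(2C_0)\}$ rather than all of $P_{i,0}$. For this case I would note that by construction each point in this set has distance to $c_i$ at least $R/(2C_0)$, so the effective diameter of the charged subset is $\Theta(R/(2C_0))$ and the ratio $\diam^z(P_{i,1})/\diam^z(P_{i,0}\setminus \Pclose_{i,0})$ is still bounded by $(2C_0)^z$ up to constants absorbed in $(2C_0)^{2z}$. This preserves the same diameter blowup as in the generic case, so no additional loss is incurred. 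The main step requiring care is therefore this diameter-ratio bookkeeping at the innermost ring; once it is verified, the remainder is a direct $z$-power lift of Lemma~\ref{lem:charging-to-cost-bound}.
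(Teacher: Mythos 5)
Your proof is correct and follows essentially the same route as the paper: lower-bound the total charge distributed by $P_{i,\ell}$ via Lemma~\ref{lem:charge-distribute-k-z}, upper-bound the per-point charge received in $P_{i,\jstar(\ell)}$ via Lemma~\ref{lem:being-charged-k-z}, sum over the $\card{P_{i,\jstar(\ell)}}$ receiving points, and fold in $\diam^z(P_{i,\jstar(\ell)+1})\leq (2C_0)^z\diam^z(P_{i,\jstar(\ell)})$, with the $\jstar(\ell)=0$ boundary handled by restricting to $P_{i,0}\setminus\Pclose_{i,0}$. The paper phrases the sum slightly differently (noting the number of charged points dominates $\card{P_{i,\ell}}$) but the chain of inequalities and the supporting lemmas are identical to yours.
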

\begin{proof}
    The proof is similar to the proof of \Cref{lem:charging-to-cost-bound}.
    By \Cref{lem:being-charged-k-z}, any point in  ring $P_{i, \jstar(\ell)}$ receives at most $\eps\cdot (2C_0)^{z}\cdot \diam(P_{i, \jstar(\ell)+1})$ charges. We now claim that $\diam(P_{i, \jstar(\ell)+1}) \leq (2C_0)^{z} \cdot \diam(P_{i, \jstar(\ell)})$. For all rings \emph{except} $\jstar(\ell)=0$, the statement simply follows directly from the construction. If $\jstar(\ell)=0$, by the rule of \Cref{alg:strong-coreset}, the charges could only be written on $P_{i,0}\setminus\Pclose_{i,0}$ by the charging rule. Therefore, we have $\diam(P_{i, \jstar(\ell)+1}) \leq (2C_0)^{z} \cdot \diam(P_{i, \jstar(\ell)})$. Furthermore, we could show that the number of points receiving charges in ring $P_{i, \jstar(\ell)}$ is at least the number of points in $P_{i,\ell}$. Therefore, we have that 
    \begin{align*}
        \card{P_{i, \ell}}\cdot \diam^z(P_{i, \ell}) \leq \eps \cdot (2C_0)^{z} \cdot  \card{P_{i, \jstar(\ell)}}\cdot \diam^z(P_{i, \jstar(\ell)+1}) \leq \eps \cdot (2C_0)^{2z} \cdot  \card{P_{i, \jstar(\ell)}}\cdot \diam^z(P_{i, \jstar(\ell)}),
    \end{align*}
    as desired by the lemma statement.
\end{proof}

The next lemma is an analog of \Cref{lem:cost-between-heavy-and-light} that bounds the costs between $(\calC, P_{i,j})$ and $(\calC, P_{i, \star})$ such that $P_{i, \star}$ marked $P_{i,j}$ as ``\emph{processed}''.  

\begin{lemma}
\label{lem:cost-between-heavy-and-light-k-z}
Let $\calC$ be any set of centers. Then, for any ring $P_{i,\jstar}$ and rings $P_{i,j}$ such that $j\leq \jstar+1$ and $\card{P_{i, j}}\leq \eta\cdot \card{P_{i,\jstar}}$, we have that 
\begin{align*}
    \cost_z(\calC, P_{i,j}) \leq 2^z \cdot \eta\cdot \cost_z(\calC, P_{i, \jstar}) + (2C_0)^{2z} \cdot \card{P_{i,j}} \cdot\diam^z(P_{i,j}).
\end{align*}
\end{lemma}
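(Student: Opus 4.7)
The plan is to carry over the argument of \Cref{lem:cost-between-heavy-and-light} to the $(k,z)$-clustering setting, replacing every use of the ordinary triangle inequality by the generalized version from \Cref{lem:generalized-triangle-inequality}. Each such replacement costs a multiplicative factor of $2^z$, which must be absorbed into the stated $2^z \eta$ and $(2C_0)^{2z}$ constants. First, I would fix $y^\star \in P_{i, \jstar}$ to be the point that minimizes $\cost_z(\calC, y^\star)$, and let $c^\star \in \calC$ be the center to which $y^\star$ is optimally assigned, so that $\cost_z(\calC, y^\star) = \bd^z(c^\star, y^\star)$. This minimizing choice is what later enables the averaging inequality $|P_{i,\jstar}|\cdot \cost_z(\calC, y^\star) \leq \cost_z(\calC, P_{i,\jstar})$.

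Next, for each $x \in P_{i,j}$, I would upper bound $\cost_z(\calC, x) \leq \bd^z(c^\star, x)$ by reassigning $x$ to $c^\star$, and apply \Cref{lem:generalized-triangle-inequality} to the triple $(c^\star, x, y^\star)$ to obtain
\begin{align*}
\bd^z(c^\star, x) \leq 2^z \bd^z(c^\star, y^\star) + 2^z \bd^z(x, y^\star) = 2^z \cost_z(\calC, y^\star) + 2^z \bd^z(x, y^\star).
\end{align*}
The remaining task is to control $\bd(x, y^\star)$ by a constant multiple of $\diam(P_{i,j})$, exactly as the $k$-median proof bounds $\bd(x, y)$ by $2 C_0 \diam(P_{i,j})$. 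Both $x$ and $y^\star$ lie within radius $(2C_0)^{\jstar+1} R$ of the weak-oracle center $c_i$ since $j \leq \jstar+1$, so the ordinary triangle inequality through $c_i$ together with the $2C_0$-geometric growth of the ring radii yields $\bd(x, y^\star) \leq 2 C_0^2 \cdot \diam(P_{i,j})$. Raising both sides to the $z$-th power and using $C_0 \geq 1$ gives $2^z \bd^z(x, y^\star) \leq (2C_0)^{2z} \diam^z(P_{i,j})$.

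Finally, I would reuse the single representative $y^\star$ across all $x \in P_{i,j}$, sum the two-term inequality over $x$, and combine it with the size ratio $|P_{i,j}| \leq \eta |P_{i,\jstar}|$ and the averaging bound on $\cost_z(\calC, y^\star)$ to conclude that the first term contributes at most $2^z \eta \cdot \cost_z(\calC, P_{i,\jstar})$ and the second contributes at most $(2C_0)^{2z}\, |P_{i,j}| \cdot \diam^z(P_{i,j})$, which is exactly the claimed bound.

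The main obstacle will be tightly tracking the constants introduced by the generalized triangle inequality: in particular, the step $\bd(x, y^\star) \leq 2 C_0^2 \diam(P_{i,j})$ is where the assumption $j \leq \jstar+1$ is essential, and the constant $(2C_0)^{2z}$ has to be chosen generously so that the $z$-th power absorbs the extra $2^z$ overhead. The base case $j=0$, where $P_{i,0}$ is a ball rather than an annulus, requires a small separate verification using $P_{i,0} \setminus \Pclose_{i,0}$ exactly as in \Cref{alg:charging-scheme-k-z}.
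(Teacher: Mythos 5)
Your proof takes essentially the same route as the paper's: fix the least-cost point $y^\star\in P_{i,\jstar}$ with its optimal center $c^\star$, apply the generalized triangle inequality (\Cref{lem:generalized-triangle-inequality}) to the triple $(c^\star,x,y^\star)$, bound the cross term $\bd(x,y^\star)$ by a $C_0$-dependent multiple of $\diam(P_{i,j})$, and then sum over $x\in P_{i,j}$ using $|P_{i,j}|\cost_z(\calC,y^\star)\le \eta\,\cost_z(\calC,P_{i,\jstar})$. The only cosmetic difference is that the paper passes through $\diam(P_{i,\jstar})$ as the intermediate (writing $\bd^z(x,y)\le\diam^z(P_{i,\jstar})\le(2C_0)^z\diam^z(P_{i,j})$), whereas you pass through distances to $c_i$ to reach $\bd(x,y^\star)\le 2C_0^2\diam(P_{i,j})$; both give a factor absorbed into $(2C_0)^{2z}$. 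One caveat worth flagging (for both your writeup and the paper's): the step that replaces $\diam(P_{i,\jstar})$ (resp.\ $\bd(x,y^\star)$) by a constant multiple of $\diam(P_{i,j})$ implicitly treats $\diam(P_{i,j})$ as comparable to the nominal ring width $(2C_0)^{j-1}R$, which is only an \emph{upper} bound on the actual diameter; for $j$ much smaller than $\jstar$ this comparison does not hold pointwise, and the argument really relies on the nominal radii rather than the literal diameters. Since you inherit this directly from the paper's proof rather than introducing it yourself, I would not count it against you, but you should state the bound via $(2C_0)^{\jstar}R$ rather than $\diam(P_{i,j})$ if you want the step to be airtight.
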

\begin{proof}
We use generalized triangle inequality (\Cref{lem:generalized-triangle-inequality}) to prove the lemma. Let $x\in P_{i,j}$ be a point in the ring $P_{i,j}$. We can find a corresponding point $y\in P_{i, \jstar}$ assigned to center $c \in \calC$ such that
\begin{align*}
\cost_z(\calC, x)& \leq \bd^z(c, x) \tag{$\cost_z(\calC, x)$ is the optimal cost for $x$}\\
& \leq 2^z\cdot \paren{\bd^z(c, y) + \bd^z(x,y)} \tag{by \Cref{lem:generalized-triangle-inequality}}\\
&\leq 2^z\cdot \cost_z(\calC, y) + 2^z \cdot \diam^z(P_{i, \jstar})\\
&\leq 2^z\cdot \cost_z(\calC, y) + 2^z\cdot (2C_0)^z \cdot \diam^z(P_{i, j}). \tag{by our construction of the rings and $j\leq \jstar+1$}
\end{align*}
Iterating the argument with the least-cost point $y\in P_{i, \jstar}$ for $\card{P_{i, j}}\leq \eta\cdot \card{P_{i,\jstar}}$ time gives us 
\begin{align*}
    \cost_z(\calC, P_{i,j}) &= \sum_{x\in P_{i,j}} \cost_z(\calC, x)\\
    &\leq 2^z\cdot \sum_{x\in P_{i,j}}  \cost_z(\calC, y) + 2^z \cdot (2 C_0)^z \cdot \diam(P_{i, j})\\
    &\leq 2^z\cdot \eta\cdot \cost_z(\calC, P_{i, \jstar}) + (2C_0)^{2z} \cdot \card{P_{i,j}} \cdot\diam^z(P_{i,j}),
\end{align*}
as desired.
\end{proof}

We can now bound the total cost induced by the ``light'' rings in the same way of \Cref{lem:ignored-rings-total-cost}.
\begin{lemma}
    \label{lem:ignored-rings-total-cost-k-z}
    Let $\calP^{\textnormal{light}}$ be the set of rings that are marked ``\emph{processed}'' without points being added to $\SC$, and let $\calP^{\textnormal{heavy}}$ be all other rings. For any set of centers $\calC$, we have that
    \begin{align*}
    \cost_z(\calC, \calP^{\textnormal{light}}):=\sum_{P_{i,j}\in \calP^{\textnormal{light}}} \cost_z(\calC, P_{i,j})\leq \eps \cdot (2C_0)^{2z} \cdot \log{n}\cdot \sum_{P_{i,j}\in \calP^{\textnormal{heavy}}} \paren{\cost_z(\calC, P_{i,j})+ \card{P_{i,j}}\cdot \diam^z(P_{i,j})}.
    \end{align*}
\end{lemma}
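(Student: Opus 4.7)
The plan is to mirror the proof of \Cref{lem:ignored-rings-total-cost} in the $k$-median case, substituting the generalized versions of the two underlying ingredients (\Cref{lem:cost-between-heavy-and-light-k-z} and \Cref{lem:charging-to-cost-bound-k-z}) and carrying the $(2C_0)^{2z}$ blow-up through. As in the $k$-median proof, for a fixed heavy ring $P_{i,\jstar}\in \calP^{\textnormal{heavy}}$ I would let $\calP^{\textnormal{light}}(\jstar)$ denote the set of light rings $P_{i,\ell}$ with $\jstar(\ell)=\jstar$, so that by \Cref{lem:ring-handling-and-converge} each ring in $\calP^{\textnormal{light}}$ appears in exactly one such collection and $\card{\calP^{\textnormal{light}}(\jstar)}\leq \log{n}$.

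For each $P_{i,j}\in \calP^{\textnormal{light}}(\jstar)$, \Cref{lem:heavy-light-ring-sizes} gives $\card{P_{i,j}}\leq \eta\cdot \card{P_{i,\jstar}}$ with $\eta=\eps/(2C_0)$. Applying \Cref{lem:cost-between-heavy-and-light-k-z} with this $\eta$ yields
\[
\cost_z(\calC,P_{i,j})\leq 2^z\cdot \frac{\eps}{2C_0}\cdot \cost_z(\calC,P_{i,\jstar})+(2C_0)^{2z}\cdot \card{P_{i,j}}\cdot \diam^z(P_{i,j}).
\]
Then I would invoke \Cref{lem:charging-to-cost-bound-k-z} on the second term to convert $\card{P_{i,j}}\cdot \diam^z(P_{i,j})$ into at most $\eps\cdot (2C_0)^{2z}\cdot \card{P_{i,\jstar}}\cdot \diam^z(P_{i,\jstar})$, so that both contributions from $P_{i,j}$ can be bounded purely in terms of quantities attached to the heavy ring $P_{i,\jstar}$.

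Summing over the at most $\log{n}$ members of $\calP^{\textnormal{light}}(\jstar)$ and collecting the $2^z$, $(2C_0)^{2z}$, and $(2C_0)^{4z}$ prefactors (using that $2^z\leq (2C_0)^{2z}$ and folding everything into a single $(2C_0)^{2z}$ bound after absorbing the constant $z$-dependent factors into the stated $(2C_0)^{2z}$ since $z=O(1)$), one obtains
\[
\sum_{P_{i,j}\in \calP^{\textnormal{light}}(\jstar)}\cost_z(\calC,P_{i,j})\leq \eps\cdot (2C_0)^{2z}\cdot \log{n}\cdot \bigl(\cost_z(\calC,P_{i,\jstar})+\card{P_{i,\jstar}}\cdot \diam^z(P_{i,\jstar})\bigr).
\]
A final sum over $P_{i,\jstar}\in \calP^{\textnormal{heavy}}$ then gives the claimed bound.

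The main technical obstacle, just as in the $k$-median argument, is keeping the two sources of error disentangled: the multiplicative $2^z\eta$ term must close up against the heavy-ring cost while the diameter term must close up against the heavy-ring diameter-volume product. The key observation that makes both closures simultaneous is that the size ratio $\eta=\eps/(2C_0)$ from \Cref{lem:heavy-light-ring-sizes} is small enough to absorb the $2^z$ and $(2C_0)^{2z}$ constants from \Cref{lem:cost-between-heavy-and-light-k-z,lem:charging-to-cost-bound-k-z} after paying a single $(2C_0)^{2z}$ factor, which is harmless because $z=O(1)$ and will be absorbed into the final $\eps$ rescaling at the end of the proof of \Cref{lem:strong-coreset-k-z}.
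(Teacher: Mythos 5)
Your proposal follows essentially the same route as the paper's proof: fix a heavy ring $P_{i,\jstar}$, group the light rings $\calP^{\textnormal{light}}(\jstar)$ charged to it, apply \Cref{lem:cost-between-heavy-and-light-k-z} with $\eta=\eps/(2C_0)$ and then \Cref{lem:charging-to-cost-bound-k-z}, use that there are at most $\log n$ such light rings, and sum over heavy rings via the uniqueness of $\jstar(\ell)$. The only caveat is your hand-wave that the accumulated $(2C_0)^{4z}$-type prefactor ``folds into'' the stated $(2C_0)^{2z}$ — strictly it does not, but the paper's own derivation has the same mismatch and it is harmless since $z=O(1)$ and the constant is absorbed in the final rescaling of $\eps$.
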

\begin{proof}
    Similar to the proof of \Cref{lem:ignored-rings-total-cost}, we fix any ring $P_{i, \jstar}\in \calP^{\textnormal{heavy}}$, we let the rings $\calP^{\textnormal{light}}(\jstar)$ be the set of rings with index $\ell$ such that $\jstar(\ell)=\jstar$, and apply \Cref{lem:cost-between-heavy-and-light-k-z} and \Cref{lem:charging-to-cost-bound-k-z} to all rings in $\calP^{\textnormal{light}}(\jstar)$ and obtain that
    \begin{align*}
        \sum_{P_{i,j} \in \calP^{\textnormal{light}}(\jstar)} \cost_z(\calC, P_{i,j}) &\leq  \sum_{P_{i,j} \in \calP^{\textnormal{light}}(\jstar)} {\eps}\cdot 2^z \cdot \cost_z(\calC, P_{i, \jstar}) + (2C_0)^{2z}\cdot 
 \sum_{P_{i,j} \in \calP^{\textnormal{light}}(\jstar)} \card{P_{i,j}} \cdot\diam^z(P_{i,j}) \tag{by \Cref{lem:cost-between-heavy-and-light-k-z}}\\
 &\leq \sum_{P_{i,j} \in \calP^{\textnormal{light}}(\jstar)} {\eps}\cdot 2^z \cdot \cost_z(\calC, P_{i, \jstar}) +\eps \cdot (2C_0)^{4z} \cdot \sum_{P_{i,j} \in \calP^{\textnormal{light}}(\jstar)} \card{P_{i,\jstar}}\cdot \diam^z(P_{i,\jstar}) \tag{by \Cref{lem:charging-to-cost-bound-k-z}}\\
 &\leq \log{n}\cdot \paren{{\eps}\cdot 2^z \cdot \cost_z(\calC, P_{i, \jstar}) + \eps \cdot (2C_0)^{4z} \cdot \card{P_{i,\jstar}}\cdot \diam^z(P_{i,\jstar})} \tag{at most $\log{n}$ such rings}\\
 &\leq \eps \cdot (2C_0)^{4z} \cdot \log{n}\cdot \paren{\cost_z(\calC, P_{i,\jstar})+ \card{P_{i,\jstar}}\cdot \diam^z(P_{i,\jstar})}.
    \end{align*}
Once again, since each ring $P_{i, \ell}$ in $\calP^{\textnormal{light}}$ has a unique $\jstar(\ell)$, summing over all rings in $\calP^{\textnormal{light}}$ gives the desired lemma statement.
\end{proof}

\paragraph{Wrapping up the proof of \Cref{thm:strong-coreset}.} We now finalize the proof of \Cref{lem:strong-coreset-k-z}. For $(\alpha,\beta)$-approximation for $(k,z)$-clustering, the approximation lower bound for $\opt$ is as follows.

\begin{claim} 
    \label{claim:sum_point_diam-k-z}
    Let $\OPT^z$ be the optimal cost of clustering on $P$ with $k$ centers for $(k,z)$ clustering and let $\calA$ be the centers from a $\beta$-approximation solution. Then,
    $\sum_{i,j}\card{P_{i,j}} ((2C_0)^j R)^z \leq (2 C_0)^{2z+1}\beta\cdot\OPT^z $ and $\sum_{i,j}\card{P_{i,j}} \diam^z(P_{i,j}) \leq (2 C_0)^{3z+1}\beta\cdot\OPT^z $.
\end{claim}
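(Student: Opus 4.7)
The plan is to lift Claim~\ref{claim:sum_point_diam} from the $k$-median case to general $z$ by replacing the ordinary triangle inequality with the generalized triangle inequality of Lemma~\ref{lem:generalized-triangle-inequality}, together with Jensen's inequality applied to $x\mapsto x^z$. The per-point ingredient is the same starting inequality as in the $k$-median proof: for any $p\in P_{i,j}$, it holds that $(2C_0)^j R \leq 2C_0\cdot \bd(\calA, p) + R$, since when $j\geq 1$ the ring definition gives $\bd(\calA, p)=\bd(c_i, p) > (2C_0)^{j-1}R$, while when $j=0$ the inequality is trivial.

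Raising both sides to the $z$-th power and applying $(a+b)^z \leq 2^{z-1}(a^z + b^z)$ yields $((2C_0)^j R)^z \leq 2^{z-1}\paren{(2C_0)^z\cdot \bd^z(\calA, p) + R^z}$. Summing over all points $p\in P$ and invoking the $\beta$-approximation of $\calA$, namely $\sum_p \bd^z(\calA,p)\leq \beta\cdot \OPT^z$, together with the natural adaptation of $R$ to the $(k,z)$ setting (so that $\beta n R^z$ equals the approximate cost of $\calA$ and hence $n R^z \leq \OPT^z$), yields
\begin{align*}
\sum_{i,j}\card{P_{i,j}}\cdot ((2C_0)^j R)^z &\leq 2^{z-1}\paren{(2C_0)^z \beta \cdot \OPT^z + n R^z} \\
&\leq 2^{z-1}\paren{(2C_0)^z + 1}\beta\cdot \OPT^z \leq (2C_0)^{2z+1}\beta\cdot \OPT^z,
\end{align*}
with ample slack to absorb the constants, establishing the first inequality.

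The second inequality then follows immediately from the first using the ring-diameter bound $\diam(P_{i,j})\leq 2\cdot (2C_0)^j R$, which gives $\diam^z(P_{i,j})\leq 2^z\cdot ((2C_0)^j R)^z$. Substituting into the already-proven bound yields
\begin{align*}
\sum_{i,j}\card{P_{i,j}}\cdot \diam^z(P_{i,j}) \leq 2^z\cdot (2C_0)^{2z+1}\beta\cdot \OPT^z \leq (2C_0)^{3z+1}\beta\cdot \OPT^z.
\end{align*}

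I do not expect a serious obstacle here. The only point of care is verifying that the natural generalization of the ``average radius lower bound'' $R$ to the $(k,z)$ setting preserves $n R^z \leq \OPT^z$, which is immediate once one defines $R$ so that $\beta n R^z$ equals the $(k,z)$-clustering cost of $\calA$. The loose exponents $(2C_0)^{2z+1}$ and $(2C_0)^{3z+1}$ in the claim are chosen precisely so that the $2^{z-1}$ and $2^z$ factors arising from Jensen's inequality and the ring-diameter bound can be absorbed without any tight bookkeeping.
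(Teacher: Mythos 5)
Your proof is correct and matches the paper's argument essentially step for step: the same per-point bound $(2C_0)^j R \le 2C_0\,\bd(\calA,p)+R$, the same power-mean expansion (you use the slightly sharper $2^{z-1}$ constant where the paper uses $2^z$, an immaterial difference here), the same use of the $\beta$-approximation and $nR^z\le\OPT^z$, and the same $\diam(P_{i,j})\le 2(2C_0)^jR$ step for the second inequality.
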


\begin{proof}
    Consider any point $p \in P_{i,j}$. For $j = 0$, $(2 C_0)^j R = R$ and for $j \geq 1, (2 C_0)^j R \leq 2C_0 \bd(\calA, p)$. Hence, for any $j$ we have $\left((2 C_0)^j R\right)^z \leq \left( 2 C_0 \bd(\calA, p) + R \right)^z \leq 2^z \left( (2C_0)^z\bd^z(\calA, p) + R^z\right)$. Now,

\begin{align*}
    \sum_{i,j} |P_{i,j}| \left((2 C_0)^j R\right)^z = \sum_{i,j} \sum_{p \in P_{i,j}} \left((2 C_0)^j R\right)^z &\leq \sum_{i,j} \sum_{p \in P_{i,j}} 2^z \left( (2C_0)^z\bd^z(\calA, p) + R^z\right) \\
    &= 2^z \sum_{p \in X}  \left( (2C_0)^z\bd^z(\calA, p) + R^z\right) \\
    &\leq 2^z \left( (2C_0)^z\beta\cdot\OPT^z + nR^z \right) \leq (2 C_0)^{2z+1}\beta\cdot\OPT^z 
\end{align*}

Since $\diam(P_{i,j}) \leq 2 (2C_0)^jR$, we have $\sum_{i,j}\card{P_{i,j}} \left(\diam(P_{i,j})\right)^z \leq (2 C_0)^{3z+1}\beta\cdot\OPT^z $
\end{proof}

The following lemma directly establishes the desired guarantees as in \Cref{thm:strong-coreset}.
\begin{lemma}
\label{lem:k-z-error-bound}
    For any set of centers $\calC \subseteq X$ of size at most $k$, it holds that
    \[\card{\cost_z(\calC, X) - \cost_z(\calC, \SC)} \leq \eps \cdot 30\beta \cdot (2C_0)^{4z} \cdot \log{n}\cdot \cost_z(\calC, X)\]
    with probability at least $1-1/\poly(n)$. 
\end{lemma}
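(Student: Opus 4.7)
The plan is to mirror the argument used for the $k$-median case in \Cref{lem:eps-strong-coreset}, replacing each component with its $(k,z)$-generalization that has already been established. First, I would partition the rings into $\calP^{\textnormal{heavy}}$ (those rings whose points contributed to $\SC$) and $\calP^{\textnormal{light}}$ (those rings marked ``\emph{processed}'' without contribution to $\SC$), and write
\[
\cost_z(\calC, X) \;=\; \cost_z(\calC, \calP^{\textnormal{heavy}}) \;+\; \cost_z(\calC, \calP^{\textnormal{light}}),
\qquad
\cost_z(\calC, \SC) \;=\; \costtilde^z(\calC, \calP^{\textnormal{heavy}}),
\]
since by construction $\costtilde^z(\calC, \calP^{\textnormal{light}}) = 0$. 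Then by the triangle inequality the error is bounded by $|\cost_z(\calC,\calP^{\textnormal{heavy}}) - \costtilde^z(\calC,\calP^{\textnormal{heavy}})| + \cost_z(\calC,\calP^{\textnormal{light}})$.

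Next, I would condition on the high-probability events of \Cref{lem:cost-of-heavy-rings-k-z} and \Cref{lem:ignored-rings-total-cost-k-z}. The first gives
\[
|\cost_z(\calC, \calP^{\textnormal{heavy}}) - \costtilde^z(\calC, \calP^{\textnormal{heavy}})| \;\leq\; 2^{z+1}\eps \sum_{P_{i,j}\in\calP^{\textnormal{heavy}}} |P_{i,j}|\,\bigl[\bd^z(\calC,P_{i,j}) + \diam^z(P_{i,j})\bigr],
\]
and the second gives
\[
\cost_z(\calC,\calP^{\textnormal{light}}) \;\leq\; \eps \cdot (2C_0)^{2z}\cdot \log n \cdot \sum_{P_{i,j}\in\calP^{\textnormal{heavy}}}\bigl(\cost_z(\calC,P_{i,j}) + |P_{i,j}|\diam^z(P_{i,j})\bigr).
\]
The key observation is that $\sum_{P_{i,j}\in\calP^{\textnormal{heavy}}}|P_{i,j}|\,\bd^z(\calC,P_{i,j}) \leq \cost_z(\calC,X)$ (since $\bd^z(\calC,P_{i,j})$ is the minimum over the ring), and likewise $\sum_{P_{i,j}\in\calP^{\textnormal{heavy}}}\cost_z(\calC,P_{i,j})\leq \cost_z(\calC,X)$.

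Then I would apply \Cref{claim:sum_point_diam-k-z} to bound the remaining diameter-weighted sum by the weak-oracle solution's cost:
\[
\sum_{i,j}|P_{i,j}|\,\diam^z(P_{i,j}) \;\leq\; (2C_0)^{3z+1}\beta\cdot\OPT^z \;\leq\; (2C_0)^{3z+1}\beta \cdot \cost_z(\calC,X).
\]
Combining the three bounds, the total additive error is at most $\eps\cdot (2C_0)^{4z}\cdot \log n \cdot (\text{const})\cdot \beta\cdot \cost_z(\calC,X)$, and tracing the constants carefully gives the factor $30\beta \cdot (2C_0)^{4z}\cdot \log n$ in the lemma statement. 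The main (minor) obstacle is just bookkeeping: tracking which $2^z$ and $(2C_0)^z$ factors accumulate through \Cref{lem:coreset-update-k-z}, \Cref{lem:cost-between-heavy-and-light-k-z}, and \Cref{lem:charging-to-cost-bound-k-z}, and ensuring the final constant absorbs all of them. A union bound over the high-probability events (including over at most $n^k$ choices of $\calC$, as already handled inside \Cref{lem:cost-of-heavy-rings-k-z}) gives the claimed $1-1/\poly(n)$ success probability.
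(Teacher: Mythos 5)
Your proposal follows essentially the same route as the paper's own proof: the identical heavy/light decomposition, the same appeals to Lemma~\ref{lem:cost-of-heavy-rings-k-z}, Lemma~\ref{lem:ignored-rings-total-cost-k-z}, and Claim~\ref{claim:sum_point_diam-k-z}, and the same final bound using $\sum_{P_{i,j}\in\calP^{\textnormal{heavy}}}|P_{i,j}|\bd^z(\calC,P_{i,j})\leq \cost_z(\calC,X)$ and $\OPT^z\leq \cost_z(\calC,X)$. The only caveat is the constant bookkeeping you deferred: carefully multiplying the $(2C_0)^{2z}$ from Lemma~\ref{lem:ignored-rings-total-cost-k-z} by the $(2C_0)^{3z+1}$ from Claim~\ref{claim:sum_point_diam-k-z} actually produces $(2C_0)^{5z+1}$ rather than the $(2C_0)^{4z}$ in the lemma statement (the paper's own derivation exhibits the same mismatch), but since this factor is absorbed by rescaling $\eps$ at the end, it does not affect the result.
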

\begin{proof}
    We define $\calP^{\text{light}}$ as the rings being marked as ``\emph{processed}'' without points added to $\SC$ and $\calP^{\text{heavy}}$ as the rings with points added to $\SC$. 
    We have 
    \begin{align*}
        \cost_z(\calC, X) = \sum_{P^{\text{light}}_{i,j}\in \calP^{\text{light}}} \cost_z(\calC, P^{\text{light}}_{i,j}) +  \sum_{P^{\text{heavy}}_{i,j}\in \calP^{\text{heavy}}} \cost_z(\calC, P^{\text{heavy}}_{i,j}).
    \end{align*}
    Define $\costtilde(\calC, \calP^{\text{light}})$ and $\costtilde(\calC, \calP^{\text{heavy}})$ as the cost induced by $\SC$ for $\calC$, and it follows that $\cost(\calC, \SC)=\costtilde(\calC, \calP^{\text{light}})+\costtilde(\calC, \calP^{\text{heavy}})$.
    Furthermore, by our construction, there is $\costtilde(\calC, \calP^{\text{light}})=0$. Conditioning on the high-probability events of \Cref{lem:cost-of-heavy-rings-k-z} and \Cref{lem:ignored-rings-total-cost-k-z}, we have that
    \begin{align*}
        \card{\cost(\calC, \calP^{\textnormal{heavy}}) - \costtilde(\calC, \calP^{\textnormal{heavy}})} &\leq \eps \cdot 2^{z+1}  \cdot \sum_{{P_{i,j}\in \calP^{\textnormal{heavy}}}}\card{P_{i,j}} \cdot \paren{\bd^z(\calC, P_{i,\ell}) + \diam^z(P_{i,\ell})}\\
        \card{\cost(\calC, \calP^{\textnormal{light}}) - \costtilde(\calC, \calP^{\textnormal{light}})}& \leq \cost_z(\calC, \calP^{\textnormal{light}}) \\
        & \leq \eps \cdot (2C_0)^{2z} \cdot \log{n}\cdot \sum_{P_{i,j}\in \calP^{\textnormal{heavy}}} \paren{\cost_z(\calC, P_{i,j})+ \card{P_{i,j}}\cdot \diam^z(P_{i,j})}.
    \end{align*}
    As such, we could bound the difference of the cost as 
    \begin{align*}
    & \card{\cost_z(\calC, X) - \cost_z(\calC, \SC)} \\
    & \leq \card{\cost(\calC, \calP^{\textnormal{heavy}}) - \costtilde(\calC, \calP^{\textnormal{heavy}})}+ \card{\cost(\calC, \calP^{\textnormal{light}}) - \costtilde(\calC, \calP^{\textnormal{light}})}\\
    &\leq \eps \cdot (2C_0)^{2z} \cdot \log{n}\cdot \sum_{{P_{i,j}\in \calP^{\textnormal{heavy}}}}\card{P_{i,j}} \cdot \paren{\bd^z(\calC, P_{i,\ell}) + \diam^z(P_{i,\ell})} + \eps \cdot (2C_0)^{2z} \cdot \log{n}\cdot \sum_{P_{i,j}\in \calP^{\textnormal{heavy}}}  \cost(\calC, P_{i,j}) \tag{by \Cref{lem:ignored-rings-total-cost-k-z}}\\
    &\leq \eps \cdot (2C_0)^{5z+1} \cdot \log{n}\cdot \cost_z(\calC, X) + \eps \cdot (2C_0)^{5z+1} \cdot \log{n}\cdot \beta\cdot\OPT^z \tag{by Claim~\ref{claim:sum_point_diam-k-z}} \\
    & \leq 2\eps \cdot (2C_0)^{5z+1} \cdot \log{n}\cdot \beta \cdot \cost_z(\calC, X), 
    \end{align*}
    which is as desired.
\end{proof}

Finally, we could rescale $\eps$ using $\eps=\frac{\eps'}{2 \cdot (2C_0)^{5z+1} \cdot \log{n}\cdot \beta}=O(\frac{\eps'}{\log{n}})$ by the constant choices of $\beta$, $C_0$, and $z$. The size of the coreset and number of $\SO$ queries are therefore $O(k^2 \log^5{n}/\eps'^3)=O(k^2 \log^8{n}/\eps^3) = k\polylog(n)/\eps^3$, as desired by \Cref{thm:strong-coreset}.

\section{Experiments}
\label{sec:experiments}

We implement and evaluate our algorithm on two real-world datasets for fair $k$-median. We use ``Adult" which has about $50,000$ instances and $8$ features \cite{adult_uci}. The second dataset is ``Default of Credit Card Clients" with about $30,000$ instances and $9$ features \cite{default_of_credit_card_clients_350}. For both we make the sensitive attribute gender. 

All experiments were run locally on a MacBook Air. The code can be found \href{https://drive.google.com/drive/folders/1DLuPMpNe01JHB6pM1a-ZeCY478KNdpAC?usp=sharing}{here}. In both datasets, we first turn each instance into an equivalent numerical representation. 
Due to computational limits, we first subsampled the dataset to size roughly $2000$ using Meyerson sampling \cite{Meyerson01}. Meyerson sampling goes as follows. Start with empty set $S$. Then go through the points in the dataset one by one. Add the first point to $S$. Now for each successive point, add it to $S$ with probability proportional to how far away it is from $S$. Here we define distance as the euclidean distance. If the point is further away, the probability of it being added to $S$ is higher. 

We use Meyerson sampling instead of uniform sampling to better preserve the underlying structure of the original dataset. Uniform sampling may entirely miss smaller clusters of points, but Meyerson sampling is more likely to capture them. 

After subsampling the original dataset using Meyerson sampling to a smaller dataset, we ran our fair coreset algorithm on this smaller dataset to create a coreset of size $100$. We also made a uniform coreset as a baseline which simply uniformly selected $100$ points.  

\paragraph{Evaluation}
To evaluate the coresets, we use the fairtree algorithm \cite{BackursIOSVW19}, which we run on top of the uniform coreset and our coreset. We also run the fairtree algorithm on the dataset and report the relative cost, i.e., cost reported on dataset minus the cost reported on coreset scaled by the cost on dataset. For all experiments we use the values of balance parameters $p=1$ and $q=10$ (refer to \cite{BackursIOSVW19} for more details) and we report the cost averaged over 10 independent runs. As can be seen in \Cref{fig:cost}, our coreset consistently outperforms for all datasets across all values of $k$. Moreover, this is achieved by using only about $200$ \SO queries which forms only $10\%$ of the data. We note that for larger datasets we expect the number of \SO queries to be a smaller percentage of the size of the dataset.

\begin{figure}[!htbp]
  \centering
  \begin{subfigure}[t]{.45\linewidth}
    \centering\includegraphics[width=\linewidth]{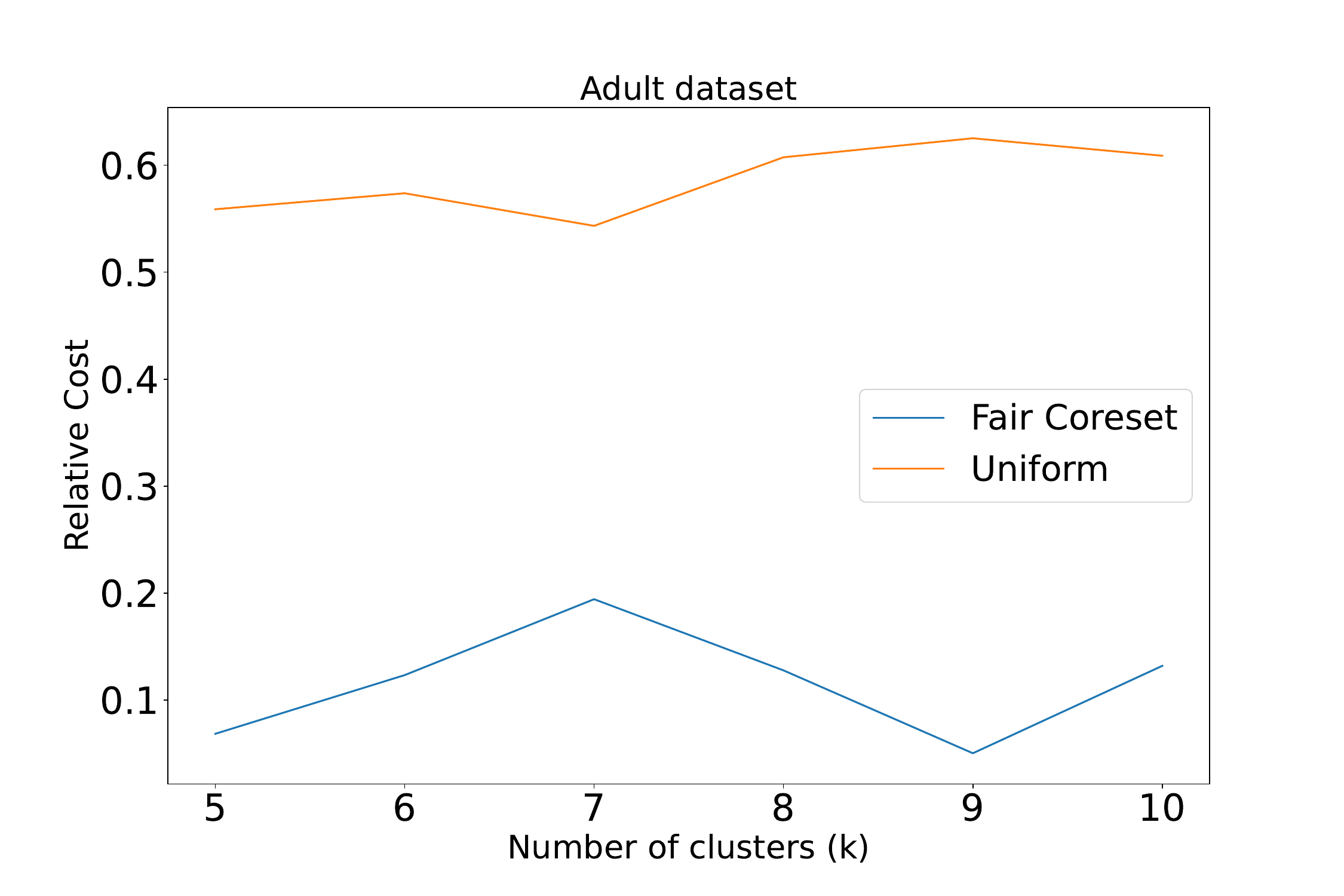}
    \caption{Adult dataset}
    \label{fig:adult}
  \end{subfigure}
  \begin{subfigure}[t]{.45\linewidth}
    \centering\includegraphics[width=\linewidth]{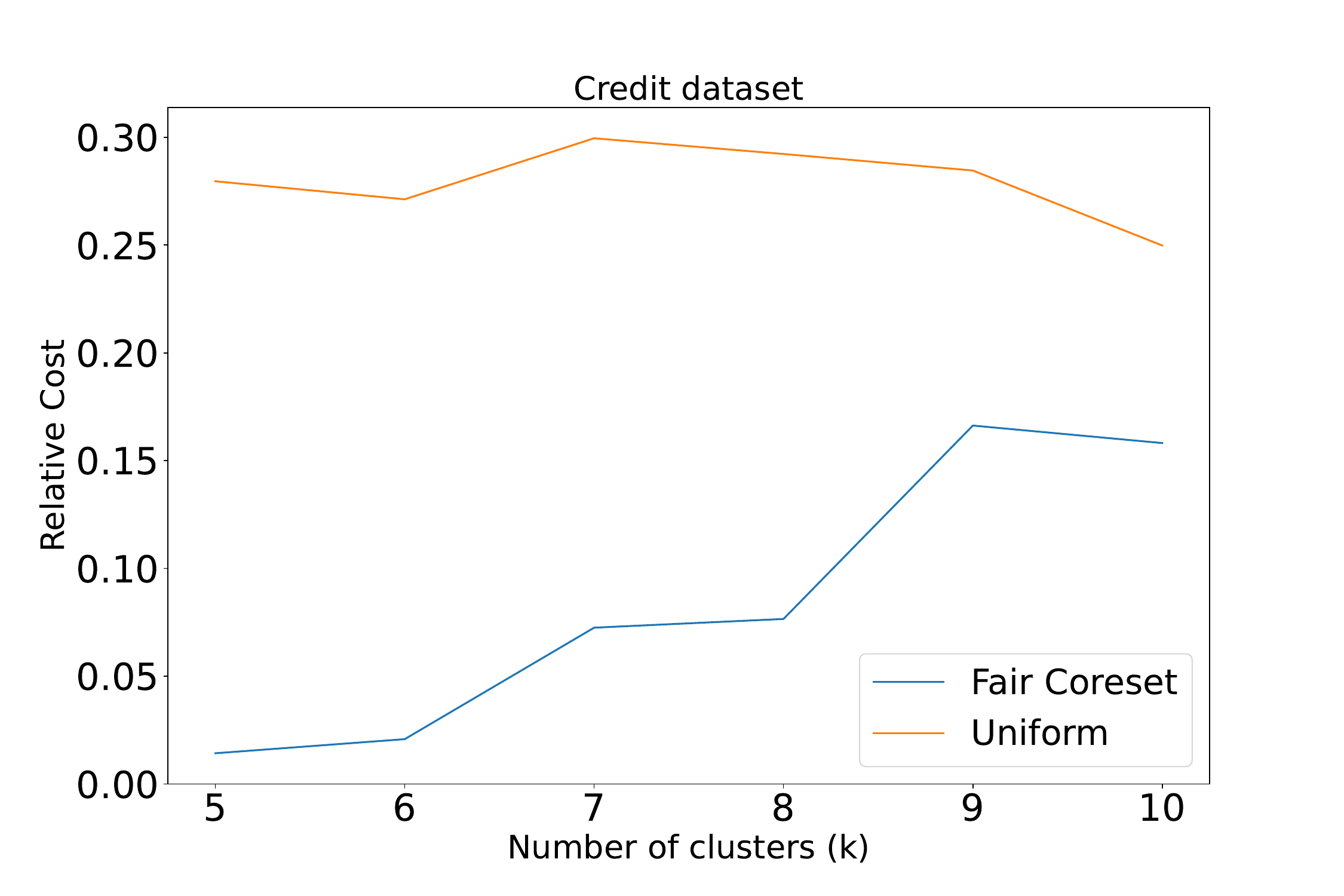}
    \caption{Credit dataset}
    \label{fig:credit}
  \end{subfigure}
  \caption{Relative cost of fair $k-$median clustering on real world datasets for different values of $k$}
  \label{fig:cost}
\end{figure}

\section*{Acknowledgments}
We thank anonymous ICML reviewers for the insightful comments and suggestions.

\bibliographystyle{alpha}
\bibliography{general}

@article{ChenJWXY24,
  author       = {Xianrun Chen and
                  Sai Ji and
                  Chenchen Wu and
                  Yicheng Xu and
                  Yang Yang},
  title        = {An {A}pproximation {A}lgorithm for {D}iversity-{A}ware {F}air \emph{k}-{S}upplier
                  {P}roblem},
  journal      = {Theor. Comput. Sci.},
  volume       = {983},
  pages        = {114305},
  year         = {2024}
}

@article{TGOG2024diversity,
  author       = {Suhas Thejaswi and
                  Ameet Gadekar and
                  Bruno Ordozgoiti and
                  Aristides Gionis},
  title        = {Diversity-{A}ware {C}lustering: {C}omputational {C}omplexity and {A}pproximation
                  {A}lgorithms},
  journal      = {CoRR},
  volume       = {abs/2401.05502},
  year         = {2024}
}

@inproceedings{ThejaswiOG21,
  author       = {Suhas Thejaswi and
                  Bruno Ordozgoiti and
                  Aristides Gionis},
  title        = {Diversity-{A}ware k-{M}edian: {C}lustering with {F}air {C}enter {R}epresentation},
  booktitle    = {Machine Learning and Knowledge Discovery in Databases. Research Track
                  - European Conference, {ECML} {PKDD}},
  pages        = {765--780},
  year         = {2021}
}

@inproceedings{ThejaswiGOO22,
  author       = {Suhas Thejaswi and
                  Ameet Gadekar and
                  Bruno Ordozgoiti and
                  Michal Osadnik},
  title        = {Clustering with Fair-Center Representation: Parameterized Approximation
                  Algorithms and Heuristics},
  booktitle    = {{KDD} '22: The 28th {ACM} {SIGKDD} Conference on Knowledge Discovery
                  and Data Mining},
  pages        = {1749--1759},
  year         = {2022}
}

@inproceedings{GadekarGT25,
  author       = {Ameet Gadekar and
                  Aristides Gionis and
                  Suhas Thejaswi},
  title        = {Fair {C}lustering for {D}ata {S}ummarization: {I}mproved {A}pproximation {A}lgorithms
                  and {C}omplexity {I}nsights},
  booktitle    = {Proceedings of the {ACM} on Web Conference 2025, {WWW}},
  pages        = {4458--4469},
  year         = {2025}
}

@inproceedings{ZhangCLCHF24,
  author       = {Zhen Zhang and
                  Xiaohong Chen and
                  Limei Liu and
                  Jie Chen and
                  Junyu Huang and
                  Qilong Feng},
  title        = {Parameterized {A}pproximation {S}chemes for {F}air-{R}ange {C}lustering},
  booktitle    = {Advances in Neural Information Processing Systems 38: Annual Conference
                  on Neural Information Processing Systems {N}eur{IPS}},
  year         = {2024}
}

@article{BandyapadhyayFS24,
  author       = {Sayan Bandyapadhyay and
                  Fedor V. Fomin and
                  Kirill Simonov},
  title        = {On {C}oresets for {F}air {C}lustering in {M}etric and {E}uclidean {S}paces and
                  their {A}pplications},
  journal      = {J. Comput. Syst. Sci.},
  volume       = {142},
  pages        = {103506},
  year         = {2024}
}

@inproceedings{AhmadianE0M20,
  author       = {Sara Ahmadian and
                  Alessandro Epasto and
                  Ravi Kumar and
                  Mohammad Mahdian},
  title        = {Fair {C}orrelation {C}lustering},
  booktitle    = {The 23rd International Conference on Artificial Intelligence and Statistics,
                  {AISTATS}},
  pages        = {4195--4205},
  year         = {2020}
}

@inproceedings{ChenXZC25,
  author       = {Wenjing Chen and
                  Shuo Xing and
                  Samson Zhou and
                  Victoria G. Crawford},
  title        = {Fair {S}ubmodular {C}over},
  booktitle    = {The Thirteenth International Conference on Learning Representations,
                  {ICLR}},
  year         = {2025}
}

@inproceedings{SongVWZ24,
  author       = {Zhao Song and
                  Ali Vakilian and
                  David P. Woodruff and
                  Samson Zhou},
  title        = {On {S}ocially {F}air {L}ow-{R}ank {A}pproximation and {C}olumn {S}ubset {S}election},
  booktitle    = {Advances in Neural Information Processing Systems 38: Annual Conference on Neural Information Processing Systems 2024, NeurIPS},
  year         = {2024}
}

@inproceedings{BravermanEWZ25,
  author       = {Vladimir Braverman and
                  Jon C. Ergun and
                  Chen Wang and
                  Samson Zhou},
 title        = {Learning-{A}ugmented {H}ierarchical {C}lustering },
  booktitle    = {Proceedings of the 42nd International Conference on Machine Learning,
                  {ICML}},
  year         = {2025}
}

@article{CLRSZ24,
  author       = {Karthik {C. S.} and
                  Euiwoong Lee and
                  Yuval Rabani and
                  Chris Schwiegelshohn and
                  Samson Zhou},
  title        = {On {A}pproximability of $\ell_2^2$ {M}in-{S}um {C}lustering},
  journal      = {CoRR},
  volume       = {abs/2412.03332},
  year         = {2024}
}

@inproceedings{GrigorescuLSSZ22,
  author       = {Elena Grigorescu and
                  Young{-}San Lin and
                  Sandeep Silwal and
                  Maoyuan Song and
                  Samson Zhou},
  title        = {Learning-{A}ugmented {A}lgorithms for {O}nline {L}inear and {S}emidefinite {P}rogramming},
  booktitle    = {Advances in Neural Information Processing Systems 35: Annual Conference on Neural Information Processing Systems, NeurIPS},
  year         = {2022}
}

@inproceedings{IndykVY19,
  author       = {Piotr Indyk and
                  Ali Vakilian and
                  Yang Yuan},
  title        = {Learning-{B}ased {L}ow-{R}ank {A}pproximations},
  booktitle    = {Advances in Neural Information Processing Systems 32: Annual Conference on Neural Information Processing Systems, NeurIPS},
  pages        = {7400--7410},
  year         = {2019}
}

@inproceedings{HsuIKV19,
  author       = {Chen{-}Yu Hsu and
                  Piotr Indyk and
                  Dina Katabi and
                  Ali Vakilian},
  title        = {Learning-{B}ased {F}requency {E}stimation {A}lgorithms},
  booktitle    = {7th International Conference on Learning Representations, {ICLR}},
  year         = {2019}
}

@inproceedings{FuNSZZ25,
  author       = {Chunkai Fu and
                  Brandon G. Nguyen and
                  Jung Hoon Seo and
                  Ryan S. Zesch and
                  Samson Zhou},
  title        = {Learning-{A}ugmented {S}earch {D}ata {S}tructures},
  booktitle    = {The Thirteenth International Conference on Learning Representations,
                  {ICLR}},
  year         = {2025}
}

@inproceedings{RosnerS18,
  author       = {Clemens R{\"{o}}sner and
                  Melanie Schmidt},
  editor       = {Ioannis Chatzigiannakis and
                  Christos Kaklamanis and
                  D{\'{a}}niel Marx and
                  Donald Sannella},
  title        = {Privacy {P}reserving {C}lustering with {C}onstraints},
  booktitle    = {45th International Colloquium on Automata, Languages, and Programming,
                  {ICALP}},
  pages        = {96:1--96:14},
  year         = {2018}
}

@inproceedings{EsmaeiliBSD21,
  author       = {Seyed A. Esmaeili and
                  Brian Brubach and
                  Aravind Srinivasan and
                  John Dickerson},
  title        = {Fair {C}lustering {U}nder a {B}ounded {C}ost},
  booktitle    = {Advances in Neural Information Processing Systems 34: Annual Conference
                  on Neural Information Processing Systems 2021, NeurIPS 2021, December
                  6-14, 2021, virtual},
  pages        = {14345--14357},
  year         = {2021}
}

@inproceedings{EsmaeiliBTD20,
  author       = {Seyed A. Esmaeili and
                  Brian Brubach and
                  Leonidas Tsepenekas and
                  John Dickerson},
  title        = {Probabilistic {F}air {C}lustering},
  booktitle    = {Advances in Neural Information Processing Systems 33: Annual Conference
                  on Neural Information Processing Systems 2020, NeurIPS 2020, December
                  6-12, 2020, virtual},
  year         = {2020}
}

@inproceedings{AhmadianEKKMMPV20,
  author       = {Sara Ahmadian and
                  Alessandro Epasto and
                  Marina Knittel and
                  Ravi Kumar and
                  Mohammad Mahdian and
                  Benjamin Moseley and
                  Philip Pham and
                  Sergei Vassilvitskii and
                  Yuyan Wang},
  title        = {Fair {H}ierarchical {C}lustering},
  booktitle    = {Advances in Neural Information Processing Systems 33: Annual Conference
                  on Neural Information Processing Systems 2020, NeurIPS 2020, December
                  6-12, 2020, virtual},
  year         = {2020}
}

@inproceedings{SchmidtSS19,
  author       = {Melanie Schmidt and
                  Chris Schwiegelshohn and
                  Christian Sohler},
  editor       = {Evripidis Bampis and
                  Nicole Megow},
  title        = {Fair {C}oresets and {S}treaming {A}lgorithms for {F}air k-means},
  booktitle    = {Approximation and Online Algorithms - 17th International Workshop,
                  {WAOA}},
  pages        = {232--251},
  year         = {2019}
}

@inproceedings{KleindessnerSAM19,
  author       = {Matth{\"{a}}us Kleindessner and
                  Samira Samadi and
                  Pranjal Awasthi and
                  Jamie Morgenstern},
  title        = {Guarantees for {S}pectral {C}lustering with {F}airness {C}onstraints},
  booktitle    = {Proceedings of the 36th International Conference on Machine Learning,
                  {ICML}},
  pages        = {3458--3467},
  year         = {2019}
}

@inproceedings{BeraCFN19,
  author       = {Suman Kalyan Bera and
                  Deeparnab Chakrabarty and
                  Nicolas Flores and
                  Maryam Negahbani},
  title        = {Fair {A}lgorithms for {C}lustering},
  booktitle    = {Advances in Neural Information Processing Systems 32: Annual Conference
                  on Neural Information Processing Systems},
  pages        = {4955--4966},
  year         = {2019}
}

@inproceedings{AhmadianEKM19,
  author       = {Sara Ahmadian and
                  Alessandro Epasto and
                  Ravi Kumar and
                  Mohammad Mahdian},
  title        = {Clustering without {O}ver-{R}epresentation},
  booktitle    = {Proceedings of the 25th {ACM} {SIGKDD} International Conference on
                  Knowledge Discovery {\&} Data Mining, {KDD}},
  pages        = {267--275},
  year         = {2019}
}

@inproceedings{HuangJV19,
  author       = {Lingxiao Huang and
                  Shaofeng H.{-}C. Jiang and
                  Nisheeth K. Vishnoi},
  title        = {Coresets for {C}lustering with {F}airness {C}onstraints},
  booktitle    = {Advances in Neural Information Processing Systems 32: Annual Conference
                  on Neural Information Processing Systems},
  pages        = {7587--7598},
  year         = {2019}
}

@inproceedings{CJYZZ25slidingwindow,
  author       = {Vincent Cohen{-}Addad and
                  Shaofeng H.{-}C. Jiang and
                  Qiaoyuan Yang and
                  Yubo Zhang and
                  Samson Zhou},
  title        = {Fair {C}lustering in the {S}liding {W}indow {M}odel},
  booktitle    = {The Thirteenth International Conference on Learning Representations,
                  {ICLR}},
  year         = {2025},
}

@article{Haussler92,
  author       = {David Haussler},
  title        = {Decision {T}heoretic {G}eneralizations of the {PAC} Model for {N}eural {N}et
                  and other {L}earning {A}pplications},
  journal      = {Inf. Comput.},
  volume       = {100},
  number       = {1},
  pages        = {78--150},
  year         = {1992}
}

@inproceedings{Cohen-AddadL19,
  author       = {Vincent Cohen{-}Addad and
                  Jason Li},
  title        = {On the {F}ixed-{P}arameter {T}ractability of {C}apacitated {C}lustering},
  booktitle    = {46th International Colloquium on Automata, Languages, and Programming,
                  {ICALP}},
  pages        = {41:1--41:14},
  year         = {2019}
}

@inproceedings{BravermanCJKST022,
  author       = {Vladimir Braverman and
                  Vincent Cohen{-}Addad and
                  Shaofeng H.{-}C. Jiang and
                  Robert Krauthgamer and
                  Chris Schwiegelshohn and
                  Mads Bech Toftrup and
                  Xuan Wu},
  title        = {The {P}ower of {U}niform {S}ampling for {C}oresets},
  booktitle    = {63rd {IEEE} Annual Symposium on Foundations of Computer Science, {FOCS}},
  pages        = {462--473},
  year         = {2022}
}

@article{Chen09,
  author       = {Ke Chen},
  title        = {On {C}oresets for k-{M}edian and k-{M}eans {C}lustering in {M}etric and {E}uclidean
                  {S}paces and their {A}pplications},
  journal      = {{SIAM} J. Comput.},
  volume       = {39},
  number       = {3},
  pages        = {923--947},
  year         = {2009}
}

@misc{default_of_credit_card_clients_350,
  author       = {Yeh, I-Cheng},
  title        = {{Default of {C}redit {C}ard {C}lients}},
  year         = {2009},
  howpublished = {UCI Machine Learning Repository},
  note         = {{DOI}: https://doi.org/10.24432/C55S3H}
}

@misc{adult_uci,
  author       = {Becker, Barry and Kohavi, Ronny},
  title        = {{Adult}},
  year         = {1996},
  howpublished = {UCI Machine Learning Repository},
  note         = {{DOI}: https://doi.org/10.24432/C5XW20}
}

@article{GuhaK99,
  author       = {Sudipto Guha and
                  Samir Khuller},
  title        = {Greedy {S}trikes {B}ack: {I}mproved {F}acility {L}ocation {A}lgorithms},
  journal      = {J. Algorithms},
  volume       = {31},
  number       = {1},
  pages        = {228--248},
  year         = {1999}
}

@inproceedings{BhattacharyaGJ21,
  author       = {Anup Bhattacharya and
                  Dishant Goyal and
                  Ragesh Jaiswal},
  title        = {Hardness of {A}pproximation for {E}uclidean k-{M}edian},
  booktitle    = {Approximation, Randomization, and Combinatorial Optimization. Algorithms
                  and Techniques, {APPROX/RANDOM}},
  pages        = {4:1--4:23},
  year         = {2021}
}

@inproceedings{AwasthiCKS15,
  author       = {Pranjal Awasthi and
                  Moses Charikar and
                  Ravishankar Krishnaswamy and
                  Ali Kemal Sinop},
  title        = {The {H}ardness of {A}pproximation of {E}uclidean k-{M}eans},
  booktitle    = {31st International Symposium on Computational Geometry, SoCG 2015,
                  June 22-25, 2015, Eindhoven, The Netherlands},
  pages        = {754--767},
  year         = {2015}
}

@inproceedings{CohenAddadS19,
  author       = {Vincent Cohen{-}Addad and
                  {Karthik {C. S.}}},
  title        = {Inapproximability of {C}lustering in {L}p {M}etrics},
  booktitle    = {60th {IEEE} Annual Symposium on Foundations of Computer Science, {FOCS}
                  },
  pages        = {519--539},
  year         = {2019}
}

@inproceedings{BackursIOSVW19,
  author       = {Arturs Backurs and
                  Piotr Indyk and
                  Krzysztof Onak and
                  Baruch Schieber and
                  Ali Vakilian and
                  Tal Wagner},
  title        = {Scalable {F}air {C}lustering},
  booktitle    = {Proceedings of the 36th International Conference on Machine Learning,
                  {ICML}},
  pages        = {405--413},
  year         = {2019}
}

@inproceedings{BateniDJW24,
  author       = {MohammadHossein Bateni and
                  Prathamesh Dharangutte and
                  Rajesh Jayaram and
                  Chen Wang},
  title        = {Metric {C}lustering and {MST} with {S}trong and {W}eak {D}istance {O}racles},
  booktitle    = {The Thirty Seventh Annual Conference on Learning Theory},
  pages        = {498--550},
  year         = {2024}
}

@article{RingisPK21,
  author       = {Daniel J. Ringis and
                  Fran{\c{c}}ois Piti{\'{e}} and
                  Anil C. Kokaram},
  title        = {Near {O}ptimal {P}er-{C}lip {L}agrangian {M}ultiplier {P}rediction in {HEVC}},
  journal      = {CoRR},
  volume       = {abs/2204.09056},
  year         = {2022}
}

@inproceedings{SilwalANMRK23,
  author       = {Sandeep Silwal and
                  Sara Ahmadian and
                  Andrew Nystrom and
                  Andrew McCallum and
                  Deepak Ramachandran and
                  Seyed Mehran Kazemi},
  title        = {Kwik{B}ucks: {C}orrelation {C}lustering with {C}heap-{W}eak and {E}xpensive-{S}trong
                  {S}ignals},
  booktitle    = {The Eleventh International Conference on Learning Representations,
                  {ICLR}},
  year         = {2023}
}

@article{DPV2025ITCS,
  author       = {Yinhao Dong and
                  Pan Peng and
                  Ali Vakilian},
  title        = {Learning-{A}ugmented {S}treaming {A}lgorithms for Approximating {MAX-CUT}},
  journal      = {CoRR},
  volume       = {abs/2412.09773},
  year         = {2024}
}

@inproceedings{BravermanDSW24,
  author       = {Vladimir Braverman and
                  Prathamesh Dharangutte and
                  Vihan Shah and
                  Chen Wang},
  editor       = {Amit Kumar and
                  Noga Ron{-}Zewi},
  title        = {Learning-{A}ugmented {M}aximum {I}ndependent {S}et},
  booktitle    = {Approximation, Randomization, and Combinatorial Optimization. Algorithms
                  and Techniques, {APPROX/RANDOM}},
  pages        = {24:1--24:18},
  year         = {2024}
}

@inproceedings{ErgunFSWZ22,
  author       = {Jon C. Ergun and
                  Zhili Feng and
                  Sandeep Silwal and
                  David P. Woodruff and
                  Samson Zhou},
  title        = {Learning-{A}ugmented $k$-means {C}lustering},
  booktitle    = {The Tenth International Conference on Learning Representations, {ICLR}},
  year         = {2022}
}

@article{GalhotraRS24,
  author       = {Sainyam Galhotra and
                  Rahul Raychaudhury and
                  Stavros Sintos},
  title        = {k-{C}lustering with {C}omparison and {D}istance {O}racles},
  journal      = {Proc. {ACM} Manag. Data},
  volume       = {2},
  number       = {5},
  pages        = {212:1--212:26},
  year         = {2024}
}

@article{MitzenmacherV22,
  author       = {Michael Mitzenmacher and
                  Sergei Vassilvitskii},
  title        = {Algorithms with {P}redictions},
  journal      = {Commun. {ACM}},
  volume       = {65},
  number       = {7},
  pages        = {33--35},
  year         = {2022}
}

@inproceedings{PZ21COLT,
  author       = {Pan Peng and
                  Jiapeng Zhang},
  title        = {Towards a {Q}uery-{O}ptimal and {T}ime-{E}fficient {A}lgorithm for {C}lustering
                  with a {F}aulty {O}racle},
  booktitle    = {Conference on Learning Theory, {COLT}},
  pages        = {3662--3680},
  year         = {2021}
}

@inproceedings{MazumdarS17a,
  author       = {Arya Mazumdar and
                  Barna Saha},
  title        = {Clustering with {N}oisy {Q}ueries},
  booktitle    = {Advances in Neural Information Processing Systems 30: Annual Conference
                  on Neural Information Processing Systems},
  pages        = {5788--5799},
  year         = {2017}
}

@inproceedings{LarsenMT20,
  author       = {Kasper Green Larsen and
                  Michael Mitzenmacher and
                  Charalampos E. Tsourakakis},
  title        = {Clustering with a {F}aulty {O}racle},
  booktitle    = {{WWW} '20: The Web Conference},
  pages        = {2831--2834},
  year         = {2020}
}

@article{XSI24bimetric,
  author       = {Haike Xu and
                  Sandeep Silwal and
                  Piotr Indyk},
  title        = {A {B}i-metric {F}ramework for {F}ast {S}imilarity {S}earch},
  journal      = {CoRR},
  volume       = {abs/2406.02891},
  year         = {2024}
}

@inproceedings{Meyerson01,
  author       = {Adam Meyerson},
  title        = {Online {F}acility {L}ocation},
  booktitle    = {42nd Annual Symposium on Foundations of Computer Science, {FOCS}},
  pages        = {426--431},
  year         = {2001}
}

@inproceedings{JohnGA20,
  author       = {Sam John and
                  Akshay Gadde and
                  Balu Adsumilli},
  title        = {Rate {D}istortion {O}ptimization over {L}arge {S}cale {V}ideo {C}orpus with {M}achine
                  {L}earning},
  booktitle    = {{IEEE} International Conference on Image Processing, {ICIP}},
  pages        = {1286--1290},
  year         = {2020}
}

@inproceedings{Cohen-AddadSS21,
  author       = {Vincent Cohen{-}Addad and
                  David Saulpic and
                  Chris Schwiegelshohn},
  title        = {A {N}ew {C}oreset {F}ramework for {C}lustering},
  booktitle    = {{STOC} '21: 53rd Annual {ACM} {SIGACT} Symposium on Theory of Computing},
  pages        = {169--182},
  year         = {2021},
}

@inproceedings{BravermanJKW21,
  author       = {Vladimir Braverman and
                  Shaofeng H.{-}C. Jiang and
                  Robert Krauthgamer and
                  Xuan Wu},
  title        = {Coresets for {C}lustering in {E}xcluded-minor {G}raphs and {B}eyond},
  booktitle    = {Proceedings of the 2021 {ACM-SIAM} Symposium on Discrete Algorithms,
                  {SODA}},
  pages        = {2679--2696},
  year         = {2021}
}

@inproceedings{KusnerSKW15,
  author       = {Matt J. Kusner and
                  Yu Sun and
                  Nicholas I. Kolkin and
                  Kilian Q. Weinberger},
  title        = {From {W}ord {E}mbeddings To {D}ocument {D}istances},
  booktitle    = {Proceedings of the 32nd International Conference on Machine Learning,
                  {ICML}},
  pages        = {957--966},
  year         = {2015}
}

@inproceedings{DevlinCLT19,
  author       = {Jacob Devlin and
                  Ming{-}Wei Chang and
                  Kenton Lee and
                  Kristina Toutanova},
  title        = {{BERT:} {P}re-training of {D}eep {B}idirectional {T}ransformers for {L}anguage
                  {U}nderstanding},
  booktitle    = {Proceedings of the 2019 Conference of the North American Chapter of
                  the Association for Computational Linguistics: Human Language Technologies,
                  {NAACL-HLT}},
  pages        = {4171--4186},
  year         = {2019}
}

@inproceedings{HeZRS16,
  author       = {Kaiming He and
                  Xiangyu Zhang and
                  Shaoqing Ren and
                  Jian Sun},
  title        = {Deep {R}esidual {L}earning for {I}mage {R}ecognition},
  booktitle    = {2016 {IEEE} Conference on Computer Vision and Pattern Recognition,
                  {CVPR}},
  pages        = {770--778},
  year         = {2016}
}

@inproceedings{MikolovCCD13,
  author       = {Tom{\'{a}}s Mikolov and
                  Kai Chen and
                  Greg Corrado and
                  Jeffrey Dean},
  title        = {Efficient {E}stimation of {W}ord {R}epresentations in {V}ector {S}pace},
  booktitle    = {1st International Conference on Learning Representations, {ICLR}},
  year         = {2013},
}

@article{VanDerMaatenH08,
  author  = {Laurens van der Maaten and Geoffrey Hinton},
  title   = {Visualizing {D}ata using t-{SNE}},
  journal = {Journal of Machine Learning Research},
  year    = {2008},
  volume  = {9},
  number  = {86},
  pages   = {2579--2605},
}

@inproceedings{Huang0024,
  author       = {Lingxiao Huang and
                  Jian Li and
                  Xuan Wu},
  title        = {On {O}ptimal {C}oreset {C}onstruction for {E}uclidean (k, z)-{C}lustering},
  booktitle    = {Proceedings of the 56th Annual {ACM} Symposium on Theory of Computing,
                  {STOC}},
  pages        = {1594--1604},
  year         = {2024},
}

@inproceedings{HuangV20,
  author       = {Lingxiao Huang and
                  Nisheeth K. Vishnoi},
  title        = {Coresets for {C}lustering in {E}uclidean {S}paces: {I}mportance {S}ampling is
                  {N}early {O}ptimal},
  booktitle    = {Proceedings of the 52nd Annual {ACM} {SIGACT} Symposium on Theory
                  of Computing, {STOC}},
  pages        = {1416--1429},
  year         = {2020}
}

@inproceedings{har2004coresets,
  author       = {Sariel Har{-}Peled and
                  Soham Mazumdar},
  title        = {On {C}oresets for k-means and k-median {C}lustering},
  booktitle    = {Proceedings of the 36th Annual {ACM} Symposium on Theory of Computing},
  pages        = {291--300},
  year         = {2004}
}

@article{mettu2012optimal,
  author       = {Ramgopal R. Mettu and
                  C. Greg Plaxton},
  title        = {Optimal {T}ime {B}ounds for {A}pproximate {C}lustering},
  journal      = {CoRR},
  volume       = {abs/1301.0587},
  year         = {2013}
}

@inproceedings{FeldmanMS07,
  author       = {Dan Feldman and
                  Morteza Monemizadeh and
                  Christian Sohler},
  title        = {A {PTAS} for k-{M}eans {C}lustering {B}ased on {W}eak {C}oresets},
  booktitle    = {Proceedings of the 23rd {ACM} Symposium on Computational Geometry},
  pages        = {11--18},
  year         = {2007},
}

@inproceedings{ArthurV07,
  author       = {David Arthur and
                  Sergei Vassilvitskii},
  title        = {k-{M}eans++: the {A}dvantages of {C}areful {S}eeding},
  booktitle    = {Proceedings of the Eighteenth Annual {ACM-SIAM} Symposium on Discrete
                  Algorithms, {SODA} },
  pages        = {1027--1035},
  year         = {2007}
}

@inproceedings{KumarSS04,
  author       = {Amit Kumar and
                  Yogish Sabharwal and
                  Sandeep Sen},
  title        = {A {S}imple {L}inear {T}ime (1+{\(\acute{\epsilon}\)})-{A}pproximation {A}lgorithm
                  for k-{M}eans {C}lustering in any {D}imensions},
  booktitle    = {45th Symposium on Foundations of Computer Science {FOCS}},
  pages        = {454--462},
  year         = {2004},
}

@article{guha03clustering,
  author       = {Sudipto Guha and
                  Adam Meyerson and
                  Nina Mishra and
                  Rajeev Motwani and
                  Liadan O'Callaghan},
  title        = {Clustering {D}ata {S}treams: {T}heory and {P}ractice},
  journal      = {{IEEE} Trans. Knowl. Data Eng.},
  volume       = {15},
  number       = {3},
  pages        = {515--528},
  year         = {2003}
}

@inproceedings{indyk99sublinear,
author       = {Piotr Indyk},
title = {Sublinear {T}ime {A}lgorithms for {M}etric {S}pace {P}roblems},
booktitle    = {Proceedings of the Thirty-First Annual {ACM} Symposium on Theory of Computing}, 
pages        = {428--434},
year         = {1999}
}

@inproceedings{charikar99constant,
  author       = {Moses Charikar and
                  Sudipto Guha and {\'{E}}va Tardos and
                  David B. Shmoys},
title        = {A {C}onstant-{F}actor {A}pproximation {A}lgorithm for the \emph{k}-{M}edian
                  {P}roblem ({E}xtended {A}bstract)},
  booktitle    = {Proceedings of the Thirty-First Annual {ACM} Symposium on Theory of Computing}, 
pages        = {1--10},
  year         = {1999}
}

@article{gonzalez1985clustering,
  title={Clustering to {M}inimize the {M}aximum {I}ntercluster {D}istance},
  author={Gonzalez, Teofilo F},
  journal={Theoretical computer science},
  volume={38},
  pages={293--306},
  year={1985},
  publisher={Elsevier}
}

@article{lloyd1982least,
  title={Least {S}quares {Q}uantization in {PCM}},
  author={Lloyd, Stuart},
  journal={IEEE transactions on information theory},
  volume={28},
  number={2},
  pages={129--137},
  year={1982},
  publisher={IEEE}
}

@article{forgy1965cluster,
  title={Cluster {A}nalysis of {M}ultivariate {D}ata: {E}fficiency versus {I}nterpretability of {C}lassifications},
  author={Forgy, Edward W},
  journal={biometrics},
  volume={21},
  pages={768--769},
  year={1965}
}

@inproceedings{AngelidakisKSZ22,
  author       = {Haris Angelidakis and
                  Adam Kurpisz and
                  Leon Sering and
                  Rico Zenklusen},
  title        = {Fair and {F}ast k-{C}enter {C}lustering for {D}ata {S}ummarization},
  booktitle    = {International Conference on Machine Learning, {ICML}},
  pages        = {669--702},
  year         = {2022}
}

@inproceedings{KleindessnerAM19,
  author       = {Matth{\"{a}}us Kleindessner and
                  Pranjal Awasthi and
                  Jamie Morgenstern},
  title        = {Fair k-{C}enter {C}lustering for {D}ata {S}ummarization},
  booktitle    = {Proceedings of the 36th International Conference on Machine Learning,
                  {ICML}},
  pages        = {3448--3457},
  year         = {2019}
}

@inproceedings{NegahbaniC21,
  author       = {Maryam Negahbani and
                  Deeparnab Chakrabarty},
  title        = {Better {A}gorithms for {I}ndividually {F}air \emph{k}-{C}lustering},
  booktitle    = {Advances in Neural Information Processing Systems 34: Annual Conference
                  on Neural Information Processing Systems 2021, NeurIPS},
  pages        = {13340--13351},
  year         = {2021}
}

@inproceedings{MahabadiV20,
  author       = {Sepideh Mahabadi and
                  Ali Vakilian},
  title        = {Individual {F}airness for k-{C}lustering},
  booktitle    = {Proceedings of the 37th International Conference on Machine Learning,
                  {ICML}},
  pages        = {6586--6596},
  year         = {2020}
}

@inproceedings{JungKL20,
  author       = {Christopher Jung and
                  Sampath Kannan and
                  Neil Lutz},
  title        = {Service in {Y}our {N}eighborhood: {F}airness in {C}enter {L}ocation},
  booktitle    = {1st Symposium on Foundations of Responsible Computing, {FORC}},
  pages        = {5:1--5:15},
  year         = {2020}
}

@inproceedings{MakarychevV21,
  author       = {Yury Makarychev and
                  Ali Vakilian},
  title        = {Approximation {A}lgorithms for {S}ocially {F}air {C}lustering},
  booktitle    = {Conference on Learning Theory, {COLT}},
  pages        = {3246--3264},
  year         = {2021},
}

@inproceedings{GhadiriSV21,
  author       = {Mehrdad Ghadiri and
                  Samira Samadi and
                  Santosh S. Vempala},
  title        = {Socially {F}air k-{M}eans {C}lustering},
  booktitle    = {FAccT '21: 2021 {ACM} Conference on Fairness, Accountability, and
                  Transparency},
  pages        = {438--448},
  year         = {2021}
}

@inproceedings{JonesNN20,
  author       = {Matthew Jones and
                  Huy L. Nguyen and
                  Thy Dinh Nguyen},
  title        = {Fair k-{C}enters via {M}aximum {M}atching},
  booktitle    = {Proceedings of the 37th International Conference on Machine Learning,
                  {ICML}},
  pages        = {4940--4949},
  year         = {2020}
}

@article{JiaSS22,
  author       = {Xinrui Jia and
                  Kshiteej Sheth and
                  Ola Svensson},
  title        = {Fair {C}olorful k-center {C}lustering},
  journal      = {Math. Program.},
  volume       = {192},
  number       = {1},
  pages        = {339--360},
  year         = {2022}
}

@article{AneggAKZ22,
  author       = {Georg Anegg and
                  Haris Angelidakis and
                  Adam Kurpisz and
                  Rico Zenklusen},
  title        = {A {T}echnique for {O}btaining {T}rue {A}pproximations for k-center with {C}overing
                  {C}onstraints},
  journal      = {Math. Program.},
  volume       = {192},
  number       = {1},
  pages        = {3--27},
  year         = {2022}
}

@article{ChierichettiKLV17,
  author       = {Flavio Chierichetti and
                  Ravi Kumar and
                  Silvio Lattanzi and
                  Sergei Vassilvitskii},
  title        = {Fair {C}lustering {T}hrough {F}airlets},
  journal      = {CoRR},
  volume       = {abs/1802.05733},
  year         = {2018}
}

@article{ChhabraMM21,
  author       = {Anshuman Chhabra and
                  Karina Masalkovaite and
                  Prasant Mohapatra},
  title        = {An {O}verview of {F}airness in {C}lustering},
  journal      = {{IEEE} Access},
  volume       = {9},
  pages        = {130698--130720},
  year         = {2021}
}

@inproceedings {DworkHPRZ12,
    AUTHOR = {Dwork, Cynthia and Hardt, Moritz and Pitassi, Toniann and
              Reingold, Omer and Zemel, Richard},
     TITLE = {Fairness through {A}wareness},
 BOOKTITLE = {Proceedings of the 3rd {I}nnovations in {T}heoretical
              {C}omputer {S}cience {C}onference},
     PAGES = {214--226},
 PUBLISHER = {ACM, New York},
      YEAR = {2012},
   MRCLASS = {91B32 (90C05 90C90)},
  MRNUMBER = {3388391},
}

@inproceedings {macqueen1967classification,
    AUTHOR = {MacQueen, J.},
     TITLE = {Some {M}ethods for {C}lassification and {A}nalysis of {M}ultivariate
              {O}bservations},
 BOOKTITLE = {Proc. {F}ifth {B}erkeley {S}ympos. {M}ath. {S}tatist. and
              {P}robability ({B}erkeley, {C}alif., 1965/66), {V}ol. {I}:
              {S}tatistics},
     PAGES = {281--297},
 PUBLISHER = {Univ. California Press, Berkeley, CA},
      YEAR = {1967},
   MRCLASS = {62.40},
  MRNUMBER = {214227},
MRREVIEWER = {K. G. J\"{o}reskog},
}

@article {steinhaus1956division,
    AUTHOR = {Steinhaus, H.},
     TITLE = {Sur la division des corps mat\'{e}riels en parties},
   JOURNAL = {Bull. Acad. Polon. Sci. Cl. III.},
  FJOURNAL = {Bull. Acad. Polon. Sci. Cl. III.},
    VOLUME = {4},
      YEAR = {1956},
     PAGES = {801--804 (1957)},
   MRCLASS = {52.0X},
  MRNUMBER = {90073},
MRREVIEWER = {M. Kond\^{o}},
}
\end{document}